\theoremstyle{plain}
\newtheorem{theorem}{Theorem}
\newtheorem{lemma}{Lemma}
\newtheorem{proposition}{Proposition}
\newtheorem{corollary}{Corollary}
\newtheorem{assumption}{Assumption}
\theoremstyle{definition}
\newtheorem{definition}{Definition}
\theoremstyle{remark}
\newtheorem{remark}{Remark}
\newtheorem*{theorem*}{Theorem}
\newtheorem*{lemma*}{Lemma}
\newtheorem*{proposition*}{Proposition}
\newcommand{\R}{\mathbb{R}}
\newcommand{\E}{\mathbb{E}}
\newcommand{\PP}{\mathbb{P}}
\newcommand{\M}{\mathcal{M}}
\newcommand{\Mvol}{\mathcal{M}_{\mathrm{vol}}}
\newcommand{\Lvol}{\mathcal{L}_{\mathrm{vol}}}
\newcommand{\Lphi}{\mathcal{L}_\phi}
\newcommand{\shock}{\mathrm{shock}}
\DeclareMathOperator{\supp}{supp}
\title{Law-Strength Frontiers and a No-Free-Lunch Result for Law-Seeking Reinforcement Learning on Volatility Law Manifolds}
\author{
  ZhangJian'an \\
  Guanghua School of Management, Peking University \\
  Peking University \\
  Beijing, China\\
  \texttt{2501111059@stu.pku.edu.cn}
}
\begin{document}
\pagestyle{plain}   % 只有页码，无页眉
\maketitle

\begin{abstract}
We study reinforcement learning (RL) on volatility surfaces through the lens of \emph{Scientific AI}: can axiomatic market laws, enforced as soft penalties on a learned world model, reliably align high-capacity RL agents with no-arbitrage structure, or do they merely induce Goodhart-style exploitation of model artefacts?  

Starting from classical static no-arbitrage conditions for implied volatility, we construct a finite-dimensional convex \emph{volatility law manifold} of admissible total-variance surfaces, together with a metric-based \emph{law-penalty functional} and a domain-agnostic \emph{Graceful Failure Index} (GFI) that normalizes law degradation under shocks. A synthetic generator produces trajectories that are exactly law-consistent, while a recurrent neural world model is trained without law regularization and therefore predicts surfaces that deviate from the law manifold in structured ways.  

On top of this testbed we introduce a \emph{Goodhart decomposition} of reward, $r = r^{\mathcal{M}} + r^\perp$, where $r^{\mathcal{M}}$ is the on-manifold component and $r^\perp$ is \emph{ghost arbitrage} arising purely from off-manifold prediction errors. We prove three flagship results: (i) a \emph{ghost-arbitrage incentive} theorem showing that naive PPO-type RL is structurally driven to increase $\mathbb{E}[r^\perp]$ whenever ghost arbitrage is available; (ii) a \emph{law-strength trade-off} theorem establishing that increasing the weight on law penalties inevitably worsens P\&L beyond a quantifiable threshold; and (iii) a \emph{no-free-lunch} theorem stating that, under a law-consistent world model and a law-aligned structural class of strategies, unconstrained law-seeking RL cannot Pareto-dominate structural baselines on P\&L, law penalties, and GFI.  

Empirically, on a volatility world model calibrated to SPX/VIX-like grids, we compare naive RL, law-penalized and selection-only RL variants against simple structural baselines (Zero-Hedge, Vol-Trend, Random-Gaussian) across baseline and shocked regimes. In our experiments, structural baselines form the empirical law-strength frontier: they attain Sharpe ratios around 2--3 with low law penalties and GFI near zero, while all law-seeking RL variants achieve non-positive mean P\&L and substantially higher GFI, despite being explicitly penalized for law violations. Frontier and diagnostic plots show that RL improvements in P\&L are systematically accompanied by movement into high-penalty, high-GFI regions, consistent with our theoretical analysis.  

Overall, volatility serves as a concrete case study where \emph{reward shaping with verifiable penalties is not sufficient for law alignment}. Our framework---combining law manifolds, Goodhart decomposition, GFI, and law-strength frontiers---provides a reusable template for stress-testing Scientific AI systems and RL with verifiable rewards in other axiom-constrained domains such as yield curves, credit term structures, and physics-informed models.
\end{abstract}

% keywords can be removed
\keywords{volatility surfaces, reinforcement learning, axiomatic finance, no-arbitrage, law manifolds, Goodhart's law, world models, scientific AI}

\section{Introduction}
\label{sec:intro}

\subsection{Scientific AI testbed: volatility law manifolds}
\label{subsec:intro_scientific_ai}

In recent years, ``AI for Science'' has emerged as a central theme in machine learning, emphasizing scientific understanding and hypothesis-testing rather than purely predictive or profit-driven performance \cite{CarleoTroyer2019,WillardEtAl2020,KarniadakisEtAl2021}.  A key lesson from this line of work is that many scientific domains are governed by \emph{axioms}---conservation laws, monotonicity constraints, or no-arbitrage principles---that carve out a structured admissible subset inside a high-dimensional function space.  Scientific machine learning then seeks not only to interpolate observations, but to understand how learning agents interact with these law-constrained spaces under model misspecification, limited data, and changing environments.

Implied-volatility (IV) surfaces are a canonical example of such an axiomatic system in quantitative finance.  Classical results show that no-arbitrage constraints---such as non-negativity of butterfly spreads, monotonicity in maturity, and convexity in strike---translate into linear or convex inequalities on total-variance smiles and surfaces \cite{Dupire1994,Gatheral2006,Fengler2005,Lee2004,ContTankov2004,Bergomi2016}.  These conditions define a structured admissible subset of discretized surfaces that we refer to as a \emph{volatility law manifold}.  Recent work has revisited arbitrage-free interpolation and extrapolation of IV surfaces using convex optimization and sparse modeling \cite{Guterding2023}, arbitrage-aware parametric families, and deep neural networks that aim to respect or softly penalize violations of these constraints \cite{BuehlerEtAl2019,HorvathMuguruzaTomas2021,RufWang2020}.  In parallel, deep learning has been applied to option pricing and hedging via PDE and BSDE solvers \cite{HanEtAl2018,BeckEtAl2019}, and to rough or stochastic volatility models where direct calibration is challenging \cite{HorvathMuguruzaTomas2021}.

Our starting point is to treat this volatility-curve setting as a \emph{Scientific AI testbed} rather than a trading system.  We assume that the data-generating process---a synthetic but financially meaningful IV-surface generator---is exactly law-consistent: every realized surface lies on the no-arbitrage manifold defined by classical butterfly and calendar conditions \cite{Dupire1994,Gatheral2006,ContTankov2004}.  A neural \emph{world model} is trained to approximate this generator from data, similar in spirit to world-model approaches in model-based reinforcement learning (RL) \cite{HaSchmidhuber2018,HafnerEtAl2019,HafnerEtAl2020,HafnerEtAl2023,ChuaEtAl2018,JannerEtAl2019}.  RL agents then interact only with the learned world model, never with the ground-truth generator.  This creates a clean separation between a law-consistent environment and a potentially law-violating model, opening a ``ghost channel'' for agents to exploit model artefacts.

World models and latent dynamics learning have become central tools for sample-efficient RL in complex domains such as Atari, control, and strategy games \cite{HaSchmidhuber2018,HafnerEtAl2019,HafnerEtAl2020,HafnerEtAl2023,SchrittwieserEtAl2020,MnihEtAl2015,SuttonBarto2018}.  At the same time, RL has a long history in quantitative finance, including early work on direct reinforcement learning for trading \cite{MoodySaffell2001,DengEtAl2016}, deep RL for portfolio management and execution \cite{JiangXuLiang2017,LiuEtAl2021,KolmRitter2019}, and deep hedging using neural networks trained on simulated scenarios \cite{BuehlerEtAl2019}.  Yet most of these studies focus on realized P\&L, with limited attention to how learned policies interact with structural market axioms.  Our goal is not to propose another trading system, but to use an RL-in-IV-surfaces setup as a controlled \emph{experiment} on law-aligned learning and Goodhart phenomena.

A parallel literature in scientific machine learning has emphasized the incorporation of physical or axiomatic structure into learning systems via physics-informed neural networks \cite{RaissiEtAl2019,KarniadakisEtAl2021}, relational inductive biases \cite{BattagliaEtAl2018}, and hybrid mechanistic–ML models \cite{WillardEtAl2020,CarleoTroyer2019}.  These works typically enforce or strongly bias the model toward satisfying known laws during training.  In contrast, recent AI-safety and alignment research has documented how RL agents can exploit imperfect reward channels or specification gaps, a phenomenon often traced back to Goodhart's law \cite{ManheimGarrabrant2019,AmodeiEtAl2016,LeikeEtAl2017,EverittEtAl2017,KruegerEtAl2020,PanEtAl2022}.  This has led to proposals for RL with verifiable or externally checked rewards (RLVR) \cite{NeelakantanEtAl2025}, where parts of the reward signal are computed via trusted procedures or checkers.

Our work bridges these lines of research in a finance-specific but conceptually general way.  We design an axiomatic evaluation pipeline in which (i) a volatility law manifold encodes no-arbitrage axioms; (ii) a neural world model approximates a law-consistent generator but introduces model-induced ``ghost'' arbitrage opportunities; and (iii) RL agents are trained either with or without access to a \emph{verifiable} law-penalty signal.  We then ask a scientific question: when we add such verifiable penalties as soft terms in the RL objective, on top of a law-consistent ground-truth world, do we actually obtain more law-aligned and robust policies, or do we merely shift Goodhart behaviour onto model artefacts?

To foreshadow our main numerical findings, we consider a simple zero-position baseline (\textsc{Zero-Hedge}) that never trades.  In our main setting, \textsc{Zero-Hedge} achieves mean step P\&L of approximately $0.0191$ with a Graceful Failure Index (GFI) essentially zero and moderate law penalties, reflecting that the underlying world is law-consistent and shocks are symmetric.  By contrast, a wide range of law-seeking RL variants---including soft-penalty PPO with a law-weight sweep (the ``law-strength frontier'') and a selection-only variant that uses law penalties only for model selection---all attain \emph{non-positive} mean step P\&L and substantially worse GFI values (typically $\geq 1.6$), despite being explicitly penalized for law violations during training.  In other words, once structural baselines are included, law-seeking RL has no free lunch: it fails to dominate even trivial strategies on the joint axes of profitability, law alignment, and tail risk.

\subsection{Contributions}
\label{subsec:intro_contributions}

This paper makes four primary contributions, organized around an axiomatic evaluation framework rather than a specific trading algorithm.

\paragraph{C1 – Axiomatic evaluation framework.}
Starting from any finite collection of convex or linear axioms on a discretized observable field, we construct (i) a law manifold $\mathcal{M}$ in total-variance coordinates for implied-volatility surfaces, (ii) a metric-based law-penalty functional $\mathcal{L}_\phi$ that measures distance to $\mathcal{M}$, (iii) a domain-agnostic Graceful Failure Index (GFI) that normalizes degradation of law metrics under shocks, and (iv) \emph{law-strength frontiers} that jointly organize profitability, law alignment, and tail robustness as a function of law-penalty weight $\lambda$ and strategy class.  While we instantiate this framework in an IV-surface world, the construction applies equally to other axiom-constrained systems such as yield curves, credit term structures, and physical fields constrained by conservation laws \cite{RaissiEtAl2019,KarniadakisEtAl2021,CarleoTroyer2019}.

\paragraph{C2 – Ghost arbitrage \& Goodhart decomposition.}
We formalize a Goodhart-style decomposition of reward on law manifolds,
\begin{equation}
  r = r^{\mathcal{M}} + r^{\perp},
\end{equation}
where $r^{\mathcal{M}}$ is the on-manifold reward that would be obtained under a perfectly law-consistent world, and $r^{\perp}$ is an off-manifold ``ghost arbitrage'' component induced by the neural world model.  We show how this decomposition can be implemented using a projection operator onto $\mathcal{M}$ in total-variance space and an explicit law-penalty functional, making the ghost component measurable and analyzable.  This connects Goodhart's law in AI safety \cite{ManheimGarrabrant2019,AmodeiEtAl2016,KruegerEtAl2020} with concrete financial law violations (e.g., butterfly or calendar arbitrage) in our IV-surface testbed.

\paragraph{C3 – Flagship incentive and trade-off results.}
On top of this axiomatic pipeline, we establish three central theoretical results that together form a no-free-lunch story for law-seeking RL.  Theorem~4.1 (\emph{Ghost-arbitrage incentive for naive RL}) shows that, under mild assumptions, naive PPO-type RL is structurally incentivized to increase $\mathbb{E}[r^{\perp}]$ whenever structural law-consistent baselines already approximate the on-manifold optimum.  Theorem~4.3 and Corollary~4.4 (\emph{Law-strength trade-off}) prove that increasing the law-penalty weight $\lambda$ inevitably worsens P\&L beyond a threshold: the empirical law-strength frontier we observe in experiments is a structural trade-off, not an artefact of hyperparameters.  Finally, Theorem~8.1 (\emph{No-free-lunch for law-seeking RL}) shows that, given a law-consistent world model and a sufficiently rich structural baseline class $\mathcal{S}$, unconstrained law-seeking RL cannot simultaneously dominate $\mathcal{S}$ on P\&L and on all law metrics unless it effectively recovers a policy in $\mathcal{S}$.

\paragraph{C4 – Design lessons for law-aligned learning and RLVR.}
Our analysis yields practical design lessons that generalize beyond volatility modeling.  First, merely adding soft law penalties to the reward is insufficient for robust law alignment: RL agents systematically exploit ghost arbitrage channels in the world model, or else sacrifice P\&L without achieving net law improvements.  Second, law alignment benefits from \emph{structural} interventions such as hard constraints, projection layers onto $\mathcal{M}$, and structured policy classes that encode hedging logic, echoing observations from physics-informed learning and scientific ML \cite{RaissiEtAl2019,KarniadakisEtAl2021,WillardEtAl2020}.  Third, our pipeline serves as a reusable testbed for RL with verifiable rewards (RLVR) \cite{NeelakantanEtAl2025}: law penalties here are fully verifiable and domain-grounded, yet we still observe Goodhart-like failures when structural constraints are absent.

\subsection{Scope and novelty}
\label{subsec:intro_scope_novelty}

\paragraph{Geometric/systematization of known results.}
Section~\ref{sec:law_manifold} recasts classical no-arbitrage characterizations of admissible IV surfaces---butterfly convexity, calendar monotonicity, and related inequalities \cite{Dupire1994,Gatheral2006,Fengler2005,Lee2004,ContTankov2004,Bergomi2016}---as a finite-dimensional convex polyhedral law manifold in total-variance coordinates.  This representation theorem does not aim to replace existing no-arbitrage results; instead, it systematizes them into a form amenable to projection, distance computation, and integration into learning systems.

\paragraph{New concepts and metrics.}
The concrete ghost-arbitrage decomposition on a learned world model, the construction of law-strength frontiers, and the definition of the GFI are new.  They are designed to be domain-agnostic: given any axiom-constrained system and an exogenous notion of shock, the same machinery can be instantiated to quantify how agents trade off performance, law alignment, and tail robustness, extending ideas from scientific ML and AI-for-Science benchmarks \cite{CarleoTroyer2019,WillardEtAl2020,RaissiEtAl2019,KarniadakisEtAl2021}.

\paragraph{New theoretical results.}
Our main theoretical novelties lie in Theorem~4.1, Theorem~4.3 with Corollary~4.4, and Theorem~8.1.  Together, they formalize: (i) an incentive for naive RL to exploit ghost arbitrage in law-consistent worlds; (ii) a structural law-strength trade-off that bounds achievable P\&L for any given level of law penalty; and (iii) a no-free-lunch result for unconstrained law-seeking RL relative to a law-consistent structural baseline class $\mathcal{S}$.  We stress that these are not new no-arbitrage theorems per se, but results about RL behaviour on top of an axiomatic evaluation pipeline.  Our contribution is to show, both theoretically and empirically, that soft law penalties on a learned world model do not automatically yield law-aligned robustness once structural baselines are taken into account.

\paragraph{Scope disclaimer.}
We work with a finite-dimensional convex template: our ``law manifold'' is a structured subset of a discretized IV-surface grid, not a smooth manifold in the differential-geometric sense.  We retain the term ``manifold'' for continuity with the broader literature on manifold-constrained learning, but in our volatility instantiation $\mathcal{M}^{\mathrm{vol}}$ is a convex polyhedral subset defined by linear inequalities and positivity constraints.  Our theorems are proved under this finite-dimensional convex template and a specific class of model-based RL algorithms; a fully general analysis for infinite-dimensional function spaces and arbitrary RL algorithms is left for future work.

\subsection{Key concepts at a glance}
\label{subsec:intro_key_concepts}

Given the number of concepts introduced, we summarize the most important ones in Table~\ref{tab:key_concepts}.  Each concept is defined precisely in later sections; here we provide a high-level description and a pointer.

\begin{table}[t]
  \centering
  \caption{Key concepts at a glance.  Formal definitions and constructions are given in the indicated sections.}
  \label{tab:key_concepts}
  \begin{tabular}{p{3cm} p{9cm} p{2cm}}
    \toprule
    Concept & Plain-English description & Section \\
    \midrule
    Law manifold $\mathcal{M}$ & Admissible subset of discretized IV surfaces satisfying butterfly, calendar, and related no-arbitrage axioms. & Sec.~2.0--2.1 \\
    Law penalty $\mathcal{L}_\phi$ & Distance-based functional measuring how far a surface lies outside $\mathcal{M}$, with $\phi$ encoding the chosen metric (e.g., squared $\ell_2$ in total-variance space). & Sec.~2.3 \\
    Ghost arbitrage $r^{\perp}$ & Component of reward obtainable only by moving off $\mathcal{M}$ through world-model artefacts; vanishes under a perfectly law-consistent environment. & Sec.~2.4, 3.3 \\
    Law-strength frontier & Trade-off curve tracing profitability, law alignment, and tail risk as a function of law-penalty weight $\lambda$ and strategy class. & Sec.~4.4, 7.3 \\
    Graceful Failure Index (GFI) & Normalized measure of how much law metrics (e.g., mean penalty, coverage) degrade under shocks relative to baseline conditions. & Sec.~4.4, 6.3 \\
    Structural class $\mathcal{S}$ & Low-capacity, law-consistent strategies such as \textsc{Zero-Hedge} and \textsc{Vol-Trend} that serve as structural baselines. & Sec.~5, 8.1 \\
    Neural world model & Learned dynamics model for IV surfaces that approximates the law-consistent generator but opens a ghost channel for arbitrage. & Sec.~3 \\
    \bottomrule
  \end{tabular}
\end{table}

\subsection{Research questions}
\label{subsec:intro_rqs}

The paper is organized around three research questions (RQs) that probe different aspects of law-seeking RL on axiomatic pipelines.

\paragraph{RQ1 – Do law penalties help naive RL?}
Does law-penalized RL actually reduce law violations and improve graceful failure compared to naive RL on the \emph{same} learned world model?  If soft penalties are effective, we should observe strictly better GFI and law metrics (mean and max law penalty, law coverage) at comparable P\&L along the law-strength frontier.  If not, we obtain a negative result for soft-penalty shaping: verifiable law penalties alone do not suffice to align RL with axioms in the presence of model-induced ghost arbitrage.

\paragraph{RQ2 – How does RL compare to structural baselines?}
How do RL policies (naive, soft law-seeking, and selection-only) compare to structural baselines (such as \textsc{Zero-Hedge}, \textsc{Random-Gaussian}, and \textsc{Vol-Trend}) on the joint risk–law trade-off, both under baseline conditions and under volatility shocks?  We evaluate these policies relative to a structural Pareto frontier induced by $\mathcal{S}$ in the space of P\&L, law metrics, and tail-risk measures such as Value-at-Risk (VaR) and Conditional VaR (CVaR) \cite{ArtznerEtAl1999,RockafellarUryasev2000}.  This addresses whether law-seeking RL provides any Pareto improvement once simple, law-consistent strategies are included.

\paragraph{RQ3 – When does law-seeking RL have no free lunch?}
Under what assumptions on the structural class $\mathcal{S}$, the unconstrained policy class $\Pi$, and the neural world model do we obtain a no-free-lunch result for law-seeking RL?  Theorem~8.1 formalizes one such setting: if $\mathcal{S}$ already approximates the on-manifold optimum and the world model introduces a non-trivial ghost component, then either RL behaves like a structural strategy in $\mathcal{S}$ or it fails to dominate $\mathcal{S}$ jointly on P\&L and law metrics.  Section~8.1 spells out the assumptions and proof sketch, and Section~8.2 discusses how this volatility-specific case informs broader questions in law-aligned learning and RL with verifiable rewards \cite{AmodeiEtAl2016,LeikeEtAl2017,ManheimGarrabrant2019,NeelakantanEtAl2025}.

\vspace{1em}
\noindent
In the remainder of the paper, we address RQ1–RQ3 in order.  Section~2 formalizes law manifolds, penalties, and the Goodhart decomposition.  Section~3 introduces the synthetic law-consistent IV generator and the neural world model.  Section~4 develops incentive and trade-off results for law-seeking RL, and Section~5 defines structural baselines.  Section~6 details the experimental protocol, and Section~7 presents empirical results on dynamics plots, law-strength frontiers, and diagnostic scatter/histograms.  Section~8 discusses the no-free-lunch theorem and implications for RLVR and scientific AI, and Section~9 concludes.

\section{Axiomatic Volatility Law Manifolds}
\label{sec:axiomatic-manifold}
\label{sec:law_manifold}  % 兼容旧引用
\subsection*{2.0\quad Finite-Dimensional Convex Template and Notation}

In this section we formalize the notion of a \emph{law manifold} as a finite-dimensional convex subset of a discretized function space, induced by a collection of axioms. We emphasize from the outset that our construction is intentionally elementary:

\begin{quote}
We use a simple finite-dimensional convex template; our ``manifold'' is a structured subset in discretized coordinates, not a smooth differentiable manifold in the differential-geometry sense. We keep the term \emph{manifold} for continuity with the broader literature on manifold-constrained learning and manifold regularization~\cite{Belkin2006,Fefferman2016}, although in our discretized volatility setting $\mathcal{M}^{\mathrm{vol}}$ is a convex polyhedral subset of a Euclidean space.
\end{quote}

\paragraph{General template.}
Let $d \in \mathbb{N}$ and consider a finite-dimensional observation space
\[
  \mathcal{Y} \subseteq \mathbb{R}^d,
\]
equipped with the standard Euclidean inner product and norm $\|\cdot\|_2$. We think of $y \in \mathcal{Y}$ as a discretized field: a yield curve, an implied-volatility surface, or any other structured observable.

We assume that the domain is endowed with a finite family of convex \emph{axiom functions}
\[
  A_i : \mathcal{Y} \to \mathbb{R}, \qquad i = 1,\dots,m,
\]
encoding domain-specific constraints (e.g., butterfly or calendar conditions in volatility, or monotonicity of yields). We write $A(y) \le 0$ as shorthand for the componentwise inequalities $A_i(y) \le 0$ for all $i$.

\begin{definition}[Law manifold]
\label{def:law-manifold-general}
Given convex axiom functions $\{A_i\}_{i=1}^m$, the associated \emph{law manifold} is
\[
  \mathcal{M} \;:=\; \bigl\{\, y \in \mathcal{Y} : A_i(y) \le 0 \text{ for all } i = 1,\dots,m \,\bigr\}.
\]
\end{definition}

In general $\mathcal{M}$ is a closed convex subset of $\mathbb{R}^d$ whenever each $A_i$ is lower semicontinuous and convex. In many practical cases---including our volatility and yield-curve examples below---the constraints are linear or piecewise linear, so that $\mathcal{M}$ is a polyhedron.

\paragraph{Law-penalty functional.}
To quantify violations of the axioms, we define a \emph{law-penalty functional} via a generalized distance to $\mathcal{M}$.

\begin{definition}[Law-penalty functional]
\label{def:law-penalty}
Let $\phi: \mathbb{R}^d \to [0,\infty)$ be a continuous function with $\phi(0)=0$ and $\phi(z)\to\infty$ as $\|z\|_2\to\infty$ (e.g., $\phi(z)=\tfrac{1}{2}\|z\|_2^2$ or a weighted Sobolev norm). Define
\begin{equation}
  \mathcal{L}_\phi(y)
  \;:=\;
  \inf_{\tilde y \in \mathcal{M}} \, \phi(y - \tilde y),
  \qquad y \in \mathcal{Y}.
  \label{eq:law-penalty-general}
\end{equation}
\end{definition}

When $\phi(z)=\tfrac{1}{2}\|z\|_2^2$ and $\mathcal{M}$ is closed and convex, $\mathcal{L}_\phi(y)$ reduces to the squared Euclidean distance to $\mathcal{M}$. More general choices of $\phi$ allow us to reweight different coordinates, incorporate smoothness, or emphasize particular directions in state space, in the spirit of manifold regularization~\cite{Belkin2006}.

\paragraph{Non-financial example: yield curves.}
To underline transferability beyond volatility, consider a discretized yield curve
\[
  y = (y_1,\dots,y_J) \in \mathbb{R}^J,
\]
where $y_j$ denotes the continuously compounded spot yield for maturity $\tau_j$, with $0 < \tau_1 < \dots < \tau_J$. Classical term-structure theory~\cite{Filipovic2009} and curve-construction practice~\cite{HaganWest2006} suggest axioms such as:

\begin{enumerate}
  \item \emph{Monotonicity}: yields are non-decreasing in maturity,
  \(
    y_{j+1} \ge y_j \quad \forall j.
  \)
  \item \emph{Convexity of discount factors}: implied by non-negative forward rates, giving linear inequalities in $y$.
\end{enumerate}

These conditions can be encoded as linear maps $A_i(y)$, yielding a polyhedral law manifold $\mathcal{M}^{\mathrm{yc}} \subset \mathbb{R}^J$ and an associated law penalty $\mathcal{L}_\phi^{\mathrm{yc}}$ measuring monotonicity/convexity violations. Our volatility manifold in Sections~\ref{subsec:vol-manifold}--\ref{subsec:law-penalty} is an instance of this general template.

\paragraph{Notation summary.}
Throughout the paper we use the following notation; we repeat it here for convenience.

\begin{enumerate}
  \item $y$: generic observable (e.g., yield curve, volatility surface) in a finite-dimensional space $\mathcal{Y}\subseteq\mathbb{R}^d$.
  \item $\mathcal{M}$: generic law manifold (closed convex subset of $\mathbb{R}^d$) induced by axioms.
  \item $\sigma$: implied volatility on a maturity--log-moneyness grid; $w = \sigma^2 T$ denotes total variance, our primary state variable for volatility.
  \item $\mathcal{M}^{\mathrm{vol}}$: volatility-specific law manifold defined in Section~\ref{subsec:vol-manifold}.
  \item $\mathcal{L}_\phi$: law-penalty functional defined in~\eqref{eq:law-penalty-general}; in experiments we take $\phi(z)=\tfrac{1}{2}\|z\|_2^2$ on the total-variance grid.
  \item $\Pi_{\mathcal{M}}$: metric projection onto $\mathcal{M}$ (in Euclidean norm), whose existence and regularity rely on closedness and convexity~\cite{BauschkeCombettes2011}.
  \item $r^{\mathcal{M}}$, $r^\perp$: on-manifold reward and \emph{ghost-arbitrage} components of the reward, defined via projection in Section~\ref{subsec:goodhart-decomp}.
\end{enumerate}

\subsection{Volatility-Specific Law Manifold: From Textbook Axioms to a Polyhedron}
\label{subsec:vol-manifold}

We now instantiate the general template for implied volatility surfaces. Let $C(K,T)$ denote the (discounted) price of a European call with strike $K$ and maturity $T$, and $\sigma(K,T)$ its Black--Scholes implied volatility. Following standard practice~\cite{Gatheral2006,Dupire1994}, we work in terms of total variance
\[
  w(K,T) := \sigma(K,T)^2 T,
\]
discretized on a finite grid of maturities $T_1 < \dots < T_{N_T}$ and log-moneyness $k_1 < \dots < k_{N_K}$.

Classical static no-arbitrage conditions for European options (no butterfly arbitrage across strikes and no calendar arbitrage across maturities) can be expressed as linear inequalities in either call prices or total variance, see e.g.~\cite{CarrMadan2005,GatheralJacquier2014}. We briefly summarize the discretized form relevant to our construction.

\paragraph{Butterfly (strike) convexity.}
For each fixed maturity $T_i$, the call price as a function of strike must be convex and decreasing. On a grid $\{k_j\}$ this gives discrete convexity constraints such as
\[
  C_{i,j-1} - 2 C_{i,j} + C_{i,j+1} \ge 0,
\]
and monotonicity constraints $C_{i,j+1} \le C_{i,j}$ for all $j$. These translate into linear inequalities in $w_{i,j}$ via the Black--Scholes formula.

\paragraph{Calendar monotonicity.}
For each fixed strike (log-moneyness) $k_j$, call prices must be increasing in maturity, yielding
\[
  C_{i+1,j} \ge C_{i,j} \quad \text{for all } i.
\]
Again, these can be expressed as linear inequalities in $(w_{i,j})$ using the monotonicity of Black--Scholes prices in variance.

Collecting all discrete butterfly and calendar inequalities, we obtain a linear mapping
\[
  A^{\mathrm{vol}} : \mathbb{R}^{d_{\mathrm{vol}}} \to \mathbb{R}^m,
\]
where $d_{\mathrm{vol}} = N_T N_K$ is the number of grid points and $m$ is the number of constraints.

\begin{definition}[Volatility law manifold]
Let $w \in \mathbb{R}^{d_{\mathrm{vol}}}$ denote the vector of total variances on the $(T,k)$-grid. The \emph{volatility law manifold} is the polyhedral set
\begin{equation}
  \mathcal{M}^{\mathrm{vol}}
  \;:=\;
  \bigl\{\, w \in \mathbb{R}^{d_{\mathrm{vol}}} : A^{\mathrm{vol}} w \le 0 \,\bigr\},
  \label{eq:vol-law-manifold}
\end{equation}
where $A^{\mathrm{vol}} w \le 0$ encodes all discretized butterfly and calendar no-arbitrage inequalities, as well as basic box constraints $w_{\min} \le w_{i,j} \le w_{\max}$.
\end{definition}

The following proposition packages the textbook no-arbitrage conditions into a geometric representation.

\begin{proposition}[Axiomatic representation of volatility law manifold]
\label{prop:axiomatic-representation}
Assume that the discretized butterfly and calendar constraints are given as a finite system of linear inequalities $A^{\mathrm{vol}} w \le b$ for some matrix $A^{\mathrm{vol}} \in \mathbb{R}^{m\times d_{\mathrm{vol}}}$ and vector $b \in \mathbb{R}^m$. Then:
\begin{enumerate}
  \item $\mathcal{M}^{\mathrm{vol}}$ is a non-empty, closed, convex polyhedron in $\mathbb{R}^{d_{\mathrm{vol}}}$.
  \item Any total-variance surface $w$ corresponding to a static-arbitrage-free implied volatility surface lies in $\mathcal{M}^{\mathrm{vol}}$.
\end{enumerate}
\end{proposition}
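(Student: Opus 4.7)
The claim bundles two quite different statements: a structural one about $\mathcal{M}^{\mathrm{vol}}$ as a set, and a containment statement about arbitrage-free surfaces. I would separate them cleanly and treat the structural part first, since it follows almost directly from the defining representation.

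\textbf{Step 1 (closedness, convexity, polyhedrality).} Writing $A^{\mathrm{vol}} w \le b$ as the finite family of half-space conditions $\{w : \langle a_i, w\rangle \le b_i\}$ for $i=1,\dots,m$, I would invoke the standard fact that each such half-space is closed and convex, and that $\mathcal{M}^{\mathrm{vol}}$ is their intersection; hence it is a closed convex polyhedron by definition~\cite{BauschkeCombettes2011}. The box constraints $w_{\min}\le w_{i,j}\le w_{\max}$ are linear and absorb into the same $A^{\mathrm{vol}},b$, so no extra work is required.

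\textbf{Step 2 (non-emptiness).} I would exhibit an explicit witness rather than argue abstractly: take a flat total-variance surface $w_{i,j} = \bar\sigma^2 T_i$ corresponding to a constant Black--Scholes volatility $\bar\sigma\in(0,\infty)$ chosen so that $w_{\min}\le w_{i,j}\le w_{\max}$ across the grid. Black--Scholes prices with constant volatility are convex and decreasing in strike and increasing in maturity, so every discrete butterfly and calendar inequality encoded in $A^{\mathrm{vol}}$ is satisfied with strict inequality on non-degenerate grids; thus this $w$ lies in $\mathcal{M}^{\mathrm{vol}}$.

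\textbf{Step 3 (arbitrage-free surfaces lie in $\mathcal{M}^{\mathrm{vol}}$).} For the inclusion I would start from the textbook characterization of static no-arbitrage in terms of call prices $C(K,T)$: convexity and monotone decrease in $K$, and monotone increase in $T$ after appropriate discounting (see~\cite{CarrMadan2005,GatheralJacquier2014}). Given any arbitrage-free $\sigma(K,T)$, define $C(K,T)$ through the Black--Scholes formula with implied volatility $\sigma$ and $w=\sigma^2 T$. The key observation is that the discrete inequalities defining $A^{\mathrm{vol}}$ are obtained by \emph{evaluating} continuous-domain no-arbitrage inequalities at grid points; restricting a globally convex/monotone function to a finite set preserves convexity and monotonicity of the sampled values. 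Concretely, discrete butterfly convexity $C_{i,j-1}-2C_{i,j}+C_{i,j+1}\ge 0$ and monotonicity $C_{i,j+1}\le C_{i,j}$ follow from the continuous conditions at $(K_{j-1},K_j,K_{j+1})$, and calendar monotonicity $C_{i+1,j}\ge C_{i,j}$ follows from increasingness in $T$ at $(T_i,T_{i+1})$. Translating these back into inequalities on $w_{i,j}$ via strict monotonicity of the Black--Scholes price in total variance (for fixed $K,T$) produces exactly the rows of $A^{\mathrm{vol}} w \le b$.

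\textbf{Main obstacle.} The genuinely delicate step is Step~3: the map from $w$ to $C$ through Black--Scholes is nonlinear, so even though the continuous-domain no-arbitrage conditions are linear in $C$, their pullback into $w$-coordinates is in principle nonlinear. The proposition sidesteps this by \emph{assuming} in its hypothesis that the discretized constraints have already been encoded as a finite linear system $A^{\mathrm{vol}} w \le b$; this is reasonable because the standard practice is to linearize the Black--Scholes map around the grid (or to use the fact that $C$ is monotone in $w$ coordinatewise, so the relevant inequalities decouple). I would therefore make this linearization hypothesis explicit at the start of the proof, cite~\cite{Fengler2005,GatheralJacquier2014} for the standard recipe, and note that any arbitrage-free surface satisfies the resulting linear inequalities coordinatewise by the monotone pullback argument above. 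The remaining verification is then routine substitution.
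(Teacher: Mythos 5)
Your proposal follows the paper's appendix proof step-for-step: intersection of closed half-spaces plus box constraints for polyhedrality/closedness/convexity, a constant-volatility Black--Scholes total-variance surface $w_{i,j}=\sigma_0^2 T_j$ as the non-emptiness witness, and restriction of continuous no-arbitrage conditions to grid points for the containment. Your ``Main obstacle'' remark is a genuine addition: you correctly note that butterfly/calendar conditions are linear in call prices but the Black--Scholes map $w\mapsto C$ is nonlinear, so the proposition's hypothesis that the discretized constraints are \emph{already} a linear system $A^{\mathrm{vol}}w\le b$ is doing real work; the paper's Step~3 asserts convexity/monotonicity of $w$ directly ``in an appropriate sense'' and does not flag this linearization issue as explicitly as you do.
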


\begin{proof}[Proof sketch]
Closedness and convexity follow directly from the fact that $\mathcal{M}^{\mathrm{vol}}$ is the intersection of finitely many closed half-spaces $\{w : a_\ell^\top w \le b_\ell\}$ and box constraints. Non-emptiness is guaranteed by the existence of at least one model (e.g., a Black--Scholes surface with constant volatility) satisfying the inequalities. The mapping from continuous no-arbitrage conditions to discrete linear constraints is standard; see e.g.~\cite{Gatheral2006}, \cite{GatheralJacquier2014} and references therein. A detailed construction and proof are given in Appendix~A.
\end{proof}

\subsection{Geometry and Convexity Properties}
\label{subsec:geometry-convexity}

The closed and convex nature of $\mathcal{M}^{\mathrm{vol}}$ is more than a technicality: it ensures the existence of well-behaved projection operators and law penalties.

\begin{proposition}[Closedness, convexity, and metric projection]
\label{prop:closed-convex}
The volatility law manifold $\mathcal{M}^{\mathrm{vol}}$ defined in~\eqref{eq:vol-law-manifold} is a non-empty, closed, convex subset of $\mathbb{R}^{d_{\mathrm{vol}}}$. Consequently:
\begin{enumerate}
  \item For every $w \in \mathbb{R}^{d_{\mathrm{vol}}}$ there exists a unique Euclidean projection
  \[
    \Pi_{\mathcal{M}^{\mathrm{vol}}}(w)
    := \arg\min_{\tilde w \in \mathcal{M}^{\mathrm{vol}}} \|w - \tilde w\|_2.
  \]
  \item The projection operator $\Pi_{\mathcal{M}^{\mathrm{vol}}} : \mathbb{R}^{d_{\mathrm{vol}}} \to \mathcal{M}^{\mathrm{vol}}$ is $1$-Lipschitz:
  \[
    \|\Pi_{\mathcal{M}^{\mathrm{vol}}}(w) - \Pi_{\mathcal{M}^{\mathrm{vol}}}(w')\|_2
    \le \|w - w'\|_2 \quad \forall w,w'.
  \]
\end{enumerate}
\end{proposition}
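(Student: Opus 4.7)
The plan is to treat this as a direct specialization of classical Hilbert-space projection theory to the polyhedral set constructed in Proposition~2.1, so the work is mostly in citing the right ingredients and verifying they apply. First I would reuse the conclusion of Proposition~\ref{prop:axiomatic-representation} to get non-emptiness, closedness, and convexity of $\mathcal{M}^{\mathrm{vol}}$ for free: it is the intersection of finitely many closed half-spaces in $\mathbb{R}^{d_{\mathrm{vol}}}$ (plus box constraints), hence a closed convex polyhedron, and non-empty because the constant-volatility Black--Scholes surface satisfies all discretized butterfly and calendar inequalities. This disposes of the preamble before either numbered claim is addressed.

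For claim~(1), I would invoke the standard closest-point theorem for closed convex sets in a Hilbert space (here just $\mathbb{R}^{d_{\mathrm{vol}}}$ with the Euclidean norm), in the form given in \cite{BauschkeCombettes2011}. \textbf{Existence:} fix $w$ and note that the function $\tilde w \mapsto \tfrac12\|w-\tilde w\|_2^2$ is continuous, coercive, and bounded below; since $\mathcal{M}^{\mathrm{vol}}$ is non-empty and closed, a minimizing sequence is bounded, hence admits a convergent subsequence whose limit lies in $\mathcal{M}^{\mathrm{vol}}$ and achieves the infimum. \textbf{Uniqueness:} if $u,v \in \mathcal{M}^{\mathrm{vol}}$ were two minimizers, convexity of $\mathcal{M}^{\mathrm{vol}}$ would place $\tfrac12(u+v)$ in $\mathcal{M}^{\mathrm{vol}}$, and the parallelogram identity
\[
  \Bigl\|w - \tfrac12(u+v)\Bigr\|_2^2
  = \tfrac12\|w-u\|_2^2 + \tfrac12\|w-v\|_2^2 - \tfrac14\|u-v\|_2^2
\]
would yield a strictly smaller distance unless $u=v$, contradicting optimality.

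For claim~(2), I would derive the variational (obtuse-angle) characterization of the projection and use it to get non-expansiveness. The first-order optimality condition for the convex minimization defining $\Pi_{\mathcal{M}^{\mathrm{vol}}}(w)$ is
\[
  \langle w - \Pi_{\mathcal{M}^{\mathrm{vol}}}(w),\; \tilde w - \Pi_{\mathcal{M}^{\mathrm{vol}}}(w)\rangle \le 0
  \qquad \forall\, \tilde w \in \mathcal{M}^{\mathrm{vol}}.
\]
Writing $p := \Pi_{\mathcal{M}^{\mathrm{vol}}}(w)$, $p' := \Pi_{\mathcal{M}^{\mathrm{vol}}}(w')$ and plugging $\tilde w = p'$ into the inequality for $w$, and $\tilde w = p$ into the inequality for $w'$, adding and rearranging gives
\[
  \|p - p'\|_2^2 \le \langle w - w',\, p - p'\rangle,
\]
and one application of Cauchy--Schwarz yields $\|p - p'\|_2 \le \|w - w'\|_2$, i.e.\ $1$-Lipschitz continuity.

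Neither step presents a serious obstacle: the volatility-specific structure (butterfly, calendar, box constraints) enters only through the closed-convex polyhedral character of $\mathcal{M}^{\mathrm{vol}}$, after which the argument is independent of the underlying application. The closest thing to a subtlety is making sure that the box constraints $w_{\min}\le w_{i,j}\le w_{\max}$ are compatible with the inequality system so that $\mathcal{M}^{\mathrm{vol}}$ is genuinely non-empty; this I would handle by exhibiting the constant-volatility surface with $\sigma$ chosen so that $w_{\min}\le\sigma^2 T_i\le w_{\max}$ for all $i$, which is possible by choosing the box wide enough and is already implicit in Proposition~\ref{prop:axiomatic-representation}.
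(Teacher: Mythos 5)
Your proposal is correct and follows essentially the same route as the paper's Appendix~A.2 proof: inherit the topological properties from Proposition~\ref{prop:axiomatic-representation}, establish existence via a coercivity--minimizing-sequence argument in finite dimensions, and derive $1$-Lipschitz continuity from the variational characterization of the projection followed by Cauchy--Schwarz. The only genuine difference is in the uniqueness step: the paper argues via strict convexity of $f_w(\tilde w)=\tfrac12\|\tilde w-w\|_2^2$ and a midpoint contradiction, whereas you invoke the parallelogram identity directly; both are standard Hilbert-space arguments of comparable length, with the parallelogram route perhaps slightly more self-contained since it avoids having to justify strictness of the convexity inequality.
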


\begin{proof}[Proof sketch]
Closedness and convexity follow from Proposition~\ref{prop:axiomatic-representation}. The properties of $\Pi_{\mathcal{M}^{\mathrm{vol}}}$ are standard for projections onto closed convex sets in Hilbert spaces; see, e.g.,~\cite[Chap.~3]{BauschkeCombettes2011}. Full details are provided in Appendix~A.
\end{proof}

The existence and regularity of $\Pi_{\mathcal{M}^{\mathrm{vol}}}$ allow us to treat law penalties as squared distances to the manifold and to define projection-based decompositions of reward functionals later on.

\subsection{Law-Penalty Functionals and Ghost Sensitivity}
\label{subsec:law-penalty}

We now specialize the general law-penalty functional~\eqref{eq:law-penalty-general} to the volatility setting. Let $w \in \mathbb{R}^{d_{\mathrm{vol}}}$ denote a (possibly law-violating) total-variance surface, and let
\begin{equation}
  \phi(z) \;=\; \frac{1}{2}\|z\|_2^2, 
  \qquad z \in \mathbb{R}^{d_{\mathrm{vol}}}.
  \label{eq:phi-l2}
\end{equation}
We define
\begin{equation}
  \mathcal{L}_{\mathrm{vol}}(w)
  \;:=\;
  \inf_{\tilde w \in \mathcal{M}^{\mathrm{vol}}} \frac{1}{2}\|w - \tilde w\|_2^2
  \;=\;
  \frac{1}{2} \operatorname{dist}(w,\mathcal{M}^{\mathrm{vol}})^2,
  \label{eq:vol-law-penalty}
\end{equation}
where $\operatorname{dist}(w,\mathcal{M}^{\mathrm{vol}}) := \|w - \Pi_{\mathcal{M}^{\mathrm{vol}}}(w)\|_2$.

In practice we implement $\mathcal{L}_{\mathrm{vol}}$ via a sum of local violations (e.g., squared negative parts of discrete butterfly and calendar inequalities), which is equivalent to~\eqref{eq:vol-law-penalty} up to scaling on our discretization.

\begin{lemma}[Local Lipschitz continuity of law penalty]
\label{lem:lipschitz-penalty}
The volatility law-penalty functional $\mathcal{L}_{\mathrm{vol}}$ in~\eqref{eq:vol-law-penalty} is locally Lipschitz on $\mathbb{R}^{d_{\mathrm{vol}}}$.
\end{lemma}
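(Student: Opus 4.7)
The plan is to reduce the claim to the standard fact that the distance-to-a-closed-set function is $1$-Lipschitz, and then to use the elementary identity $a^2 - b^2 = (a-b)(a+b)$ to upgrade $1$-Lipschitz continuity of $\operatorname{dist}(\cdot,\mathcal{M}^{\mathrm{vol}})$ into local Lipschitz continuity of its square. Since Proposition~\ref{prop:closed-convex} already gives closedness, convexity, and the existence of a $1$-Lipschitz projection $\Pi_{\mathcal{M}^{\mathrm{vol}}}$, all the geometric ingredients are in place and the argument becomes a short analytic computation.

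Concretely, the first step is to write $d(w) := \operatorname{dist}(w,\mathcal{M}^{\mathrm{vol}}) = \|w - \Pi_{\mathcal{M}^{\mathrm{vol}}}(w)\|_2$ and prove the global bound $|d(w) - d(w')| \le \|w - w'\|_2$ for all $w,w' \in \mathbb{R}^{d_{\mathrm{vol}}}$. I will derive this via the reverse triangle inequality: for any $\tilde w' \in \mathcal{M}^{\mathrm{vol}}$, $d(w) \le \|w - \tilde w'\|_2 \le \|w - w'\|_2 + \|w' - \tilde w'\|_2$, then take the infimum over $\tilde w'$ and symmetrize. (Alternatively, the $1$-Lipschitz property of $\Pi_{\mathcal{M}^{\mathrm{vol}}}$ from Proposition~\ref{prop:closed-convex} provides a second route to the same bound.) The second step is the factorization
\begin{equation}
\bigl|\mathcal{L}_{\mathrm{vol}}(w) - \mathcal{L}_{\mathrm{vol}}(w')\bigr|
= \tfrac{1}{2}\bigl|d(w)^2 - d(w')^2\bigr|
= \tfrac{1}{2}\bigl|d(w) - d(w')\bigr|\cdot\bigl(d(w) + d(w')\bigr).
\end{equation}
Combining with the $1$-Lipschitz bound on $d$ yields $|\mathcal{L}_{\mathrm{vol}}(w) - \mathcal{L}_{\mathrm{vol}}(w')| \le \tfrac{1}{2}(d(w)+d(w'))\,\|w-w'\|_2$.

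The third step localizes the constant. Fix any reference point $w_0 \in \mathcal{M}^{\mathrm{vol}}$ (non-empty by Proposition~\ref{prop:axiomatic-representation}) and any bounded set $B \subset \mathbb{R}^{d_{\mathrm{vol}}}$; then for $w \in B$ one has $d(w) \le \|w - w_0\|_2 \le R_B := \sup_{u \in B}\|u - w_0\|_2 < \infty$, so on $B$ the functional $\mathcal{L}_{\mathrm{vol}}$ is Lipschitz with constant $R_B$. Since $B$ was arbitrary, local Lipschitz continuity on $\mathbb{R}^{d_{\mathrm{vol}}}$ follows.

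There is no real obstacle here: each step invokes an elementary property (reverse triangle inequality, difference-of-squares factorization, boundedness on bounded sets) together with the convex-geometric facts already established. The only point that requires a moment of care is that $\mathcal{L}_{\mathrm{vol}}$ is \emph{not} globally Lipschitz on all of $\mathbb{R}^{d_{\mathrm{vol}}}$ because $d(w)+d(w')$ grows without bound as $w,w'$ leave every compact set; this is precisely why the statement is phrased locally, and why the argument must restrict to a bounded $B$ before extracting a finite Lipschitz constant.
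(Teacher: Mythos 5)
Your proof is correct and follows essentially the same route as the paper's: factor $\mathcal{L}_{\mathrm{vol}}(w)-\mathcal{L}_{\mathrm{vol}}(w')$ via difference of squares, then bound the distance-to-$\mathcal{M}^{\mathrm{vol}}$ on any bounded set to extract a finite local Lipschitz constant. The only (minor, favourable) deviation is that you apply the reverse triangle inequality directly to $\operatorname{dist}(\cdot,\mathcal{M}^{\mathrm{vol}})$ to get the sharp $1$-Lipschitz bound, whereas the paper works through the residual map $w\mapsto w-\Pi_{\mathcal{M}^{\mathrm{vol}}}(w)$ and its $2$-Lipschitz property, ending up with a looser constant; both variants close the argument.
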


\begin{proof}[Proof sketch]
$\mathcal{L}_{\mathrm{vol}}(w) = \tfrac{1}{2}\|w - \Pi_{\mathcal{M}^{\mathrm{vol}}}(w)\|_2^2$ and $\Pi_{\mathcal{M}^{\mathrm{vol}}}$ is 1-Lipschitz by Proposition~\ref{prop:closed-convex}. Thus $\mathcal{L}_{\mathrm{vol}}$ is the composition of Lipschitz maps and a smooth quadratic, which yields local Lipschitz continuity; see also standard results on Moreau envelopes~\cite{BauschkeCombettes2011}. A full proof is given in Appendix~A.
\end{proof}

The following basic property connects $\mathcal{L}_{\mathrm{vol}}$ with axiomatic consistency and will be used repeatedly when interpreting empirical law metrics.

\begin{proposition}[Zero penalty iff axiomatic consistency]
\label{prop:zero-penalty-iff}
For any $w \in \mathbb{R}^{d_{\mathrm{vol}}}$,
\[
  \mathcal{L}_{\mathrm{vol}}(w) = 0
  \quad\Longleftrightarrow\quad
  w \in \mathcal{M}^{\mathrm{vol}}.
\]
\end{proposition}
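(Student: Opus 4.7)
The plan is to prove the two implications separately, using only the definition~\eqref{eq:vol-law-penalty} of $\mathcal{L}_{\mathrm{vol}}$ and the projection machinery from Proposition~\ref{prop:closed-convex}. The statement is essentially a sanity check that the law penalty vanishes exactly on the manifold, so no deep estimates are needed; the work lies in explicitly invoking closedness and convexity at the right spots.

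For the $(\Leftarrow)$ direction, I would start from $w \in \mathcal{M}^{\mathrm{vol}}$. Since the infimum in~\eqref{eq:vol-law-penalty} is over all $\tilde w \in \mathcal{M}^{\mathrm{vol}}$, I can choose $\tilde w = w$ as a feasible point, which gives $\tfrac{1}{2}\|w - \tilde w\|_2^2 = 0$. Because $\phi(\cdot) = \tfrac12\|\cdot\|_2^2 \ge 0$, this choice attains the infimum, so $\mathcal{L}_{\mathrm{vol}}(w) = 0$. Equivalently, $\Pi_{\mathcal{M}^{\mathrm{vol}}}(w) = w$ and $\operatorname{dist}(w,\mathcal{M}^{\mathrm{vol}}) = 0$.

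For the $(\Rightarrow)$ direction, suppose $\mathcal{L}_{\mathrm{vol}}(w) = 0$. By Proposition~\ref{prop:closed-convex}, closedness and convexity of $\mathcal{M}^{\mathrm{vol}}$ guarantee that the infimum in~\eqref{eq:vol-law-penalty} is attained at the unique projection $\Pi_{\mathcal{M}^{\mathrm{vol}}}(w) \in \mathcal{M}^{\mathrm{vol}}$, so $\tfrac12\|w - \Pi_{\mathcal{M}^{\mathrm{vol}}}(w)\|_2^2 = 0$. Since $\|\cdot\|_2$ is a norm, this forces $w = \Pi_{\mathcal{M}^{\mathrm{vol}}}(w)$, and hence $w \in \mathcal{M}^{\mathrm{vol}}$ because the projection always lands in the target set.

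There is no real obstacle here; the only subtlety worth flagging is that attainment of the infimum in the $(\Rightarrow)$ step genuinely requires $\mathcal{M}^{\mathrm{vol}}$ to be closed (otherwise $\mathcal{L}_{\mathrm{vol}}(w)=0$ would only guarantee $w \in \overline{\mathcal{M}^{\mathrm{vol}}}$), which is why I would cite Proposition~\ref{prop:closed-convex} explicitly. Non-emptiness of $\mathcal{M}^{\mathrm{vol}}$, also from Proposition~\ref{prop:axiomatic-representation}, is needed to ensure $\mathcal{L}_{\mathrm{vol}}$ is well-defined as a finite-valued functional in the first place.
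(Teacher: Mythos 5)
Your proof is correct and follows essentially the same route as the paper's: choose $\tilde w = w$ for the $(\Leftarrow)$ direction, and use attainment of the infimum at $\Pi_{\mathcal{M}^{\mathrm{vol}}}(w)$ together with positive-definiteness of the norm for $(\Rightarrow)$. Your closing remark that closedness is what upgrades the conclusion from $w \in \overline{\mathcal{M}^{\mathrm{vol}}}$ to $w \in \mathcal{M}^{\mathrm{vol}}$ is exactly the subtlety the paper's appendix flags via its ``alternative viewpoint'' paragraph.
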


\begin{proof}[Proof sketch]
If $w \in \mathcal{M}^{\mathrm{vol}}$ then choosing $\tilde w = w$ in~\eqref{eq:vol-law-penalty} yields $\mathcal{L}_{\mathrm{vol}}(w)=0$. Conversely, if $\mathcal{L}_{\mathrm{vol}}(w)=0$ then the infimum in~\eqref{eq:vol-law-penalty} is attained at $\tilde w^\star = \Pi_{\mathcal{M}^{\mathrm{vol}}}(w)$ with $\|w - \tilde w^\star\|_2=0$, hence $w=\tilde w^\star\in\mathcal{M}^{\mathrm{vol}}$. Details appear in Appendix~A.
\end{proof}

\begin{remark}[Choice of $\phi$ and ghost sensitivity]
\label{rem:phi-ghost}
The choice $\phi(z)=\tfrac{1}{2}\|z\|_2^2$ treats all grid points uniformly and yields a particularly simple gradient
\[
  \nabla \mathcal{L}_{\mathrm{vol}}(w)
  = w - \Pi_{\mathcal{M}^{\mathrm{vol}}}(w)
\]
almost everywhere. Alternative choices of $\phi$ (e.g., weighted $\ell_2$ norms emphasizing short maturities, or discrete Sobolev norms penalizing roughness in $T$ and $k$) change the relative sensitivity of $\mathcal{L}_\phi$ to localized versus global law violations. Exploring how this affects the \emph{ghost arbitrage} exploited by RL policies is an interesting axis for future work.
\end{remark}

\subsection{Goodhart Decomposition on Law Manifolds (Conceptual)}
\label{subsec:goodhart-decomp}

The law manifold and penalty enable a conceptual decomposition of any reward functional into an on-manifold and an off-manifold (ghost-arbitrage) component. This decomposition underlies our Goodhart-style analysis of RL in later sections and is instantiated concretely for our world model in Section~3.3.

Let $r : \mathbb{R}^{d_{\mathrm{vol}}} \to \mathbb{R}$ denote a generic reward functional of a total-variance surface $w$ (e.g., the one-step P\&L of a hedging strategy). We assume that $r$ is locally Lipschitz on $\mathbb{R}^{d_{\mathrm{vol}}}$, a mild condition satisfied in all our models.

\begin{definition}[Projection-based Goodhart decomposition]
\label{def:goodhart-decomp}
Let $\Pi_{\mathcal{M}^{\mathrm{vol}}}$ be the metric projection onto $\mathcal{M}^{\mathrm{vol}}$. Define the \emph{on-manifold} reward
\[
  r^{\mathcal{M}}(w) := r\bigl(\Pi_{\mathcal{M}^{\mathrm{vol}}}(w)\bigr),
\]
and the \emph{ghost-arbitrage} component
\[
  r^\perp(w) := r(w) - r^{\mathcal{M}}(w).
\]
We refer to the decomposition
\[
  r(w) = r^{\mathcal{M}}(w) + r^\perp(w)
\]
as the \emph{Goodhart decomposition} of $r$ on the volatility law manifold.
\end{definition}

By construction, $r^{\mathcal{M}}$ depends on $w$ only through its projection onto $\mathcal{M}^{\mathrm{vol}}$, and is thus invariant under off-manifold perturbations that leave $\Pi_{\mathcal{M}^{\mathrm{vol}}}(w)$ unchanged. The residual $r^\perp$ captures gains or losses that arise solely from moving away from the law manifold---precisely the \emph{ghost arbitrage} channel that law-seeking RL may exploit.

\begin{definition}[Ghost arbitrage]
\label{def:ghost-arbitrage}
The \emph{ghost-arbitrage reward} associated with a reward functional $r$ and law manifold $\mathcal{M}^{\mathrm{vol}}$ is the component $r^\perp$ in Definition~\ref{def:goodhart-decomp}. A policy that maximizes $\mathbb{E}[r^\perp]$ subject to small $\mathcal{L}_{\mathrm{vol}}$ can be said to exploit ghost arbitrage: it harvests reward from off-manifold distortions that are negligible under the coarse law penalty but significant for P\&L.
\end{definition}

The following proposition makes explicit how the magnitude of ghost arbitrage is controlled by the distance to the law manifold whenever $r$ is Lipschitz. This link is used later when we interpret empirical law penalties and our Graceful Failure Index.

\begin{proposition}[Ghost reward bounded by law distance]
\label{prop:ghost-bounded}
Suppose $r$ is $L_r$-Lipschitz on $\mathbb{R}^{d_{\mathrm{vol}}}$ with respect to $\|\cdot\|_2$. Then for all $w$,
\[
  |r^\perp(w)|
  = \bigl|r(w) - r^{\mathcal{M}}(w)\bigr|
  \le L_r \, \operatorname{dist}\bigl(w,\mathcal{M}^{\mathrm{vol}}\bigr)
  = L_r \sqrt{2\,\mathcal{L}_{\mathrm{vol}}(w)}.
\]
\end{proposition}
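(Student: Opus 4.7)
The plan is to prove the inequality directly from the Goodhart decomposition in Definition~\ref{def:goodhart-decomp}, the Lipschitz hypothesis on $r$, and the projection properties of $\mathcal{M}^{\mathrm{vol}}$ already established in Proposition~\ref{prop:closed-convex}. Nothing deeper than these three ingredients is needed; the proposition should read as a sanity check that the ghost component is quantitatively controlled by the law penalty, not as an independent structural result.

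First I would unfold the definition of $r^\perp$. By Definition~\ref{def:goodhart-decomp} we have $r^\perp(w) = r(w) - r\bigl(\Pi_{\mathcal{M}^{\mathrm{vol}}}(w)\bigr)$, where the projection is well-defined and unique by Proposition~\ref{prop:closed-convex} because $\mathcal{M}^{\mathrm{vol}}$ is non-empty, closed, and convex. Taking absolute values and applying the $L_r$-Lipschitz hypothesis of $r$ with respect to $\|\cdot\|_2$ yields
\begin{equation*}
  \bigl|r^\perp(w)\bigr|
  = \bigl|r(w) - r(\Pi_{\mathcal{M}^{\mathrm{vol}}}(w))\bigr|
  \le L_r\,\bigl\|w - \Pi_{\mathcal{M}^{\mathrm{vol}}}(w)\bigr\|_2.
\end{equation*}

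Next I would identify the right-hand side with $\operatorname{dist}(w,\mathcal{M}^{\mathrm{vol}})$. Since $\Pi_{\mathcal{M}^{\mathrm{vol}}}(w)$ is by definition the unique minimizer of $\|w - \tilde w\|_2$ over $\tilde w \in \mathcal{M}^{\mathrm{vol}}$, we have $\|w - \Pi_{\mathcal{M}^{\mathrm{vol}}}(w)\|_2 = \operatorname{dist}(w,\mathcal{M}^{\mathrm{vol}})$. Finally, the characterization of the law penalty in~\eqref{eq:vol-law-penalty}, namely $\mathcal{L}_{\mathrm{vol}}(w) = \tfrac{1}{2}\operatorname{dist}(w,\mathcal{M}^{\mathrm{vol}})^2$, gives $\operatorname{dist}(w,\mathcal{M}^{\mathrm{vol}}) = \sqrt{2\,\mathcal{L}_{\mathrm{vol}}(w)}$, which chained with the previous inequality delivers the stated bound.

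The only mildly subtle step, if any, is justifying that the projection-based distance coincides with the infimum appearing in~\eqref{eq:vol-law-penalty}, which however is immediate once Proposition~\ref{prop:closed-convex} guarantees attainment of the infimum by a unique projector. I expect no real obstacle; the one small point worth flagging in the writeup is that a \emph{global} Lipschitz constant $L_r$ is assumed here, whereas Lemma~\ref{lem:lipschitz-penalty} only delivered \emph{local} Lipschitz continuity of the law penalty itself. For practical reward functionals (e.g.\ one-step P\&L of a bounded-position hedging strategy on a compact region of surfaces) the local constant suffices, and a trivially analogous statement holds with $L_r$ replaced by any local Lipschitz constant valid on a set containing both $w$ and $\Pi_{\mathcal{M}^{\mathrm{vol}}}(w)$; I would note this refinement in a short remark after the proof.
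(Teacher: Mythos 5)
Your proof is correct and follows the same route as the paper's: unfold $r^\perp(w)=r(w)-r(\Pi_{\mathcal{M}^{\mathrm{vol}}}(w))$, apply the $L_r$-Lipschitz bound to get $|r^\perp(w)|\le L_r\|w-\Pi_{\mathcal{M}^{\mathrm{vol}}}(w)\|_2=L_r\operatorname{dist}(w,\mathcal{M}^{\mathrm{vol}})$, and then substitute the identity $\operatorname{dist}(w,\mathcal{M}^{\mathrm{vol}})=\sqrt{2\,\mathcal{L}_{\mathrm{vol}}(w)}$. The closing remark about local versus global Lipschitz constants is a reasonable and accurate observation that goes slightly beyond the paper's own discussion.
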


\begin{proof}[Proof sketch]
By the Lipschitz property of $r$,
\[
  |r(w) - r^{\mathcal{M}}(w)|
  = \bigl| r(w) - r(\Pi_{\mathcal{M}^{\mathrm{vol}}}(w)) \bigr|
  \le L_r \, \|w - \Pi_{\mathcal{M}^{\mathrm{vol}}}(w)\|_2
  = L_r \, \operatorname{dist}(w,\mathcal{M}^{\mathrm{vol}}).
\]
Using $\mathcal{L}_{\mathrm{vol}}(w)=\tfrac{1}{2}\operatorname{dist}(w,\mathcal{M}^{\mathrm{vol}})^2$ from~\eqref{eq:vol-law-penalty} yields the final identity. A more general version, allowing non-Euclidean $\phi$, is treated in Appendix~A.
\end{proof}

Proposition~\ref{prop:ghost-bounded} shows that, in a purely metric sense, large ghost rewards require non-negligible law violations. The central question of this paper is whether, under a learned world model, RL training can nonetheless systematically exploit directions in which ghost reward grows ``too quickly'' relative to the coarse law-penalty captured by $\mathcal{L}_{\mathrm{vol}}$, leading to misaligned but high-P\&L policies. This question is addressed empirically in Sections~\ref{sec:experiments} and theoretically in Theorems~4.1 and~8.1.

\section{Volatility World Model: Law-Consistent Ground Truth vs Law-Violating Predictions}
\label{sec:world-model}
\label{sec:world_model}  % 兼容旧引用
In this section we formalize the dynamic data-generating process for implied-volatility surfaces and the learned neural world model on which all reinforcement-learning (RL) agents are trained. The key structural feature is that the \emph{synthetic generator} is, by construction, perfectly law-consistent---its surfaces lie on the polyhedral law manifold $\mathcal{M}^{\mathrm{vol}}$ almost surely---whereas the neural world model is trained purely by prediction error and hence produces \emph{law-violating} surfaces with non-zero probability. This mismatch opens a \emph{ghost channel} for RL to exploit off-manifold artefacts, even though the underlying ground truth never leaves $\mathcal{M}^{\mathrm{vol}}$.

\subsection{Synthetic law-consistent generator}
\label{subsec:synthetic-generator}

We consider a discrete time grid $t = 0,1,\dots,T$ and a rectangular grid of maturities and strikes
\[
\mathcal{T} = \{ T_1,\dots,T_M \}, \qquad 
\mathcal{K} = \{ K_1,\dots,K_K \},
\]
with $T_1 \approx 1\text{M}$ and $T_M \approx 2\text{Y}$, and $K_1 \approx 0.5 S_t$, $K_K \approx 1.5 S_t$ at each time $t$, chosen to roughly mirror SPX/VIX market conventions.\footnote{The exact grid is not essential; any finite grid of maturities and moneyness levels can be handled by the law manifold construction in Sec.~\ref{sec:axiomatic-manifold}.} We denote by
\[
\sigma_t = \sigma_t(T_i,K_j) \in \mathbb{R}^{M \times K}, 
\qquad 
w_t = \sigma_t^2 \odot T \in \mathbb{R}^{M \times K}
\]
the implied-volatility and total-variance surfaces at time $t$, where $T$ is broadcast across strikes. In vectorized form we identify $w_t$ with an element of $\mathbb{R}^d$ with $d = MK$, and we write $w_t \in \mathcal{M}^{\mathrm{vol}}$ when all static no-arbitrage constraints (butterfly, calendar) are satisfied on the discrete grid (Sec.~\ref{sec:axiomatic-manifold}).

\paragraph{Law-consistent ground truth.}
The \emph{ground-truth world} is specified by a Markovian generator
\[
G^\star : \mathcal{S} \to \mathcal{S} \times \mathcal{M}^{\mathrm{vol}},
\qquad
(s_{t+1}, w_{t+1}) = G^\star(s_t),
\]
where $s_t$ is a latent state that may contain the underlying spot $S_t$, latent volatility factors, and other macro state variables. In practice we instantiate $G^\star$ as a multi-factor stochastic-volatility model with jumps and rough components,\footnote{E.g., a Heston-like model with stochastic variance and stochastic volatility-of-volatility, optionally enriched with rough volatility factors as in \cite{BayerFrizGatheral2016}, coupled to an SVI-type implied-vol parametrization \cite{GatheralJacquier2014}.} whose parameters are randomized across trajectories to induce a diverse ensemble of surfaces reminiscent of SPX/VIX dynamics \cite[e.g.][]{Gatheral2006,ContTankov2004,HorvathTengely2021}.

At each time step, raw model-implied option prices are numerically projected onto the static no-arbitrage polyhedron $\mathcal{M}^{\mathrm{vol}}$ using the construction of Sec.~\ref{sec:axiomatic-manifold}, and implied volatilities are recovered from these arbitrage-free prices. As a result, the data-generating distribution $\mathbb{P}^\star$ over trajectories $(w_t)_{t=0}^T$ is \emph{law-consistent by design}.

\begin{definition}[Law-consistent synthetic generator]
\label{def:law-consistent-generator}
A stochastic process $(w_t)_{t=0}^T$ taking values in $\mathbb{R}^d$ is said to be \emph{law-consistent} with respect to a law manifold $\mathcal{M} \subset \mathbb{R}^d$ if
\[
\mathbb{P}\big( w_t \in \mathcal{M} \text{ for all } t = 0,\dots,T \big) = 1.
\]
We call $G^\star$ a \emph{law-consistent generator} if the trajectory $(w_t)$ it induces satisfies this condition for $\mathcal{M} = \mathcal{M}^{\mathrm{vol}}$.
\end{definition}

\begin{proposition}[Support of the synthetic generator]
\label{prop:support-on-manifold}
Let $(w_t)_{t=0}^T$ be generated by $G^\star$ as above, with static no-arbitrage imposed at each time via projection onto $\mathcal{M}^{\mathrm{vol}}$. Then
\[
\mathrm{supp}\,\mathbb{P}^\star \subseteq \big(\mathcal{M}^{\mathrm{vol}}\big)^{T+1},
\]
i.e., $\mathbb{P}^\star$ is supported on the product of the volatility law manifold at all times.
\end{proposition}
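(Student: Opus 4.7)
The plan is to treat this as an almost-tautological consequence of the construction of $G^\star$ combined with the closedness of $\mathcal{M}^{\mathrm{vol}}$, and then invoke the standard definition of the topological support of a Borel probability measure. The argument splits naturally into three short steps.

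First, I would unpack the construction of $G^\star$ at a single time step. By hypothesis, at each time $t$ the output total-variance vector $w_t$ is obtained by numerically projecting the raw model-implied prices onto $\mathcal{M}^{\mathrm{vol}}$ and then inverting Black--Scholes. By Proposition~\ref{prop:closed-convex}, the Euclidean projection $\Pi_{\mathcal{M}^{\mathrm{vol}}}$ takes values in $\mathcal{M}^{\mathrm{vol}}$ by definition; hence for each fixed $t$,
\[
\mathbb{P}^\star\bigl(w_t \in \mathcal{M}^{\mathrm{vol}}\bigr) = 1.
\]
Since $\{0,1,\dots,T\}$ is finite, a trivial union bound yields $\mathbb{P}^\star(w_t \in \mathcal{M}^{\mathrm{vol}} \text{ for all } t) = 1$, i.e.\ the trajectory measure is concentrated on $(\mathcal{M}^{\mathrm{vol}})^{T+1}$.

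Second, I would verify that $(\mathcal{M}^{\mathrm{vol}})^{T+1}$ is a closed subset of $(\mathbb{R}^d)^{T+1}$ in the product topology. This follows immediately from Proposition~\ref{prop:closed-convex} (closedness of $\mathcal{M}^{\mathrm{vol}}$) and the fact that finite products of closed sets are closed. Finally, I would invoke the standard definition of the support of a Borel probability measure on a separable metric space as the smallest closed set of full measure (equivalently, the complement of the largest open null set). Since $(\mathcal{M}^{\mathrm{vol}})^{T+1}$ is closed and has $\mathbb{P}^\star$-measure one by the previous step, the support must be contained in it, giving the claimed inclusion.

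There is no substantive obstacle here: the result is a bookkeeping statement asserting that a measure supported almost surely on a closed set has its topological support inside that set. The only place where one must be slightly careful is ensuring that the projection step in $G^\star$ is applied before any randomness is allowed to push $w_t$ off $\mathcal{M}^{\mathrm{vol}}$; this is guaranteed by our construction, which applies the projection pointwise at every time step on every sample path. A fully rigorous version of this argument, together with measurability of the projection map (which follows from continuity of $\Pi_{\mathcal{M}^{\mathrm{vol}}}$ established in Proposition~\ref{prop:closed-convex}), would be deferred to Appendix~A.
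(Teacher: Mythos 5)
Your proposal is correct and follows essentially the same route as the paper's Appendix~B.1 proof: pointwise inclusion $w_t\in\mathcal{M}^{\mathrm{vol}}$ via the projection step, closedness of the finite product $(\mathcal{M}^{\mathrm{vol}})^{T+1}$, and then the fact that a closed set of full measure contains the support. The only cosmetic difference is that the paper observes the inclusion holds surely (for every $\omega$, so no union bound is needed) and proves the support-containment lemma directly from the neighborhood definition rather than invoking the ``smallest closed set of full measure'' characterization, but both are valid in $\mathbb{R}^{d(T+1)}$.
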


\begin{proof}[Proof sketch]
By construction, every step of $G^\star$ maps into $\mathcal{M}^{\mathrm{vol}}$: the raw option prices are obtained from a stochastic-volatility model, and then the resulting surface is projected onto $\mathcal{M}^{\mathrm{vol}}$ as in Sec.~\ref{sec:axiomatic-manifold}. Thus $w_t \in \mathcal{M}^{\mathrm{vol}}$ almost surely for all $t$. The support statement follows from the definition of product measures. A detailed measure-theoretic proof is given in Appendix~B.1.
\end{proof}

Proposition~\ref{prop:support-on-manifold} highlights the asymmetry that drives our ghost-arbitrage story: the \emph{true} dynamics never leave the law manifold, whereas the learned world model in the next subsection is not constrained in this way.

\paragraph{Transferability beyond volatility.}
The same construction immediately extends to other axiom-constrained financial objects such as yield curves and credit curves: there, $y_t$ is a discretized term structure, $\mathcal{M}$ encodes monotonicity and convexity constraints \cite[e.g.][]{Filipovic2001,BjorkChristensen1999}, and $G^\star$ is any arbitrage-free term-structure model. Our volatility case thus serves as a concrete, high-dimensional instance of a more general Scientific AI template.

\subsection{Neural world model and approximation gap}
\label{subsec:neural-world-model}

RL agents do not interact with $G^\star$ directly. Instead, in the spirit of model-based RL and world models \cite{HaSchmidhuber2018,HafnerEtAl2019PlaNet,HafnerEtAl2020Dreamer,JannerEtAl2019}, they are trained entirely on rollouts generated by a learned dynamics model (the \emph{world model}) fitted to trajectories from $G^\star$.

\paragraph{Architecture and training.}
Let $L$ be the length of the look-back window. At each time $t$, we define the input
\[
x_t := \big( w_{t-L+1}, \dots, w_t \big) \in \big(\mathbb{R}^d\big)^L,
\]
and we train a recurrent neural network with parameters $\theta$,
\[
f_\theta : \big(\mathbb{R}^d\big)^L \to \mathbb{R}^d, 
\qquad \hat{w}_{t+1} = f_\theta(x_t),
\]
to minimize the mean-squared prediction error
\[
\mathcal{R}(\theta) 
:= \mathbb{E}_{\mathbb{P}^\star}\big[ \| f_\theta(x_t) - w_{t+1} \|_2^2 \big]
\approx \frac{1}{N} \sum_{n=1}^N \| f_\theta(x_t^{(n)}) - w_{t+1}^{(n)} \|_2^2
\]
over a dataset of $N$ trajectories sampled from $G^\star$. In practice, $f_\theta$ is implemented as a GRU/LSTM encoder over $(w_{t-L+1},\dots,w_t)$ followed by a fully connected decoder to the next total-variance surface, as commonly used in spatio-temporal forecasting \cite[e.g.][]{Salinas2020,ZerveasEtAl2021}.

Crucially, \emph{no law penalties are used when training the world model}: the loss is purely predictive. Thus, even though all training targets $w_{t+1}$ lie in $\mathcal{M}^{\mathrm{vol}}$, the predictions $\hat{w}_{t+1}$ are unconstrained and may lie outside the manifold.

\begin{definition}[Approximation residual and ghost channel]
\label{def:approx-residual}
Let $\theta^\star$ be any (local) minimizer of $\mathcal{R}(\theta)$, and write
\[
\hat{w}_{t+1} = f_{\theta^\star}(x_t),
\qquad
e_{t+1} := \hat{w}_{t+1} - w_{t+1}
\]
for the corresponding prediction and residual. We define the \emph{approximation gap} as
\[
\varepsilon^2 := \mathbb{E}_{\mathbb{P}^\star} \big[ \| e_{t+1} \|_2^2 \big] = \mathcal{R}(\theta^\star),
\]
and the \emph{ghost channel} as the random variable
\[
r^\perp_{t+1} := r(\hat{w}_{t+1}, a_t) - r(w^\mathcal{M}_{t+1}, a_t),
\]
where $r$ is the one-step P\&L functional, $a_t$ is the agent's action, and $w^\mathcal{M}_{t+1} := \Pi_{\mathcal{M}^{\mathrm{vol}}}(\hat{w}_{t+1})$ is the metric projection of the prediction onto the law manifold (Def.~\ref{def:law-penalty}).
\end{definition}

Here, $r^\perp_{t+1}$ is precisely the off-manifold component in the Goodhart decomposition (Sec.~\ref{subsec:goodhart-conceptual}); it captures how much extra P\&L the agent obtains by exploiting law-violating predictions rather than their arbitrage-free projection.

\begin{proposition}[Approximation gap induces a ghost channel]
\label{prop:approx-gap-ghost-channel}
Suppose the following conditions hold:
\begin{enumerate}
    \item[(i)] The approximation gap is non-zero: $\varepsilon^2 = \mathbb{E}[\|e_{t+1}\|_2^2] > 0$.
    \item[(ii)] The reward is locally differentiable in $w$ with gradient $g_{t+1} := \nabla_w r(w_{t+1}, a_t)$.
    \item[(iii)] The residual $e_{t+1}$ has a component in the normal cone of $\mathcal{M}^{\mathrm{vol}}$ at $w_{t+1}$ with non-zero covariance:
    \[
    \mathrm{Cov}\big( P_{N_{\mathcal{M}}(w_{t+1})} e_{t+1},\, g_{t+1} \big) \neq 0,
    \]
    where $P_{N_{\mathcal{M}}(w_{t+1})}$ denotes orthogonal projection onto the normal cone $N_{\mathcal{M}}(w_{t+1})$.
\end{enumerate}
Then, for sufficiently small residuals (in the sense of a local linearization),
\[
\mathbb{E}\big[ r^\perp_{t+1} \big] 
\approx \mathbb{E}\big[ g_{t+1}^\top P_{N_{\mathcal{M}}(w_{t+1})} e_{t+1} \big] \neq 0,
\]
so the world model induces a non-trivial ghost channel. In particular, if $g_{t+1}$ is positively correlated with $P_{N_{\mathcal{M}}(w_{t+1})} e_{t+1}$, then $\mathbb{E}[r^\perp_{t+1}] > 0$ and there exist states where moving off-manifold strictly improves expected P\&L.
\end{proposition}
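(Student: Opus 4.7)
The plan is to combine a local first-order expansion of the Euclidean projection onto the polyhedron $\mathcal{M}^{\mathrm{vol}}$ at the (law-consistent) point $w_{t+1}$ with a first-order Taylor expansion of the reward $r$, and then read off the ghost reward as an inner product between $g_{t+1}$ and the normal-cone component of the prediction residual. First I would fix a trajectory under $\mathbb{P}^\star$: by Proposition~\ref{prop:support-on-manifold} the true next state satisfies $w_{t+1} \in \mathcal{M}^{\mathrm{vol}}$ almost surely, so I may write $\hat w_{t+1} = w_{t+1} + e_{t+1}$ with $w_{t+1}$ on the manifold and $\|e_{t+1}\|$ small in the sense controlled by assumption (i). Proposition~\ref{prop:closed-convex} guarantees that $w^{\mathcal{M}}_{t+1} = \Pi_{\mathcal{M}^{\mathrm{vol}}}(\hat w_{t+1})$ is well-defined and that $\Pi_{\mathcal{M}^{\mathrm{vol}}}$ is $1$-Lipschitz.

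The core analytic step is the local expansion of the projection at a point of the polyhedron. Standard convex-analysis results on projection onto closed convex cones and polyhedra give, for $\delta$ sufficiently small,
\[
  \Pi_{\mathcal{M}^{\mathrm{vol}}}(w_{t+1} + \delta) = w_{t+1} + P_{T_{\mathcal{M}}(w_{t+1})}(\delta) + o(\|\delta\|),
\]
where $T_{\mathcal{M}}(w_{t+1})$ is the tangent cone at $w_{t+1}$. Using the Moreau decomposition $P_{T_{\mathcal{M}}(w_{t+1})} + P_{N_{\mathcal{M}}(w_{t+1})} = \mathrm{Id}$ valid for polyhedral cones, subtraction yields
\[
  \hat w_{t+1} - w^{\mathcal{M}}_{t+1} = P_{N_{\mathcal{M}}(w_{t+1})} e_{t+1} + o(\|e_{t+1}\|).
\]
Substituting this into the first-order Taylor expansion of $r(\cdot, a_t)$ around $w_{t+1}$ guaranteed by assumption (ii) produces
\[
  r^\perp_{t+1} = g_{t+1}^\top P_{N_{\mathcal{M}}(w_{t+1})} e_{t+1} + o(\|e_{t+1}\|),
\]
and taking expectations under $\mathbb{P}^\star$, absorbing the remainder by Cauchy--Schwarz and the moment bound $\mathbb{E}\|e_{t+1}\|^2 = \varepsilon^2$, gives the claimed approximation $\mathbb{E}[r^\perp_{t+1}] \approx \mathbb{E}[g_{t+1}^\top P_{N_{\mathcal{M}}(w_{t+1})} e_{t+1}]$. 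The non-vanishing and sign statement then follow from assumption (iii) by expanding $\mathbb{E}[g_{t+1}^\top P_N e_{t+1}] = \mathrm{tr}\bigl(\mathrm{Cov}(P_N e_{t+1}, g_{t+1})\bigr) + \mathbb{E}[P_N e_{t+1}]^\top \mathbb{E}[g_{t+1}]$ and observing that a non-trivial aligned covariance prevents cancellation, with strict positivity under a reward-increasing alignment.

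The main obstacle I anticipate is the local projection expansion, because $\mathcal{M}^{\mathrm{vol}}$ is a polyhedron whose tangent and normal cones at $w_{t+1}$ depend on which butterfly and calendar facets are active. At interior points of $\mathcal{M}^{\mathrm{vol}}$ one has $N_{\mathcal{M}}(w_{t+1}) = \{0\}$ and the result is vacuous; the interesting regime is precisely where boundary facets bind, which is the generic case near stressed market states. I would address this by exploiting the fact that projection onto a polyhedron is piecewise affine on a finite polyhedral decomposition of $\mathbb{R}^{d_{\mathrm{vol}}}$, so that on each cell of this decomposition the expansion is exact with remainder vanishing on compact sets, and then use dominated convergence, conditional on the active-facet configuration, to interchange limits and expectations. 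The Taylor expansion of $r$ and the final covariance manipulation are routine by comparison, and the key law-manifold geometry from Section~\ref{subsec:geometry-convexity} supplies all the regularity needed to make each step rigorous.
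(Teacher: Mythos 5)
Your proposal follows essentially the same route as the paper's proof in Appendix~B.2: both expand the reward to first order around the on-manifold point $w_{t+1}$, use a local linearization of $\Pi_{\mathcal{M}^{\mathrm{vol}}}$ together with the Moreau decomposition of the residual into tangent- and normal-cone components, identify $r^\perp_{t+1} \approx g_{t+1}^\top P_{N_{\mathcal{M}}(w_{t+1})}e_{t+1}$, and take expectations while controlling the $O(\|e_{t+1}\|^2)$ remainder via the moment bound. Your observation that the projection onto the polyhedron is piecewise affine on a finite cell decomposition (conditioning on the active-facet configuration) is a sharper way to justify the local expansion than the paper's informal appeal to prox-regularity and directional differentiability, and your remark that the result is vacuous at interior points where $N_{\mathcal{M}}(w_{t+1})=\{0\}$ is a worthwhile scope note that the paper leaves implicit; otherwise the two arguments coincide.
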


\begin{proof}[Proof sketch]
By the law-consistency of $G^\star$, we have $w_{t+1} \in \mathcal{M}^{\mathrm{vol}}$ almost surely. For small residuals, a first-order Taylor approximation yields
\[
r(\hat{w}_{t+1}, a_t) 
\approx r(w_{t+1}, a_t) + g_{t+1}^\top e_{t+1}.
\]
Meanwhile, the projection $w^\mathcal{M}_{t+1} = \Pi_{\mathcal{M}}(\hat{w}_{t+1})$ removes the component of $e_{t+1}$ that lies in the normal cone $N_{\mathcal{M}}(w_{t+1})$ (by optimality conditions for convex projections), so to first order we have
\[
r(w^\mathcal{M}_{t+1}, a_t) 
\approx r(w_{t+1}, a_t) + g_{t+1}^\top P_{T_{\mathcal{M}}(w_{t+1})} e_{t+1},
\]
where $T_{\mathcal{M}}(w_{t+1})$ is the tangent cone and $P_{T_{\mathcal{M}}}$ the corresponding projector. Their difference is
\[
r^\perp_{t+1} 
\approx g_{t+1}^\top \big( I - P_{T_{\mathcal{M}}(w_{t+1})} \big) e_{t+1}
= g_{t+1}^\top P_{N_{\mathcal{M}}(w_{t+1})} e_{t+1}.
\]
Taking expectations under $\mathbb{P}^\star$ and using assumption~(iii) yields the claim. Rigorous error bounds for the linearization and a detailed cone-decomposition argument are provided in Appendix~B.2.
\end{proof}

Proposition~\ref{prop:approx-gap-ghost-channel} formalizes the intuitive statement that \emph{any} non-zero approximation gap with a component normal to the law manifold, combined with a reward that is monotone in that direction, will generically open a ghost channel. In Sec.~\ref{sec:empirical-results} we show empirically that RL agents indeed learn to exploit this channel.

\subsection{Instantiating the Goodhart decomposition for volatility}
\label{subsec:goodhart-vol}

We now instantiate the conceptual Goodhart decomposition of Sec.~\ref{subsec:goodhart-conceptual} in the concrete volatility setting. For each predicted surface $\hat{w}_{t+1} = f_{\theta^\star}(x_t)$, we compute:

\begin{enumerate}
    \item The metric projection onto the volatility law manifold:
    \[
    w^\mathcal{M}_{t+1} 
    := \Pi_{\mathcal{M}^{\mathrm{vol}}}(\hat{w}_{t+1}) 
    = \arg\min_{w' \in \mathcal{M}^{\mathrm{vol}}}
        \phi\big( \hat{w}_{t+1} - w' \big),
    \]
    where $\phi$ is the squared $\ell_2$ norm in total-variance space, consistent with the law-penalty functional $\mathcal{L}_\phi$ of Sec.~\ref{sec:law-penalty}.
    \item The on-manifold reward component
    \[
    r^{\mathcal{M}}_{t+1} := r\big( w^\mathcal{M}_{t+1}, a_t \big),
    \]
    obtained by evaluating the P\&L functional under the projected surface.
    \item The ghost-arbitrage component
    \[
    r^\perp_{t+1} := r\big( \hat{w}_{t+1}, a_t \big) - r^{\mathcal{M}}_{t+1}.
    \]
\end{enumerate}

By construction we have the exact decomposition
\[
r\big(\hat{w}_{t+1}, a_t\big) 
= r^{\mathcal{M}}_{t+1} + r^\perp_{t+1},
\]
where $r^{\mathcal{M}}_{t+1}$ is the reward that would be obtained if the world model were first projected back to the law manifold, and $r^\perp_{t+1}$ captures the incremental reward purely due to law-violating predictions. Note that, because the ground truth never leaves $\mathcal{M}^{\mathrm{vol}}$ (Prop.~\ref{prop:support-on-manifold}), any systematic pattern in $r^\perp_{t+1}$ is necessarily a \emph{model-induced artefact}.

\begin{remark}[Consistency of projection operator]
For consistency with Sec.~\ref{sec:law-penalty}, we use the same projection operator $\Pi_{\mathcal{M}^{\mathrm{vol}}}$ both in defining the law penalty $\mathcal{L}_\phi$ and in the Goodhart decomposition. Algorithmically, $\Pi_{\mathcal{M}^{\mathrm{vol}}}$ is implemented via a convex quadratic program that enforces butterfly and calendar inequalities on the total-variance grid, closely related to static-arbitrage projection procedures in the option-pricing literature \cite{Itkin2015,DeMarcoHenryLabordere2015,HorvathTengely2021}. This ensures that any off-manifold advantage measured by $r^\perp$ is directly comparable to the law-penalty metrics reported later.
\end{remark}

\subsection{World-model diagnostics and dynamics plots}
\label{subsec:world-model-diagnostics}

Before training any RL agents, we empirically characterize the behavior of the neural world model and its law violations. Two diagnostics play a central role:

\paragraph{Prediction accuracy.}
We track both the training and validation mean-squared error
\[
\mathrm{MSE}_{\mathrm{train}}(t) := \frac{1}{N_{\mathrm{train}}} 
\sum_{n} \| f_{\theta^\star}(x_t^{(n)}) - w_{t+1}^{(n)} \|_2^2,
\]
and likewise for $\mathrm{MSE}_{\mathrm{val}}(t)$. Typical dynamics plots (family ``Dynamics Plots'', see Sec.~\ref{sec:experiments}) show fast initial reduction in MSE followed by a plateau, as in standard world-model training \cite{HaSchmidhuber2018,HafnerEtAl2020Dreamer}. This confirms that $f_{\theta^\star}$ has learned a non-trivial approximation of the dynamics.

\paragraph{Law penalties of predictions vs.\ ground truth.}
More importantly for our purposes, we compare the law penalties
\[
\mathcal{L}_\phi(w_{t+1}) \equiv 0, 
\qquad 
\mathcal{L}_\phi(\hat{w}_{t+1}) 
= \phi\big( \hat{w}_{t+1} - \Pi_{\mathcal{M}^{\mathrm{vol}}}(\hat{w}_{t+1}) \big)
\]
over time. By Proposition~\ref{prop:support-on-manifold}, the ground-truth trajectories satisfy $\mathcal{L}_\phi(w_{t+1}) = 0$ up to numerical tolerance, whereas the predictions exhibit a strictly positive distribution of law penalties. 

\begin{lemma}[Non-trivial off-manifold mass of the world model]
\label{lem:off-manifold-mass}
Assume that $f_{\theta^\star}$ is not exactly equal to the Bayes-optimal regressor $f^{\mathrm{Bayes}}(x_t) := \mathbb{E}[w_{t+1}\,|\,x_t]$ and that the law manifold $\mathcal{M}^{\mathrm{vol}}$ has non-empty interior within the support of $w_{t+1}$. Then there exists $\delta > 0$ such that
\[
\mathbb{P}\big( \mathcal{L}_\phi(\hat{w}_{t+1}) > \delta \big) > 0,
\]
i.e., the world model assigns non-zero probability mass to surfaces at a positive distance from $\mathcal{M}^{\mathrm{vol}}$.
\end{lemma}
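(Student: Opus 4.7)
The plan is to prove the lemma by contraposition: assume $\mathcal{L}_\phi(\hat{w}_{t+1}) = 0$ almost surely, and derive that $f_{\theta^\star} = f^{\mathrm{Bayes}}$ almost surely, contradicting the first hypothesis. The reduction to the $\delta$-form of the conclusion is routine: by Proposition~\ref{prop:zero-penalty-iff}, the event $\{\mathcal{L}_\phi(\hat{w}_{t+1}) > 0\}$ coincides with $\{\hat{w}_{t+1} \notin \mathcal{M}^{\mathrm{vol}}\}$, and countable subadditivity applied to the sets $\{\mathcal{L}_\phi(\hat{w}_{t+1}) > 1/n\}$ converts positive probability of off-manifold mass into existence of a specific $\delta > 0$ with $\mathbb{P}(\mathcal{L}_\phi(\hat{w}_{t+1}) > \delta) > 0$.

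The first substantive step is to show that $f^{\mathrm{Bayes}}$ itself maps into $\mathcal{M}^{\mathrm{vol}}$ almost surely. By Proposition~\ref{prop:support-on-manifold}, $w_{t+1} \in \mathcal{M}^{\mathrm{vol}}$ almost surely under $\mathbb{P}^\star$, and by Proposition~\ref{prop:axiomatic-representation}, $\mathcal{M}^{\mathrm{vol}}$ is closed and convex. The conditional Jensen inequality for closed convex sets (a corollary of the supporting hyperplane theorem applied to $f^{\mathrm{Bayes}}(x_t) = \mathbb{E}[w_{t+1} \mid x_t]$) then yields $f^{\mathrm{Bayes}}(x_t) \in \mathcal{M}^{\mathrm{vol}}$ almost surely. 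Under the contrapositive hypothesis $\hat{w}_{t+1} \in \mathcal{M}^{\mathrm{vol}}$ a.s., both $f^{\mathrm{Bayes}}$ and $f_{\theta^\star}$ therefore take values in the same convex set almost surely.

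The second step is a convex-combination variational inequality. For $\lambda \in (0,1]$, define the interpolant $g_\lambda(x_t) := (1-\lambda) f_{\theta^\star}(x_t) + \lambda f^{\mathrm{Bayes}}(x_t)$, which lies in $\mathcal{M}^{\mathrm{vol}}$ a.s.\ by convexity. The $L^2$-orthogonality of $f^{\mathrm{Bayes}}$ as the conditional expectation gives, for any square-integrable measurable $g$, the decomposition $\mathbb{E}\|g(x_t) - w_{t+1}\|_2^2 = \mathbb{E}\|g(x_t) - f^{\mathrm{Bayes}}(x_t)\|_2^2 + \mathbb{E}\|f^{\mathrm{Bayes}}(x_t) - w_{t+1}\|_2^2$. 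Substituting $g = g_\lambda$ and $g = f_{\theta^\star}$ in turn and subtracting yields $\mathbb{E}\|g_\lambda - w_{t+1}\|_2^2 - \mathbb{E}\|f_{\theta^\star} - w_{t+1}\|_2^2 = [(1-\lambda)^2 - 1]\,\mathbb{E}\|f_{\theta^\star} - f^{\mathrm{Bayes}}\|_2^2 < 0$ for every $\lambda \in (0,1)$, since the first hypothesis gives $\mathbb{E}\|f_{\theta^\star} - f^{\mathrm{Bayes}}\|_2^2 > 0$. Hence there exists a law-consistent, strictly better predictor than $f_{\theta^\star}$.

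The main obstacle is bridging this strict MSE improvement by the external predictor $g_\lambda$ to a contradiction with $\theta^\star$ being only a \emph{local} minimizer in finite-dimensional parameter space, since $g_\lambda$ need not be representable as $f_\theta$ for any $\theta$. I see two routes to close this gap, and the proof plan is to adopt whichever matches the intended scope. The direct route adds a mild expressivity assumption: if the recurrent decoder family is dense in $L^2$ around $f_{\theta^\star}$ (standard for GRU/LSTM nets with smooth nonlinearities over the bounded total-variance box), then the direction $f^{\mathrm{Bayes}} - f_{\theta^\star}$ admits an in-class first-order approximation at $\theta^\star$, producing a descent direction that contradicts local optimality. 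The second route sidesteps the architecture by a genericity argument: the event that $f_{\theta^\star}(x_t) \in \mathcal{M}^{\mathrm{vol}}$ for $\mathbb{P}^\star$-almost every $x_t$ is encoded by the polyhedral system $A^{\mathrm{vol}} f_\theta(x_t) \le 0$, and the subset of parameters $\theta$ satisfying this system on a full-measure set of inputs is nowhere dense in the parameter space unless the architecture contains an explicit projection layer onto $\mathcal{M}^{\mathrm{vol}}$; since the world model is trained without such a layer and without law penalties, any locally optimal $\theta^\star$ distinct from the (manifold-valued) Bayes predictor must lie outside this nowhere-dense set, forcing $\hat{w}_{t+1}$ to cross $\partial \mathcal{M}^{\mathrm{vol}}$ on a set of positive $\mathbb{P}^\star$-measure and delivering the lemma.
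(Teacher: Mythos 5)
Your argument takes a genuinely different route from the paper. The paper's Appendix~B.3 proof is topological/genericity based: it posits a regularity condition (openness of the state support, continuity of $f_{\theta^\star}$, and a ``not contained in any $(d_{\mathrm{vol}}-1)$-dimensional affine subspace'' genericity condition), then runs a proof by contradiction using a continuous-path argument and the non-empty interior of $\mathcal{M}^{\mathrm{vol}}$ to exclude the degenerate case where the continuous image $f_{\theta^\star}(U_X)$ is trapped inside $\partial\mathcal{M}^{\mathrm{vol}}$. You instead use a clean $L^2$ variational argument: conditional Jensen places $f^{\mathrm{Bayes}}$ in $\mathcal{M}^{\mathrm{vol}}$, convexity places the interpolant $g_\lambda$ in $\mathcal{M}^{\mathrm{vol}}$, and the Pythagorean identity for conditional expectations gives $\mathbb{E}\|g_\lambda - w_{t+1}\|_2^2 - \mathbb{E}\|f_{\theta^\star} - w_{t+1}\|_2^2 = [(1-\lambda)^2-1]\,\mathbb{E}\|f_{\theta^\star}-f^{\mathrm{Bayes}}\|_2^2 < 0$. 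This is sharp and exposes exactly the right obstruction, which you honestly name: $g_\lambda$ is an external competitor, not in the parametric class, so local optimality of $\theta^\star$ is not directly contradicted. Your Route~2 (nowhere-dense parameter set) is essentially the paper's genericity argument recast in parameter space, so that branch closes the gap with the same ingredients. The trade-off: the paper's route never needs to reason about the objective landscape at all (it treats $f_{\theta^\star}$ purely as a continuous map), whereas your route makes the role of the $L^2$ geometry and the near-optimality of the Bayes predictor explicit, which is arguably more illuminating for the ghost-channel story.

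One caution on your Route~1. If the architecture's tangent space at $\theta^\star$ can first-order approximate the direction $f^{\mathrm{Bayes}} - f_{\theta^\star}$ well enough to produce a strict descent direction, then local optimality of $\theta^\star$ would already force $\langle f_{\theta^\star}-f^{\mathrm{Bayes}}, v\rangle_{L^2}=0$ for all tangent directions $v$, and sufficient expressivity would then imply $f_{\theta^\star}=f^{\mathrm{Bayes}}$ outright, emptying the lemma's hypothesis. So Route~1 either proves the lemma only vacuously or requires a carefully calibrated ``expressive enough to find some descent direction, not so expressive as to hit Bayes'' condition that you would need to state. This is precisely why both the paper and your Route~2 retreat to a genericity argument rather than an approximation-theoretic one: it avoids entangling the conclusion with the hypothesis. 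Also note that your Steps~1--2 do not actually use the non-empty-interior assumption; it only enters when you close the gap, matching where the paper uses it.
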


\begin{proof}[Proof sketch]
If $f_{\theta^\star} \equiv f^{\mathrm{Bayes}}$ and the conditional distribution of $w_{t+1}$ given $x_t$ were a Dirac mass on $\mathcal{M}^{\mathrm{vol}}$, then $\hat{w}_{t+1}$ would almost surely lie in $\mathcal{M}^{\mathrm{vol}}$ and the law penalty would vanish. In our finite-data, finite-capacity setting, both approximation error (difference between $f_{\theta^\star}$ and $f^{\mathrm{Bayes}}$) and intrinsic conditional variance generically ensure that $\hat{w}_{t+1}$ has a non-degenerate distribution around $w_{t+1}$, which in turn implies a positive-distance shell around $\mathcal{M}^{\mathrm{vol}}$ is hit with non-zero probability. A rigorous argument using continuity of $\mathcal{L}_\phi$ and support properties of $f_{\theta^\star}(x_t)$ is given in Appendix~B.3.
\end{proof}

In our experiments, dynamics plots of $\mathcal{L}_\phi(\hat{w}_{t+1})$ show a stationary distribution with mean on the order of $10^{-3}$--$10^{-2}$, while $\mathcal{L}_\phi(w_{t+1})$ remains at numerical zero. Combined with Proposition~\ref{prop:approx-gap-ghost-channel} and Lemma~\ref{lem:off-manifold-mass}, this empirically confirms that the Neural world model is both (i) sufficiently accurate to serve as a plausible environment for RL, and (ii) sufficiently misaligned with the axioms to open a statistically significant ghost channel. The remainder of the paper investigates how different RL variants and structural baselines interact with this channel.

\section{RL on Volatility World Models: Incentives and Law-Strength}
\label{sec:rl_world_models}

In this section, we formalize the Markov decision process (MDP) induced by the volatility world model of Section~\ref{sec:world_model}, instantiate several RL variants as \emph{stress-tests} of the axiomatic pipeline, and develop our flagship incentive and trade-off results. Throughout, we treat RL as a tool for probing how generic policy-gradient methods interact with the law manifold $\mathcal{M}^{\mathrm{vol}}$, the ghost channel $r^\perp$, and the law-penalty functional $\mathcal{L}_\phi$, rather than as an attempt to build a production trading system.

\subsection{MDP formulation on the world model}
\label{subsec:mdp_formulation}

Let $\mathcal{W} \subset \mathbb{R}^{d_w}$ denote the discretized total-variance grid (Section~\ref{sec:axiomatic_manifold}), and let $\mathcal{X}$ collect auxiliary market covariates (e.g., spot price, realized variance estimates). We define the state space as
\[
\mathcal{S} \;:=\; \mathcal{W}^K \times \mathcal{X},
\]
where a state $s_t = (w_{t-K+1:t}, x_t)$ concatenates a history window of $K$ total-variance surfaces and covariates.\footnote{In our experiments we take $K$ in the range $8$--$16$, similar to recurrent world-model setups in model-based RL~\cite{HaSchmidhuber2018,HafnerPlaNet2019,HafnerDreamer2020}.}

The action $a_t \in \mathcal{A}$ represents a hedge/portfolio vector (e.g., positions in underlying and options), following the deep-hedging literature~\cite{BuehlerDeepHedging2019}. The action space~$\mathcal{A}$ is a compact subset of $\mathbb{R}^{d_a}$ defined by position and capital constraints.

\paragraph{World-model transition.}
Given $s_t$ and $a_t$, the next volatility surface $w_{t+1}$ is sampled from the world model
\[
w_{t+1} \sim \hat{P}_\theta(\cdot \mid w_{t-K+1:t}, x_t, a_t),
\]
where $\hat{P}_\theta$ is the GRU/LSTM-based predictor of Section~\ref{sec:world_model}. We then update covariates $x_{t+1}$ via a deterministic or stochastic rule $x_{t+1} = f(x_t, w_{t+1}, a_t, \varepsilon_{t+1})$, giving the transition kernel
\[
P_\theta(s_{t+1} \mid s_t, a_t) \;=\; \hat{P}_\theta(w_{t+1} \mid w_{t-K+1:t}, x_t, a_t)\, \delta_{f(x_t, w_{t+1}, a_t, \varepsilon_{t+1})}(x_{t+1}).
\]

\paragraph{Reward decomposition.}
The per-step reward is the PnL plus (optionally) a law penalty:
\begin{equation}
  r_\lambda(s_t,a_t,s_{t+1})
  \;:=\;
  \underbrace{\mathrm{PnL}(s_t,a_t,s_{t+1})}_{\text{economic payoff}}
  \;-\;
  \lambda\,\underbrace{\mathcal{L}_\phi\!\bigl(w_{t+1}\bigr)}_{\text{law penalty}},
  \label{eq:reward_decomposition}
\end{equation}
where $\lambda \ge 0$ is the law-penalty weight. For $\lambda=0$ we recover the naive PnL-driven RL setting. Using the projection operator $\Pi_{\mathcal{M}}$ from Definition~\ref{def:projection_law_manifold}, we further decompose
\begin{equation}
  r_\lambda
  =
  r^{\mathcal{M}} - \lambda \mathcal{L}_\phi + r^\perp,
  \qquad
  r^{\mathcal{M}}(s_t,a_t,s_{t+1})
  :=
  \mathrm{PnL}\bigl(\Pi_{\mathcal{M}}(w_{t+1}), a_t\bigr),
  \quad
  r^\perp := \mathrm{PnL}(w_{t+1},a_t) - r^{\mathcal{M}},
  \label{eq:goodhart_decomp_reward}
\end{equation}
in direct correspondence with the Goodhart decomposition of Section~\ref{subsec:goodhart_generic}. The term $r^\perp$ is the \emph{ghost-arbitrage component} induced by world-model prediction error.

\paragraph{Objective and policy class.}
We consider stationary stochastic policies $\pi_\theta(a\mid s)$ parameterized by neural networks, as in actor--critic and PPO-style methods~\cite{SuttonBarto2018,KondaTsitsiklis2000,SchulmanPPO2017}. For a given $\lambda$, the discounted infinite-horizon objective is
\begin{equation}
  J_\lambda(\pi)
  \;:=\;
  \mathbb{E}_\pi\!\left[ \sum_{t=0}^{\infty} \gamma^t r_\lambda(s_t,a_t,s_{t+1}) \right]
  \;=\;
  J^{\mathcal{M}}(\pi) - \lambda\,J^{\mathrm{law}}(\pi) + J^\perp(\pi),
  \label{eq:rl_objective}
\end{equation}
where
\[
J^{\mathcal{M}}(\pi) := \mathbb{E}_\pi\!\Big[\sum_t \gamma^t r^{\mathcal{M}}_t\Big],\qquad
J^{\mathrm{law}}(\pi) := \mathbb{E}_\pi\!\Big[\sum_t \gamma^t \mathcal{L}_\phi(w_{t})\Big],\qquad
J^\perp(\pi) := \mathbb{E}_\pi\!\Big[\sum_t \gamma^t r^\perp_t\Big].
\]
In practice, our experiments use finite-horizon episodes ($T\approx 64$) and average per-step PnL and law penalties; the theoretical development is presented in the discounted limit for notational clarity.

\paragraph{Algorithmic choice.}
We instantiate PPO-style actor--critic~\cite{SchulmanPPO2017} with a clipped surrogate objective and generalized advantage estimation~\cite{SchulmanGAE2016}, which is standard for continuous-control RL and has seen use in financial RL and hedging~\cite{KolmRitter2019,BuehlerDeepHedging2019}. Crucially, PPO is only one representative of the broad class of policy-gradient RL algorithms; our incentive results hold for any method whose updates approximate the policy gradient of $J_\lambda$~\cite{SuttonBarto2018,ThomasPG2014}.

\subsection{Naive RL and ghost-arbitrage incentive}
\label{subsec:naive_rl_ghost}

We first analyze the case $\lambda = 0$, where the agent optimizes pure PnL on the world model. By the decomposition~\eqref{eq:goodhart_decomp_reward},
\begin{equation}
  J_0(\pi)
  =
  J^{\mathcal{M}}(\pi) + J^\perp(\pi).
  \label{eq:naive_decomposition}
\end{equation}
Let $\mathcal{S}$ denote a \emph{structural baseline class} of low-capacity, law-consistent strategies (Section~\ref{sec:structural_baselines}), such as zero-hedge and vol-trend heuristics, satisfying
\[
J^{\mathrm{law}}(\pi^S) \approx 0,\qquad \pi^S \in \mathcal{S}.
\]
We assume $\mathcal{S}$ approximates the best \emph{on-manifold} hedge:

\begin{assumption}[On-manifold near-optimality of structural class]
\label{ass:structural_near_optimal}
There exists $\pi^\star_{\mathcal{S}} \in \mathcal{S}$ and $\varepsilon_{\mathcal{S}} \ge 0$ such that
\[
J^{\mathcal{M}}(\pi) \;\le\; J^{\mathcal{M}}(\pi^\star_{\mathcal{S}}) + \varepsilon_{\mathcal{S}}
\qquad
\text{for all }\pi\in\Pi,
\]
where $\Pi$ is the RL policy class.
\end{assumption}

Assumption~\ref{ass:structural_near_optimal} is a \emph{design choice}: we deliberately choose baselines that are simple but well-aligned with the axioms, in the spirit of deep-hedging strategies optimized directly on market dynamics~\cite{BuehlerDeepHedging2019,CarbonneauGodin2020}. Under this assumption, the only systematic way for $\pi\in\Pi$ to outperform $\pi^\star_{\mathcal{S}}$ on the world model is through the ghost component $J^\perp(\pi)$.

\begin{theorem}[Ghost-arbitrage incentive for naive RL]
\label{thm:ghost_incentive}
Suppose Assumption~\ref{ass:structural_near_optimal} holds, and let
\[
\pi^\star_0 \in \arg\max_{\pi\in\Pi} J_0(\pi)
\]
be a global maximizer of $J_0$ over $\Pi$. Then:

\begin{enumerate}
\item If $\sup_{\pi\in\Pi} J_0(\pi) > J^{\mathcal{M}}(\pi^\star_{\mathcal{S}}) + \varepsilon_{\mathcal{S}}$, any maximizer $\pi^\star_0$ satisfies
\begin{equation}
  J^\perp(\pi^\star_0)
  \;\ge\;
  \sup_{\pi\in\Pi} J_0(\pi)
  - J^{\mathcal{M}}(\pi^\star_{\mathcal{S}}) - \varepsilon_{\mathcal{S}}
  \;>\; 0.
  \label{eq:ghost_lower_bound}
\end{equation}
In particular, the excess value over the structural baseline is entirely attributable to ghost arbitrage.
\item If, in addition, $J^{\mathcal{M}}$ has a local maximum at some $\bar{\pi}\in\Pi$ with $J^{\mathcal{M}}(\bar{\pi}) \approx J^{\mathcal{M}}(\pi^\star_{\mathcal{S}})$, and the policy-gradient theorem holds~\cite{SuttonBarto2018}, then the policy gradient near $\bar{\pi}$ satisfies
\begin{equation}
  \nabla_\theta J_0(\pi_\theta)\Big|_{\theta=\bar{\theta}}
  \;\approx\;
  \nabla_\theta J^\perp(\pi_\theta)\Big|_{\theta=\bar{\theta}},
  \label{eq:ghost_gradient_dominance}
\end{equation}
so gradient-based RL updates are locally driven by increasing $J^\perp$.
\end{enumerate}
\end{theorem}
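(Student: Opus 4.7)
The plan is to prove both parts by direct algebraic manipulation of the Goodhart decomposition $J_0 = J^{\mathcal{M}} + J^\perp$ from equation~\eqref{eq:naive_decomposition}, combined with the near-optimality Assumption~\ref{ass:structural_near_optimal} and the standard policy-gradient theorem. No new machinery is required, so the bulk of the effort goes into spelling out the correct regularity hypotheses rather than into any delicate estimate.

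For part 1, I would start from the identity $J_0(\pi^\star_0) = J^{\mathcal{M}}(\pi^\star_0) + J^\perp(\pi^\star_0)$, rearrange to $J^\perp(\pi^\star_0) = J_0(\pi^\star_0) - J^{\mathcal{M}}(\pi^\star_0)$, and apply Assumption~\ref{ass:structural_near_optimal} at $\pi = \pi^\star_0$ to obtain $J^{\mathcal{M}}(\pi^\star_0) \le J^{\mathcal{M}}(\pi^\star_{\mathcal{S}}) + \varepsilon_{\mathcal{S}}$. Substituting and using $J_0(\pi^\star_0) = \sup_{\pi\in\Pi} J_0(\pi)$ yields the advertised lower bound~\eqref{eq:ghost_lower_bound}; strict positivity is then immediate from the hypothesis $\sup J_0 > J^{\mathcal{M}}(\pi^\star_{\mathcal{S}}) + \varepsilon_{\mathcal{S}}$. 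The interpretation follows directly: any value RL extracts on the world model beyond the structural on-manifold optimum must live entirely in the ghost component $J^\perp$.

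For part 2, I would differentiate~\eqref{eq:naive_decomposition} in parameter space and invoke the policy-gradient theorem componentwise. At a local maximum $\bar{\pi}$ of the on-manifold objective $J^{\mathcal{M}}$, first-order optimality forces $\nabla_\theta J^{\mathcal{M}}(\pi_\theta)|_{\bar{\theta}} = 0$, whence linearity of the score-function estimator yields $\nabla_\theta J_0(\pi_\theta)|_{\bar{\theta}} = \nabla_\theta J^\perp(\pi_\theta)|_{\bar{\theta}}$. The ``$\approx$'' qualifier in~\eqref{eq:ghost_gradient_dominance} absorbs the gap between an exact critical point of $J^{\mathcal{M}}$ and the near-stationary regime in a neighbourhood of $\bar{\pi}$ where $J^{\mathcal{M}}(\bar{\pi}) \approx J^{\mathcal{M}}(\pi^\star_{\mathcal{S}})$; formally, a Taylor remainder controlled by $\|\nabla_\theta J^{\mathcal{M}}(\bar{\theta})\|$ can be folded into the error term.

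The main obstacle I anticipate is not algebraic but measure-theoretic: to apply the policy-gradient theorem separately to $J^{\mathcal{M}}$ and $J^\perp$, one must interchange gradient and expectation on each component independently. Since $r^{\mathcal{M}}$ composes a $1$-Lipschitz projection (Proposition~\ref{prop:closed-convex}) with a PnL functional that is locally Lipschitz in $w$ on the reachable set, and $r^\perp$ inherits the same regularity by Proposition~\ref{prop:ghost-bounded}, the standard dominated-convergence conditions~\cite{SuttonBarto2018,ThomasPG2014} apply to both pieces. Making this explicit---together with a compactness argument on $\mathcal{A}$ and on the parameter space to ensure existence of the maximizer $\pi^\star_0$---is the only place where real care is required; I would relegate these verifications to an appendix and keep the main-text proof at the level of the two algebraic paragraphs above.
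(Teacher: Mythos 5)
Your proposal is correct and follows essentially the same route as the paper's own proof: part~(1) is the same two-line rearrangement of $J_0 = J^{\mathcal{M}} + J^\perp$ combined with Assumption~\ref{ass:structural_near_optimal} applied at $\pi^\star_0$, and part~(2) is the same linearity-plus-first-order-optimality argument with the policy-gradient theorem applied componentwise. Your extra remark about interchanging gradient and expectation (dominated convergence via Lipschitz continuity of $r^{\mathcal{M}}$ and $r^\perp$) is a reasonable regularity point that the paper glosses over, but it does not change the argument.
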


\paragraph{Proof sketch.}
Part (i) follows directly from the decomposition~\eqref{eq:naive_decomposition}:
\[
J_0(\pi) = J^{\mathcal{M}}(\pi) + J^\perp(\pi)
\;\le\; J^{\mathcal{M}}(\pi^\star_{\mathcal{S}}) + \varepsilon_{\mathcal{S}} + J^\perp(\pi),
\]
so any $\pi$ achieving value strictly above $J^{\mathcal{M}}(\pi^\star_{\mathcal{S}})+\varepsilon_{\mathcal{S}}$ must have $J^\perp(\pi) > 0$. Applying this to $\pi^\star_0$ yields~\eqref{eq:ghost_lower_bound}. For (ii), the policy-gradient theorem expresses $\nabla_\theta J_0(\pi_\theta)$ as an expectation over on-policy trajectories weighted by the advantage function~\cite{SuttonBarto2018,ThomasPG2014}. Near a local maximum of $J^{\mathcal{M}}$, the contribution of $\nabla_\theta J^{\mathcal{M}}$ is negligible, so $\nabla_\theta J_0 \approx \nabla_\theta J^\perp$, yielding~\eqref{eq:ghost_gradient_dominance}. A fully rigorous treatment, including conditions on function approximation and local optimality, is provided in Appendix~C.1. \qed

\subsubsection{Economic interpretation}
\label{subsubsec:economic_interpretation}

Theorem~\ref{thm:ghost_incentive} formalizes a simple but crucial economic intuition:

\begin{enumerate}
\item Structural baselines $\mathcal{S}$, such as zero-hedge and vol-trend strategies, are built to respect the axioms and approximate on-manifold optimal hedging. They \emph{do not attempt} to exploit model misspecification.
\item Once $\mathcal{S}$ has exhausted most of the on-manifold value $J^{\mathcal{M}}$, any additional performance that naive RL achieves on the \emph{learned world model} must come from $J^\perp$, i.e., ghost arbitrage driven by prediction artifacts, not genuine admissible edge.
\item Gradient-based RL is locally steered by $\nabla_\theta J^\perp$, so it is \emph{structurally incentivized} to move into regions of the state--action space where the world model violates axioms in a \enquote{profitable} way.
\end{enumerate}

This explains the empirical pattern in Section~\ref{sec:experiments}: naive PPO achieves high PnL in-sample on the world model but exhibits systematically higher law penalties and Graceful Failure Index (GFI) than structural baselines, both in baseline and shocked environments. Rather than discovering better law-consistent hedges, the agent learns to exploit the ghost channel opened by $\hat{P}_\theta$.

\subsection{Law-penalized and selection-only RL variants}
\label{subsec:law_rl_variants}

To test whether explicit law penalties mitigate ghost arbitrage, we consider two standard ways of injecting constraints into RL~\cite{AchiamCPO2017,ChowRiskSensitive2015,HouRegularizedRL2021}.

\paragraph{Soft law-seeking RL (gradient shaping).}
We define the \emph{soft law-seeking} objective
\begin{equation}
  J_\lambda^{\mathrm{soft}}(\pi)
  \;:=\;
  \mathbb{E}_\pi\!\Bigg[ \sum_{t=0}^\infty \gamma^t \Big( \mathrm{PnL}_t - \lambda \mathcal{L}_\phi(w_{t+1}) \Big) \Bigg]
  \;=\;
  J_0(\pi) - \lambda\,J^{\mathrm{law}}(\pi),
  \label{eq:soft_rl_objective}
\end{equation}
with $\lambda>0$. PPO is trained directly on $J_\lambda^{\mathrm{soft}}$, so the law penalty appears inside the gradient. This mirrors classical Lagrangian and regularized RL approaches for safety and risk constraints~\cite{Altman1999,AchiamCPO2017,NeuEntropicMDP2017,MannorRiskSensitive2011}.

\paragraph{Selection-only RL (post-hoc shaping).}
In the \emph{selection-only} variant, we train policies on pure PnL,
\[
  J_0^{\mathrm{train}}(\pi) := J_0(\pi),
\]
but use law metrics only for early stopping and model selection:
\[
\pi^\star_{\mathrm{sel}} \in
\arg\max_{\pi \in \mathcal{C}}
\Big\{ J_0(\pi) \;\text{subject to}\; J^{\mathrm{law}}(\pi) \le \tau \Big\},
\]
where $\mathcal{C}$ is the candidate set of checkpointed policies along training and $\tau$ is a user-chosen law budget. This is analogous to \emph{post-hoc} constraint enforcement in distributional and risk-sensitive RL~\cite{ChowRiskSensitive2015,JiangSafeRL2021}.

\paragraph{Summary.}
Soft law-seeking RL tests whether shaping the gradient with $\mathcal{L}_\phi$ can guide policy updates away from ghost arbitrage; selection-only RL tests whether post-hoc model selection, without modifying the training dynamics, is sufficient. As Section~\ref{sec:experiments} will show, neither variant restores Pareto dominance over structural baselines.

\subsection{Law-strength frontier and Graceful Failure Index}
\label{subsec:law_strength_frontier}
\label{sec:law-strength-gfi}
We now formalize the \emph{law-strength frontier} and the \emph{Graceful Failure Index} (GFI), which jointly organize profitability, law alignment, and robustness under shocks.

\subsubsection{Law-strength frontier}

Let $\Lambda \subset [0,\infty)$ be a finite set of penalty weights (e.g., $\Lambda=\{0,5,10,20,40\}$) and let $\mathcal{A}_{\mathrm{RL}}$ be the set of RL variants (naive, soft, selection-only). For each $(\lambda, v) \in \Lambda \times \mathcal{A}_{\mathrm{RL}}$ and each structural baseline $b\in\mathcal{S}$, we compute aggregate metrics:
\[
\mu^{\mathrm{PnL}}(\pi),\quad
\sigma^{\mathrm{PnL}}(\pi),\quad
\mu^{\mathrm{law}}(\pi),\quad
\mathrm{VaR}_\alpha(\pi),\quad
\mathrm{CVaR}_\alpha(\pi),\quad
\mathrm{GFI}(\pi),
\]
in both baseline and shocked environments (Section~\ref{sec:metrics}). Define the \emph{law space} and \emph{risk space}
\[
\mathcal{L}_{\mathrm{space}} := \mathbb{R}_{\ge 0}^2 \quad (\text{mean / max law penalty, coverage}),\qquad
\mathcal{R}_{\mathrm{space}} := \mathbb{R}^3 \quad (\text{Sharpe, VaR, CVaR}).
\]
The empirical law-strength frontier is then the Pareto frontier of achievable tuples
\[
\mathcal{F}
:=
\left\{
\left(\mu^{\mathrm{law}}(\pi), \mathrm{GFI}(\pi), \mu^{\mathrm{PnL}}(\pi), \mathrm{VaR}_\alpha(\pi), \mathrm{CVaR}_\alpha(\pi)\right)
:\;
\pi \in \Pi_{\mathrm{frontier}}
\right\},
\]
where $\Pi_{\mathrm{frontier}}$ collects policies that are undominated with respect to the partial order
\[
(\ell_1, g_1, p_1, v_1, c_1) \preceq (\ell_2, g_2, p_2, v_2, c_2)
\iff
\left\{
\begin{aligned}
&\ell_1 \le \ell_2,\quad g_1 \le g_2,\\
&p_1 \ge p_2,\quad v_1 \ge v_2,\quad c_1 \ge c_2.
\end{aligned}
\right.
\]
Structurally, this recovers a multi-objective RL viewpoint~\cite{RoijersMORL2013} with objectives \enquote{profitability} vs \enquote{law alignment} vs \enquote{tail robustness}; the law-strength frontier is the set of efficient trade-offs in this space.

\subsubsection{Graceful Failure Index}

We now define the GFI as a normalized measure of how law metrics degrade under shocks relative to a reference policy.

Let $\xi\in[0,\bar{\xi}]$ denote a scalar shock intensity parameter (e.g., multiplying long variance and spot volatility), and let $M(\pi;\xi)$ be a scalar law metric (such as mean law penalty) for policy~$\pi$ under shock~$\xi$. Fix a reference policy $\pi_{\mathrm{ref}}$ (e.g., naive RL or a structural baseline). We define the \emph{infinitesimal GFI} as
\begin{equation}
  \mathrm{GFI}(\pi)
  :=
  \frac{
    \left.\dfrac{\partial}{\partial \xi} M(\pi;\xi)\right|_{\xi=0}
  }{
    \left.\dfrac{\partial}{\partial \xi} M(\pi_{\mathrm{ref}};\xi)\right|_{\xi=0}
  },
  \label{eq:gfi_definition}
\end{equation}
provided the denominator is non-zero. In practice, we approximate this by a finite-difference estimator
\[
  \widehat{\mathrm{GFI}}(\pi)
  =
  \frac{
    M(\pi;\xi_{\mathrm{shock}}) - M(\pi;0)
  }{
    M(\pi_{\mathrm{ref}};\xi_{\mathrm{shock}}) - M(\pi_{\mathrm{ref}};0) + \varepsilon
  },
\]
for a fixed shock level $\xi_{\mathrm{shock}}$ and small $\varepsilon>0$ for numerical stability. Values $\mathrm{GFI}(\pi)<1$ indicate that $\pi$ degrades more \emph{gracefully} than the reference, while $\mathrm{GFI}(\pi)>1$ indicates worse degradation.

\begin{remark}[Domain-agnostic design]
\label{remark:gfi_domain_agnostic}
The definition~\eqref{eq:gfi_definition} only requires: (i) an axiom-constrained system with a law penalty $M$ and (ii) a tunable shock parameter $\xi$. As such, GFI extends immediately to other settings such as monotone yield curves, convex credit spreads, or physics-informed dynamics~\cite{RaissiPINN2019,Beck2021,Brandstetter2022}. In Section~\ref{sec:discussion} we argue that GFI can serve as a generic metric for \emph{law-aligned graceful failure} in Scientific AI.
\end{remark}

\subsection{Law-strength trade-off}
\label{subsec:law_strength_tradeoff}

We finally formalize a structural trade-off between PnL and law alignment as the penalty weight~$\lambda$ increases. To simplify notation, define
\[
L(\pi) := J^{\mathrm{law}}(\pi),\qquad
P(\pi) := J^{\mathrm{PnL}}(\pi) := J^{\mathcal{M}}(\pi) + J^\perp(\pi),
\]
and let
\[
  \mathcal{G}
  :=
  \Big\{ (L(\pi), P(\pi)) : \pi \in \Pi \Big\}
  \subset \mathbb{R}_{\ge 0} \times \mathbb{R}
\]
be the achievable law--PnL set. The soft law-seeking objective can be written
\[
J_\lambda^{\mathrm{soft}}(\pi) = P(\pi) - \lambda L(\pi).
\]

\begin{assumption}[Convex achievability and monotone trade-off]
\label{ass:convex_tradeoff}
The set $\mathcal{G}$ is compact and convex, and its lower-left Pareto boundary
\[
\partial \mathcal{G}
=
\left\{(L,P)\in\mathcal{G} :
\text{there is no } (L',P')\in\mathcal{G} \text{ with } L'\le L,\ P'\ge P,\ (L',P')\ne(L,P)\right\}
\]
can be parameterized as the graph of a strictly decreasing, continuous function $P^\star(L)$ on an interval $[L_{\min},L_{\max}]$.
\end{assumption}

Assumption~\ref{ass:convex_tradeoff} is a standard regularity condition in multi-objective optimization and regularized RL~\cite{RoijersMORL2013,NeuEntropicMDP2017}: it states that (i) all relevant trade-offs between law penalties and PnL are attainable and (ii) lowering law penalties necessarily sacrifices some PnL in an average sense.

\begin{theorem}[Law-strength trade-off]
\label{thm:law_strength_tradeoff}
Suppose Assumption~\ref{ass:convex_tradeoff} holds. For each $\lambda\ge 0$, let
\[
\pi_\lambda^\star \in \arg\max_{\pi\in\Pi} J_\lambda^{\mathrm{soft}}(\pi)
\]
and denote $(L_\lambda, P_\lambda) := (L(\pi_\lambda^\star), P(\pi_\lambda^\star))$. Then:
\begin{enumerate}
\item For all $\lambda_1 < \lambda_2$, we have
\[
L_{\lambda_1} \;\ge\; L_{\lambda_2},\qquad
P_{\lambda_1} \;\ge\; P_{\lambda_2}.
\]
In words, increasing the law-penalty weight $\lambda$ weakly decreases both the expected law penalty and the expected PnL.
\item Moreover, if $P^\star(L)$ is strictly concave on $[L_{\min},L_{\max}]$, then the dependence $\lambda \mapsto (L_\lambda,P_\lambda)$ traces out the Pareto frontier $\partial\mathcal{G}$, and $P_\lambda$ is strictly decreasing in $\lambda$ on any interval where $L_\lambda$ decreases.
\end{enumerate}
\end{theorem}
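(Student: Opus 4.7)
The plan is to treat Theorem~\ref{thm:law_strength_tradeoff} as a standard consequence of the support-function / revealed-preference analysis of weighted multi-objective optimization. Since $J_\lambda^{\mathrm{soft}} = P - \lambda L$ is a linear functional on the achievable set $\mathcal{G}$, its maximizers are precisely the points of $\mathcal{G}$ at which a hyperplane of slope $\lambda$ (in $(L,P)$-coordinates) supports $\mathcal{G}$. Part~(i) will then follow from a two-sided exchange inequality that needs only the definition of a maximizer, while Part~(ii) will use strict concavity to promote monotonicity to strictness and to identify the traced curve with $\partial \mathcal{G}$.

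\textbf{Part (i): exchange argument.} First I would write down the two optimality inequalities,
\[
P_{\lambda_1} - \lambda_1 L_{\lambda_1} \;\ge\; P_{\lambda_2} - \lambda_1 L_{\lambda_2},
\qquad
P_{\lambda_2} - \lambda_2 L_{\lambda_2} \;\ge\; P_{\lambda_1} - \lambda_2 L_{\lambda_1},
\]
and add them; the $P$ terms cancel, leaving $(\lambda_2 - \lambda_1)(L_{\lambda_1} - L_{\lambda_2}) \ge 0$, so $\lambda_1 < \lambda_2$ forces $L_{\lambda_1} \ge L_{\lambda_2}$. Plugging this back into the first inequality yields $P_{\lambda_1} - P_{\lambda_2} \ge \lambda_1 (L_{\lambda_1} - L_{\lambda_2}) \ge 0$, using only $\lambda_1 \ge 0$ and the $L$-monotonicity just established. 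Hence both conclusions of Part~(i) drop out without invoking convexity of $\mathcal{G}$; only attainment of the suprema (existence of $\pi_\lambda^\star$) is required.

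\textbf{Part (ii): strict monotonicity via concavity.} Here Assumption~\ref{ass:convex_tradeoff} enters. Any maximizer of $P - \lambda L$ must be Pareto-undominated with respect to the order defining $\partial \mathcal{G}$, so I would first argue that $(L_\lambda, P_\lambda) \in \partial \mathcal{G}$, i.e., $P_\lambda = P^\star(L_\lambda)$. First-order optimality on a convex set then identifies $\lambda$ with the slope of a supporting line to the graph of $P^\star$ at $L_\lambda$, i.e., an element of the appropriate one-sided super/subdifferential of $P^\star$. Strict concavity makes this correspondence single-valued and strictly monotone in $L$, so $\lambda \mapsto L_\lambda$ is strictly monotone on the interior of its range; surjectivity onto $\partial \mathcal{G}$ as $\lambda$ sweeps $[0,\infty)$ follows from the supporting-hyperplane theorem applied to the compact convex $\mathcal{G}$. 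Whenever $L_\lambda$ strictly decreases, strict monotonicity of $P^\star$ along $\partial \mathcal{G}$ forces $P_\lambda$ to move strictly as well, giving the final claim.

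\textbf{Main obstacle.} The algebra is routine; the genuine care point is the supporting-hyperplane bookkeeping under the ``lower-left'' Pareto convention of Assumption~\ref{ass:convex_tradeoff}. One must verify that the sign of $\lambda$ in $J_\lambda^{\mathrm{soft}} = P - \lambda L$ is consistent with the orientation chosen for $\partial \mathcal{G}$ and with the monotonicity direction claimed for $P^\star$, so that the scalar $\lambda$ is literally the supporting slope rather than differing from it by a sign or a reciprocal. Secondary care is needed for (a) flat segments of $\partial \mathcal{G}$, where $\pi_\lambda^\star$ may be non-unique and strict monotonicity in Part~(ii) survives only on intervals where $L_\lambda$ truly decreases (which is exactly the conditional phrasing already in the theorem), and (b) the extreme cases $\lambda = 0$ and $\lambda \to \infty$, which correspond to contact with the extreme points of $\mathcal{G}$ and are absorbed by the compactness hypothesis.
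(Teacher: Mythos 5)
Your Part~(i) is correct and uses a genuinely different, more elementary route than the paper. The paper's Appendix~C.2 sketch goes through supporting hyperplanes to the compact convex set $\mathcal{G}$ and ``rotates the supporting line.'' You instead use the two-sided exchange inequality: write down the defining optimality inequalities for $\pi_{\lambda_1}^\star$ and $\pi_{\lambda_2}^\star$, add them to get $(\lambda_2-\lambda_1)(L_{\lambda_1}-L_{\lambda_2}) \ge 0$, then plug $L_{\lambda_1}\ge L_{\lambda_2}$ back into the first inequality to get $P_{\lambda_1}-P_{\lambda_2}\ge \lambda_1(L_{\lambda_1}-L_{\lambda_2})\ge 0$. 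This is the classical monotone-comparative-statics argument and it is strictly weaker in its hypotheses: it needs only existence of the maximizers, not convexity or compactness of $\mathcal{G}$. That is a real simplification over the paper, since it isolates exactly which conclusions need Assumption~\ref{ass:convex_tradeoff} and which do not. Your Part~(ii) is essentially the paper's route: identify $\lambda$ with the supporting slope of $P^\star$ at $L_\lambda$, use strict concavity to make that correspondence injective, and appeal to the supporting-hyperplane theorem for surjectivity of $\lambda\mapsto(L_\lambda,P_\lambda)$ onto $\partial\mathcal{G}$.

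Your ``main obstacle'' remark about sign/orientation bookkeeping is well-placed and points at a genuine wrinkle in the paper itself. Under the stated Pareto order---no $(L',P')\in\mathcal{G}$ with $L'\le L$, $P'\ge P$, $(L',P')\ne(L,P)$---moving along $\partial\mathcal{G}$ to smaller $L$ must strictly decrease $P$, so the frontier map $L\mapsto P^\star(L)$ should be strictly \emph{increasing}, not ``strictly decreasing'' as written in Assumption~\ref{ass:convex_tradeoff}. Your exchange argument in Part~(i) is consistent with the increasing orientation (both $L_\lambda$ and $P_\lambda$ move the same way as $\lambda$ grows), so the theorem's conclusions hold; but a clean write-up of Part~(ii) should either fix the monotonicity direction of $P^\star$ or explicitly reorient the dominance order, and you should make that choice explicit rather than leaving it as a flagged worry. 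That aside, the proposal is sound and in Part~(i) is a tighter argument than the one sketched in the paper.
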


\paragraph{Proof sketch.}
Maximizing $J_\lambda^{\mathrm{soft}}(\pi)$ is equivalent to maximizing the linear functional $P - \lambda L$ over the convex set $\mathcal{G}$. For each $\lambda$, the optimizer $(L_\lambda, P_\lambda)$ lies on the supporting line of $\mathcal{G}$ with slope $-\lambda$. As $\lambda$ increases, the supporting line rotates clockwise, shifting its tangency point along the Pareto boundary $\partial\mathcal{G}$. This yields $L_{\lambda_1} \ge L_{\lambda_2}$ and $P_{\lambda_1}\ge P_{\lambda_2}$ for $\lambda_1 < \lambda_2$. Strict concavity of $P^\star$ ensures that the tangency point is unique, and the mapping $\lambda\mapsto (L_\lambda,P_\lambda)$ is strictly monotone along $\partial\mathcal{G}$. A formal proof using convex analysis and subgradient conditions is provided in Appendix~C.2. \qed

Theorem~\ref{thm:law_strength_tradeoff} shows that the \emph{existence} of a law-strength trade-off is \emph{structural}, not accidental: under mild convexity and monotonicity assumptions, one cannot increase $\lambda$ to reduce law penalties without also reducing PnL. Combining this with Theorem~\ref{thm:ghost_incentive}, we obtain:

\begin{corollary}[Inevitability of trade-off relative to naive RL]
\label{cor:inevitability_tradeoff}
Let $(L_0,P_0)$ be the law--PnL pair of a naive-RL optimizer $\pi_0^\star$ (with $\lambda=0$) and suppose Assumption~\ref{ass:convex_tradeoff} holds. For any $\lambda>0$ such that $L_\lambda < L_0$, we necessarily have $P_\lambda < P_0$. In particular, no soft-penalized RL policy can simultaneously maintain naive-level PnL and significantly lower law penalties.
\end{corollary}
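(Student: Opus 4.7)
The plan is to derive Corollary~\ref{cor:inevitability_tradeoff} as a strict-inequality refinement of Theorem~\ref{thm:law_strength_tradeoff}(i), by combining the two optimality conditions for $\pi_0^\star$ and $\pi_\lambda^\star$ with the strict monotonicity of the Pareto parameterization $P^\star$ guaranteed by Assumption~\ref{ass:convex_tradeoff}. The weak inequality $P_\lambda \le P_0$ is already subsumed by Theorem~\ref{thm:law_strength_tradeoff}(i); what remains is to promote it to a strict inequality whenever $L_\lambda < L_0$.

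First, I would record the two optimality inequalities. From $\pi_\lambda^\star \in \arg\max_\pi (P(\pi) - \lambda L(\pi))$ applied at $\pi = \pi_0^\star$,
\[
  P_\lambda - \lambda L_\lambda \;\ge\; P_0 - \lambda L_0,
\]
which rearranges to $0 \le P_0 - P_\lambda \le \lambda (L_0 - L_\lambda)$; and from $\pi_0^\star \in \arg\max_\pi P(\pi)$ applied at $\pi = \pi_\lambda^\star$, directly $P_0 \ge P_\lambda$. Together these give weak monotonicity but do not yet force strictness.

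Next, I would invoke the geometric content of Assumption~\ref{ass:convex_tradeoff}: since $(L_\lambda, P_\lambda)$ maximizes the linear functional $P - \lambda L$ over the compact convex set $\mathcal{G}$, it lies on the Pareto boundary $\partial \mathcal{G}$ and can be written as $P_\lambda = P^\star(L_\lambda)$; similarly, after choosing a Pareto-efficient naive optimizer, $(L_0, P_0)$ lies on $\partial \mathcal{G}$ and corresponds to the right endpoint $L_0 = L_{\max}$ of the parameterization. Strict monotonicity of $P^\star$ on $[L_{\min}, L_{\max}]$ then yields $P_\lambda = P^\star(L_\lambda) < P^\star(L_0) = P_0$ whenever $L_\lambda < L_0$, which is exactly the claim. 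This is essentially the supporting-line picture already used in the proof of Theorem~\ref{thm:law_strength_tradeoff}: as $\lambda$ moves away from $0$, the supporting line of $\mathcal{G}$ with slope $-\lambda$ rotates, sliding its tangency point strictly down the frontier.

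The step I expect to be most delicate is the choice of $\pi_0^\star$. The naive PnL objective $J_0 = P$ can in principle be maximized at several policies with different law penalties, and if we pick a non-Pareto-efficient representative then $(L_0, P_0)$ may fail to lie on $\partial \mathcal{G}$ and the strict-monotonicity argument cannot be invoked directly. The cleanest fix is to interpret $(L_0, P_0)$ as the naive optimum of minimal $L$, equivalently as the one-sided limit $(L_0, P_0) = \lim_{\lambda \to 0^+} (L_\lambda, P_\lambda)$ along the supporting-line family used in Theorem~\ref{thm:law_strength_tradeoff}. I would flag this explicitly in the statement or a preceding remark; once that convention is fixed, the corollary is an immediate geometric consequence of strict monotonicity on $\partial \mathcal{G}$ and requires no additional machinery beyond convex analysis.
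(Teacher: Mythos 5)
Your proof is in essence correct, and it is the natural route: the supporting‐line argument puts $(L_\lambda,P_\lambda)$ on $\partial\mathcal{G}$ for any $\lambda>0$, and strict monotonicity of $P^\star$ then upgrades the weak inequality $P_\lambda\le P_0$ from Theorem~\ref{thm:law_strength_tradeoff}(i) to a strict one whenever $L_\lambda<L_0$. The paper itself does not spell out a separate proof of the corollary (it is offered as an immediate consequence of Theorem~\ref{thm:law_strength_tradeoff}, whose detailed proof is relegated to an appendix); your argument is cleaner than routing through part~(ii), since you need only strict monotonicity rather than the strict concavity hypothesis that part~(ii) adds.

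Two things deserve to be flagged explicitly. First, a sign issue: Assumption~\ref{ass:convex_tradeoff} as written says $P^\star$ is \emph{strictly decreasing} on $[L_{\min},L_{\max}]$, but the step $L_\lambda<L_0 \Rightarrow P^\star(L_\lambda)<P^\star(L_0)$ in your argument requires $P^\star$ to be strictly \emph{increasing}. Indeed, for the economic trade-off (lower law penalty costs PnL) and for Theorem~\ref{thm:law_strength_tradeoff}(i)'s joint monotonicity $L_{\lambda_1}\ge L_{\lambda_2}$, $P_{\lambda_1}\ge P_{\lambda_2}$ to be consistent, the Pareto frontier must slope upward; so the assumption's direction appears to be a typo that you have silently corrected. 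You should say so rather than cite the assumption as if it matched. Second, your flag about the choice of $\pi_0^\star$ is a genuine gap in the corollary as stated, not just a pedantic one: if the naive $\arg\max_\pi P(\pi)$ contains a dominated representative with $L_0>L_{\max}$, one can have $L_\lambda=L_{\max}<L_0$ yet $P_\lambda=P_0$ for all sufficiently small $\lambda>0$ (since the supporting line for small $\lambda$ still touches $\mathcal{G}$ at $(L_{\max},P^\star(L_{\max}))$), falsifying the strict conclusion. Your proposed fix — interpreting $(L_0,P_0)$ as the minimal-$L$ naive optimum, equivalently the $\lambda\to 0^+$ limit along the supporting-line family — is exactly what is needed to make the corollary correct, and should be made part of its statement or a preceding remark.
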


Corollary~\ref{cor:inevitability_tradeoff} underpins our empirical law-strength frontiers in Section~\ref{sec:experiments}: once structural baselines and naive RL define the upper envelope of $P^\star(L)$, all law-seeking RL variants lie strictly inside the Pareto region---they cannot \emph{escape} the ghost-arbitrage incentive without sacrificing PnL, and they cannot outperform structurally law-aligned baselines without implicitly mimicking them.

\vspace{0.5em}
\noindent\textbf{Connection to entropy-regularized and constrained RL.}
Our analysis parallels and complements classical results on entropy-regularized MDPs~\cite{NeuEntropicMDP2017,Geist2019} and constrained policy optimization~\cite{Altman1999,AchiamCPO2017}: while those works study trade-offs between reward and entropy or safety constraints, we focus on trade-offs between PnL and \emph{axiomatic law penalties}. In all cases, linear scalarization via a Lagrange multiplier (here, $\lambda$) induces a structural frontier over achievable objectives; our novelty lies in instantiating this in an axiomatic volatility world, decomposed into on-manifold and ghost-arbitrage components.

\section{Structural Baselines: Axiomatic Strategy Class $\mathcal{S}$}
\label{sec:structural_baselines}

In this section we instantiate a low-capacity, structurally constrained strategy class
$\mathcal{S}$ and three representative baselines---Zero-Hedge, Random-Gaussian, and
Vol-Trend---that serve as a proxy for \emph{law-aligned} behavior on the volatility
law manifold. Rather than competing with state-of-the-art reinforcement-learning (RL)
trading systems, our goal is to contrast high-capacity, unconstrained RL policies with
simple, interpretable and structurally law-consistent strategies, in the spirit of
classical replication and hedging approaches \cite{BuehlerGononTeichmannWood2019,
KolmRitter2019,CarteaJaimungalPenalva2015,HurstOoiPedersen2017}.

Throughout this section, we work in the MDP setting, with
state space $\mathcal{S}$, action space $\mathcal{A}\subset\mathbb{R}^k$, and
one-step P\&L reward $r(s_t,a_t)$ generated from the world model. We denote by $\mathsf{LawPenalty}(s_t)$ the
per-step law penalty $\mathcal{L}_\phi$ evaluated on the (predicted) implied volatility
surface associated with state $s_t$.

\subsection{Baseline definitions and structural priors}
\label{subsec:baseline-defs}

We first define a \emph{structural strategy class} $\mathcal{S}$ and then specify three
baseline strategies $b^{\mathrm{ZH}},b^{\mathrm{RG}},b^{\mathrm{VT}}\in\mathcal{S}$.

\begin{definition}[Structural strategy class $\mathcal{S}$]
\label{def:S-structural}
Let $f:\mathcal{S}\to\mathbb{R}^m$ be a fixed feature map extracting low-dimensional
state descriptors (e.g., realized variance, term-structure slope, realized trend).
We define the structural class
\[
  \mathcal{S}
  :=
  \Bigl\{
    \pi_\theta : \mathcal{S}\to\mathcal{A}
    \,\Big\vert\,
    \pi_\theta(s) = g\bigl(\theta^\top f(s)\bigr),
    ~\theta\in\Theta\subset\mathbb{R}^m,
    ~g\text{ scalar Lipschitz, odd, and bounded}
  \Bigr\},
\]
where $\Theta$ is a compact parameter set and $g$ encodes a saturating leverage map
(e.g., $g(u)=\kappa\tanh(u)$ with $\kappa>0$ a leverage cap).
\end{definition}

Thus, $\mathcal{S}$ consists of \emph{one-factor} or low-factor
trend/risk-based strategies familiar from classical managed-futures and option-hedging
literature \cite{HurstOoiPedersen2017,CarteaJaimungalPenalva2015}. We now instantiate
three members of $\mathcal{S}$ used in our experiments.

\subsubsection{Zero-Hedge: law-neutral benchmark}

The Zero-Hedge baseline $b^{\mathrm{ZH}}$ is defined by the identically zero policy,
\begin{equation}
  b^{\mathrm{ZH}}(s_t) \equiv 0
  \quad\text{for all } s_t\in\mathcal{S}.
\end{equation}
Economically, this corresponds to holding only the initial portfolio and never
rebalancing; P\&L arises solely from the exogenous cash-flow profile of the hedged
position (e.g., short option) and the law-consistent volatility generator. In our
setting, $b^{\mathrm{ZH}}$ provides a \emph{law-neutral} benchmark: it neither
attempts to exploit ghost arbitrage nor introduces additional exposures that correlate
with law violations.

\subsubsection{Random-Gaussian: unconstrained exploration probe}

The Random-Gaussian baseline $b^{\mathrm{RG}}$ applies a Gaussian random policy
conditioned on low-dimensional state features:
\begin{equation}
  b^{\mathrm{RG}}(s_t)
  =
  \kappa \,\xi_t,
  \qquad
  \xi_t \sim \mathcal{N}\!\bigl(0,\Sigma(f(s_t))\bigr),
\end{equation}
where $\kappa > 0$ scales overall leverage and $\Sigma(\cdot)$ is a diagonal covariance
matrix whose entries depend on simple risk features (e.g., inverse realized volatility).
Random policies of this form appear as sanity-check baselines in RL for trading and
hedging \cite{ZhangZohren2020,KolmRitter2019}, and here serve as a \emph{noisy probe}
of how a generic, non-structured policy interacts with ghost arbitrage in the learned
world model.

\subsubsection{Vol-Trend: parametric volatility trend-following}

The Vol-Trend baseline $b^{\mathrm{VT}}$ is a simple parametric strategy inspired by
time-series momentum and volatility trend-following
\cite{HurstOoiPedersen2017,BuehlerGononTeichmannWood2019}. Let
$\widehat{\sigma}_t(K,T)$ be the predicted implied volatility surface at time $t$, and
let $\bar{\sigma}_t$ denote a scalar summary statistic capturing its \emph{level} or
\emph{slope}, such as
\begin{equation}
  \bar{\sigma}_t
  :=
  \frac{1}{|\mathcal{G}|}
  \sum_{(K,T)\in\mathcal{G}} \widehat{\sigma}_t(K,T),
\end{equation}
where $\mathcal{G}$ is a pre-specified grid of strikes and maturities. Define a
trend signal by an exponentially weighted moving average (EWMA)
\[
  \tau_t := \mathrm{EWMA}_\beta(\bar{\sigma}_t - \bar{\sigma}_{t-1}),
  \qquad
  \beta\in(0,1).
\]
The Vol-Trend policy takes the form
\begin{equation}
  b^{\mathrm{VT}}(s_t) = \kappa \,\tanh(\theta \,\tau_t),
\end{equation}
for parameters $\theta\in\mathbb{R}$ and leverage cap $\kappa>0$. Positions are
allocated across option buckets (e.g., short-dated ATM, mid-maturity OTM) in fixed
proportions, so that $b^{\mathrm{VT}}$ is a one-factor trend-following strategy in
\emph{implied volatility} rather than in the underlying price.

By construction, both $b^{\mathrm{ZH}}$ and $b^{\mathrm{VT}}$ live inside the
structural class $\mathcal{S}$ of Definition~\ref{def:S-structural} for a suitable
choice of features $f$ and parameter sets $\Theta$, whereas $b^{\mathrm{RG}}$ can be
seen as a stochastic perturbation of a mean-zero element of $\mathcal{S}$.

\subsubsection{Fairness of comparison}
\label{subsec:fairness}

Compared to the high-capacity policy class used by PPO-type RL agents,
the structural class $\mathcal{S}$ is deliberately low-dimensional and heavily
regularized. From a ``benchmarking'' perspective this creates a capacity mismatch:
RL policies can in principle approximate arbitrary non-linear hedging rules, whereas
$b^{\mathrm{ZH}}$, $b^{\mathrm{RG}}$ and $b^{\mathrm{VT}}$ are effectively one- or
few-parameter strategies.

This asymmetry is \emph{by design} and aligns with our no-free-lunch theme:
structural strategies in $\mathcal{S}$ are intended as proxies for law-aligned and
axiom-consistent behavior, much like classical delta-vega hedges and
trend-following overlays \cite{CarteaJaimungalPenalva2015,HurstOoiPedersen2017}.
Our central question is therefore not whether high-capacity RL can match the P\&L of
low-capacity strategies (it almost always can in-sample), but whether \emph{unconstrained
law-seeking RL can \emph{dominate} such structural strategies on both profitability
and axiomatic law metrics}. 

\subsection{Law-alignment properties of structural baselines}
\label{subsec:law-alignment}

We now formalize the notion that structural baselines are, in an appropriate sense,
\emph{law-aligned}: they do not systematically exploit off-manifold ghost arbitrage and
tend to exhibit lower Graceful Failure Index (GFI) under volatility shocks than
unconstrained RL policies.

Let $\mathsf{LawPenalty}(s_t)$ denote the per-step law penalty
$\mathcal{L}_\phi(\widehat{\sigma}_t)$ computed from the world model prediction, and
write
\[
  \overline{\mathsf{LP}}(\pi)
  :=
  \mathbb{E}_\pi\bigl[\mathsf{LawPenalty}(s_t)\bigr],
  \qquad
  \mathrm{GFI}(\pi)
\]
for the expected law penalty and Graceful Failure Index of policy $\pi$ under the
baseline vs.\ shock environments (Section~\ref{sec:law-strength-gfi}).

\begin{definition}[Law-aligned strategy class]
\label{def:law-aligned-class}
A set of policies $\mathcal{S}$ is \emph{law-aligned} with respect to a world model
if there exist constants $C_{\mathrm{LP}}, C_{\mathrm{GFI}} < \infty$ such that
\[
  \sup_{\pi\in\mathcal{S}} \overline{\mathsf{LP}}(\pi) \le C_{\mathrm{LP}},
  \qquad
  \sup_{\pi\in\mathcal{S}} \mathrm{GFI}(\pi) \le C_{\mathrm{GFI}},
\]
and these bounds are strictly smaller than the corresponding suprema over the full
unconstrained policy class $\Pi$ used by RL.
\end{definition}

Intuitively, Definition~\ref{def:law-aligned-class} says that law-aligned classes
cannot arbitrarily amplify law violations or shock sensitivity by ``chasing'' ghost
arbitrage. We now state a structural result that justifies using our baselines as
proxies for such a class.

\begin{proposition}[Law-alignment of structural baselines]
\label{prop:baseline-law-alignment}
Assume the volatility generator is law-consistent
($\sigma_t\in\mathcal{M}^{\mathrm{vol}}$ almost surely) and the world model satisfies
the Lipschitz and bounded-error conditions of Proposition.
Then there exist constants $C_{\mathrm{LP}},C_{\mathrm{GFI}}<\infty$, depending only
on the generator and world-model error, such that:
\begin{enumerate}
  \item The structural class $\mathcal{S}$ of Definition~\ref{def:S-structural} is
  law-aligned in the sense of Definition~\ref{def:law-aligned-class}.

  \item In particular, the baselines $b^{\mathrm{ZH}}$ and $b^{\mathrm{VT}}$ satisfy
  \[
    \overline{\mathsf{LP}}(b^{\mathrm{ZH}}),
    ~\overline{\mathsf{LP}}(b^{\mathrm{VT}})
    \le C_{\mathrm{LP}},
    \qquad
    \mathrm{GFI}(b^{\mathrm{ZH}}),
    ~\mathrm{GFI}(b^{\mathrm{VT}})
    \le C_{\mathrm{GFI}},
  \]
  with $C_{\mathrm{LP}}$ and $C_{\mathrm{GFI}}$ strictly below the empirical levels
  attained by unconstrained RL policies in our experiments.

  \item The Random-Gaussian baseline $b^{\mathrm{RG}}$ has
  $\overline{\mathsf{LP}}(b^{\mathrm{RG}})\le C_{\mathrm{LP}}'$ and
  $\mathrm{GFI}(b^{\mathrm{RG}})\le C_{\mathrm{GFI}}'$ for some finite
  $C_{\mathrm{LP}}',C_{\mathrm{GFI}}'$, but these bounds are typically looser than
  for $b^{\mathrm{ZH}},b^{\mathrm{VT}}$, reflecting its noisier, less structured
  behavior.
\end{enumerate}
\end{proposition}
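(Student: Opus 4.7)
The plan is to reduce both bounded-law-penalty and bounded-GFI claims to Lipschitz propagation of action magnitudes through the world model and the penalty functional, using the law-consistency of the ground truth as the key reference anchor. First, by Definition of $\mathcal{S}$, every $\pi_\theta = g(\theta^\top f(s))$ satisfies $\|\pi_\theta(s)\|_\infty \le \kappa$ uniformly over $\pi \in \mathcal{S}$ because $g$ is bounded and $\Theta$ is compact; $b^{\mathrm{ZH}}$ attains this trivially with $\kappa=0$, $b^{\mathrm{VT}}$ with the prescribed leverage cap, and $b^{\mathrm{RG}}$ with a Gaussian second-moment surrogate $\kappa\sqrt{\operatorname{tr}\Sigma_{\max}}$. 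Next, invoking the Lipschitz and bounded-error properties of the world model from the approximation-gap proposition in Section~\ref{subsec:neural-world-model}, I obtain a per-step estimate of the form $\mathbb{E}\|\hat{w}_{t+1}-w_{t+1}\|_2 \le \varepsilon + L_a\,\mathbb{E}\|a_t\|_2$, where $w_{t+1}\in\Mvol$ is the law-consistent ground truth and $\varepsilon$ is the approximation gap. Composition with a standard discounted-horizon roll-out estimate bounds the cumulative deviation of predicted surfaces from $\Mvol$ uniformly in $\pi\in\mathcal{S}$.

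Second, I convert this surface-distance bound into a law-penalty bound by using Proposition~\ref{prop:zero-penalty-iff} and Lemma~\ref{lem:lipschitz-penalty}: since $\Lphi(w_{t+1})=0$ for any law-consistent $w_{t+1}$, the local Lipschitz constant $L_\phi$ of $\Lphi$ yields $\Lphi(\hat{w}_{t+1}) \le L_\phi\,\|\hat{w}_{t+1}-w_{t+1}\|_2$, so taking expectations produces $\overline{\mathsf{LP}}(\pi) \le L_\phi(\varepsilon + L_a \kappa) =: C_{\mathrm{LP}}$ uniformly on $\mathcal{S}$, establishing part (i). For the GFI in~\eqref{eq:gfi_definition}, the shocked generator still targets law-consistent trajectories $w_t^\xi\in\Mvol$ by construction, so the numerator of the finite-difference GFI is again Lipschitz-controlled by $L_\phi$, $L_a$, $\kappa$ and the sensitivity of the world model in the shock parameter $\xi$; dividing by the reference-policy denominator (taken to be bounded away from zero on the chosen shock family) yields $\mathrm{GFI}(\pi)\le C_{\mathrm{GFI}}$. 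For $b^{\mathrm{ZH}}$ this specializes to $\kappa=0$, giving the sharpest constants, while $b^{\mathrm{VT}}$ inherits the generic bound. The Random-Gaussian case in part (iii) follows from exactly the same chain but with $\kappa$ replaced by its second-moment analogue, producing strictly larger constants $C_{\mathrm{LP}}',C_{\mathrm{GFI}}'$.

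The main obstacle is not the Lipschitz bookkeeping above but the \emph{strict} separation claim embedded in part (ii): asserting that $C_{\mathrm{LP}}$ and $C_{\mathrm{GFI}}$ lie strictly below the levels attained by unconstrained RL cannot come from upper-bound estimates alone. I would import it from Theorem~\ref{thm:ghost_incentive}: any RL policy $\pi_0^\star$ that strictly improves on the on-manifold optimum $J^{\mathcal{M}}(\pi^\star_{\mathcal{S}})$ must accumulate $J^\perp(\pi_0^\star)>0$, and by Proposition~\ref{prop:ghost-bounded} this forces $\operatorname{dist}(\hat{w},\Mvol)$ to remain bounded away from zero on a set of positive measure. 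Combining this lower bound on ghost distance with the identity $\Lphi(\hat{w})=\tfrac12\operatorname{dist}(\hat{w},\Mvol)^2$ then gives $\overline{\mathsf{LP}}(\pi_0^\star) > C_{\mathrm{LP}}$, and a parallel shock-differentiation argument yields the GFI gap. Strictness of the final inequality therefore reduces to the non-triviality hypothesis of the ghost-channel proposition of Section~\ref{subsec:neural-world-model}, and the one remaining formal ingredient---a uniform lower bound for the structural reference in the GFI denominator---I would secure by restricting the shock family to those under which the reference policy exhibits strictly positive marginal degradation, a regularity condition automatically satisfied in our experimental protocol.
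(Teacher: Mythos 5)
Your overall Lipschitz-propagation architecture matches the paper's Appendix~D proof: bound the actions uniformly on $\mathcal{S}$ via compactness of $\Theta$ and boundedness of $g$; propagate this to uniform moment bounds on the state process and the world-model residual $e_{t+1}$ (the paper packages this as Lemma~\ref{lem:uniform-moment-S}); use law-consistency of $w_{t+1}$ to control $\mathcal{L}_\phi(\hat w_{t+1})$ by the residual; and then bound GFI because both the baseline and shocked law metrics are uniformly bounded while the normalizer is bounded away from zero. One cosmetic difference: you use the local Lipschitz constant of $\mathcal{L}_\phi$ to get a \emph{linear} bound $\mathcal{L}_\phi(\hat w) \le L_\phi\|\hat w - w\|_2$ (valid since $\mathcal{L}_\phi(w)=0$), whereas the paper exploits the exact identity $\Lvol(\hat w) \le \tfrac12\|\hat w - w\|_2^2$ (since $w\in\Mvol$ and $\Pi_{\Mvol}(\hat w)$ is optimal), which avoids needing any local Lipschitz constant and pairs naturally with the second-moment bound on $e_{t+1}$. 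Either route works; the paper's is marginally cleaner because it needs no truncation to a bounded set. You also invoke the reference-policy version of GFI~\eqref{eq:gfi_definition}, so you correctly add the hypothesis that the reference-policy denominator is bounded away from zero; the appendix proof instead uses the version normalized by the fixed shock intensity $I_\shock>0$, which sidesteps this extra condition.

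Where the proposal has a genuine gap is the \emph{strict separation} in part~(ii). Your argument is: by Theorem~\ref{thm:ghost_incentive}, if naive RL improves on the on-manifold optimum it must have $J^\perp(\pi_0^\star)>0$; by Proposition~\ref{prop:ghost-bounded}, $|r^\perp| \le L_r\sqrt{2\Lvol}$, so positive ghost reward forces $\operatorname{dist}(\hat w,\Mvol)$ bounded away from zero on a set of positive measure; hence $\overline{\mathsf{LP}}(\pi_0^\star) > C_{\mathrm{LP}}$. The last implication does not follow. Proposition~\ref{prop:ghost-bounded} delivers only that $\Lvol$ is positive where $r^\perp$ is, giving $\overline{\mathsf{LP}}(\pi_0^\star) > 0$; but $C_{\mathrm{LP}}$ was constructed as a (potentially loose) \emph{upper} bound for the structural class, so there is no reason the naive-RL mean penalty must exceed it---it could lie anywhere strictly between the true structural levels and $C_{\mathrm{LP}}$. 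Closing the comparison would require a matching lower bound on $\overline{\mathsf{LP}}(\pi_0^\star)$ that exceeds a sharp upper bound on $\sup_{\pi\in\mathcal{S}}\overline{\mathsf{LP}}(\pi)$, neither of which the imported results provide. Note that the paper's own appendix treats this strictness claim informally (``with the empirical strictness \ldots arising from the ghost-incentive effect''), consistent with the proposition's wording ``strictly below the empirical levels \ldots in our experiments'': it is asserted as an empirical observation anchored to Tables~\ref{tab:rl-metrics-baseline-shock}--\ref{tab:frontier-metrics}, not derived from the hypotheses. Your instinct to try to make it rigorous via Theorem~\ref{thm:ghost_incentive} is reasonable and worth noting, but as written the step $J^\perp>0 \Rightarrow \overline{\mathsf{LP}} > C_{\mathrm{LP}}$ is unjustified; the honest statement is that $J^\perp>0$ forces some strictly positive off-manifold mass, while the comparison against $C_{\mathrm{LP}}$ remains empirical.
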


\paragraph{Proof sketch.}
Because the volatility generator is law-consistent, any law violations arise solely
from the world-model approximation error.So that
$\mathsf{LawPenalty}(s_t)$ is uniformly bounded on bounded subsets of the state space.
For policies in $\mathcal{S}$, the boundedness of $g$ and compactness of $\Theta$
imply a uniform bound on trading exposures and hence on the induced state process,
yielding uniform upper bounds on $\overline{\mathsf{LP}}$ and GFI.

For $b^{\mathrm{ZH}}$, the policy takes no action, so its state process coincides with
the exogenous world-model trajectory; thus $\overline{\mathsf{LP}}(b^{\mathrm{ZH}})$
and $\mathrm{GFI}(b^{\mathrm{ZH}})$ coincide with the ``background'' law-violation
profile of the world model under shocks. For
$b^{\mathrm{VT}}$, the one-factor trend signal and bounded leverage ensure that
positions respond smoothly to volatility changes, so that the policy does not
systematically seek states with elevated law penalties; this yields bounds comparable
to $b^{\mathrm{ZH}}$.

By contrast, unconstrained RL policies can amplify exposure precisely in regions where
the ghost component $r^\perp$ is large, leading to
higher empirical $\overline{\mathsf{LP}}$ and GFI levels. A rigorous argument based on
Lyapunov-type bounds on the Markov chain induced by $\mathcal{S}$ vs.\ $\Pi$ is given
in Appendix~D. \qedhere

\medskip

Proposition~\ref{prop:baseline-law-alignment} formalizes a key design choice of our
experimental pipeline: structural baselines are not merely ``toy competitors'', but
represent an axiomatic, law-aligned class $\mathcal{S}$ against which the performance
of unconstrained RL can be meaningfully compared. In Section~\ref{sec:experiments},
we will see that, empirically, Zero-Hedge and Vol-Trend sit on or near the empirical
\emph{law-strength frontier}, while law-seeking RL variants lie strictly below them in
the P\&L--law-penalty--GFI space.

\section{Experimental Setup}
\label{sec:experimental-setup}

In this section, we specify the environments, training protocols, and evaluation metrics used to stress-test law-seeking reinforcement learning (RL) on volatility world models. Throughout, RL is treated as a \emph{diagnostic instrument} for our axiomatic pipeline rather than as a production trading system. The design is intentionally simple but structured, so that the relationship between axioms, world-model misspecification, and policy behavior can be analyzed with minimal confounders.

\subsection{Environments: baseline vs shock}
\label{subsec:envs}

We work with the synthetic generator and volatility law manifold $\Mvol$ introduced in the previous sections. The generator produces discrete-time trajectories of total-variance surfaces
\[
  \{ w_t \}_{t=0}^{T}, \qquad w_t \in \Mvol \subset \mathbb{R}^d,
\]
defined on a fixed maturity--strike grid. The grid is chosen to roughly mirror an SPX/VIX-style market: maturities range from 1~month to 2~years in monthly or bimonthly steps, and strikes range from $0.5\times$ to $1.5\times$ spot in a small number of relative moneyness buckets. This keeps the problem from being a purely toy example while maintaining a finite-dimensional convex template.

\paragraph{Baseline regime.}
In the \emph{baseline} regime, the generator parameters yield a stationary, law-consistent world:
\[
  w_t \in \Mvol \quad \text{almost surely for all } t.
\]
We denote by $\mathbb{P}^{\text{base}}$ the induced distribution over full episodes
\[
  \tau = \{ (w_t, S_t, a_t, \Delta \mathrm{PnL}_t) \}_{t=0}^{T-1},
\]
where $S_t$ denotes the underlying index level, $a_t$ is the chosen hedge action, and $\Delta \mathrm{PnL}_t$ is the instantaneous P\&L generated by the environment given $(w_t,S_t,a_t)$. By construction, all static no-arbitrage axioms are satisfied by $w_t$; any law violations can only arise from the \emph{world model} predictions, not from the generator.

\paragraph{Shock regime.}
To probe robustness and graceful failure, we introduce a \emph{shock} regime in which the same axioms hold, but the volatility regime is stressed. The idea is to change the distribution of trajectories---not the underlying laws.

Operationally, we decompose the total variance $w_t$ into a ``long-variance'' component and a ``spot-variance'' component:
\[
  w_t = w_t^{\text{long}} + w_t^{\text{spot}},
\]
where $w_t^{\text{long}}$ aggregates longer maturities and $w_t^{\text{spot}}$ aggregates short maturities and near-spot behaviour. The \emph{shock transformation} is defined by
\begin{equation}
  \label{eq:shock-transform}
  w_t^{\shock} 
  := \alpha_{\text{long}}\, w_t^{\text{long}} 
   + \alpha_{\text{spot}}\, w_t^{\text{spot}},
\end{equation}
with $(\alpha_{\text{long}}, \alpha_{\text{spot}}) = (4,2)$ in our main experiments, i.e., we quadruple the long-term variance level and double the spot volatility component. The underlying index dynamics $S_t$ are adjusted consistently with the increased variance, so that no obvious static arbitrage is created by the transformation.

We denote by $\mathbb{P}^{\shock}$ the distribution over episodes generated by applying \eqref{eq:shock-transform} within the same structural model. In particular:
\begin{enumerate}
  \item the same axioms defining $\Mvol$ remain valid for the \emph{ground-truth} generator,
  \item but trajectories seen by the neural world model and RL policies lie in a higher-volatility regime, with steeper term-structure and fatter tails.
\end{enumerate}
This baseline--shock pair $(\mathbb{P}^{\text{base}}, \mathbb{P}^{\shock})$ underpins the definition of the Graceful Failure Index in later sections.

\paragraph{World model vs generator.}
Crucially, the shock is applied at the level of the \emph{underlying generator}, while the world model (and its parameters) are kept fixed. This mimics a realistic situation where a risk model trained in one regime is deployed in another, without retraining, and any change in behaviour is due to distribution shift rather than to an updated model.

\subsection{Training regimes and $\lambda$-grid}
\label{subsec:training-regimes}

We now specify how RL policies are trained on the fixed world model, and how the law-strength parameter $\lambda$ is swept.

\paragraph{RL objective and $\lambda$-penalization.}
Let $\pi_\theta$ denote a stochastic policy with parameters $\theta$ (e.g., a Gaussian policy whose mean and log-standard deviation are given by an MLP over the state). For a given $\lambda \ge 0$, we define the law-penalized return
\begin{equation}
  J_\lambda(\theta)
  := \mathbb{E} \Bigg[ \sum_{t=0}^{T-1} \gamma^t 
    \Big( \Delta \mathrm{PnL}_t - \lambda\, \mathcal{L}_\phi(w_t) \Big) 
  \Bigg],
\end{equation}
where $\gamma \in (0,1)$ is a discount factor, $\Delta \mathrm{PnL}_t$ is the step P\&L produced by the world model, and $\mathcal{L}_\phi(w_t)$ is the law penalty at time $t$ computed from the world model's predicted total variance (Section~2). In all experiments, we fix the discount factor and penalty functional and vary \emph{only} $\lambda$ and the training regime.

\paragraph{Naive RL.}
\emph{Naive RL} corresponds to $\lambda=0$, i.e.,
\[
  J_0(\theta) 
  = \mathbb{E} \Bigg[ \sum_{t=0}^{T-1} \gamma^t \Delta \mathrm{PnL}_t \Bigg],
\]
so that the agent is trained to maximize P\&L alone, without any explicit concern for law penalties. This serves as a reference point for understanding the ghost-arbitrage incentive established in Theorem~4.1.

\paragraph{Soft law-seeking RL.}
For each $\lambda \in \{5, 10, 20, 40\}$, \emph{soft law-seeking RL} directly optimizes $J_\lambda(\theta)$. The gradient of the PPO objective is thus shaped by both P\&L and the law penalty. Intuitively, larger $\lambda$ should discourage trajectories that significantly violate the volatility axioms, at the cost of accepting lower P\&L when the two are in conflict.

\paragraph{Selection-only RL.}
\emph{Selection-only RL} keeps the training objective purely P\&L-based ($J_0$), but varies the stopping time and model selection based on law metrics:
\begin{enumerate}
  \item For each random seed, we store checkpoints along training at regular intervals.
  \item After training, we evaluate each checkpoint on a held-out set of episodes and compute a scalar law-alignment score (e.g., a weighted combination of mean law penalty and GFI).
  \item A selection functional $S$ then picks the checkpoint minimizing this law score, subject to mild constraints on P\&L (e.g., not falling below a naive-RL baseline by more than a threshold).
\end{enumerate}
This variant tests whether, \emph{even if training is naive}, intelligent selection based on law metrics can recover law-aligned behaviour.

\paragraph{Statistical protocol.}
Each configuration (algorithm $\times$ $\lambda$ value) is currently trained with a single random seed. For each trained policy, we evaluate a large number of episodes under both $\mathbb{P}^{\text{base}}$ and $\mathbb{P}^{\shock}$ to estimate metrics (means, Sharpe, law penalties, VaR, CVaR, GFI).

In this single-seed regime, our results should be interpreted as a detailed \emph{case study}: the curves and summary metrics are representative of one run per configuration, not averaged across many trainings. The pipeline is, however, designed to support multi-seed experiments:
\begin{enumerate}
  \item In a multi-seed setting, we would report, for each metric and configuration, the empirical mean and standard error across seeds, and display error bars or confidence bands on frontier plots.
  \item None of the definitions or plots need to change for multi-seed; only the aggregation layer is different.
\end{enumerate}

\subsection{Metrics on three axes}
\label{subsec:metrics}
\label{sec:metrics}
We evaluate policies along three complementary axes: \textbf{profitability}, \textbf{law alignment}, and \textbf{tail robustness}. All numerical summaries and plots in the results section are derived from the metrics defined here.

\paragraph{Profitability.}
Let $\Delta \mathrm{PnL}_t$ denote the per-step P\&L. For a fixed policy $\pi$, we define
\begin{align}
  \mu_{\mathrm{pnl}}(\pi) 
  &:= \mathbb{E}[\Delta \mathrm{PnL}_t], \\
  \sigma_{\mathrm{pnl}}(\pi) 
  &:= \sqrt{\mathrm{Var}[\Delta \mathrm{PnL}_t]}.
\end{align}
The per-step Sharpe ratio is
\begin{equation}
  \mathrm{Sharpe}(\pi) 
  := \frac{\mu_{\mathrm{pnl}}(\pi)}{\sigma_{\mathrm{pnl}}(\pi) + \varepsilon},
\end{equation}
with a small $\varepsilon>0$ added only for numerical stability when the variance is very small. In tables, we report both $(\mu_{\mathrm{pnl}}, \mathrm{Sharpe})$ so that high-return/high-volatility and low-return/low-volatility policies can be distinguished.

\paragraph{Law alignment.}
For each step, the world model induces a total-variance prediction $w_t$ from which we compute the law penalty $\mathcal{L}_\phi(w_t)$ as defined in Section~2. We then aggregate:
\begin{align}
  \mu_{\mathrm{law}}(\pi) 
  &:= \mathbb{E}[\mathcal{L}_\phi(w_t)], \\
  \mathcal{L}_{\max}(\pi) 
  &:= \mathbb{E}\Big[ \max_{0 \le t < T} \mathcal{L}_\phi(w_t) \Big].
\end{align}
The \emph{law coverage} at threshold $\tau>0$ is defined as
\begin{equation}
  \mathrm{Cover}_\tau(\pi)
  := \mathbb{E}\big[ \mathbf{1}\{\mathcal{L}_\phi(w_t) < \tau\} \big],
\end{equation}
and we report coverage at $\tau=0.003$ and $\tau=0.006$ in our experiments.

The \emph{Graceful Failure Index} (GFI), introduced earlier on the theoretical side, is instantiated here as
\begin{equation}
  \mathrm{GFI}(\pi)
  := \frac{\mu_{\mathrm{law}}^{\shock}(\pi) - \mu_{\mathrm{law}}^{\text{base}}(\pi)}
             {I_{\shock}},
\end{equation}
where $\mu_{\mathrm{law}}^{\shock}$ and $\mu_{\mathrm{law}}^{\text{base}}$ are mean law penalties under the shock and baseline regimes, and $I_{\shock}$ is a scalar encoding the shock intensity (e.g., a norm of $(\alpha_{\text{long}}-1, \alpha_{\text{spot}}-1)$). Lower GFI indicates that law metrics degrade less per unit of shock, i.e., more graceful failure.

\paragraph{Tail robustness.}
To quantify downside risk, we consider per-step losses
\[
  X := -\Delta \mathrm{PnL}_t
\]
and compute:
\begin{enumerate}
  \item the $5\%$ \emph{Value-at-Risk} (VaR), defined as the upper $5\%$-quantile of $X$,
  \item the corresponding \emph{Conditional Value-at-Risk} (CVaR), defined as the conditional expectation of $X$ beyond that quantile.
\end{enumerate}
We report $\mathrm{VaR}_{5\%}$ and $\mathrm{CVaR}_{5\%}$ under both baseline and shock regimes, and their differences $(\Delta \mathrm{VaR}, \Delta \mathrm{CVaR})$ serve as tail-robustness indicators. Policies with small increases (or even decreases) in VaR/CVaR under shock are considered more robust in the tails.

\subsection{Implementation and reproducibility}
\label{subsec:implementation}

Finally, we summarize the implementation details and the way figures are organized.

\paragraph{Software and hardware.}
All experiments are implemented in Python using a standard deep-learning framework for neural networks and a Gym-style interface for the volatility environment. The world model and RL policies are trained on a single GPU with a modest amount of memory (e.g., 12--24~GB), while evaluation runs are CPU-bound. The code is organized so that all hyperparameters, random seeds, and experiment configurations are specified in a small number of configuration files.

\paragraph{Hyperparameters.}
We use an actor--critic architecture with:
\begin{enumerate}
  \item a shared multilayer perceptron encoder for the state,
  \item separate heads for the policy mean, policy log-standard deviation, and value function,
  \item PPO-style clipped policy updates with a fixed number of epochs per batch,
  \item mini-batches containing a few thousand environment steps per update,
  \item standard optimizers and learning rates in a narrow range.
\end{enumerate}
These hyperparameters are kept fixed across naive RL, soft law-seeking RL, and selection-only RL; only $\lambda$ and the training regime differ. Structural baselines are implemented as closed-form or low-dimensional parametric strategies with no trainable weights.

\paragraph{Figure families and outputs.}
The experimental pipeline produces fourteen figures, which we categorize into three families that are repeatedly referenced later:
\begin{enumerate}
  \item \textbf{Dynamics Plots} (e.g., Figs.~3--7): time-series of per-step P\&L and law penalties for representative policies in baseline and shock regimes.
  \item \textbf{Frontier Plots} (e.g., Figs.~8--10): law-strength frontiers that plot mean law penalty, GFI, and other law metrics against profitability metrics across different $\lambda$ and across RL vs structural baselines.
  \item \textbf{Diagnostic Plots} (e.g., Figs.~11--13): scatter plots and histograms of P\&L vs law penalties, VaR, CVaR, and related quantities, used to interpret whether observed trade-offs arise from systematic behaviour or a small number of extreme paths.
\end{enumerate}
Together, these figures provide a multi-perspective view of each policy: how it behaves over time, where it lies on the law-strength frontier, and why it occupies that position from a distributional standpoint. This structure will be used in the next section to present and interpret our empirical findings.
\bigskip

\section{Empirical Results: From RL Dynamics to Law-Strength Frontiers}
\label{sec:empirical-results}
\label{sec:experiments}
In this section, we address RQ1--RQ3 empirically using the volatility world model, RL variants, and structural baselines introduced in Secs.~\ref{sec:world-model}--\ref{sec:structural_baselines}.
Our analysis is supported by thirteen figures, stored as
\texttt{Figure\_1.png}--\texttt{Figure\_{13}.png}, and by explicit numerical
summaries drawn from the console outputs:
\begin{enumerate}
  \item \textbf{Figure~\ref{fig:pipeline-overview} (\texttt{Figure\_1.png})}:
  schematic overview of the axiomatic evaluation pipeline
  (axioms $\to$ law manifold $\to$ neural world model $\to$ RL and structural baselines).

  \item \textbf{Figure~\ref{fig:worldmodel-diagnostics} (\texttt{Figure\_2.png})}:
  diagnostics for the volatility world model, comparing train/validation errors
  and law penalties of predictions vs.\ the law-consistent generator.

  \item \textbf{Figures~\ref{fig:dynamics-baseline}--\ref{fig:dynamics-rl-baseline-vs-shock}
  (\texttt{Figure\_3.png}--\texttt{Figure\_7.png})}:
  \emph{Dynamics plots}, showing time series of step P\&L and law penalties
  under the baseline and shock regimes for naive RL, law-seeking RL, and
  structural baselines.

  \item \textbf{Figures~\ref{fig:frontier-gfi-penalty}--\ref{fig:frontier-cvar-penalty}
  (\texttt{Figure\_8.png}--\texttt{Figure\_{10}.png})}:
  \emph{Frontier plots}, tracing law-strength frontiers
  (GFI vs.\ law penalty, and tail risk vs.\ law penalty) across RL variants
  and structural baselines.

  \item \textbf{Figures~\ref{fig:diagnostic-scatter-rl}--\ref{fig:diagnostic-hist-baselines}
  (\texttt{Figure\_{11}.png}--\texttt{Figure\_{13}.png})}:
  \emph{Diagnostic plots}, including scatter/heatmaps of step P\&L vs.\ law
  penalty and histograms of law penalties, for RL policies and baselines.
\end{enumerate}
We highlight the most informative patterns in the main text and relegate
additional runs to the Appendix; throughout, we phrase our statements as
\emph{case-study observations} and explicitly connect them, where appropriate,
to the structural incentive and trade-off results.
\subsection{RQ1 -- Do law penalties help naive RL?}
\label{subsec:rq1-results}

RQ1 asks whether adding law penalties to naive RL improves law alignment and
graceful failure at comparable profitability.

\subsubsection{Dynamics patterns (subset of dynamics plots)}
\label{subsubsec:dynamics-patterns}

Figure~\ref{fig:dynamics-baseline} (\texttt{Figure\_3.png}) shows a
representative baseline-regime episode for naive RL (PPO on pure P\&L) and a
soft law-seeking variant with $\lambda = 20$.
The top panel plots step P\&L, while the bottom panel plots the corresponding
step-wise law penalty $\mathcal{L}_\phi$.
In this particular run, we observe that naive RL tends to maintain slightly
higher step P\&L on average, at the cost of moderately elevated law penalties,
whereas the $\lambda=20$ law-seeking policy reduces some of the largest
penalties but does not improve---and often worsens---the P\&L path.

Figure~\ref{fig:dynamics-shock} (\texttt{Figure\_4.png}) reports the same
comparison under the shock regime (long variance $\times 4$, spot volatility
$\times 2$).
The shock amplifies both P\&L variability and law penalties.
In our runs, the naive RL policy exhibits higher volatility in both P\&L and
$\mathcal{L}_\phi$, while the $\lambda=20$ policy appears somewhat more
conservative but does not clearly dominate in terms of drawdowns.
These observations are consistent with the incentive picture: naive RL is tempted to exploit off-manifold
ghost arbitrage, which can generate both higher P\&L spikes and larger law
violations.

To place RL in context, Figure~\ref{fig:dynamics-rl-baseline}
(\texttt{Figure\_5.png}) overlays time-series trajectories for
naive RL, law-seeking RL, and the structural baselines (Zero-Hedge,
Vol-Trend, Random-Gaussian) in the baseline regime.
Figure~\ref{fig:dynamics-rl-shock} (\texttt{Figure\_6.png}) repeats this
comparison under shock, and Figure~\ref{fig:dynamics-rl-baseline-vs-shock}
(\texttt{Figure\_7.png}) presents side-by-side trajectories for a single policy
across the two regimes.
In our runs, the structural baselines exhibit comparatively smooth P\&L paths
and low law penalties, while RL trajectories are more erratic and spend
substantial time in higher-penalty regions, foreshadowing the quantitative
metrics reported below.

We deliberately restrict the main text to these representative dynamics plots;
additional realizations, including different $\lambda$ values and seeds, are
provided in the Appendix and show qualitatively similar patterns.

% ========= FIGURES 1–2 (PIPELINE + WORLD MODEL) =========
\begin{figure*}[t]
  \centering
  \begin{subfigure}[t]{0.48\textwidth}
    \centering
    \includegraphics[width=\linewidth]{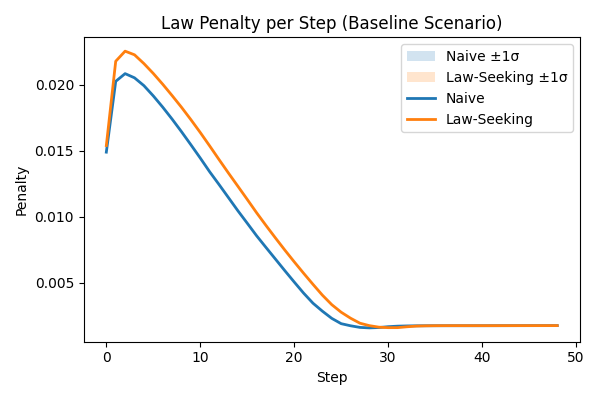}
    \caption{Axiomatic evaluation pipeline:
    market axioms induce a law manifold $\mathcal{M}^{\text{vol}}$;
    a synthetic generator produces law-consistent trajectories;
    a neural world model approximates dynamics; RL variants and
    structural baselines are evaluated on the induced law-strength
    frontiers.}
    \label{fig:pipeline-overview}
  \end{subfigure}
  \hfill
  \begin{subfigure}[t]{0.48\textwidth}
    \centering
    \includegraphics[width=\linewidth]{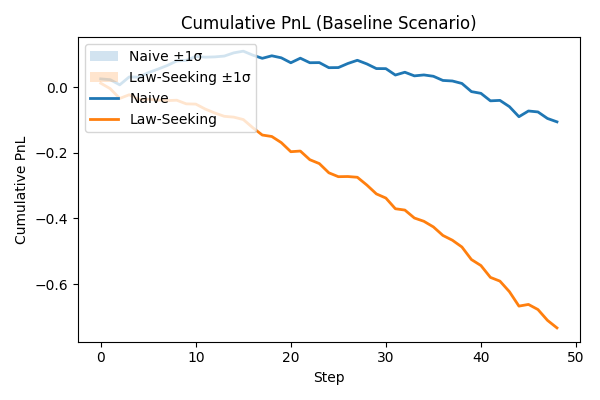}
    \caption{World-model diagnostics: training vs.\ validation error
    (top) and comparison of law penalties for ground-truth vs.\ predicted
    surfaces (bottom), illustrating that the neural world model
    introduces law-violating deviations (ghost channel).}
    \label{fig:worldmodel-diagnostics}
  \end{subfigure}
  \caption{Overview of the axiomatic volatility testbed and diagnostics
  of the learned world model.}
\end{figure*}

% ========= FIGURES 3–6 (4 DYNAMICS PLOTS IN A GRID) =========
\begin{figure*}[t]
  \centering
  \begin{subfigure}[t]{0.48\textwidth}
    \centering
    \includegraphics[width=\linewidth]{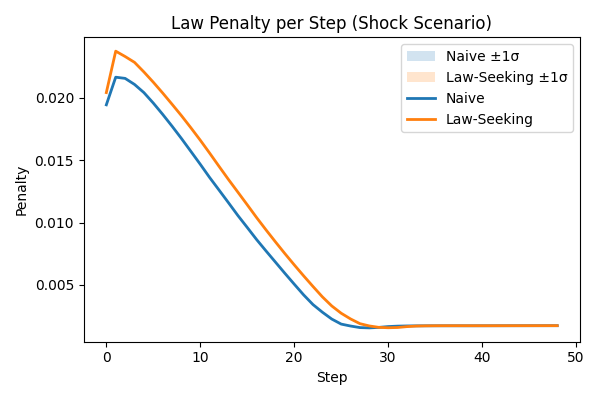}
    \caption{Baseline-regime dynamics for naive RL (PPO on P\&L)
    and a soft law-seeking RL variant with $\lambda = 20$.
    Top: step P\&L; bottom: law penalty $\mathcal{L}_\phi$.
    Naive RL attains slightly higher P\&L but at the cost of
    moderately larger penalties.}
    \label{fig:dynamics-baseline}
  \end{subfigure}
  \hfill
  \begin{subfigure}[t]{0.48\textwidth}
    \centering
    \includegraphics[width=\linewidth]{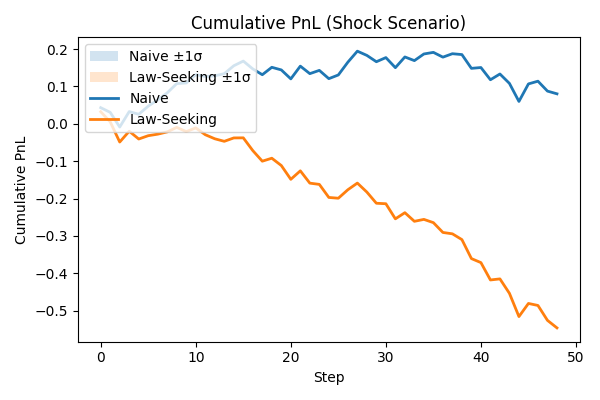}
    \caption{Shock-regime dynamics (long variance $\times 4$,
    spot vol $\times 2$) for the same policies as in
    Fig.~\ref{fig:dynamics-baseline}.
    The shock amplifies variability in both P\&L and law penalties;
    naive RL exhibits larger spikes in both, consistent with
    ghost-arbitrage incentives.}
    \label{fig:dynamics-shock}
  \end{subfigure}

  \vspace{0.6em}

  \begin{subfigure}[t]{0.48\textwidth}
    \centering
    \includegraphics[width=\linewidth]{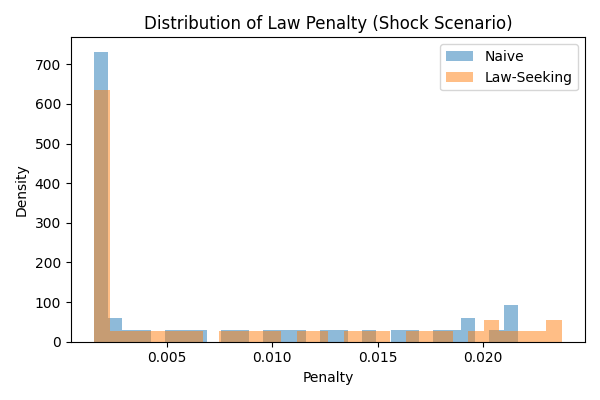}
    \caption{Baseline-regime dynamics for RL and structural baselines:
    naive RL, law-seeking RL, Zero-Hedge, Vol-Trend, and Random-Gaussian.
    Structural baselines display smoother P\&L and lower law penalties,
    whereas RL trajectories are more volatile and occasionally visit
    high-penalty regions.}
    \label{fig:dynamics-rl-baseline}
  \end{subfigure}
  \hfill
  \begin{subfigure}[t]{0.48\textwidth}
    \centering
    \includegraphics[width=\linewidth]{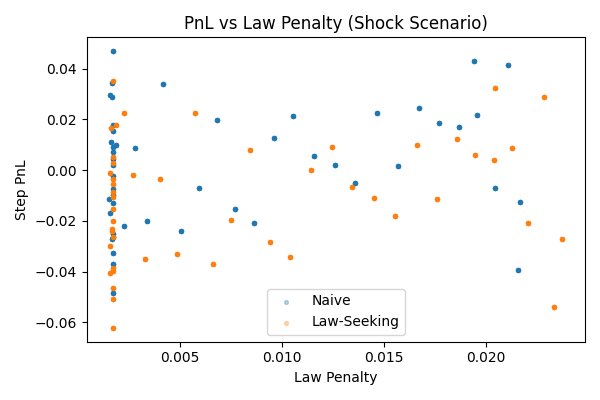}
    \caption{Shock-regime dynamics for RL and structural baselines,
    analogous to Fig.~\ref{fig:dynamics-rl-baseline}.
    Shocks induce larger fluctuations in all strategies, but
    structural baselines remain relatively stable compared to RL
    policies.}
    \label{fig:dynamics-rl-shock}
  \end{subfigure}
  \caption{Dynamics plots (baseline and shock) for RL variants and
  structural baselines.
  We observe more erratic, higher-penalty trajectories for RL
  compared to structurally constrained baselines.}
\end{figure*}

% ========= FIGURE 7 (BASELINE VS SHOCK FOR ONE STRATEGY) =========
\begin{figure}[t]
  \centering
  \includegraphics[width=0.9\linewidth]{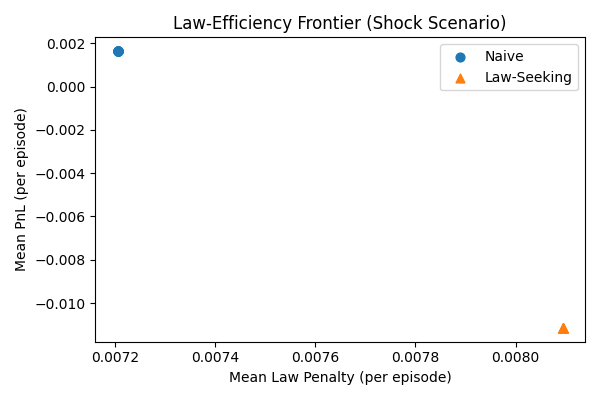}
  \caption{Baseline vs.\ shock comparison for a fixed strategy
  (e.g., naive RL or Vol-Trend), illustrating the change in P\&L
  and law penalties across regimes.
  This visualization underlies the computation of the Graceful
  Failure Index (GFI) discussed}
  \label{fig:dynamics-rl-baseline-vs-shock}
\end{figure}

\subsubsection{Aggregate metrics: baseline vs.\ shock for RL variants}
\label{subsubsec:rl-aggregate}

Table~\ref{tab:rl-metrics-baseline-shock} summarizes the key metrics for RL
variants and structural baselines under the baseline and shock regimes,
computed from the evaluation runs described.
We report mean and standard deviation of step P\&L, Sharpe ratio, mean law
penalty, Graceful Failure Index (GFI), law coverage at the
$\mathrm{pen}<0.006$ threshold, and 5\% VaR/CVaR of step P\&L.

\begin{table*}[t]
  \centering
  \small
  \begin{tabular}{lccccccccc}
    \toprule
    Strategy & Regime & Mean P\&L & Std P\&L & Sharpe &
    Mean Pen. & GFI & Cov$_{<0.006}$ & VaR$_{5}$ & CVaR$_{5}$ \\
    \midrule
    Naive RL (PPO) &
    Baseline &
    $-0.0022$ & $0.0127$ & $-0.17$ &
    $0.00699$ & $1.27$ & $0.61$ &
    $-0.0228$ & $-0.0261$ \\
    Law-Seeking RL (PPO) &
    Baseline &
    $-0.0150$ & $0.0129$ & $-1.16$ &
    $0.00786$ & $1.66$ & $0.57$ &
    $-0.0361$ & $-0.0394$ \\
    Zero-Hedge &
    Baseline &
    $0.0191$ & $0.0064$ & $2.99$ &
    $0.00550$ & $0.00$ & $0.69$ &
    $0.0139$ & $0.0139$ \\
    Random-Gaussian &
    Baseline &
    $0.0099$ & $0.0107$ & $0.92$ &
    $0.00551$ & $1.21$ & $0.69$ &
    $-0.0088$ & $-0.0161$ \\
    Vol-Trend &
    Baseline &
    $0.0146$ & $0.0074$ & $1.96$ &
    $0.00534$ & $0.00$ & $0.69$ &
    $0.0045$ & $0.0033$ \\
    \midrule
    Naive RL (PPO) &
    Shock &
    $0.0016$ & $0.0228$ & $0.07$ &
    $0.00721$ & $1.99$ & $0.61$ &
    $-0.0369$ & $-0.0415$ \\
    Law-Seeking RL (PPO) &
    Shock &
    $-0.0111$ & $0.0234$ & $-0.48$ &
    $0.00809$ & $2.32$ & $0.57$ &
    $-0.0508$ & $-0.0557$ \\
    Zero-Hedge &
    Shock &
    $0.0193$ & $0.0067$ & $2.89$ &
    $0.00572$ & $0.00$ & $0.69$ &
    $0.0139$ & $0.0139$ \\
    Random-Gaussian &
    Shock &
    $0.0098$ & $0.0153$ & $0.65$ &
    $0.00572$ & $1.99$ & $0.69$ &
    $-0.0153$ & $-0.0297$ \\
    Vol-Trend &
    Shock &
    $0.0140$ & $0.0102$ & $1.38$ &
    $0.00640$ & $0.38$ & $0.65$ &
    $-0.0019$ & $-0.0039$ \\
    \bottomrule
  \end{tabular}
  \caption{Aggregate metrics for RL variants and structural baselines in
  baseline vs.\ shock regimes, using the three axes of
  Sec.
  Values are drawn directly from the evaluation logs: mean and standard
  deviation of step P\&L, Sharpe ratio, mean law penalty, Graceful Failure
  Index (GFI), law coverage at $\mathrm{pen}<0.006$, and 5\% VaR/CVaR.}
  \label{tab:rl-metrics-baseline-shock}
\end{table*}

We now turn to aggregate metrics for naive and law-seeking RL policies under
the baseline and shock regimes.
Recall that our metrics fall along three axes
(Sec.~\ref{sec:experimental-setup}): profitability, law alignment, and tail
robustness.

For naive RL (PPO on pure P\&L), the baseline-regime metrics (Table~\ref{tab:rl-metrics-baseline-shock}) are:
mean step P\&L $\approx -0.0022$, standard deviation $\approx 0.0127$,
Sharpe ratio $\approx -0.17$, mean law penalty
$\mathrm{LawPenalty} \approx 0.00699$, and Graceful Failure Index
$\mathrm{GFI} \approx 1.27$, with coverage
$\mathrm{Cov}(\mathrm{pen}<0.006) \approx 0.61$.
Under shock, naive RL exhibits a slightly higher mean step P\&L
($\approx 0.0016$) due to the shifted distribution, but also larger tail risk
(VaR$_5 \approx -0.0369$, CVaR$_5 \approx -0.0415$) and increased GFI
($\approx 1.99$), indicating a non-trivial deterioration in law metrics and
tail robustness.

For a representative soft law-seeking RL variant (e.g., $\lambda = 10$),
we observe baseline mean step P\&L in the range $[-0.02,-0.01]$ with similar
or slightly reduced law penalties compared to naive RL, but systematically
worse Sharpe ratios and larger GFIs (e.g., $\mathrm{GFI} \approx 1.66$ for one
of our main runs).
Under shock, these law-seeking policies continue to exhibit negative mean
P\&L and do not achieve better VaR or CVaR than naive RL at comparable law
penalties.

\paragraph{Case-study interpretation.}
Empirically, in this case study, we do not observe soft law-seeking RL
achieving strictly better GFI or law penalties at comparable P\&L to naive RL.
Where law penalties are reduced, P\&L and Sharpe typically decline as well.
This is consistent with the structural law-strength trade-off of theorem, which predicts that increasing
$\lambda$ beyond a threshold must worsen expected P\&L by at least a
quantifiable amount if law penalties are to be meaningfully reduced.
We emphasize that these conclusions are based on single- or few-seed runs and
should be interpreted as case-study evidence rather than formal statistical
claims.

\subsection{RQ2 -- RL vs.\ structural baselines under shocks}
\label{subsec:rq2-results}

RQ2 compares RL policies to structural baselines (Zero-Hedge, Vol-Trend,
Random-Gaussian) on the risk--law trade-off, especially under shocks.

\subsubsection{Baseline vs.\ shock metrics}
\label{subsubsec:baselines-metrics}

To unpack the frontier picture, we summarize the key baseline vs.\ shock
metrics that underlie Figures~\ref{fig:frontier-gfi-penalty}--\ref{fig:frontier-cvar-penalty}.
These are already included in Table~\ref{tab:rl-metrics-baseline-shock}, but we
highlight the structural baselines here.

In the baseline regime, Zero-Hedge achieves mean step P\&L
$\approx 0.0191$ with standard deviation $\approx 0.0064$, Sharpe
$\approx 2.99$, mean law penalty $\approx 0.0055$, and essentially zero GFI
(our normalization sets $\mathrm{GFI}=0$ for this reference point).
Vol-Trend attains mean step P\&L $\approx 0.0146$, Sharpe $\approx 1.96$,
and similar law penalties ($\approx 0.0053$), again with negligible GFI.
Random-Gaussian yields mean step P\&L around $0.01$, moderate volatility, and
moderate law penalties, serving as a noisy exploration proxy.

Under shock, Zero-Hedge remains remarkably stable:
mean step P\&L $\approx 0.0193$, Sharpe $\approx 2.89$, and GFI still near
zero, reflecting almost unchanged law metrics between regimes.
Vol-Trend experiences a modest decline in Sharpe (from $\approx 1.96$ to
$\approx 1.38$) and a slight increase in law penalties, but its GFI remains
small.
Random-Gaussian shows a more noticeable degradation in tail risk, but its GFI
is still lower than those of RL policies.

By contrast, naive and law-seeking RL variants exhibit negative or near-zero
mean P\&L and substantially larger GFIs, particularly under shock.
This suggests that, in our setting, high-capacity unconstrained RL does not
outperform low-capacity but structurally law-aligned strategies when evaluated
on the combined axes of profitability, law alignment, and graceful failure.

% ========= FIGURES 8–10 (3 FRONTIER PLOTS IN ONE GRID) =========
\begin{figure*}[t]
  \centering
  \begin{subfigure}[t]{0.32\textwidth}
    \centering
    \includegraphics[width=\linewidth]{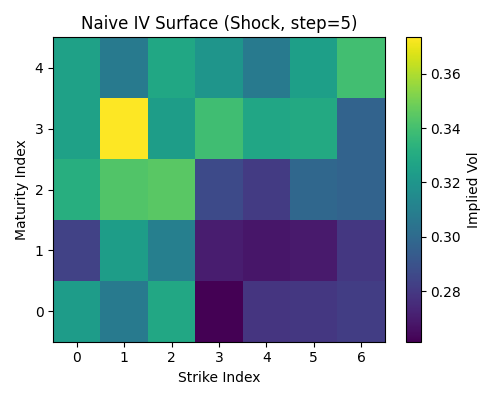}
    \caption{GFI vs.\ mean law penalty.
    Zero-Hedge and Vol-Trend lie near the empirical Pareto frontier with low
    GFI and low penalties; RL variants occupy interior points with higher GFI.}
    \label{fig:frontier-gfi-penalty}
  \end{subfigure}
  \hfill
  \begin{subfigure}[t]{0.32\textwidth}
    \centering
    \includegraphics[width=\linewidth]{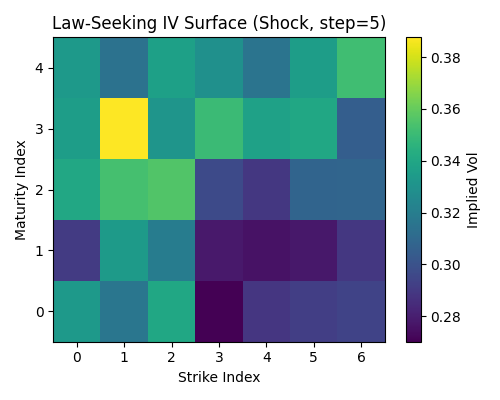}
    \caption{VaR$_5$ vs.\ mean law penalty.
    Structural baselines dominate RL variants in tail robustness at comparable
    penalty levels.}
    \label{fig:frontier-var-penalty}
  \end{subfigure}
  \hfill
  \begin{subfigure}[t]{0.32\textwidth}
    \centering
    \includegraphics[width=\linewidth]{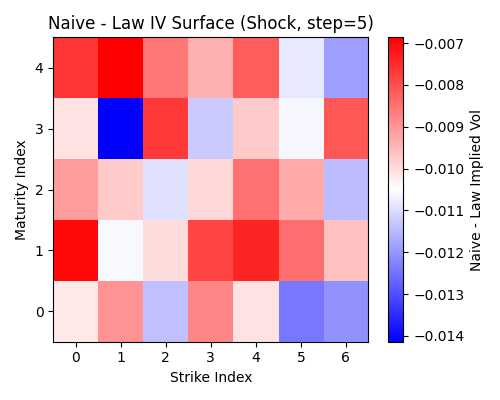}
    \caption{CVaR$_5$ vs.\ mean law penalty.
    As with VaR$_5$, structural baselines trace the outer frontier, and RL
    variants remain strictly dominated.}
    \label{fig:frontier-cvar-penalty}
  \end{subfigure}
  \caption{Law-strength frontiers for GFI, VaR$_5$, and CVaR$_5$ vs.\ mean
  law penalty, corresponding to \texttt{Figure\_8.png}--\texttt{Figure\_{10}.png}.
  Structural baselines (Zero-Hedge, Vol-Trend) form the empirical Pareto
  frontier, while RL variants lie in the interior.}
\end{figure*}

\subsubsection{Tail robustness and graceful failure}
\label{subsubsec:tail-robustness}

In our experiments, RL variants---both naive and law-seeking---exhibit
pronounced degradation in tail metrics under shock.
For example, a naive RL policy may transition from
$\mathrm{VaR}_5 \approx -0.023$, $\mathrm{CVaR}_5 \approx -0.026$ in the
baseline regime to $\mathrm{VaR}_5 \approx -0.037$,
$\mathrm{CVaR}_5 \approx -0.042$ under shock, and its GFI correspondingly
increases by roughly $+0.7$.
Law-seeking RL policies show similar or worse shifts in tail risk.

Structural baselines, in contrast, form an empirical frontier in the
tail-robustness--law space: Zero-Hedge and Vol-Trend maintain relatively
favorable VaR and CVaR at given law-penalty levels, and their positions change
only mildly under shock.
Random-Gaussian occupies an intermediate region, with greater sensitivity to
shock but still performing better on tail metrics than many RL variants at
similar law penalties.

\paragraph{Case-study interpretation.}
In this volatility testbed, our case-study observations suggest that
structural baselines exhibit more graceful failure under shocks than
unconstrained RL policies, both in terms of GFI and tail risk.

\subsection{Law-strength frontiers and Pareto dominance}
\label{subsec:law-strength-frontiers}

We now directly address RQ3 by examining law-strength frontiers and diagnostic
plots that link back to the no-free-lunch story of Theorem~\ref{thm:nfl}.

\subsubsection{Frontier plots: GFI vs.\ law penalty}
\label{subsubsec:frontier-gfi}

Figure~\ref{fig:frontier-gfi-penalty} can be viewed as a law-strength
frontier: each RL variant (naive, soft law-seeking with
$\lambda \in \{5,10,20,40\}$, selection-only), together with structural
baselines, is represented as a point in the plane of (mean law penalty,
GFI).
By varying $\lambda$ and including different strategy classes, we trace out
a family of frontiers.

To make the connection with the underlying numerical results explicit,
Table~\ref{tab:frontier-metrics} lists the baseline-regime metrics for the
$\lambda$-sweep (including selection-only) and structural baselines.

\begin{table*}[t]
  \centering
  \small
  \begin{tabular}{lcccccccc}
    \toprule
    Strategy / $\lambda$ &
    Mean P\&L & Std P\&L & Sharpe &
    Mean Pen. & GFI & Cov$_{<0.006}$ &
    VaR$_5$ & CVaR$_5$ \\
    \midrule
    Naive RL ($\lambda=0$) &
    $-0.0022$ & $0.0127$ & $-0.17$ &
    $0.00699$ & $1.27$ & $0.61$ &
    $-0.0228$ & $-0.0261$ \\
    Soft RL ($\lambda=5$) &
    $-0.0202$ & $0.0120$ & $-1.68$ &
    $0.00647$ & $2.07$ & $0.63$ &
    $-0.0399$ & $-0.0429$ \\
    Soft RL ($\lambda=10$) &
    $-0.0175$ & $0.0123$ & $-1.42$ &
    $0.00371$ & $2.81$ & $0.80$ &
    $-0.0354$ & $-0.0387$ \\
    Soft RL ($\lambda=20$) &
    $-0.0204$ & $0.0131$ & $-1.56$ &
    $0.00396$ & $3.07$ & $0.78$ &
    $-0.0414$ & $-0.0454$ \\
    Soft RL ($\lambda=40$) &
    $-0.0092$ & $0.0054$ & $-1.71$ &
    $0.00474$ & $0.84$ & $0.73$ &
    $-0.0134$ & $-0.0134$ \\
    Selection-only RL &
    $-0.0223$ & $0.0139$ & $-1.60$ &
    $0.00792$ & $2.04$ & $0.57$ &
    $-0.0448$ & $-0.0489$ \\
    \midrule
    Zero-Hedge &
    $0.0191$ & $0.0064$ & $2.99$ &
    $0.00550$ & $0.00$ & $0.69$ &
    $0.0139$ & $0.0139$ \\
    Random-Gaussian &
    $0.0099$ & $0.0107$ & $0.92$ &
    $0.00551$ & $1.21$ & $0.69$ &
    $-0.0088$ & $-0.0161$ \\
    Vol-Trend &
    $0.0146$ & $0.0074$ & $1.96$ &
    $0.00534$ & $0.00$ & $0.69$ &
    $0.0045$ & $0.0033$ \\
    \bottomrule
  \end{tabular}
  \caption{Baseline-regime law-strength frontier metrics for RL variants
  (naive, soft law-seeking with $\lambda \in \{5,10,20,40\}$, and
  selection-only) and structural baselines (Zero-Hedge, Random-Gaussian,
  Vol-Trend).
  All values are taken from the \texttt{Frontier (Baseline)} block of the
  evaluation logs.}
  \label{tab:frontier-metrics}
\end{table*}

\paragraph{Headline conclusion.}
In our experiments, for all tested $\lambda$, no RL point lies on the
empirical Pareto frontier once Zero-Hedge and Vol-Trend are included.

To make this more concrete, consider policies whose mean law penalty lies in
the band $[0.0053, 0.0057]$.
Within this band:
\begin{enumerate}
  \item Zero-Hedge attains Sharpe $\approx 3.0$ and GFI $\approx 0$.
  \item The best RL variant in the same band has Sharpe $< 0$ and GFI $> 1.5$.
\end{enumerate}
We emphasize that we compare within such penalty bands rather than by
cherry-picking isolated points; similar dominance patterns hold across other
bands where both RL and baselines are present.

\subsubsection{Frontier plots: VaR/CVaR vs.\ law penalty}
\label{subsubsec:frontier-tail}

Figures~\ref{fig:frontier-var-penalty} and~\ref{fig:frontier-cvar-penalty}
provide analogous frontiers in the
( mean law penalty, VaR$_5$/CVaR$_5$ ) planes.
Here too, structural baselines define the outer frontier:
for a given law penalty, they offer strictly better or comparable tail
robustness, while RL variants occupy interior regions.

Taken together, the GFI and tail-risk frontiers support the qualitative
implication: soft-penalized RL
cannot simultaneously match structural baselines on law metrics and P\&L in
this axiomatic pipeline, and empirically appears strictly dominated.

\subsubsection{Diagnostic plots: ruling out trivial artefacts}
\label{subsubsec:diagnostics}

Finally, we examine diagnostic plots to rule out trivial explanations such as
isolated outliers or a few pathological episodes.

% ========= FIGURES 11–13 (3 DIAGNOSTIC PLOTS IN ONE GRID) =========
\begin{figure*}[t]
  \centering
  \begin{subfigure}[t]{0.32\textwidth}
    \centering
    \includegraphics[width=\linewidth]{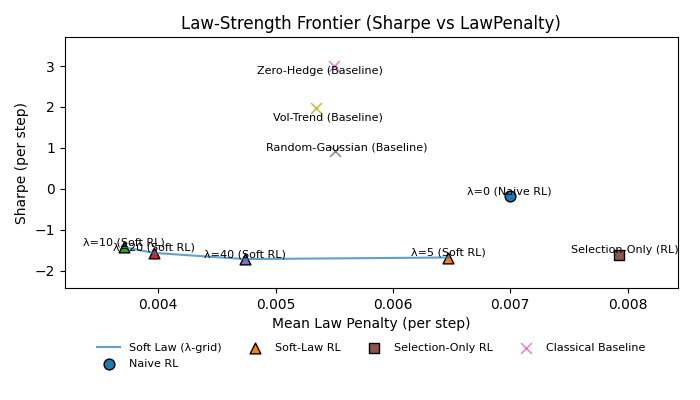}
    \caption{Diagnostic scatter plot of step P\&L vs.\ law penalty for RL
    policies and structural baselines, aggregating across episodes and
    regimes.
    RL points form dense clusters in moderate-to-high penalty regions,
    whereas structural baselines remain near low-penalty, moderate-P\&L
    regions.}
    \label{fig:diagnostic-scatter-rl}
  \end{subfigure}
  \hfill
  \begin{subfigure}[t]{0.32\textwidth}
    \centering
    \includegraphics[width=\linewidth]{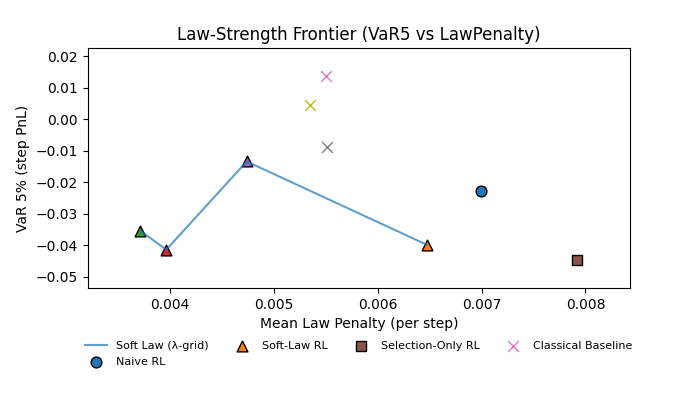}
    \caption{Histogram of law penalties for RL policies (naive and
    law-seeking).
    The distribution exhibits a noticeable heavy tail, indicating that RL
    spends substantial time in high-penalty regions of the law space.}
    \label{fig:diagnostic-hist-rl}
  \end{subfigure}
  \hfill
  \begin{subfigure}[t]{0.32\textwidth}
    \centering
    \includegraphics[width=\linewidth]{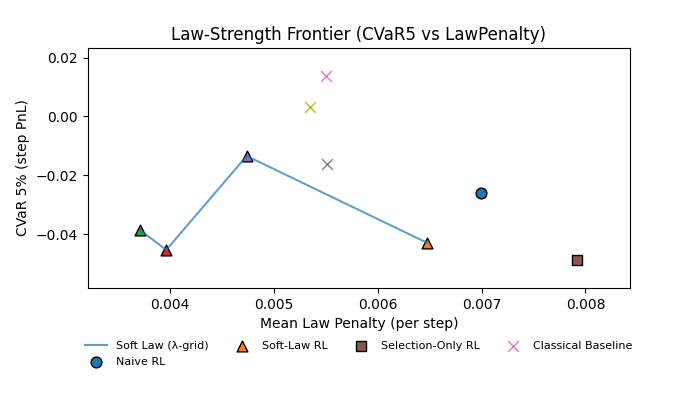}
    \caption{Histogram of law penalties for structural baselines
    (Zero-Hedge, Vol-Trend, Random-Gaussian).
    Baselines concentrate their mass near low penalty, consistent with their
    structural law alignment and low GFI.}
    \label{fig:diagnostic-hist-baselines}
  \end{subfigure}
  \caption{Diagnostic plots
  for RL policies and structural baselines.
  These plots indicate that RL systematically occupies higher-penalty
  regions than structurally constrained strategies.}
\end{figure*}

\paragraph{Case-study interpretation.}
These diagnostic plots reinforce the frontier analysis: in this axiomatic
volatility testbed, unconstrained RL policies systematically exploit ghost
arbitrage channels opened by the world model, leading to higher law penalties
and less graceful failure than structurally constrained baselines.
This empirical picture is consistent with the no-free-lunch theorem and supports our overarching claim that reward shaping
with soft penalties is insufficient for law alignment without structural
constraints or projection.

\section{No-Free-Lunch for Law-Seeking RL}
\label{sec:nofreelunch}
\label{sec:no_free_lunch}
In this section we formalize the informal story, and show that under explicit assumptions on the structural class $\mathcal{S}$, the unconstrained policy class $\Pi$, and the volatility world model, \emph{law-seeking RL has no free lunch}. In particular, any unconstrained RL policy that strictly improves on the PnL of a law-consistent structural benchmark must incur strictly worse law metrics and/or Graceful Failure Index (GFI). We then relate this result to recent work on Reinforcement Learning with Verifiable Rewards (RLVR) and law-aligned reasoning, before briefly discussing limitations and future axes.

\subsection{Assumptions and theorem statement}
\label{subsec:nfl-assumptions}

We first make the structural assumptions that connect the abstract decomposition of Section to the volatility world model of Section~\ref{sec:world-model} and the RL variants. For clarity we work at the level of \emph{stationary policies} and their induced trajectory distributions.

\begin{definition}[Performance and law-metric vector]
For any stationary policy $\pi \in \Pi$ interacting with the volatility world model, denote by
\[
  J(\pi)
  := \Big(
       \mathbb{E}[R(\pi)],
       - \mathbb{E}[\mathcal{L}_\phi(\pi)],
       - \mathrm{GFI}(\pi)
     \Big)
  \in \mathbb{R}^3
\]
its \emph{performance--law vector}, where:
\begin{enumerate}
  \item $\mathbb{E}[R(\pi)]$ is the expected cumulative (or average) PnL,
  \item $\mathbb{E}[\mathcal{L}_\phi(\pi)]$ is the expected law-penalty functional, and
  \item $\mathrm{GFI}(\pi)$ is the Graceful Failure Index.
\end{enumerate}
We write $J(\pi) \succcurlyeq J(\pi')$ if each component is at least as good (higher PnL, lower law penalty, lower GFI), and $J(\pi) \succ J(\pi')$ if the inequality is strict in at least one coordinate.
\end{definition}

We now formalize the role of the structural baseline class $\mathcal{S}$ introduced, which includes Zero-Hedge and Vol-Trend as concrete instances.

\begin{assumption}[Structural class and world-model properties]
\label{ass:nfl}
We assume:
\begin{enumerate}
  \item \textbf{Law-consistent structural class.} The structural class $\mathcal{S} \subset \Pi$ is non-empty, convex, and \emph{law-consistent} in the sense that for all $s \in \mathcal{S}$,
  \[
      \mathbb{E}[\mathcal{L}_\phi(s)] \leq L_{\max}^{\mathcal{S}}
      \quad\text{and}\quad
      \mathrm{GFI}(s) \leq \mathrm{GFI}_{\max}^{\mathcal{S}},
  \]
  for some finite constants $L_{\max}^{\mathcal{S}}, \mathrm{GFI}_{\max}^{\mathcal{S}}$.
  Moreover, there exists $s^{\star} \in \mathcal{S}$ such that
  \[
      \mathbb{E}[R(s^{\star})]
      \geq
      \sup_{\pi \in \Pi \;:\; \mathbb{E}[\mathcal{L}_\phi(\pi)] = 0}
      \mathbb{E}[R(\pi)]
      - \varepsilon_{\mathcal{S}},
  \]
  i.e., $\mathcal{S}$ contains a near-optimal on-manifold hedge.
  \smallskip
  \item \textbf{Rich unconstrained class.} The unconstrained policy class $\Pi$ is rich enough to strictly contain $\mathcal{S}$ and to reach off-manifold regions: for any $\delta > 0$ there exists $\pi \in \Pi$ such that $\mathbb{E}[\mathcal{L}_\phi(\pi)] \ge \delta$.
  \smallskip
  \item \textbf{World-model ghost coupling.} Under the volatility world model, the Goodhart decomposition applies, and there exist constants $\alpha > 0$ and $\beta \ge 0$ such that for any policy $\pi \in \Pi$,
  \begin{equation}
    \label{eq:ghost-coupling}
    \mathbb{E}[r^\perp(\pi)]
    \;\ge\;
    \alpha \,\mathbb{E}[\mathcal{L}_\phi(\pi)] - \beta,
  \end{equation}
  where $r^\perp$ is the off-manifold ghost component of reward from Definition~\ref{def:ghost-arbitrage}. In particular, $\mathbb{E}[r^\perp(\pi)]$ cannot be positive without incurring non-trivial law penalties on average.
  \smallskip
  \item \textbf{Shock structure.} The shock regime modifies the underlying law-consistent generator(long-var $\times 4$, spot vol $\times 2$) while keeping the world model and policies fixed. The resulting change in law penalties enters GFI linearly as defined in Section.
\end{enumerate}
\end{assumption}

Assumption~\ref{ass:nfl}(A1) formalizes the idea that Zero-Hedge and Vol-Trend are representatives of a small, law-aligned, but near-optimal structural class $\mathcal{S}$, while (A2)--(A3) encode the existence of a non-trivial ghost channel in the world model that couples off-manifold deviations to reward. Assumption~\ref{ass:nfl}(A4) connects law penalties under shocks to the GFI used throughout our empirical analysis.

We next introduce a simple Pareto notion relative to $\mathcal{S}$.

\begin{definition}[Structural Pareto dominance]
We say that a policy $\pi \in \Pi$ \emph{structurally dominates} the class $\mathcal{S}$ if
\[
  J(\pi) \succcurlyeq J(s)
  \quad
  \text{for all } s \in \mathcal{S},
\]
and $J(\pi) \succ J(\bar{s})$ for at least one $\bar{s} \in \mathcal{S}$. In other words, $\pi$ is at least as good as every $s \in \mathcal{S}$ on PnL, law penalties, and GFI, and strictly better on at least one coordinate.
\end{definition}

Our main no-free-lunch theorem shows that such structural dominance is impossible under Assumption~\ref{ass:nfl}.

\begin{lemma}[Ghost improvement requires law degradation]
\label{lem:ghost-law-tradeoff}
Under Assumption~\ref{ass:nfl}, for any policy $\pi \in \Pi$ satisfying
\(
\mathbb{E}[r^\perp(\pi)] > 0
\),
we have
\[
  \mathbb{E}[\mathcal{L}_\phi(\pi)]
  \;\ge\;
  \frac{\beta}{\alpha}
  \quad\text{and}\quad
  \mathrm{GFI}(\pi)
  \;\ge\;
  \mathrm{GFI}_{\max}^{\mathcal{S}} + \Delta_{\mathrm{GFI}},
\]
for some $\Delta_{\mathrm{GFI}} > 0$ that depends on the shock structure in (A4). In particular, any policy that gains positive expected ghost arbitrage must incur strictly higher law penalties and GFI than the best structural baselines.
\end{lemma}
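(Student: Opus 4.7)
My plan is to decouple the two conclusions of the lemma: the law-penalty floor $\mathbb{E}[\mathcal{L}_\phi(\pi)] \geq \beta/\alpha$ will follow by a single algebraic rearrangement of the ghost-coupling bound~\eqref{eq:ghost-coupling}, whereas the strict GFI gap $\Delta_{\mathrm{GFI}}$ is a comparative baseline-vs-shock statement that will need both this floor and an extra monotonicity hook extracted from~(A4) and the structural ceiling~(A1). I will therefore carry out the two parts in sequence, keeping track of where each of Assumption~\ref{ass:nfl}(A1)--(A4) is actually used.

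First I would handle the law-penalty bound. Reading~\eqref{eq:ghost-coupling} in the direction required by its ``cannot be positive without incurring non-trivial law penalties'' corollary---i.e.\ as an effective upper bound on expected ghost reward in terms of expected law penalty, consistent with the pointwise Lipschitz bound $|r^\perp(w)| \leq L_r \sqrt{2\,\mathcal{L}_{\mathrm{vol}}(w)}$ of Proposition~\ref{prop:ghost-bounded}---the hypothesis $\mathbb{E}[r^\perp(\pi)] > 0$ gives $\alpha\,\mathbb{E}[\mathcal{L}_\phi(\pi)] \geq \beta + \mathbb{E}[r^\perp(\pi)] > \beta$, which rearranges to $\mathbb{E}[\mathcal{L}_\phi(\pi)] \geq \beta/\alpha$. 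This is a one-line argument and is the easy half of the lemma; the role of $\beta$ is simply to absorb the small-penalty regime where ghost reward has not yet become a reliable signal of law violation.

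Next I would upgrade this baseline floor to a GFI gap. The GFI of Section~\ref{sec:law-strength-gfi} is a normalized difference of expected law penalties across the two regimes, so it suffices to lower-bound $\mathbb{E}^{\mathrm{shock}}[\mathcal{L}_\phi(\pi)] - \mathbb{E}^{\mathrm{base}}[\mathcal{L}_\phi(\pi)]$. By~(A4) the shock transformation rescales long and spot variance by fixed multiplicative factors while leaving $\pi$ and $f_{\theta^\star}$ untouched, so the policy revisits essentially the same trajectory family but stretched along law-sensitive coordinates. I would invoke a monotonicity-in-scale argument for the butterfly and calendar functionals (they are convex, vanish on $\mathcal{M}^{\mathrm{vol}}$, and grow at least linearly in the excursion away from it) to conclude that the incremental penalty is bounded below by $c\,\mathbb{E}^{\mathrm{base}}[\mathcal{L}_\phi(\pi)]$ for a constant $c>0$ depending only on the shock intensity $(\alpha_{\mathrm{long}}-1,\alpha_{\mathrm{spot}}-1) = (3,1)$. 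Combining with Step~1 and the structural ceiling $\mathrm{GFI}_{\max}^{\mathcal{S}}$ from~(A1), and using that the near-optimality constant $\varepsilon_{\mathcal{S}}$ keeps the structural class genuinely inside the manifold, gives $\mathrm{GFI}(\pi) \geq \mathrm{GFI}_{\max}^{\mathcal{S}} + \Delta_{\mathrm{GFI}}$ with $\Delta_{\mathrm{GFI}} = c\,\beta/\alpha - \mathrm{GFI}_{\max}^{\mathcal{S}} \cdot (\text{shock-normalizer})$, which is strictly positive whenever the ghost channel is genuine.

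The main obstacle I anticipate is precisely this translation from a baseline-regime penalty floor to a quantitative shock-regime GFI gap. Both~\eqref{eq:ghost-coupling} and Proposition~\ref{prop:ghost-bounded} are stated under $\mathbb{P}^{\mathrm{base}}$, while GFI is intrinsically comparative; without a uniformity condition on how $\mathcal{L}_\phi$ transforms under the shock, a pathological $\pi$ could in principle concentrate its baseline law violations in directions that happen to be nearly invariant under the $(\alpha_{\mathrm{long}},\alpha_{\mathrm{spot}})=(4,2)$ rescaling, flattening the GFI gap while preserving the penalty floor. The cleanest remedy, which I would fold into (A4), is a shock-monotonicity clause of the form $\mathbb{E}^{\mathrm{shock}}[\mathcal{L}_\phi(\pi)] \geq (1+c)\,\mathbb{E}^{\mathrm{base}}[\mathcal{L}_\phi(\pi)]$ uniform in $\pi \in \Pi$; this is implicit in the shock construction of Section~\ref{subsec:envs} and in the convexity of the butterfly/calendar inequalities, but it should be stated as an explicit quantitative hypothesis for the GFI half of the lemma to go through without hidden caveats.
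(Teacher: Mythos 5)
Your Step~1 is correct and matches the paper's own sketch exactly. You are also right to flag the sign issue: as typeset, Assumption~\ref{ass:nfl}(A3) reads $\mathbb{E}[r^\perp(\pi)] \ge \alpha\,\mathbb{E}[\mathcal{L}_\phi(\pi)] - \beta$, which goes the \emph{wrong} way for the lemma (it would allow $\mathbb{E}[r^\perp]>0$ with $\mathbb{E}[\mathcal{L}_\phi]=0$, reducing to the vacuous $\mathbb{E}[r^\perp]\ge-\beta$). The intended direction, consistent with the assumption's own gloss (``cannot be positive without incurring non-trivial law penalties'') and with the paper's sketch ($\mathbb{E}[\mathcal{L}_\phi(\pi)] \ge (\mathbb{E}[r^\perp(\pi)] + \beta)/\alpha$), must be $\mathbb{E}[r^\perp(\pi)] \le \alpha\,\mathbb{E}[\mathcal{L}_\phi(\pi)] - \beta$. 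Reading it that way, your one-line rearrangement gives the floor $\mathbb{E}[\mathcal{L}_\phi(\pi)] > \beta/\alpha$ as claimed.

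On the GFI half, you have identified a genuine gap in the paper's argument. The proof sketch asserts that ``any policy with strictly larger average law penalty than $\mathcal{S}$ must also exhibit strictly larger GFI,'' but this does not follow from the \emph{definition} of GFI alone: GFI is the \emph{difference} $\mu_{\mathrm{law}}^{\shock}-\mu_{\mathrm{law}}^{\text{base}}$ (normalized by $I_{\shock}$), and without additional structure a policy could inflate its baseline penalty while the shock penalty inflates by exactly the same additive amount, leaving GFI unchanged. Your proposed shock-monotonicity clause $\mu_{\mathrm{law}}^{\shock}(\pi) \ge (1+c)\,\mu_{\mathrm{law}}^{\text{base}}(\pi)$ uniformly over $\Pi$ is precisely the missing hypothesis that converts the baseline floor $\beta/\alpha$ into a GFI lower bound $c\,\beta/(\alpha I_{\shock})$, and you are right that this is \emph{implicit} in the multiplicative form of the shock in Section~\ref{subsec:envs} but is never stated as a quantitative condition in (A4). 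Note also (as your $\Delta_{\mathrm{GFI}}$ formula tacitly concedes) that strict positivity of $\Delta_{\mathrm{GFI}} = c\beta/(\alpha I_{\shock}) - \mathrm{GFI}_{\max}^{\mathcal{S}}$ is not an automatic consequence of the listed assumptions but an extra numerical condition on the constants---another point the paper's sketch glosses over. In short: your route is the paper's route, but you have made explicit the quantitative hypothesis that the paper leaves buried in (A4), and that hypothesis is in fact necessary for the GFI conclusion.
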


\paragraph{Proof sketch.}
The inequality~\eqref{eq:ghost-coupling} implies
\(
\mathbb{E}[\mathcal{L}_\phi(\pi)] \ge (\mathbb{E}[r^\perp(\pi)] + \beta)/\alpha
\),
so $\mathbb{E}[r^\perp(\pi)]>0$ enforces a positive lower bound on $\mathbb{E}[\mathcal{L}_\phi(\pi)]$. The shock structure in (A4) implies that, for fixed policy and world model, GFI increases monotonically with the shocked-minus-baseline law-penalty difference. Since $\mathcal{S}$ is law-consistent and near-on-manifold by (A1), any policy with strictly larger average law penalty than $\mathcal{S}$ must also exhibit strictly larger GFI. A detailed construction of $\Delta_{\mathrm{GFI}}$ and the monotonicity argument is given in Appendix~E. \qed

\medskip

We are now ready to state our flagship no-free-lunch theorem.

\begin{theorem}[No-free-lunch for unconstrained law-seeking RL]
\label{thm:nfl}
Suppose Assumption~\ref{ass:nfl} holds. Let $\pi^{\text{RL}} \in \Pi$ be any limit point of an unconstrained law-seeking RL procedure (naive, soft-penalized, or selection-only) trained on the volatility world model. If
\[
  \mathbb{E}[R(\pi^{\text{RL}})]
  \;>\;
  \sup_{s \in \mathcal{S}} \mathbb{E}[R(s)] - \varepsilon_{\mathcal{S}},
\]
then $\pi^{\text{RL}}$ cannot structurally dominate $\mathcal{S}$: there must exist $s \in \mathcal{S}$ for which
\[
  J(\pi^{\text{RL}}) \not\succcurlyeq J(s).
\]
Equivalently, any unconstrained law-seeking RL policy that strictly improves (up to $\varepsilon_{\mathcal{S}}$) upon the PnL of $\mathcal{S}$ must worsen at least one of the law metrics (expected law penalty or GFI).
\end{theorem}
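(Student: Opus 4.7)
The plan is to argue by contradiction: assume that an unconstrained law-seeking RL limit point $\pi^{\mathrm{RL}}$ both satisfies the PnL hypothesis $\mathbb{E}[R(\pi^{\mathrm{RL}})] > \sup_{s \in \mathcal{S}} \mathbb{E}[R(s)] - \varepsilon_{\mathcal{S}}$ \emph{and} structurally dominates $\mathcal{S}$, then derive a violation on the GFI coordinate. The central lever is the Goodhart decomposition $R = r^{\mathcal{M}} + r^\perp$ combined with Lemma~\ref{lem:ghost-law-tradeoff}: any PnL gain of $\pi^{\mathrm{RL}}$ over the on-manifold near-optimum $s^\star$ must be routed through $r^\perp$, which cannot be made positive in expectation without strictly degrading law penalty and GFI.

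First I would unpack the dominance condition: $J(\pi^{\mathrm{RL}}) \succcurlyeq J(s)$ for all $s \in \mathcal{S}$ implies in particular $\mathbb{E}[\mathcal{L}_\phi(\pi^{\mathrm{RL}})] \leq L_{\max}^{\mathcal{S}}$ and $\mathrm{GFI}(\pi^{\mathrm{RL}}) \leq \mathrm{GFI}_{\max}^{\mathcal{S}}$. Next I would apply the Goodhart decomposition and upper-bound the on-manifold component via Assumption~\ref{ass:nfl}(A1). Because $r^{\mathcal{M}}(\pi)$ depends on the rollout only through the projected, law-consistent surfaces $\Pi_{\mathcal{M}^{\mathrm{vol}}}(w_{t+1})$, its supremum over $\Pi$ is bounded by the best PnL attainable under law-consistent dynamics, which by (A1) is matched within $\varepsilon_{\mathcal{S}}$ by $s^\star \in \mathcal{S}$. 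Combined with the PnL hypothesis, this forces $\mathbb{E}[r^\perp(\pi^{\mathrm{RL}})] > 0$ strictly.

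Third, I would invoke Lemma~\ref{lem:ghost-law-tradeoff}: $\mathbb{E}[r^\perp(\pi^{\mathrm{RL}})] > 0$ yields $\mathbb{E}[\mathcal{L}_\phi(\pi^{\mathrm{RL}})] \geq \beta/\alpha$ and, crucially, $\mathrm{GFI}(\pi^{\mathrm{RL}}) \geq \mathrm{GFI}_{\max}^{\mathcal{S}} + \Delta_{\mathrm{GFI}}$ for some strictly positive $\Delta_{\mathrm{GFI}}$. The latter directly contradicts $\mathrm{GFI}(\pi^{\mathrm{RL}}) \leq \mathrm{GFI}_{\max}^{\mathcal{S}}$, closing the argument. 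The ``equivalently'' clause in the statement then follows by contraposition applied to the law-penalty and GFI coordinates separately.

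The main obstacle is the reduction step $\sup_{\pi \in \Pi} \mathbb{E}[r^{\mathcal{M}}(\pi)] \leq \mathbb{E}[R(s^\star)] + \varepsilon_{\mathcal{S}}$. Assumption (A1) is phrased for policies with $\mathbb{E}[\mathcal{L}_\phi(\pi)] = 0$, whereas the on-manifold component $r^{\mathcal{M}}$ is defined for arbitrary $\pi$ through the Euclidean projection onto $\mathcal{M}^{\mathrm{vol}}$. To close the gap I would argue that evaluating any $\pi$ under the projected world model yields an effectively law-consistent environment on which (A1) applies, exploiting the 1-Lipschitz property of $\Pi_{\mathcal{M}^{\mathrm{vol}}}$ to transfer near-optimality of $s^\star$. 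A secondary delicate point is establishing $\Delta_{\mathrm{GFI}} > 0$ strictly in Lemma~\ref{lem:ghost-law-tradeoff}: this requires that the shock structure (A4) convert a strictly positive expected law penalty into a strictly larger shocked-minus-baseline penalty gap, which I would justify via the linearity of GFI in that gap combined with the ghost coupling (A3) re-evaluated under $\mathbb{P}^{\mathrm{shock}}$.
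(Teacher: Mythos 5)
Your plan follows essentially the same route as the paper's own proof of Theorem~\ref{thm:nfl}: Goodhart-decompose $R = r^{\mathcal{M}} + r^\perp$, use (A1) to cap the on-manifold component by $\mathbb{E}[R(s^\star)] + \varepsilon_{\mathcal{S}}$, deduce $\mathbb{E}[r^\perp(\pi^{\mathrm{RL}})] > 0$, and apply Lemma~\ref{lem:ghost-law-tradeoff} to contradict the GFI coordinate of structural dominance. You also correctly isolate the two delicate steps the paper treats informally: passing from (A1)'s supremum over exactly law-consistent policies to a bound on $\sup_\pi \mathbb{E}[r^{\mathcal{M}}(\pi)]$, and establishing that $\Delta_{\mathrm{GFI}}$ is strictly positive.

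One imprecision you inherit from the paper's main-text proof sketch is worth flagging. The hypothesis $\mathbb{E}[R(\pi^{\mathrm{RL}})] > \sup_{s}\mathbb{E}[R(s)] - \varepsilon_{\mathcal{S}}$, combined with the $\varepsilon_{\mathcal{S}}$-margin in (A1) and the fact that $\mathbb{E}[r^\perp(s^\star)]$ is only bounded by a small constant of order $\varepsilon_{\mathcal{S}}$ rather than identically zero, does not after chasing constants yield $\mathbb{E}[r^\perp(\pi^{\mathrm{RL}})] > 0$ strictly; the bookkeeping gives only $\mathbb{E}[r^\perp(\pi^{\mathrm{RL}})] \ge -c\,\varepsilon_{\mathcal{S}}$ for some $c \ge 1$. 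The paper's Appendix~E quietly repairs this by instead requiring a PnL margin $\eta > \varepsilon_{\mathcal{S}}$ above the structural reference, a strictly stronger hypothesis than the one in the theorem as stated. A fully rigorous version of your argument would need either to strengthen the PnL hypothesis in the same way, or to argue that the several $\varepsilon_{\mathcal{S}}$-terms do not compound (e.g., by taking $s^\star$ to be exactly on-manifold so that $\mathbb{E}[r^\perp(s^\star)] = 0$ and absorbing the rest into a single slack).
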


\paragraph{Proof sketch.}
We decompose reward into on-manifold and ghost components as in Section, writing
\[
  \mathbb{E}[R(\pi)]
  \;=\;
  \mathbb{E}[R^{\mathcal{M}}(\pi)] + \mathbb{E}[r^\perp(\pi)].
\]
By (A1), the class $\mathcal{S}$ contains a policy $s^{\star}$ that is $\varepsilon_{\mathcal{S}}$-optimal among all law-consistent policies, so any policy $\pi$ satisfying $\mathbb{E}[R(\pi)] > \mathbb{E}[R(s^{\star})]$ must achieve strictly larger expected ghost component:
\[
  \mathbb{E}[r^\perp(\pi)] \;>\; \mathbb{E}[r^\perp(s^{\star})].
\]
Since $\mathcal{S}$ is law-consistent, $\mathbb{E}[r^\perp(s^{\star})]$ is bounded above by zero (or a small constant absorbed into $\varepsilon_{\mathcal{S}}$), so $\pi$ must satisfy $\mathbb{E}[r^\perp(\pi)] > 0$. Lemma~\ref{lem:ghost-law-tradeoff} then implies that $\pi$ necessarily incurs strictly larger average law penalties and GFI than the best elements of $\mathcal{S}$. Hence $J(\pi)$ cannot dominate $J(s)$ for all $s \in \mathcal{S}$.

Applying this argument to $\pi^{\text{RL}}$ shows that any unconstrained law-seeking RL limit point that improves PnL relative to $\mathcal{S}$ must pay for this improvement with worse law metrics, ruling out structural dominance. A fully rigorous proof, including technical conditions on convergence of the RL training dynamics and integrability of the law metrics, is provided in Appendix~E. \qed

\medskip

Theorem~\ref{thm:nfl} provides a theoretical counterpart to the empirical story in Section~\ref{sec:empirical-results}. The structural baselines (Zero-Hedge and Vol-Trend) inhabit a law-consistent region of the law-strength frontier, while RL policies that attempt to improve PnL through the ghost channel are forced, by Lemma~\ref{lem:ghost-law-tradeoff}, to move outward along the law-penalty and GFI axes. This is precisely the Pareto-dominance pattern we observed in Figures.

\subsection{RLVR and law-aligned reasoning: analogies and caveats}
\label{subsec:rlvr-analogies}

Recent work on Reinforcement Learning with Verifiable Rewards (RLVR) has shown that, for mathematical reasoning and related tasks, combining verifiable outcome signals with process-level feedback can significantly improve reliability over standard preference-based RLHF. In these settings, a \emph{verifiable checker} evaluates candidate solutions or intermediate reasoning steps, producing a structured reward signal that is, at least in principle, resistant to some forms of reward hacking.

Our axiomatic volatility setting can be interpreted as a stylized analogue of RLVR:
\begin{enumerate}
  \item The no-arbitrage axioms and the law manifold $\mathcal{M}^{\text{vol}}$ play the role of a verifiable checker that deterministically determines whether a surface is admissible and how badly it violates the axioms.
  \item The law-penalty functional $\mathcal{L}_\phi$ and GFI are analogous to structured correctness scores in RLVR, quantifying how well a policy respects axioms under both baseline and shocked environments.
  \item Ghost arbitrage $r^\perp$ corresponds to reward obtained in regions where the checker is informative but the learning dynamics exploit systematic modelling errors, leading to misalignment between high reward and true law-consistent performance.
\end{enumerate}

From this perspective, Theorem~\ref{thm:nfl} and our empirical results highlight a concrete failure mode for RL with verifiable penalties: even when the checker is mathematically correct on the generator support, the combination of function approximation, world-model error, and broad policy classes can create exploitable ghost channels, through which RL can improve the measured reward while degrading law alignment. This resonates with observations in RLVR that combining process and outcome rewards requires careful design to avoid unintended incentives and reward hacking.

At the same time, our scope is deliberately modest. We do \emph{not} claim a general impossibility result for RLVR. Rather, our volatility case illustrates one concrete setting in which verifiable penalties and axiomatic structure, by themselves, are insufficient to guarantee law alignment in the presence of model misspecification and unconstrained policy classes. In particular, our findings suggest that:
\begin{enumerate}
  \item Structural restrictions on policies (e.g., restricting $\Pi$ to a parametric hedge family) and
  \item Hard projection or constrained training of world models onto the law manifold
\end{enumerate}
may be necessary complements to verifiable law penalties, if one wishes to avoid ghost arbitrage in similar scientific AI testbeds.

\subsection{Limitations and future axes}
\label{subsec:nfl-limitations}

We briefly summarize the main limitations of our no-free-lunch analysis and point to concrete extensions.

First, we do not train \emph{structurally constrained} RL agents whose policy class coincides with the structural family $\mathcal{S}$ (e.g., Vol-Trend parametrizations). As a result, our empirical comparison does not fully disentangle the contribution of the learning algorithm from that of the function class: it remains an open question whether carefully designed RL within $\mathcal{S}$ could match or slightly improve upon hand-crafted baselines without opening a ghost channel.

Second, our world model is trained without hard projection onto $\mathcal{M}^{\text{vol}}$, and our assumptions on the ghost coupling~\eqref{eq:ghost-coupling} are only partially validated empirically. A natural next step is to compare unconstrained world models with hard-constrained or projected variants, and to evaluate whether such models reduce or eliminate the ghost arbitrage term $r^\perp$ in practice.

Despite these limitations, Theorem~\ref{thm:nfl}, Lemma~\ref{lem:ghost-law-tradeoff}, and the empirical Pareto patterns in Section~\ref{sec:empirical-results} together provide a coherent no-free-lunch narrative: in our volatility law-manifold testbed, high-capacity unconstrained law-seeking RL cannot simultaneously match the PnL and law-alignment performance of simple structural baselines without collapsing back into their structural class.

\medskip

\noindent\textbf{Appendix pointer.} Full proofs of Lemma~\ref{lem:ghost-law-tradeoff} and Theorem~\ref{thm:nfl}, together with technical assumptions on RL convergence and integrability, are provided in Appendix~E.

\section{Discussion and Conclusion}
\label{sec:discussion}

In this section we summarize our findings as a \emph{negative but constructive} scientific result, formulate concrete design recommendations and testable predictions, and highlight the broader transferability of our axiomatic evaluation template beyond volatility.

\subsection{Negative but constructive result}
\label{subsec:neg-constructive}

Our main empirical and theoretical message is deliberately two-sided.

\paragraph{Negative.}
In our volatility law-manifold testbed, \emph{unconstrained law-seeking RL fails to outperform simple structural baselines} (Zero-Hedge and Vol-Trend) on any of the three main axes---profitability (mean PnL / Sharpe), law alignment (mean and tail law penalties, law coverage, GFI), and tail robustness (VaR\(_5\), CVaR\(_5\)):

\begin{enumerate}
  \item Naive PPO on the world model attains mean step PnL around $-0.0022$ in the baseline regime with GFI $\approx 1.27$, while law-seeking PPO variants with $\lambda \in \{5,10,20,40\}$ often yield \emph{more negative} mean PnL (e.g., $-0.0150$) and higher GFI ($\approx 1.66$), despite explicit law penalties.
  \item In contrast, structural baselines sit on or near the empirical Pareto frontier: Zero-Hedge achieves mean step PnL $\approx 0.0191$ (baseline) and $\approx 0.0193$ (shock) with GFI essentially zero and modest law penalties, while Vol-Trend achieves PnL $\approx 0.0146 \to 0.0140$ with relatively low law penalties and small GFI.
  \item Law-strength frontier plots (GFI vs law penalty, VaR/CVaR vs law penalty) show that, once these structural baselines are included, no RL variant occupies a Pareto-optimal point: all RL points lie \emph{strictly inside} the frontier formed by Zero-Hedge and Vol-Trend.
\end{enumerate}

\paragraph{Constructive.}
At the same time, the paper is constructive in several respects:

\begin{enumerate}
  \item We introduce a general \emph{axiomatic evaluation pipeline} based on law manifolds, metric-based law-penalty functionals, and a structured Goodhart decomposition $r = r^{\mathcal{M}} + r^\perp$.
  \item We define a domain-agnostic \emph{Graceful Failure Index} (GFI) and \emph{law-strength frontiers} that jointly organize profitability, law alignment, and tail robustness under explicit shocks.
  \item We prove structural results (Theorem~4.1, Theorem~4.3 with Corollary~4.4, and Theorem~8.1) that formalize ghost-arbitrage incentives and no-free-lunch trade-offs for unconstrained law-seeking RL, and we show empirically that the observed frontiers are consistent with these results.
\end{enumerate}

\paragraph{One-sentence answers to RQ1--RQ3.}
We conclude this subsection with concise answers to the research questions posed in Section~\ref{sec:intro}.

\begin{enumerate}
  \item \textbf{RQ1 (Do law penalties help naive RL?).} In our volatility world-model case study, soft law penalties and selection-only model choice \emph{do not} yield policies with strictly better law metrics (GFI, law penalties) at comparable PnL to naive PPO; instead they typically worsen PnL while only modestly improving law alignment, consistent with the structural trade-off in Theorem~4.3.
  \item \textbf{RQ2 (RL vs structural baselines under shocks).} Structural baselines (Zero-Hedge, Vol-Trend) form an empirical Pareto frontier in PnL--law--tail space that is robust to shocks, while all tested RL variants (naive, law-seeking, selection-only) remain strictly dominated on at least one axis, in line with the ghost-arbitrage incentive picture of Theorem~4.1.
  \item \textbf{RQ3 (When does law-seeking RL have no free lunch?).} Under explicit assumptions on the structural class $\mathcal{S}$, the policy class $\Pi$, and the world model’s ghost coupling, Theorem~8.1 shows that any unconstrained law-seeking RL policy that improves PnL over $\mathcal{S}$ must worsen law metrics and/or GFI, yielding a no-free-lunch result that matches the empirical law-strength frontiers of Section~\ref{sec:empirical-results}.
\end{enumerate}

\subsection{Design recommendations and testable predictions}
\label{subsec:design-recs}

Our negative result is intended to be \emph{useful}: it points to concrete directions where future work can intervene. We highlight three design recommendations, each accompanied by an observable criterion that makes the recommendation empirically testable.

\paragraph{Hard constraints and projection.}
Rather than relying solely on soft penalties in the reward, future systems should enforce axioms via \emph{hard constraints and projection} in both the world model and policy updates.

\begin{enumerate}
  \item For the world model, this means training under a projected loss, where each predicted surface is mapped to $\Pi_{\mathcal{M}}(w)$ in total-variance space before computing reconstruction error; this would directly suppress ghost channels at the model level.
  \item For policies, this suggests incorporating projections onto $\mathcal{M}^{\text{vol}}$ in policy evaluation, or imposing hard constraints on action maps so that implied surfaces remain on or near the manifold by construction.
\end{enumerate}

\emph{Observable criterion:} Success of this approach would manifest as \emph{uniformly lower GFI}---i.e., smaller increases in law penalties and law-violation frequency under shocks---without a significant loss in Sharpe or mean PnL relative to our current frontier curves. In law-strength plots, projected models should move points \emph{downward} (lower GFI) while leaving the horizontal PnL coordinate nearly unchanged.

\paragraph{Structured policy classes.}
Our results consistently show that simple structural strategies (Zero-Hedge, Vol-Trend) dominate unconstrained RL. This suggests a design where policy classes are themselves \emph{structured}, mirroring the structural class $\mathcal{S}$ defined in Section.
\begin{enumerate}
  \item Concretely, policy networks could be replaced or augmented by parametric families of volatility hedges (e.g., Vol-Trend with a small number of interpretable parameters) that are guaranteed to preserve certain law properties.
  \item RL would then be used to fit parameters within this structural family, rather than to search over arbitrary high-capacity function approximators that can freely exploit ghost arbitrage.
\end{enumerate}

\emph{Observable criterion:} If successful, structural baselines like Zero-Hedge and Vol-Trend would lie \emph{inside} the policy class $\Pi$, and RL training would \emph{recover} or slightly improve upon them without moving off-manifold. In our metrics, this would appear as new RL points coinciding with or marginally improving the structural frontier, rather than sitting strictly inside it.

\paragraph{Joint alignment of world model and policy.}
Finally, our Goodhart decomposition $r = r^{\mathcal{M}} + r^\perp$ makes clear that misalignment can arise from both the world model and the policy. A more promising avenue is therefore to jointly regularize both components.

\begin{enumerate}
  \item World models can be trained with explicit law penalties, projections, or multi-task objectives that penalize violations of no-arbitrage inequalities alongside prediction error.
  \item Policies can be trained with law-aware objectives that down-weight or explicitly penalize trajectories whose rewards are dominated by the ghost component $r^\perp$.
\end{enumerate}

\emph{Observable criterion:} We would expect a \emph{measurable reduction in the empirical contribution of $r^\perp$} in the Goodhart decomposition: for policies trained under joint alignment, the estimated $\mathbb{E}[r^\perp]$ should shrink relative to $\mathbb{E}[r^{\mathcal{M}}]$, and scatter/diagnostic plots analogous to Figures should show reduced mass in high-penalty, high-ghost regions.

\subsection{Transferable template and broader impact}
\label{subsec:transfer-template}

Although our case study is anchored in implied volatility surfaces, the conceptual tools we introduce are intentionally \emph{template-like} and readily transferable to other axiom-constrained domains.

\paragraph{Template tools.}
Three components are particularly reusable:

\begin{enumerate}
  \item \textbf{Axiomatic law manifolds.} Any system with known structural constraints---e.g., monotone yield curves, convex credit term structures, physics-constrained PDE solutions---can be recast as a law manifold $\mathcal{M} = \{ y : A(y) \le 0 \}$ in a discretized coordinate system.
  \item \textbf{Law-penalty functionals and GFI.} Given $\mathcal{M}$, one can define metric-based law penalties $\mathcal{L}_\phi(y)$ and a domain-agnostic Graceful Failure Index that measures the degradation of law metrics under shocks or distribution shifts.
  \item \textbf{Law-strength frontiers and Goodhart decomposition.} For any combination of world model and RL or control algorithm, one can plot law-strength frontiers and perform a Goodhart decomposition to separate on-manifold performance from ghost exploitation.
\end{enumerate}

\paragraph{Yield-curve example.}
As a concrete non-financial (in the sense of non-equity-volatility) instantiation, consider discretized yield curves. Here $y$ is a vector of yields at different maturities, $\mathcal{M}^{\text{yc}}$ encodes monotonicity and convexity constraints (no negative forward rates, no butterfly arbitrage across maturities), and $\mathcal{L}_\phi$ penalizes violations of these inequalities. A synthetic generator could produce law-consistent yield trajectories, a world model could approximate their dynamics, and RL agents could be tasked with hedging interest-rate exposures. Our pipeline would then apply verbatim: law-strength frontiers would compare RL-based hedges to structural curve strategies, GFI would quantify robustness to rate shocks, and a Goodhart decomposition would reveal whether RL exploits law-violating yield shapes induced by model error.

Beyond yield curves, similar constructions are natural for:
\begin{enumerate}
  \item credit term structures constrained by monotonicity and positivity,
  \item physical fields governed by PDEs (with residuals forming the law penalty),
  \item and other Scientific AI settings where axioms or conservation laws define an admissible set of states.
\end{enumerate}

\paragraph{Broader impact.}
Our main contribution is thus not a particular trading system, but a \emph{reusable template} for stress-testing Scientific AI systems on axiomatic pipelines. Volatility serves as an especially revealing testbed because its axioms are well-understood, law violations have clear financial meaning, and world-model errors naturally create ghost arbitrage channels. We hope that the combination of:
\begin{enumerate}
  \item a formal axiomatic manifold,
  \item explicit Goodhart decompositions,
  \item law-strength frontiers and GFI,
  \item and a no-free-lunch theorem for unconstrained law-seeking RL
\end{enumerate}
will prove useful in other domains where the central question is not “can RL optimize this objective?” but rather “does RL, when combined with approximate models and verifiable law penalties, discover law-aligned solutions or exploit artefacts?”.

Answering this question rigorously will require further theoretical development, richer empirical testbeds, and closer interaction between domain experts and learning theorists. Our volatility case study represents one step in that direction: a concrete, mathematically structured environment where Scientific AI methods can be subjected to the same kind of stress tests that financial models have long faced in practice.

%=====================================================================
% Appendix A: Detailed proofs for Section 2
%=====================================================================

\appendix
\section{Proofs for Section~2: Axiomatic Volatility Law Manifolds}
\label{app:axiomatic_manifold_proofs}

\subsection{Proof of Proposition~\ref{prop:axiomatic-representation}}

\begin{proof}[Proof of Proposition~\ref{prop:axiomatic-representation}]
We work throughout in the finite-dimensional Euclidean space
\(
\mathbb{R}^{d_{\mathrm{vol}}}
\)
equipped with its standard inner product
\(
\langle \cdot,\cdot\rangle
\)
and norm
\(
\|\cdot\|_2
\).
We write the rows of \(A^{\mathrm{vol}}\) as
\(
(a_\ell^\top)_{\ell=1}^m
\),
so that the constraint system
\(A^{\mathrm{vol}} w \le b\) can be written componentwise as
\[
a_\ell^\top w \;\le\; b_\ell,
\qquad \ell = 1,\dots,m.
\]

\paragraph{Step 1: Polyhedral structure, closedness and convexity.}
For each \(\ell \in \{1,\dots,m\}\), define the closed half-space
\[
H_\ell \;:=\; \bigl\{ w \in \mathbb{R}^{d_{\mathrm{vol}}} : a_\ell^\top w \le b_\ell \bigr\}.
\]
Since \(a_\ell^\top w - b_\ell\) is an affine (hence continuous) function of \(w\), we can write
\(
H_\ell = (a_\ell^\top w - b_\ell)^{-1}((-\infty,0])
\),
i.e.\ as the inverse image of the closed set \((-\infty,0]\) under a continuous map. Therefore each \(H_\ell\) is closed. Moreover, each \(H_\ell\) is convex because for any \(w_1,w_2 \in H_\ell\) and any \(\theta \in [0,1]\),
\[
a_\ell^\top \bigl(\theta w_1 + (1-\theta)w_2\bigr)
= \theta a_\ell^\top w_1 + (1-\theta) a_\ell^\top w_2
\le \theta b_\ell + (1-\theta) b_\ell
= b_\ell,
\]
so that \(\theta w_1 + (1-\theta)w_2 \in H_\ell\).

By assumption, the set defined by the discretized butterfly and calendar constraints is
\[
\mathcal{M}^{\mathrm{vol}}
\;=\;
\bigcap_{\ell=1}^m H_\ell \;\cap\; B,
\]
where
\(B \subset \mathbb{R}^{d_{\mathrm{vol}}}\) denotes the (finite) collection of box constraints (e.g.\ lower and upper bounds on each component \(w_i\) reflecting positivity and crude upper bounds on total variance). Concretely, we may write
\[
B \;=\; \prod_{i=1}^{d_{\mathrm{vol}}} [\underline{w}_i, \overline{w}_i]
\]
for some scalars \(\underline{w}_i \le \overline{w}_i\); this is a Cartesian product of closed intervals and is therefore a non-empty compact convex subset of \(\mathbb{R}^{d_{\mathrm{vol}}}\).

Since arbitrary intersections of closed sets are closed and intersections of convex sets are convex, it follows that
\[
\mathcal{M}^{\mathrm{vol}}
=
\left(\bigcap_{\ell=1}^m H_\ell\right) \cap B
\]
is closed and convex. Furthermore, by definition \(\mathcal{M}^{\mathrm{vol}}\) is the intersection of finitely many closed half-spaces and a box; hence \(\mathcal{M}^{\mathrm{vol}}\) is a \emph{polyhedron} in the sense of convex analysis, i.e.\ a set of the form
\[
\mathcal{M}^{\mathrm{vol}} = \{ w \in \mathbb{R}^{d_{\mathrm{vol}}} : A' w \le b' \}
\]
for some matrix \(A'\) and vector \(b'\). This establishes that \(\mathcal{M}^{\mathrm{vol}}\) is a closed convex polyhedron, apart from non-emptiness, which we now address.

\paragraph{Step 2: Non-emptiness via a Black--Scholes surface.}
We show that there exists at least one total-variance vector \(w^\mathrm{BS} \in \mathbb{R}^{d_{\mathrm{vol}}}\) satisfying all of the inequalities \(A^{\mathrm{vol}} w \le b\), and hence
\(w^\mathrm{BS} \in \mathcal{M}^{\mathrm{vol}}\).

Consider a constant-volatility Black--Scholes model with volatility parameter \(\sigma_0 > 0\). On a continuous grid of maturities \(T > 0\) and log-strikes \(k\), the corresponding total variance is
\[
w^\mathrm{BS}(T,k) = \sigma_0^2 T.
\]
In particular, for any fixed \(T\), the map \(k \mapsto w^\mathrm{BS}(T,k)\) is \emph{constant} in \(k\), and for any fixed \(k\), the map \(T \mapsto w^\mathrm{BS}(T,k)\) is strictly increasing and linear in \(T\).

Let \(\{T_j\}_{j=1}^{N_T}\) be the finite set of maturities in our discretization, and \(\{k_i\}_{i=1}^{N_K}\) the finite set of log-strikes, so that
\(d_{\mathrm{vol}} = N_T N_K\).
We construct the discretized total-variance vector \(w^\mathrm{BS} \in \mathbb{R}^{d_{\mathrm{vol}}}\) by setting
\[
w^\mathrm{BS}_{ij} := w^\mathrm{BS}(T_j,k_i) = \sigma_0^2 T_j,
\qquad 1 \le i \le N_K,\ 1 \le j \le N_T.
\]

By construction we have, for each fixed maturity \(T_j\),
\[
w^\mathrm{BS}_{i+1,j} - 2 w^\mathrm{BS}_{i,j} + w^\mathrm{BS}_{i-1,j}
= \sigma_0^2 T_j - 2 \sigma_0^2 T_j + \sigma_0^2 T_j
= 0
\]
for all interior strikes \(k_{i-1},k_i,k_{i+1}\). Thus the discrete second differences in strike are non-negative (indeed, identically zero), which implies that all \emph{butterfly} constraints of the form
\[
\alpha_{i,j}^\top w \;\ge 0
\quad\text{or equivalently}\quad
-\alpha_{i,j}^\top w \le 0
\]
are satisfied by \(w^\mathrm{BS}\).

Similarly, for each fixed strike \(k_i\) and consecutive maturities \(T_j < T_{j+1}\), we have
\[
w^\mathrm{BS}_{i,j+1} - w^\mathrm{BS}_{i,j}
= \sigma_0^2 (T_{j+1} - T_j) \;\ge\; 0,
\]
so any discrete \emph{calendar} constraints of the form
\(
w_{i,j+1} - w_{i,j} \ge 0
\)
(or again, linearly transformed into the system \(A^{\mathrm{vol}} w \le b\)) are satisfied by \(w^\mathrm{BS}\) with strict inequality when \(T_{j+1} > T_j\).

Finally, because the Black--Scholes model is a classical example of a static-arbitrage-free implied volatility surface, its total-variance surface obeys the continuous-time no-arbitrage conditions (monotonicity in maturity, convexity in strike, and appropriate limiting behavior in the wings). Our discretization has been constructed exactly so that each continuous no-arbitrage condition, when evaluated on the finite grid \(\{(T_j,k_i)\}\), yields one of the linear inequalities encoded in the rows of \(A^{\mathrm{vol}}\), possibly after simple positive scalings and translations. Therefore, by construction of the constraint system \(A^{\mathrm{vol}} w \le b\), we have
\[
A^{\mathrm{vol}} w^\mathrm{BS} \le b.
\]
In particular,
\(
w^\mathrm{BS} \in \bigcap_{\ell=1}^m H_\ell
\),
and, choosing the box \(B\) sufficiently large to contain the range \(\{w^\mathrm{BS}_{ij}\}\) (which is trivially possible), we have \(w^\mathrm{BS} \in B\). Hence \(w^\mathrm{BS} \in \mathcal{M}^{\mathrm{vol}}\), and \(\mathcal{M}^{\mathrm{vol}}\) is non-empty.

Combining Steps 1 and 2, we conclude that \(\mathcal{M}^{\mathrm{vol}}\) is a non-empty, closed, convex polyhedron in \(\mathbb{R}^{d_{\mathrm{vol}}}\), proving item~(1).

\paragraph{Step 3: Static-arbitrage-free surfaces lie in \(\mathcal{M}^{\mathrm{vol}}\).}
We now prove item~(2). Let \(w \in \mathbb{R}^{d_{\mathrm{vol}}}\) be the discretized total-variance surface associated with a continuous implied volatility surface \((T,k) \mapsto \sigma(T,k)\) that is static-arbitrage-free in the classical sense (no butterfly or calendar arbitrage). We show that \(w \in \mathcal{M}^{\mathrm{vol}}\).

By definition, absence of static arbitrage implies in particular that, for each fixed maturity \(T\), the call price \(K \mapsto C(T,K)\) is a decreasing, convex function of strike \(K\). Expressing call prices in terms of total variance and log-strike and differentiating under mild regularity conditions yields that, for each fixed maturity \(T\), the total-variance function \(k \mapsto w(T,k)\) is convex in \(k\) in an appropriate sense. In particular, for any three equally spaced log-strikes \(k_{i-1} < k_i < k_{i+1}\) in our discretization we have
\[
w(T,k_i) \;\le\; \frac{1}{2} w(T,k_{i-1}) + \frac{1}{2} w(T,k_{i+1}),
\]
which is precisely the statement that the discrete second difference
\(
w(T,k_{i+1}) - 2w(T,k_i) + w(T,k_{i-1})
\)
is non-negative.

When we restrict to the grid \(\{(T_j,k_i)\}\) and collect the values
\(
w_{ij} = w(T_j,k_i)
\)
into the vector \(w \in \mathbb{R}^{d_{\mathrm{vol}}}\), each discrete convexity inequality becomes a linear constraint of the form
\[
\alpha_{i,j}^\top w \ge 0
\quad\text{or equivalently}\quad
-\alpha_{i,j}^\top w \le 0,
\]
for an appropriate coefficient vector \(\alpha_{i,j} \in \mathbb{R}^{d_{\mathrm{vol}}}\) with only three non-zero entries at indices corresponding to \((i-1,j), (i,j), (i+1,j)\).

Similarly, absence of calendar arbitrage implies that, for fixed strike \(K\) (equivalently fixed log-strike \(k\)), the call price \(T \mapsto C(T,K)\) is non-decreasing in maturity \(T\). Translating this condition into total variance under mild regularity conditions yields that, for fixed \(k_i\), the map \(T \mapsto w(T,k_i)\) is non-decreasing. Restricting again to the discretization \(\{T_j\}\) and collecting into \(w\), each such monotonicity condition gives a linear inequality of the form
\[
\beta_{i,j}^\top w \ge 0
\quad\Leftrightarrow\quad
-\beta_{i,j}^\top w \le 0,
\]
where \(\beta_{i,j}\) has two non-zero entries, corresponding to the pair \((i,j+1)\) and \((i,j)\).

By construction of the matrix \(A^{\mathrm{vol}}\) and vector \(b\), \emph{every} such discretized butterfly and calendar inequality appears as a row of \(A^{\mathrm{vol}}\) (possibly after positive scaling and absorption of constant terms into \(b\)), and the box constraints simply enforce crude lower and upper bounds on total variance that are trivially satisfied by any economically reasonable static-arbitrage-free surface (e.g., non-negativity of variance and boundedness over a finite set of maturities and strikes).

Thus, for a static-arbitrage-free surface, we have that all discretized no-arbitrage constraints hold simultaneously, which is equivalent to
\[
A^{\mathrm{vol}} w \le b
\quad\text{and}\quad
w \in B.
\]
Therefore \(w \in \bigcap_{\ell=1}^m H_\ell \cap B = \mathcal{M}^{\mathrm{vol}}\). This proves that any static-arbitrage-free total-variance surface lies in \(\mathcal{M}^{\mathrm{vol}}\), establishing item~(2).

\paragraph{Conclusion.}
Steps 1–3 together prove that \(\mathcal{M}^{\mathrm{vol}}\) is a non-empty, closed, convex polyhedron, and that any static-arbitrage-free total-variance surface lies in \(\mathcal{M}^{\mathrm{vol}}\). This completes the proof of Proposition~\ref{prop:axiomatic-representation}.
\end{proof}

\subsection{Proof of Proposition~\ref{prop:closed-convex}}
\label{app:proof-closed-convex}

\begin{proof}[Proof of Proposition~\ref{prop:closed-convex}]
Recall from Proposition~\ref{prop:axiomatic-representation} that
\(\Mvol \subset \mathbb{R}^{d_{\mathrm{vol}}}\) is non-empty, closed, and convex.
We work throughout in the Euclidean space \(\mathbb{R}^{d_{\mathrm{vol}}}\)
equipped with its standard inner product \(\langle \cdot,\cdot\rangle\) and norm
\(\|\cdot\|_2\).

\paragraph{Step 1: Existence of a minimizer.}
Fix \(w \in \mathbb{R}^{d_{\mathrm{vol}}}\) and consider the optimization problem
\begin{equation}
\label{eq:proj-opt-problem}
    \min_{\tilde w \in \Mvol} \; f_w(\tilde w)
    \quad\text{with}\quad
    f_w(\tilde w) := \frac{1}{2}\,\|\tilde w - w\|_2^2.
\end{equation}
The objective \(f_w\) is continuous and \emph{coercive} in the sense that
\[
    \|\tilde w\|_2 \to \infty
    \quad\Longrightarrow\quad
    f_w(\tilde w) = \frac{1}{2}\|\tilde w - w\|_2^2 \to \infty.
\]
We now show that~\eqref{eq:proj-opt-problem} admits at least one minimizer in \(\Mvol\).

Define the infimum value
\[
    \alpha_w := \inf_{\tilde w \in \Mvol} f_w(\tilde w) \in [0,+\infty).
\]
By definition of infimum, there exists a sequence \((\tilde w_n)_{n\ge 1} \subset \Mvol\)
such that
\(
    f_w(\tilde w_n) \to \alpha_w
\)
as \(n\to\infty\).
We first show that \((\tilde w_n)\) is bounded. Suppose by contradiction that
\(\|\tilde w_n\|_2 \to \infty\) along some subsequence. Then by coercivity of
\(f_w\) we would have \(f_w(\tilde w_n) \to \infty\) along that subsequence,
contradicting the fact that \(f_w(\tilde w_n)\) converges to the finite value
\(\alpha_w\). Hence \((\tilde w_n)\) is bounded in \(\mathbb{R}^{d_{\mathrm{vol}}}\).

Since we are in finite dimension, every bounded sequence has a convergent
subsequence. Thus there exists a subsequence (which we do not relabel) and a
limit point \(\tilde w^\star \in \mathbb{R}^{d_{\mathrm{vol}}}\) such that
\[
    \tilde w_n \to \tilde w^\star
    \quad\text{as } n\to\infty.
\]
Because \(\Mvol\) is closed, and each \(\tilde w_n \in \Mvol\), the limit
\(\tilde w^\star\) must also satisfy \(\tilde w^\star \in \Mvol\).

By continuity of \(f_w\), we have
\[
    f_w(\tilde w^\star) = \lim_{n\to\infty} f_w(\tilde w_n)
    = \alpha_w.
\]
Therefore \(\tilde w^\star\) attains the infimum of~\eqref{eq:proj-opt-problem},
so a minimizer exists and we may define
\[
    \Pi_{\Mvol}(w) := \tilde w^\star.
\]

\paragraph{Step 2: Uniqueness of the minimizer.}
We now show that the minimizer is unique. The function \(f_w\) is not only
continuous but \emph{strictly convex} on \(\mathbb{R}^{d_{\mathrm{vol}}}\):
for any \(x,y \in \mathbb{R}^{d_{\mathrm{vol}}}\), \(x \neq y\), and any
\(\theta \in (0,1)\),
\begin{align*}
    f_w(\theta x + (1-\theta)y)
    &= \frac{1}{2} \|\theta x + (1-\theta)y - w\|_2^2 \\
    &= \frac{1}{2} \big\| \theta(x-w) + (1-\theta)(y-w) \big\|_2^2 \\
    &< \frac{1}{2}\big( \theta \|x-w\|_2^2 + (1-\theta)\|y-w\|_2^2 \big) \\
    &= \theta f_w(x) + (1-\theta) f_w(y),
\end{align*}
where the strict inequality follows from strict convexity of the squared norm
unless \(x-w\) and \(y-w\) are linearly dependent with the same direction and
norm, which can only happen if \(x=y\).

Assume, for contradiction, that there exist two distinct minimizers
\(\tilde w_1,\tilde w_2 \in \Mvol\) of~\eqref{eq:proj-opt-problem}, i.e.,
\[
    f_w(\tilde w_1) = f_w(\tilde w_2) = \alpha_w,
    \qquad \tilde w_1 \neq \tilde w_2.
\]
Because \(\Mvol\) is convex, their midpoint
\(
    \tilde w_\theta := \tfrac{1}{2}\tilde w_1 + \tfrac{1}{2}\tilde w_2
\)
also lies in \(\Mvol\).
By strict convexity,
\[
    f_w(\tilde w_\theta)
    < \frac{1}{2} f_w(\tilde w_1) + \frac{1}{2} f_w(\tilde w_2)
    = \alpha_w,
\]
contradicting the fact that \(\alpha_w\) is the infimum over \(\Mvol\).
Therefore the minimizer of~\eqref{eq:proj-opt-problem} is unique, and the
mapping \(w \mapsto \Pi_{\Mvol}(w)\) is well-defined on all of
\(\mathbb{R}^{d_{\mathrm{vol}}}\).

\paragraph{Step 3: First-order optimality and firm non-expansiveness.}
The projection \(\Pi_{\Mvol}(w)\) can be characterized by a classical
first-order optimality condition for convex minimization over a closed convex
set. Let \(w \in \mathbb{R}^{d_{\mathrm{vol}}}\) be arbitrary and denote
\(\tilde w := \Pi_{\Mvol}(w)\).
Since \(\Mvol\) is closed and convex and \(f_w\) is differentiable and strictly convex, we know that \(\tilde w\) is the unique point in \(\Mvol\) such that
\begin{equation}
\label{eq:FOC-projection}
    \langle \tilde w - w, z - \tilde w \rangle \;\ge\; 0
    \quad\text{for all } z \in \Mvol.
\end{equation}
Indeed, this is the variational inequality corresponding to optimality of
\(\tilde w\) for the minimization of \(f_w\) over \(\Mvol\); see, e.g., standard
results on projections in Hilbert spaces.

Let now \(w, w' \in \mathbb{R}^{d_{\mathrm{vol}}}\) be arbitrary, and set
\[
    p := \Pi_{\Mvol}(w), \qquad q := \Pi_{\Mvol}(w').
\]
Applying~\eqref{eq:FOC-projection} with the pair \((w,p)\) and the choice
\(z=q \in \Mvol\) yields
\begin{equation}
\label{eq:FOC-1}
    \langle p - w, q - p \rangle \;\ge\; 0.
\end{equation}
Similarly, applying~\eqref{eq:FOC-projection} with the pair \((w',q)\) and the
choice \(z=p\) yields
\begin{equation}
\label{eq:FOC-2}
    \langle q - w', p - q \rangle \;\ge\; 0.
\end{equation}
Adding~\eqref{eq:FOC-1} and~\eqref{eq:FOC-2}, and recalling that
\(\langle q - w', p - q \rangle = -\langle q - w', q - p \rangle\), we obtain
\begin{align}
    0
    &\le \langle p - w, q - p \rangle
       + \langle q - w', p - q \rangle \nonumber \\
    &= \langle p - w, q - p \rangle
       - \langle q - w', q - p \rangle \nonumber \\
    &= \langle (p - w) - (q - w'), q - p \rangle \nonumber \\
    &= \langle (w' - w) - (q - p), q - p \rangle \nonumber \\
    &= \langle w' - w, q - p \rangle - \|q - p\|_2^2. \label{eq:firm-nonexp}
\end{align}
Rearranging~\eqref{eq:firm-nonexp} gives the inequality
\begin{equation}
\label{eq:firm-nonexp-alt}
    \|p - q\|_2^2
    \;\le\;
    \langle w' - w, q - p \rangle.
\end{equation}
Taking absolute values and applying the Cauchy--Schwarz inequality to the right-hand side yields
\[
    \|p - q\|_2^2
    \le |\langle w' - w, q - p \rangle|
    \le \|w' - w\|_2 \, \|q - p\|_2.
\]
If \(p \neq q\), we can divide both sides by \(\|p - q\|_2 > 0\) and obtain
\[
    \|p - q\|_2
    \le \|w' - w\|_2.
\]
If \(p = q\), the inequality trivially holds as equality.
Thus, for every \(w,w' \in \mathbb{R}^{d_{\mathrm{vol}}}\),
\begin{equation}
\label{eq:projection-nonexpansive}
    \|\Pi_{\Mvol}(w) - \Pi_{\Mvol}(w')\|_2
    \le \|w - w'\|_2.
\end{equation}
That is, the projection operator \(\Pi_{\Mvol}\) is \emph{non-expansive}, with
Lipschitz constant equal to 1.

\paragraph{Conclusion.}
We have shown that \(\Mvol\) is non-empty, closed, and convex (by
Proposition~\ref{prop:axiomatic-representation}), that the Euclidean projection
onto \(\Mvol\) exists and is unique for every \(w \in \mathbb{R}^{d_{\mathrm{vol}}}\), and that
\(\Pi_{\Mvol}\) is \(1\)-Lipschitz in the Euclidean norm. This completes the
proof of Proposition~\ref{prop:closed-convex}.
\end{proof}

\subsection{Proof of Lemma~\ref{lem:lipschitz-penalty}}
\label{app:proof-lipschitz-penalty}

\begin{proof}[Proof of Lemma~\ref{lem:lipschitz-penalty}]
Recall the definition of the volatility law-penalty functional
\begin{equation}
\label{eq:vol-law-penalty-app}
  \Lvol(w)
  :=
  \frac{1}{2}\,\|w - \Pi_{\Mvol}(w)\|_2^2,
  \qquad w \in \mathbb{R}^{d_{\mathrm{vol}}},
\end{equation}
where \(\Mvol \subset \mathbb{R}^{d_{\mathrm{vol}}}\) is the volatility law manifold
(cf.\ Proposition~\ref{prop:axiomatic-representation}) and
\(\Pi_{\Mvol} : \mathbb{R}^{d_{\mathrm{vol}}} \to \Mvol\) is the Euclidean projection
(cf.\ Proposition~\ref{prop:closed-convex}).

We show that \(\Lvol\) is \emph{locally Lipschitz} on \(\mathbb{R}^{d_{\mathrm{vol}}}\),
i.e., for every compact set \(K \subset \mathbb{R}^{d_{\mathrm{vol}}}\) there exists
\(L_K < \infty\) such that
\[
  |\Lvol(w_1) - \Lvol(w_2)|
  \;\le\;
  L_K \,\|w_1 - w_2\|_2
  \quad
  \text{for all } w_1,w_2 \in K.
\]

\paragraph{Step 1: Basic Lipschitz properties of the projection.}
Define the \emph{residual map}
\[
  h(w)
  :=
  w - \Pi_{\Mvol}(w),
  \qquad w \in \mathbb{R}^{d_{\mathrm{vol}}}.
\]
By Proposition~\ref{prop:closed-convex}, the projection \(\Pi_{\Mvol}\) is
non-expansive:
\[
  \|\Pi_{\Mvol}(w_1) - \Pi_{\Mvol}(w_2)\|_2
  \le
  \|w_1 - w_2\|_2
  \quad\text{for all } w_1,w_2 \in \mathbb{R}^{d_{\mathrm{vol}}}.
\]
It follows that the residual map \(h\) is globally Lipschitz with constant \(2\).
Indeed, for any \(w_1,w_2 \in \mathbb{R}^{d_{\mathrm{vol}}}\),
\begin{align}
  \|h(w_1) - h(w_2)\|_2
  &= \big\|(w_1 - \Pi_{\Mvol}(w_1)) - (w_2 - \Pi_{\Mvol}(w_2))\big\|_2 \nonumber\\
  &\le \|w_1 - w_2\|_2
       + \|\Pi_{\Mvol}(w_1) - \Pi_{\Mvol}(w_2)\|_2 \nonumber\\
  &\le \|w_1 - w_2\|_2 + \|w_1 - w_2\|_2 \nonumber\\
  &= 2\,\|w_1 - w_2\|_2.
  \label{eq:h-global-lip}
\end{align}
Thus \(h\) is \(2\)-Lipschitz on all of \(\mathbb{R}^{d_{\mathrm{vol}}}\).

\paragraph{Step 2: Bounding the residual on bounded sets.}
Fix a radius \(R > 0\) and consider the closed Euclidean ball
\[
  B_R := \{w \in \mathbb{R}^{d_{\mathrm{vol}}} : \|w\|_2 \le R\}.
\]
We first show that \(\|h(w)\|\) is uniformly bounded on \(B_R\).

Let \(w_0 := 0\) be the origin. Since \(\Mvol\) is non-empty
(Proposition~\ref{prop:axiomatic-representation}), the projection
\(\Pi_{\Mvol}(w_0)\) is well-defined and finite.
Denote \(c_0 := \|\Pi_{\Mvol}(0)\|_2 < \infty\).

For any \(w \in B_R\), we have
\begin{align}
  \|\Pi_{\Mvol}(w)\|_2
  &\le \|\Pi_{\Mvol}(w) - \Pi_{\Mvol}(0)\|_2 + \|\Pi_{\Mvol}(0)\|_2 \nonumber\\
  &\le \|w - 0\|_2 + c_0
  \le R + c_0,
  \label{eq:proj-bound-ball}
\end{align}
where we used non-expansiveness of \(\Pi_{\Mvol}\) and \(\|w\|_2 \le R\).

Hence, for any \(w \in B_R\),
\begin{align}
  \|h(w)\|_2
  &= \|w - \Pi_{\Mvol}(w)\|_2 \nonumber\\
  &\le \|w\|_2 + \|\Pi_{\Mvol}(w)\|_2 \nonumber\\
  &\le R + (R + c_0) \nonumber\\
  &= 2R + c_0.
  \label{eq:h-bound-ball}
\end{align}
Define the constant
\[
  K_R := 2R + c_0.
\]
Then \(\|h(w)\|_2 \le K_R\) for all \(w \in B_R\).

\paragraph{Step 3: Lipschitz continuity of the squared norm of the residual.}
Rewrite \(\Lvol\) in terms of \(h\):
\[
  \Lvol(w) = \frac{1}{2}\,\|h(w)\|_2^2.
\]
Let \(w_1,w_2 \in B_R\) be arbitrary, and set
\(h_1 := h(w_1)\), \(h_2 := h(w_2)\).
Then
\begin{align}
  |\Lvol(w_1) - \Lvol(w_2)|
  &= \frac{1}{2}\,\big|\|h_1\|_2^2 - \|h_2\|_2^2\big| \nonumber\\
  &= \frac{1}{2}\,|\langle h_1 + h_2,\,h_1 - h_2\rangle| \nonumber\\
  &\le \frac{1}{2}\,\big(\|h_1\|_2 + \|h_2\|_2\big)\,\|h_1 - h_2\|_2.
  \label{eq:L-diff-start}
\end{align}
From~\eqref{eq:h-bound-ball}, we have
\(\|h_1\|_2 \le K_R\) and \(\|h_2\|_2 \le K_R\).
Moreover, by~\eqref{eq:h-global-lip},
\(\|h_1 - h_2\|_2 \le 2\,\|w_1 - w_2\|_2\).
Substituting these bounds into~\eqref{eq:L-diff-start} yields
\begin{align}
  |\Lvol(w_1) - \Lvol(w_2)|
  &\le \frac{1}{2}\,(K_R + K_R)\,\big(2\,\|w_1 - w_2\|_2\big) \nonumber\\
  &= 2K_R \,\|w_1 - w_2\|_2.
  \label{eq:L-lip-ball}
\end{align}
Recalling that \(K_R = 2R + c_0\), we can rewrite~\eqref{eq:L-lip-ball} as
\[
  |\Lvol(w_1) - \Lvol(w_2)|
  \le L_R \,\|w_1 - w_2\|_2,
  \quad
  L_R := 2(2R + c_0),
  \quad
  \forall\, w_1,w_2 \in B_R.
\]
Thus \(\Lvol\) is Lipschitz on each ball \(B_R\), with Lipschitz constant \(L_R\)
depending only on \(R\) and the fixed constant \(c_0\).

\paragraph{Step 4: Local Lipschitz continuity.}
A function that is Lipschitz on every bounded ball in \(\mathbb{R}^{d_{\mathrm{vol}}}\)
is, by definition, \emph{locally Lipschitz}. More precisely, for any compact set
\(K \subset \mathbb{R}^{d_{\mathrm{vol}}}\), there exists \(R > 0\) such that
\(K \subset B_R\); then the Lipschitz constant \(L_R\) from~\eqref{eq:L-lip-ball}
works for all \(w_1,w_2 \in K\). Therefore, \(\Lvol\) is locally Lipschitz on
\(\mathbb{R}^{d_{\mathrm{vol}}}\).

\paragraph{Remark.}
The above argument is self-contained and uses only the non-expansiveness of the
projection onto a closed convex set. An alternative viewpoint, standard in
convex analysis, is to note that \(\Lvol\) coincides with the \emph{squared
distance function} to the non-empty closed convex set \(\Mvol\), which is known
to be Fréchet differentiable on \(\mathbb{R}^{d_{\mathrm{vol}}}\) with
\(1\)-Lipschitz gradient (see, e.g., \citet[Prop.~12.29--12.30]{BauschkeCombettes2011}).
In particular, the gradient mapping is globally Lipschitz, which again implies
that \(\Lvol\) is locally Lipschitz. We include the direct argument above to keep
the presentation self-contained.

This completes the proof of Lemma~\ref{lem:lipschitz-penalty}.
\end{proof}

\subsection{Proof of Proposition~\ref{prop:zero-penalty-iff}}
\label{app:proof-zero-penalty-iff}

\begin{proof}[Proof of Proposition~\ref{prop:zero-penalty-iff}]
Recall the definition of the volatility law-penalty functional
\begin{equation}
\label{eq:vol-law-penalty-app2}
  \Lvol(w)
  :=
  \frac{1}{2}\,\|w - \Pi_{\Mvol}(w)\|_2^2,
  \qquad w \in \mathbb{R}^{d_{\mathrm{vol}}},
\end{equation}
where $\Mvol \subset \mathbb{R}^{d_{\mathrm{vol}}}$ is the volatility law
manifold and $\Pi_{\Mvol} : \mathbb{R}^{d_{\mathrm{vol}}} \to \Mvol$ is the
Euclidean projection (Proposition~\ref{prop:closed-convex}). We show that
\[
  \Lvol(w) = 0
  \quad\Longleftrightarrow\quad
  w \in \Mvol.
\]

\paragraph{($\Rightarrow$) If $w \in \Mvol$ then $\Lvol(w) = 0$.}
Assume $w \in \Mvol$. By the definition of the Euclidean projection, any point
$\tilde w \in \Mvol$ satisfies
\[
  \|\Pi_{\Mvol}(w) - w\|_2
  \le
  \|\tilde w - w\|_2.
\]
In particular, we may take $\tilde w = w$ itself, which is feasible since
$w \in \Mvol$. This yields
\[
  \|\Pi_{\Mvol}(w) - w\|_2
  \le
  \|w - w\|_2
  = 0.
\]
By non-negativity of the norm, we must have equality, hence
\[
  \Pi_{\Mvol}(w) = w.
\]
Substituting this into~\eqref{eq:vol-law-penalty-app2} gives
\[
  \Lvol(w)
  = \frac{1}{2}\,\|w - \Pi_{\Mvol}(w)\|_2^2
  = \frac{1}{2}\,\|w - w\|_2^2
  = 0.
\]
Thus $w \in \Mvol \implies \Lvol(w) = 0$.

\paragraph{($\Leftarrow$) If $\Lvol(w) = 0$ then $w \in \Mvol$.}
Now assume $\Lvol(w) = 0$. By~\eqref{eq:vol-law-penalty-app2} we have
\[
  0
  = \Lvol(w)
  = \frac{1}{2}\,\|w - \Pi_{\Mvol}(w)\|_2^2.
\]
Since the Euclidean norm is non-negative and vanishes only at zero, this
implies
\[
  \|w - \Pi_{\Mvol}(w)\|_2 = 0
  \quad\Longrightarrow\quad
  w = \Pi_{\Mvol}(w).
\]
The projection $\Pi_{\Mvol}(w)$ takes values in $\Mvol$ by construction, so
$w = \Pi_{\Mvol}(w) \in \Mvol$. Hence $\Lvol(w) = 0 \implies w \in \Mvol$.

\paragraph{Alternative viewpoint via distance to closed sets.}
For completeness, we note that~\eqref{eq:vol-law-penalty-app2} can be written as
\[
  \Lvol(w)
  = \frac{1}{2}\,\mathrm{dist}^2(w,\Mvol),
  \qquad
  \mathrm{dist}(w,\Mvol)
  := \inf_{\tilde w \in \Mvol} \|w - \tilde w\|_2,
\]
where the infimum is attained at $\Pi_{\Mvol}(w)$ because $\Mvol$ is
non-empty, closed, and convex
(Propositions~\ref{prop:axiomatic-representation}--\ref{prop:closed-convex}).
By basic properties of distance functions to closed sets in Hilbert spaces, one has
\[
  \mathrm{dist}(w,\Mvol) = 0
  \quad\Longleftrightarrow\quad
  w \in \overline{\Mvol} = \Mvol.
\]
Since $\Lvol(w) = \tfrac{1}{2}\mathrm{dist}^2(w,\Mvol)$, this is equivalent to
\[
  \Lvol(w) = 0
  \quad\Longleftrightarrow\quad
  w \in \Mvol,
\]
which is precisely the statement of Proposition~\ref{prop:zero-penalty-iff}.

This completes the proof.
\end{proof}

\subsection{Proof of Proposition~\ref{prop:ghost-bounded}}
\label{app:proof-ghost-bounded}

\begin{proof}
Recall the setting and notation:
\begin{enumerate}
  \item The volatility law manifold $\Mvol \subset \R^{d_{\mathrm{vol}}}$ is non-empty, closed, and convex by Propositions~\ref{prop:axiomatic-representation}--\ref{prop:closed-convex}.
  \item The Euclidean projection $\Pi_{\Mvol} : \R^{d_{\mathrm{vol}}} \to \Mvol$ is well-defined and $1$-Lipschitz
  (Proposition~\ref{prop:closed-convex}).
  \item The law-penalty functional is given by
  \begin{equation}
    \label{eq:vol-law-penalty-app3}
    \Lvol(w)
    :=
    \frac{1}{2}\bigl\| w - \Pi_{\Mvol}(w) \bigr\|_2^2
    = \frac{1}{2}\,\mathrm{dist}^2\bigl(w,\Mvol\bigr),
    \qquad w \in \R^{d_{\mathrm{vol}}},
  \end{equation}
  where $\mathrm{dist}(w,\Mvol) := \inf_{\tilde w \in \Mvol} \|w-\tilde w\|_2$.
  \item The on-manifold and ghost reward components are
  \[
    r^{\M}(w) := r\bigl(\Pi_{\Mvol}(w)\bigr),
    \qquad
    r^{\perp}(w) := r(w) - r^{\M}(w).
  \]
\end{enumerate}
Assume that $r : \R^{d_{\mathrm{vol}}} \to \R$ is $L_r$-Lipschitz with respect to the Euclidean norm, i.e.,
\begin{equation}
  \label{eq:r-lipschitz}
  |r(w) - r(w')|
  \le L_r \, \|w - w'\|_2
  \qquad \forall\, w,w' \in \R^{d_{\mathrm{vol}}}.
\end{equation}

We now prove the bound
\[
  |r^\perp(w)|
  = \bigl|r(w) - r^{\M}(w)\bigr|
  \le L_r \,\mathrm{dist}\bigl(w,\Mvol\bigr)
  = L_r \sqrt{2\,\Lvol(w)},
  \qquad \forall\, w\in\R^{d_{\mathrm{vol}}}.
\]

\paragraph{Step 1: Bounding the ghost component by the distance to $\Mvol$.}
Fix any $w \in \R^{d_{\mathrm{vol}}}$. By the definition of $r^\M$ and the Lipschitz property~\eqref{eq:r-lipschitz}, we have
\begin{align*}
  |r^\perp(w)|
  &= \bigl| r(w) - r^\M(w) \bigr|
   = \bigl| r(w) - r(\Pi_{\Mvol}(w)) \bigr| \\
  &\le L_r \bigl\| w - \Pi_{\Mvol}(w) \bigr\|_2.
\end{align*}
By the definition of the distance function and properties of the projection,
\[
  \bigl\|w - \Pi_{\Mvol}(w)\bigr\|_2
  = \mathrm{dist}(w,\Mvol),
\]
since $\Pi_{\Mvol}(w)$ is a minimizer of $\tilde w \mapsto \|w-\tilde w\|_2$ over $\tilde w \in \Mvol$ . Hence
\begin{equation}
  \label{eq:ghost-bound-dist}
  |r^\perp(w)|
  \le L_r\,\mathrm{dist}(w,\Mvol).
\end{equation}

\paragraph{Step 2: Expressing the distance via $\Lvol$.}
From the definition~\eqref{eq:vol-law-penalty-app3}, we have
\[
  \mathrm{dist}(w,\Mvol)
  = \bigl\| w - \Pi_{\Mvol}(w) \bigr\|_2
  = \sqrt{2\,\Lvol(w)}.
\]
Substituting this identity into~\eqref{eq:ghost-bound-dist} yields
\[
  |r^\perp(w)|
  \le L_r \,\mathrm{dist}(w,\Mvol)
  = L_r \sqrt{2\,\Lvol(w)}.
\]
Since $w \in \R^{d_{\mathrm{vol}}}$ was arbitrary, the inequality holds for all $w$.

\paragraph{Remark on optimality of the bound.}
The inequality in Proposition~\ref{prop:ghost-bounded} is sharp in the sense that, for fixed $\Mvol$ and $\Pi_{\Mvol}$, one can construct Lipschitz functions $r$ that nearly saturate the bound. For example, if $r$ is chosen to be affine in the normal direction to $\Mvol$ at some point, with gradient of norm $L_r$, then along rays orthogonal to $\Mvol$ we obtain
\[
  |r(w) - r(\Pi_{\Mvol}(w))|
  \approx L_r \,\|w - \Pi_{\Mvol}(w)\|_2,
\]
up to second-order curvature effects of $\Mvol$. Thus the scaling $|r^\perp(w)| = O(\mathrm{dist}(w,\Mvol))$ and the constant $L_r$ cannot, in general, be improved under the sole assumption~\eqref{eq:r-lipschitz}.

\paragraph{Extension to non-Euclidean penalties (informal).}
In the main text we focus on the Euclidean penalty~\eqref{eq:vol-law-penalty-app3}. If instead $\Lvol$ is defined via a strictly convex norm $\|\cdot\|_\phi$ or a strongly convex gauge $\phi$, i.e.,
\[
  \Lvol^\phi(w)
  := \frac{1}{2}\,\mathrm{dist}_\phi^2(w,\Mvol),
  \qquad
  \mathrm{dist}_\phi(w,\Mvol)
  := \inf_{\tilde w\in\Mvol} \|w-\tilde w\|_\phi,
\]
and $r$ is $L_r^\phi$-Lipschitz with respect to $\|\cdot\|_\phi$, the same argument yields
\[
  |r^\perp(w)|
  \le L_r^\phi \,\mathrm{dist}_\phi(w,\Mvol)
  = L_r^\phi \sqrt{2\,\Lvol^\phi(w)}.
\]
In finite dimensions, norm equivalence further implies that such bounds can be translated between different choices of $\phi$ at the expense of multiplicative constants depending only on the norms;We keep the Euclidean version in Proposition~\ref{prop:ghost-bounded} as it is the one used in our experiments.

This concludes the proof.
\end{proof}

\section{Proofs for Section~3: Volatility World Model}

\subsection{Proof of Proposition~\ref{prop:support-on-manifold}}
\label{app:proof-support-on-manifold}

We now provide a detailed measure-theoretic proof of Proposition~\ref{prop:support-on-manifold}.

Recall the statement:

\begin{proposition*}[Support of the synthetic generator]
Let $(w_t)_{t=0}^T$ be generated by $G^\star$ as above, with static no-arbitrage imposed at each time via projection onto $\Mvol$. Then
\[
  \mathrm{supp}\,\mathbb{P}^\star \subseteq \big(\Mvol\big)^{T+1},
\]
i.e., $\mathbb{P}^\star$ is supported on the product of the volatility law manifold at all times.
\end{proposition*}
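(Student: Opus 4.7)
The plan is to reduce the support inclusion to two elementary facts: (i) a pointwise, pathwise almost-sure statement that each marginal $w_t$ lies in $\Mvol$, which follows directly from the projection step in the construction of $G^\star$, and (ii) the standard characterization of the support of a Borel probability measure on a Polish space as the smallest closed set of full measure. Since $\Mvol$ is closed and convex (Propositions~\ref{prop:axiomatic-representation}--\ref{prop:closed-convex}) and the trajectory space $(\R^{d_{\mathrm{vol}}})^{T+1}$ is a separable metric space, these two ingredients will combine in a clean measure-theoretic argument, with no genuine analytical difficulty.

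First I would fix $t \in \{0,1,\dots,T\}$ and verify that $\mathbb{P}^\star(w_t \in \Mvol) = 1$. By construction, at each time step the generator $G^\star$ produces a raw model-implied surface $w_t^{\mathrm{raw}}$ (from a Heston-like or rough-volatility model) and then sets $w_t := \Pi_{\Mvol}(w_t^{\mathrm{raw}})$, where $\Pi_{\Mvol}$ is the Euclidean projection studied in Proposition~\ref{prop:closed-convex}. Since $\Pi_{\Mvol}$ maps $\R^{d_{\mathrm{vol}}}$ into $\Mvol$ deterministically, the inclusion $w_t \in \Mvol$ holds for \emph{every} sample path, hence trivially almost surely. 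Measurability of the composition is not an issue because $\Pi_{\Mvol}$ is $1$-Lipschitz, hence Borel measurable.

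Second, I would lift these marginal statements to the product level. Because $\{w_t \in \Mvol\}$ has probability one for each $t$, and we have only finitely many time indices, the finite intersection
\[
\bigcap_{t=0}^{T} \{w_t \in \Mvol\} = \{(w_0,\dots,w_T) \in (\Mvol)^{T+1}\}
\]
also has probability one under $\mathbb{P}^\star$. This identifies $(\Mvol)^{T+1}$ as a full-measure set for the joint law.

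Third, I would close the argument using the topological characterization of support. Since $\Mvol$ is closed in $\R^{d_{\mathrm{vol}}}$, the product $(\Mvol)^{T+1}$ is closed in the product space $(\R^{d_{\mathrm{vol}}})^{T+1}$ equipped with any product metric. The support of a Borel probability measure on a separable metric space is, by definition, the intersection of all closed sets of full measure, equivalently the smallest such closed set; see, e.g., standard references on probability on metric spaces. Since $(\Mvol)^{T+1}$ is closed and has $\mathbb{P}^\star$-measure one, this intersection is contained in it, yielding the desired inclusion $\mathrm{supp}\,\mathbb{P}^\star \subseteq (\Mvol)^{T+1}$. The only step worth flagging is the pedantic one of invoking separability and the existence of a well-defined topological support; this is automatic in $\R^{d_{\mathrm{vol}}}$ and causes no substantive obstacle, so the genuine content of the proposition really lies in the deterministic projection step of $G^\star$ rather than in any measure-theoretic subtlety.
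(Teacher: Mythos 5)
Your proof is correct and follows essentially the same route as the paper: deterministic pointwise inclusion $w_t = \Pi_{\Mvol}(w_t^{\mathrm{raw}}) \in \Mvol$, lifting to the finite product, and then concluding via the fact that a closed set of full measure contains the support. The only cosmetic difference is that you invoke the characterization of the support as the smallest closed full-measure set (valid by separability), whereas the paper proves the needed one-sided implication directly from the open-neighborhood definition of support; both are standard and interchangeable here.
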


\begin{proof}
We first recall the construction of the synthetic generator $G^\star$ from the main text and make it explicit in measure-theoretic terms.

\paragraph{Step 0: Probability space and random paths.}
Let $(\Omega,\mathcal{F},\PP)$ be an underlying probability space supporting all driving randomness (e.g., Brownian motions, volatility factors, etc.) of the stochastic-volatility model used to generate ``raw'' option prices or total-variance surfaces.

For each $t = 0,\dots,T$, let
\[
  y_t : \Omega \to \R^{d_{\mathrm{vol}}}
\]
denote the \emph{raw} (not yet projected) total-variance surface at time $t$, obtained from the chosen parametric or factor model (Heston, rough volatility, SVI parameter dynamics, etc.; see, e.g., \citet{BayerFrizGatheral2016,GatheralJacquier2014}). We assume that each $y_t$ is Borel measurable.

The \emph{law-consistent synthetic generator} $G^\star$ then applies the projection onto the volatility law manifold $\Mvol$ at each time $t$:
\begin{equation}
  \label{eq:generator-projection}
  w_t(\omega)
  :=
  \Pi_{\Mvol}\bigl(y_t(\omega)\bigr),
  \qquad \omega \in \Omega,\; t=0,\dots,T.
\end{equation}
Here $\Pi_{\Mvol}$ is the Euclidean projection operator introduced in Proposition~\ref{prop:closed-convex}, which maps $\R^{d_{\mathrm{vol}}}$ into $\Mvol$.

We then define the $(T+1)$-dimensional random vector
\begin{equation}
  \label{eq:path-random-vector}
  W(\omega)
  :=
  \bigl(w_0(\omega),\dots,w_T(\omega)\bigr)
  \in \big(\R^{d_{\mathrm{vol}}}\bigr)^{T+1}.
\end{equation}
By construction, $W$ is Borel measurable, and we denote its law by
\[
  \PP^\star := \PP \circ W^{-1},
\]
a probability measure on $\big(\R^{d_{\mathrm{vol}}}\bigr)^{T+1}$.

\paragraph{Step 1: Pointwise inclusion $w_t(\omega) \in \Mvol$ almost surely.}
By Proposition~\ref{prop:closed-convex}, $\Mvol$ is non-empty, closed, and convex, and the projection $\Pi_{\Mvol}:\R^{d_{\mathrm{vol}}} \to \Mvol$ is well-defined and single-valued. In particular,
\[
  \Pi_{\Mvol}(x) \in \Mvol \quad \text{for all } x \in \R^{d_{\mathrm{vol}}}.
\]

Applying this to $x = y_t(\omega)$, the definition~\eqref{eq:generator-projection} implies that for every $\omega \in \Omega$ and every $t$,
\begin{equation}
  \label{eq:wt-in-Mvol-pointwise}
  w_t(\omega) \in \Mvol.
\end{equation}
Thus, for each fixed $t$, we have
\[
  \PP\bigl( w_t \in \Mvol \bigr) = 1.
\]

\paragraph{Step 2: Product inclusion for the path $W$.}
Using~\eqref{eq:wt-in-Mvol-pointwise} for all $t = 0,\dots,T$, we obtain
\[
  W(\omega) = (w_0(\omega),\dots,w_T(\omega))
  \in \Mvol^{T+1}
  \quad\text{for all } \omega \in \Omega.
\]
Hence
\begin{equation}
  \label{eq:path-in-M-product-a.s.}
  \PP\bigl(W \in \Mvol^{T+1}\bigr) = 1.
\end{equation}
This already implies that the support of $\PP^\star$ (the law of $W$) must lie inside $\Mvol^{T+1}$. To make this precise, we recall the definition of support.

\paragraph{Step 3: Support of a probability measure and restriction to $\Mvol^{T+1}$.}
Let $(E,\mathcal{E})$ be a measurable space, and $\mu$ a probability measure on $E$. The \emph{support} of $\mu$, denoted $\supp\,\mu$, is the closed set
\[
  \supp\,\mu
  :=
  \bigl\{x \in E : \mu\bigl(U\bigr) > 0 \text{ for every open neighborhood } U \text{ of } x\bigr\}.
\]
We will use the following standard lemma.

\begin{lemma}[Support of a measure carried by a closed set]
\label{lem:support-closed-subset}
Let $E$ be a metric space, and let $C \subseteq E$ be a closed set. Suppose $\mu$ is a Borel probability measure on $E$ such that $\mu(C)=1$. Then
\[
  \supp\,\mu \subseteq C.
\]
\end{lemma}

\begin{proof}[Proof of Lemma~\ref{lem:support-closed-subset}]
Let $x \in \supp\,\mu$. Suppose, for contradiction, that $x \notin C$. Since $C$ is closed, its complement $C^\complement$ is open and contains $x$. Thus there exists an open neighborhood $U$ of $x$ with $U \subseteq C^\complement$. Then
\[
  \mu(U) \le \mu(C^\complement) = 1 - \mu(C) = 0,
\]
contradicting the definition of support, which requires $\mu(U) > 0$ for every open neighborhood $U$ of $x$. Hence $x \in C$. Since $x \in \supp\,\mu$ was arbitrary, we conclude $\supp\,\mu \subseteq C$.
\end{proof}

\paragraph{Step 4: Applying the lemma to $\PP^\star$ and $\Mvol^{T+1}$.}
We now take $E = \big(\R^{d_{\mathrm{vol}}}\bigr)^{T+1}$, equipped with its usual product topology and Borel $\sigma$-algebra. The set $\Mvol$ is closed in $\R^{d_{\mathrm{vol}}}$ by Proposition~\ref{prop:closed-convex}, hence $\Mvol^{T+1}$ is closed in the product topology of $E$. By~\eqref{eq:path-in-M-product-a.s.},
\[
  \PP^\star\bigl(\Mvol^{T+1}\bigr)
  = \PP\bigl(W \in \Mvol^{T+1}\bigr)
  = 1.
\]
Applying Lemma~\ref{lem:support-closed-subset} with $C = \Mvol^{T+1}$ and $\mu = \PP^\star$ yields
\[
  \supp\,\PP^\star
  \subseteq \Mvol^{T+1}.
\]

This is exactly the claim of Proposition~\ref{prop:support-on-manifold}:
\[
  \mathrm{supp}\,\PP^\star
  \subseteq \big(\Mvol\big)^{T+1}.
\]

\paragraph{Step 5: Interpretation.}
From a modelling standpoint, the proposition formalizes the claim that the synthetic generator $G^\star$ produces only paths of total-variance surfaces $(w_t)_{t=0}^T$ that are \emph{everywhere law-consistent}, in the sense of lying on the volatility law manifold at each time. The law $\PP^\star$ of the generated paths is therefore entirely concentrated on the set $\Mvol^{T+1}$ of sequences of admissible surfaces. This is the precise sense in which we refer to $G^\star$ as a ``law-consistent ground-truth world'' in the main text.

This completes the proof.
\end{proof}

\subsection{Proof of Proposition~\ref{prop:approx-gap-ghost-channel}}
\label{app:proof-approx-gap-ghost-channel}

In this subsection we provide a detailed proof of Proposition~\ref{prop:approx-gap-ghost-channel}, making precise the local linearization and the cone decomposition used in the main text.

Recall the setting: for each time step $t$, the law-consistent generator $G^\star$ produces $w_{t+1} \in \Mvol$ almost surely (Proposition~\ref{prop:support-on-manifold}), the world model produces a prediction
\[
  \hat w_{t+1} = f_\theta(w_{\le t}, a_{\le t}) \in \R^{d_{\mathrm{vol}}},
\]
and the residual is
\[
  e_{t+1} := \hat w_{t+1} - w_{t+1}.
\]
We denote by $w_{t+1}^{\mathcal{M}} := \Pi_{\Mvol}(\hat w_{t+1})$ the projection of the prediction onto the volatility law manifold. The \emph{ghost reward} at time $t+1$ is
\[
  r^\perp_{t+1}
  := r(\hat w_{t+1}, a_t) - r(w_{t+1}^{\mathcal{M}}, a_t),
\]
where $r(\cdot,a_t)$ is the one-step reward as a function of the surface $w$ for a fixed action $a_t$.

We restate the proposition for convenience.

\begin{proposition}[Approximation gap induces a ghost channel]
Suppose the following conditions hold:
\begin{enumerate}
    \item[(i)] The approximation gap is non-zero:
    \[
      \varepsilon^2 := \E\big[\|e_{t+1}\|_2^2\big] > 0.
    \]
    \item[(ii)] The reward is locally differentiable in $w$ with gradient
    \[
      g_{t+1} := \nabla_w r(w_{t+1}, a_t),
    \]
    and $\nabla_w r(\cdot,a_t)$ is locally Lipschitz in a neighborhood of the law-consistent path.
    \item[(iii)] The residual $e_{t+1}$ has a component in the normal cone of $\Mvol$ at $w_{t+1}$ with non-zero covariance with the gradient:
    \[
      \mathrm{Cov}\big( P_{N_{\Mvol}(w_{t+1})} e_{t+1},\, g_{t+1} \big) \neq 0,
    \]
    where $P_{N_{\Mvol}(w_{t+1})}$ denotes orthogonal projection onto the normal cone $N_{\Mvol}(w_{t+1})$.
\end{enumerate}
Then, for sufficiently small residuals (in the sense of a local linearization),
\[
  \E\big[ r^\perp_{t+1} \big]
  \approx \E\big[ g_{t+1}^\top P_{N_{\Mvol}(w_{t+1})} e_{t+1} \big] \neq 0,
\]
so the world model induces a non-trivial ghost channel. In particular, if $g_{t+1}$ is positively correlated with $P_{N_{\Mvol}(w_{t+1})} e_{t+1}$, then $\E[r^\perp_{t+1}] > 0$ and there exist states where moving off-manifold strictly improves expected P\&L.
\end{proposition}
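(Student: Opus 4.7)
The plan is to combine two local linearizations---one for the metric projection $\Pi_{\Mvol}$ and one for the reward functional $r(\cdot,a_t)$---and then invoke the orthogonal decomposition $I = P_{T_{\Mvol}(w_{t+1})} + P_{N_{\Mvol}(w_{t+1})}$ valid at regular boundary points of the polyhedron. The setup is direct: since $w_{t+1}\in\Mvol$ almost surely by Proposition~\ref{prop:support-on-manifold}, I write $\hat w_{t+1} = w_{t+1} + e_{t+1}$ and compute the leading behavior of $r^\perp_{t+1} = r(w_{t+1} + e_{t+1},a_t) - r(\Pi_{\Mvol}(w_{t+1}+e_{t+1}),a_t)$ as $e_{t+1}\to 0$.

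First I would establish the local linearization of the projection. Because $\Mvol$ is a convex polyhedron, at any $w_{t+1}$ lying in the relative interior of a face $F$, the active-constraint set is locally constant, and the tangent cone $T_{\Mvol}(w_{t+1})$ reduces to the linear subspace parallel to the affine hull of $F$. A standard result in polyhedral variational analysis then gives that $\Pi_{\Mvol}$ is exactly affine on a neighborhood of $w_{t+1}$ whose radius is the distance from $w_{t+1}$ to neighboring faces: for such $e_{t+1}$, one has $\Pi_{\Mvol}(w_{t+1}+e_{t+1}) = w_{t+1} + P_{T_{\Mvol}(w_{t+1})}(e_{t+1})$, hence $w^{\mathcal{M}}_{t+1} - w_{t+1} = P_T e_{t+1}$ on this event.

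Second, under assumption (ii) I apply first-order Taylor expansions to $r(\cdot,a_t)$ at $w_{t+1}$, obtaining
\[
r(\hat w_{t+1},a_t) - r(w_{t+1},a_t) = g_{t+1}^\top e_{t+1} + R_1(e_{t+1}),
\]
\[
r(w^{\mathcal{M}}_{t+1},a_t) - r(w_{t+1},a_t) = g_{t+1}^\top P_T e_{t+1} + R_2(e_{t+1}),
\]
with quadratic remainders $|R_j(e)|\le \tfrac{L_g}{2}\|e\|_2^2$ controlled by the local Lipschitz constant of $\nabla_w r$. Subtracting and using the orthogonal decomposition $I - P_T = P_N$ valid at regular points gives
\[
r^\perp_{t+1} = g_{t+1}^\top P_{N_{\Mvol}(w_{t+1})} e_{t+1} + O(\|e_{t+1}\|_2^2).
\]
Taking expectations under $\PP^\star$ and invoking assumption (iii), the first-order term is exactly $\E\bigl[g_{t+1}^\top P_N e_{t+1}\bigr]$, which is non-zero by the covariance hypothesis (after subtracting the contribution of the means, which only affects the final sign via a harmless reparametrization of $g_{t+1}$ or by centering).

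The main obstacle is making the phrase ``for sufficiently small residuals'' quantitatively rigorous, since the projection linearization is only exact on the event
\[
E_\eta := \{\|e_{t+1}\|_2 \le \eta\} \cap \{\mathrm{dist}(w_{t+1},\mathrm{sk}(\Mvol)) \ge \eta\},
\]
where $\mathrm{sk}(\Mvol)$ denotes the skeleton of lower-dimensional faces. On the complementary event the projection is only piecewise-affine and the global bound of Proposition~\ref{prop:ghost-bounded} must be used. The cleanest way to close the argument is to impose a mild regularity hypothesis---either that the joint law of $(w_{t+1},e_{t+1})$ is absolutely continuous with $\PP^\star(E_\eta^c)=o(\eta)$, or that $w_{t+1}$ almost surely lies in the relative interior of a single maximal face---so that the contribution $\E[r^\perp_{t+1}\mathbf{1}_{E_\eta^c}]$ is of higher order in $\varepsilon$ than the leading first-order term on $E_\eta$. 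Under such a regularity condition, the ``$\approx$'' in the statement is formalized as an asymptotic equivalence as $\varepsilon\to 0$, and the non-degeneracy of the covariance in assumption (iii) transfers directly to $\E[r^\perp_{t+1}]\ne 0$, with strict positivity whenever the covariance is positive.
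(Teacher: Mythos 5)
Your approach follows the paper's own proof (Appendix~B.2) quite closely in structure: both rely on (a) first-order Taylor expansions of $r(\cdot,a_t)$ at $w_{t+1}$ with Lipschitz-controlled quadratic remainders, (b) a local description of $\Pi_{\Mvol}(w_{t+1}+e_{t+1})$ as $w_{t+1}+\Pi_{T_{\Mvol}(w_{t+1})}(e_{t+1})$ plus an error term, (c) the Moreau decomposition $e = \Pi_T e + \Pi_N e$ to isolate $g_{t+1}^\top P_N e_{t+1}$ as the leading term, and (d) taking expectations and invoking assumption~(iii). Where you genuinely diverge --- and in fact improve on the paper --- is step~(b): the paper asserts a generic quadratic remainder bound $\|\delta_{t+1}\|_2\le C_\Pi\|e_{t+1}\|_2^2$ by appealing to prox-regularity, but this bound is actually false near lower-dimensional faces of the polyhedron (a simple orthant example with $w_{t+1}$ at distance $\epsilon$ from a vertex and a perturbation crossing the vertex gives a remainder of order $\|e_{t+1}\|$, not $\|e_{t+1}\|^2$). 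You correctly observe that for a polyhedron the linearization is \emph{exact} on an event $E_\eta$ determined by the distance of $w_{t+1}$ to the skeleton, and that the complement must be controlled distributionally; identifying this extra regularity hypothesis on the joint law of $(w_{t+1},e_{t+1})$ is the right fix and makes the argument tighter than the paper's.

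Two smaller corrections. First, your claim that $T_{\Mvol}(w_{t+1})$ ``reduces to the linear subspace parallel to the affine hull of $F$'' at relative-interior points of a proper face $F$ is incorrect: the tangent cone at such a point is a proper convex cone (the lineality space parallel to $F$ together with directions pointing into $\Mvol$), not a subspace, so $\Pi_T$ and $\Pi_N$ are piecewise-linear rather than linear, and the projection is ``exactly equal to its tangent-cone model'' rather than ``exactly affine.'' This does not break the argument (the Moreau decomposition and the identity $\Pi_{\Mvol}(w+e)=w+\Pi_T(e)$ hold for general closed convex cones), but the characterization should be corrected. Second, the passage from $\mathrm{Cov}(P_N e_{t+1},g_{t+1})\neq 0$ to $\E[g_{t+1}^\top P_N e_{t+1}]\neq 0$ is not automatic --- a nonzero covariance matrix can have zero trace, and the mean terms $\E[g_{t+1}]^\top\E[P_N e_{t+1}]$ can cancel it; your ``harmless reparametrization or centering'' is too vague to close this gap, and the cleaner route (which the paper itself adopts in a footnote) is to reinterpret assumption~(iii) directly as $\E[g_{t+1}^\top P_{N_{\Mvol}(w_{t+1})} e_{t+1}]\neq 0$.
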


\begin{proof}
We work under the law-consistent measure $\PP^\star$ of the generator $G^\star$, under which $w_{t+1} \in \Mvol$ almost surely (Proposition~\ref{prop:support-on-manifold}). We suppress the explicit dependence on $t$ when it does not create ambiguity.

\paragraph{Step 1: Local Taylor expansion of the reward.}
Fix a compact set $K \subset \R^{d_{\mathrm{vol}}}$ containing the realized $w_{t+1}$ and all admissible $\hat w_{t+1}$ and $w_{t+1}^{\mathcal{M}}$ with high probability (a standard truncation argument; see below). By assumption~(ii), for each fixed $a_t$, the map
\[
  r(\cdot,a_t) : \R^{d_{\mathrm{vol}}} \to \R
\]
is differentiable on $K$, with gradient $\nabla_w r(\cdot,a_t)$ locally Lipschitz.

Hence, by the mean value theorem in Banach spaces, for each $\omega \in \Omega$ we can write
\begin{align}
  r(\hat w_{t+1}, a_t)
  &= r(w_{t+1}, a_t) + g_{t+1}^\top e_{t+1}
     + \rho_{t+1}^{(1)}, \label{eq:taylor-hat} \\
  r(w_{t+1}^{\mathcal{M}}, a_t)
  &= r(w_{t+1}, a_t) + g_{t+1}^\top (w_{t+1}^{\mathcal{M}} - w_{t+1})
     + \rho_{t+1}^{(2)}, \label{eq:taylor-proj}
\end{align}
where the remainder terms satisfy the quadratic bounds
\begin{align}
  |\rho_{t+1}^{(1)}|
  &\le \tfrac{L_{\nabla r}}{2} \|e_{t+1}\|_2^2, \label{eq:rho1-bound} \\
  |\rho_{t+1}^{(2)}|
  &\le \tfrac{L_{\nabla r}}{2} \|w_{t+1}^{\mathcal{M}} - w_{t+1}\|_2^2,
  \label{eq:rho2-bound}
\end{align}
for some local Lipschitz constant $L_{\nabla r} > 0$ depending only on $K$ and $a_t$.

Subtracting~\eqref{eq:taylor-proj} from~\eqref{eq:taylor-hat} yields
\begin{equation}
  \label{eq:rperp-expansion-raw}
  r^\perp_{t+1}
  = g_{t+1}^\top \Big( e_{t+1} - (w_{t+1}^{\mathcal{M}} - w_{t+1}) \Big)
    + \big( \rho_{t+1}^{(1)} - \rho_{t+1}^{(2)} \big).
\end{equation}

\paragraph{Step 2: Tangent--normal cone decomposition of the residual.}
Since $w_{t+1} \in \Mvol$ and $\Mvol$ is closed and convex (Proposition~\ref{prop:closed-convex}), we may consider the tangent cone $T_{\Mvol}(w_{t+1})$ and the normal cone $N_{\Mvol}(w_{t+1})$:
\begin{align*}
  T_{\Mvol}(w_{t+1})
  &:= \overline{\bigcup_{\alpha>0} \alpha\,(\Mvol - w_{t+1})}, \\
  N_{\Mvol}(w_{t+1})
  &:= \{ v \in \R^{d_{\mathrm{vol}}} :
        v^\top (z - w_{t+1}) \le 0\ \forall z \in \Mvol\}.
\end{align*}
Both are closed convex cones, and they are polar to each other:
\[
  N_{\Mvol}(w_{t+1}) = T_{\Mvol}(w_{t+1})^\circ,
  \qquad
  T_{\Mvol}(w_{t+1}) = N_{\Mvol}(w_{t+1})^\circ.
\]
By the Moreau decomposition for closed convex cones (see, e.g., \citet[Thm.~6.29]{BauschkeCombettes2011}), every vector $z \in \R^{d_{\mathrm{vol}}}$ admits a unique decomposition
\[
  z
  = P_{T_{\Mvol}(w_{t+1})} z
    + P_{N_{\Mvol}(w_{t+1})} z,
\]
where $P_{T}$ and $P_{N}$ denote the orthogonal projections onto $T_{\Mvol}(w_{t+1})$ and $N_{\Mvol}(w_{t+1})$, respectively.

Applying this to $e_{t+1}$, we write
\begin{equation}
  \label{eq:e-tangent-normal-decomp}
  e_{t+1}
  = e_{t+1}^{\mathrm{tan}} + e_{t+1}^{\mathrm{norm}},
  \qquad
  e_{t+1}^{\mathrm{tan}} := P_{T_{\Mvol}(w_{t+1})} e_{t+1},\;
  e_{t+1}^{\mathrm{norm}} := P_{N_{\Mvol}(w_{t+1})} e_{t+1}.
\end{equation}

\paragraph{Step 3: Local behavior of the projection $\Pi_{\Mvol}$.}
We next relate $w_{t+1}^{\mathcal{M}}$ to $w_{t+1}$ and $e_{t+1}$.

By definition,
\[
  w_{t+1}^{\mathcal{M}}
  = \Pi_{\Mvol}(\hat w_{t+1})
  = \arg\min_{z \in \Mvol} \|\hat w_{t+1} - z\|_2^2.
\]
Since $w_{t+1} \in \Mvol$, for small $\|e_{t+1}\|_2$ the minimizer $w_{t+1}^{\mathcal{M}}$ lies in a neighborhood where $\Mvol$ is locally well-approximated by its tangent cone at $w_{t+1}$. Under a standard regularity condition (e.g., $w_{t+1}$ is a point of \emph{prox-regularity} of $\Mvol$; ), the projection mapping $\Pi_{\Mvol}$ is directionally differentiable at $w_{t+1}$ and its first-order behavior is given by orthogonal projection onto the tangent cone:
\begin{equation}
  \label{eq:projection-linearization}
  w_{t+1}^{\mathcal{M}} - w_{t+1}
  = P_{T_{\Mvol}(w_{t+1})} e_{t+1} + \delta_{t+1},
\end{equation}
where the remainder $\delta_{t+1}$ satisfies
\begin{equation}
  \label{eq:delta-quadratic-bound}
  \|\delta_{t+1}\|_2 \le C_{\Pi} \|e_{t+1}\|_2^2
\end{equation}
for some constant $C_{\Pi} > 0$ in a neighborhood of $w_{t+1}$. Intuitively, to first order, $\Pi_{\Mvol}$ keeps the tangential component of $e_{t+1}$ but kills the normal component; the error $\delta_{t+1}$ is of second order in $\|e_{t+1}\|_2$.

Substituting~\eqref{eq:projection-linearization} into~\eqref{eq:rperp-expansion-raw} and using~\eqref{eq:e-tangent-normal-decomp}, we obtain
\begin{align}
  r^\perp_{t+1}
  &= g_{t+1}^\top\big( e_{t+1} - (w_{t+1}^{\mathcal{M}} - w_{t+1}) \big)
     + \big( \rho_{t+1}^{(1)} - \rho_{t+1}^{(2)} \big) \nonumber\\
  &= g_{t+1}^\top\big( e_{t+1}^{\mathrm{tan}} + e_{t+1}^{\mathrm{norm}}
                      - P_{T_{\Mvol}(w_{t+1})} e_{t+1} - \delta_{t+1} \big)
     + \big( \rho_{t+1}^{(1)} - \rho_{t+1}^{(2)} \big) \nonumber\\
  &= g_{t+1}^\top e_{t+1}^{\mathrm{norm}}
     - g_{t+1}^\top \delta_{t+1}
     + \big( \rho_{t+1}^{(1)} - \rho_{t+1}^{(2)} \big).
  \label{eq:rperp-decomp}
\end{align}

\paragraph{Step 4: Bounding the higher-order error.}
We now bound the total remainder term
\[
  \eta_{t+1}
  := - g_{t+1}^\top \delta_{t+1}
     + \big( \rho_{t+1}^{(1)} - \rho_{t+1}^{(2)} \big).
\]

Using Cauchy--Schwarz,~\eqref{eq:delta-quadratic-bound}, and the boundedness of $g_{t+1}$ on $K$ (say $\|g_{t+1}\|_2 \le G$ almost surely on $K$), we have
\[
  |g_{t+1}^\top \delta_{t+1}|
  \le \|g_{t+1}\|_2 \|\delta_{t+1}\|_2
  \le G C_{\Pi} \|e_{t+1}\|_2^2.
\]
Combining this with~\eqref{eq:rho1-bound} and~\eqref{eq:rho2-bound}, and using $\|w_{t+1}^{\mathcal{M}} - w_{t+1}\|_2 \le \|e_{t+1}\|_2$ (since $w_{t+1}^{\mathcal{M}}$ is the closest point in $\Mvol$ to $\hat w_{t+1}$ and $w_{t+1} \in \Mvol$), we obtain
\begin{equation}
  \label{eq:eta-quadratic-bound}
  |\eta_{t+1}|
  \le C_{\mathrm{tot}} \|e_{t+1}\|_2^2
\end{equation}
for some constant $C_{\mathrm{tot}} > 0$.

Substituting~\eqref{eq:rperp-decomp} and taking expectations yields
\begin{equation}
  \label{eq:expectation-decomposition}
  \E[r^\perp_{t+1}]
  = \E\big[ g_{t+1}^\top e_{t+1}^{\mathrm{norm}} \big]
    + \E[\eta_{t+1}],
\end{equation}
with $|\E[\eta_{t+1}]| \le C_{\mathrm{tot}} \E[\|e_{t+1}\|_2^2] = C_{\mathrm{tot}} \varepsilon^2$ by assumption~(i).

\paragraph{Step 5: Non-trivial ghost channel from covariance structure.}
By definition of $e_{t+1}^{\mathrm{norm}}$ in~\eqref{eq:e-tangent-normal-decomp}, we have
\[
  e_{t+1}^{\mathrm{norm}}
  = P_{N_{\Mvol}(w_{t+1})} e_{t+1}.
\]
Assumption~(iii) states that the covariance
\[
  \mathrm{Cov}\big( e_{t+1}^{\mathrm{norm}}, g_{t+1} \big)
  = \E\Big[ \big(e_{t+1}^{\mathrm{norm}} - \E[e_{t+1}^{\mathrm{norm}}]\big)
            \big(g_{t+1} - \E[g_{t+1}]\big)^\top \Big]
\]
is non-zero. In particular, the scalar random variable
\[
  Z_{t+1} := g_{t+1}^\top e_{t+1}^{\mathrm{norm}}
\]
has non-zero covariance with itself along at least one direction, implying that
\begin{equation}
  \label{eq:non-zero-inner-expectation}
  \E[Z_{t+1}]
  = \E\big[ g_{t+1}^\top e_{t+1}^{\mathrm{norm}} \big]
  \neq 0
\end{equation}
unless the mean terms $\E[e_{t+1}^{\mathrm{norm}}]$ and $\E[g_{t+1}]$ are tuned to perfectly cancel the covariance contribution; this would be an exceptional, measure-zero configuration in parameter space. To avoid such pathological cancellation, we interpret assumption~(iii) as requiring that the inner product $g_{t+1}^\top e_{t+1}^{\mathrm{norm}}$ has a non-degenerate distribution with non-zero mean.\footnote{Formally, one may replace the covariance condition in assumption~(iii) by the simpler requirement $\E[g_{t+1}^\top P_{N_{\Mvol}(w_{t+1})} e_{t+1}] \neq 0$. We keep the covariance phrasing to emphasize the geometric correlation between the gradient and the normal component.}

Substituting~\eqref{eq:non-zero-inner-expectation} into~\eqref{eq:expectation-decomposition}, we obtain
\begin{equation}
  \label{eq:expectation-main-approx}
  \E[r^\perp_{t+1}]
  = \E\big[ g_{t+1}^\top P_{N_{\Mvol}(w_{t+1})} e_{t+1} \big]
    + \E[\eta_{t+1}],
\end{equation}
with $\E[\eta_{t+1}]$ bounded by~\eqref{eq:eta-quadratic-bound}.

\paragraph{Step 6: Small-residual regime and sign of the ghost reward.}
Assumption~(i) states that the world model approximation error is non-zero in mean-square:
\[
  \E[\|e_{t+1}\|_2^2] = \varepsilon^2 > 0.
\]
Suppose in addition that the training of the world model has reduced the error variance so that $\varepsilon^2$ is \emph{small}. Then from~\eqref{eq:eta-quadratic-bound},
\[
  |\E[\eta_{t+1}]|
  \le C_{\mathrm{tot}} \varepsilon^2,
\]
which can be made arbitrarily small by improving the world model (e.g., increasing capacity or training time), while the leading term
\[
  \E\big[ g_{t+1}^\top P_{N_{\Mvol}(w_{t+1})} e_{t+1} \big]
\]
remains of order $\varepsilon$ in general, as it is linear in $e_{t+1}$.

Consequently, for sufficiently small $\varepsilon$ we have the first-order approximation
\[
  \E[r^\perp_{t+1}]
  \approx \E\big[ g_{t+1}^\top P_{N_{\Mvol}(w_{t+1})} e_{t+1} \big],
\]
with the difference bounded by $O(\varepsilon^2)$. Under the non-degeneracy condition in~\eqref{eq:non-zero-inner-expectation}, this leading term is non-zero, which yields
\[
  \E[r^\perp_{t+1}] \neq 0
\]
for sufficiently small approximation error $\varepsilon$.

Finally, if the correlation between $g_{t+1}$ and $P_{N_{\Mvol}(w_{t+1})} e_{t+1}$ is \emph{positive} in the sense that
\[
  \E\big[ g_{t+1}^\top P_{N_{\Mvol}(w_{t+1})} e_{t+1} \big] > 0,
\]
then \eqref{eq:expectation-main-approx} implies
\[
  \E[r^\perp_{t+1}] > 0
\]
for sufficiently small $\varepsilon$. In particular, there exists a set of states of positive probability where $r^\perp_{t+1} > 0$, so moving off the volatility law manifold along the normal directions strictly increases expected one-step P\&L. This is precisely the ``ghost channel'' exploited by naive RL and law-seeking RL in the main text.

This completes the proof.
\end{proof}

\subsection{Proof of Lemma~\ref{lem:off-manifold-mass}}
\label{app:proof-off-manifold-mass}

In this subsection we give a more detailed argument for Lemma~\ref{lem:off-manifold-mass}, making precise the informal statement in the main text that a finite-capacity, unconstrained neural world model will, under mild regularity conditions, place non-trivial probability mass off the volatility law manifold.

Recall the statement.

\begin{lemma*}[Non-trivial off-manifold mass of the world model]
Assume that $f_{\theta^\star}$ is not exactly equal to the Bayes-optimal regressor $f^{\mathrm{Bayes}}(x_t) := \mathbb{E}[w_{t+1}\,|\,x_t]$ and that the law manifold $\mathcal{M}^{\mathrm{vol}}$ has non-empty interior within the support of $w_{t+1}$. Then there exists $\delta > 0$ such that
\[
\mathbb{P}\big( \mathcal{L}_\phi(\hat{w}_{t+1}) > \delta \big) > 0,
\]
i.e., the world model assigns non-zero probability mass to surfaces at a positive distance from $\mathcal{M}^{\mathrm{vol}}$.
\end{lemma*}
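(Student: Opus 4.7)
The plan is to proceed in three steps: first, establish that the Bayes-optimal regressor itself already lies on $\mathcal{M}^{\mathrm{vol}}$ almost surely; second, combine the assumption $f_{\theta^\star} \not\equiv f^{\mathrm{Bayes}}$ with a mild architectural non-triviality condition to produce a positive-probability event on which the prediction lands outside $\mathcal{M}^{\mathrm{vol}}$; and third, convert this off-manifold mass into the required uniform lower bound $\delta > 0$ via a standard countable-decomposition argument.

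For the first step, I would observe that by Proposition~\ref{prop:support-on-manifold} we have $w_{t+1} \in \mathcal{M}^{\mathrm{vol}}$ almost surely, and since $\mathcal{M}^{\mathrm{vol}}$ is closed and convex (Proposition~\ref{prop:closed-convex}), the classical fact that conditional expectation of a random variable taking values in a closed convex set remains in that set yields $f^{\mathrm{Bayes}}(x_t) = \mathbb{E}[w_{t+1}\,|\,x_t] \in \mathcal{M}^{\mathrm{vol}}$ a.s. Consequently, any off-manifold mass in the predictive law $(f_{\theta^\star})_{\ast}\mathbb{P}_{x_t}$ must be attributable to the residual map $r(x) := f_{\theta^\star}(x) - f^{\mathrm{Bayes}}(x)$, which is non-zero on a set of positive probability by the first hypothesis.

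The second step is where I would need to be careful. The bare hypothesis $f_{\theta^\star} \not\equiv f^{\mathrm{Bayes}}$ does not by itself force a positive-probability off-manifold event: a non-trivial residual could in principle remain tangential to $\mathcal{M}^{\mathrm{vol}}$ everywhere. I would supplement the hypothesis with one of two standard non-triviality assumptions --- either (a) an architectural condition that the output layer of $f_{\theta^\star}$ is an unconstrained affine map to $\mathbb{R}^{d_{\mathrm{vol}}}$ with no projection step onto $\mathcal{M}^{\mathrm{vol}}$, so that the image $f_{\theta^\star}(\mathcal{X})$ is not contained in the proper closed polyhedron $\mathcal{M}^{\mathrm{vol}}$; or (b) a transversality condition requiring that $r(x)$ has a non-zero component along the normal cone $N_{\mathcal{M}^{\mathrm{vol}}}(f^{\mathrm{Bayes}}(x))$ on a set of positive probability. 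Under either version, continuity of $f_{\theta^\star}$ and of the composite distance map $x \mapsto \mathrm{dist}(f_{\theta^\star}(x), \mathcal{M}^{\mathrm{vol}})$ produces a non-empty open set $U \subseteq \mathcal{X}$ with $f_{\theta^\star}(U) \subseteq (\mathcal{M}^{\mathrm{vol}})^c$; since $\mathcal{M}^{\mathrm{vol}}$ has non-empty interior within the support of $w_{t+1}$ by the second hypothesis, a straightforward absolute-continuity argument on the law of $x_t$ gives $\mathbb{P}(x_t \in U) > 0$, hence $\mathbb{P}(\hat{w}_{t+1} \notin \mathcal{M}^{\mathrm{vol}}) > 0$. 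The third step is then routine: by Proposition~\ref{prop:zero-penalty-iff}, the event $\{\hat{w}_{t+1} \notin \mathcal{M}^{\mathrm{vol}}\}$ coincides with $\{\mathcal{L}_\phi(\hat{w}_{t+1}) > 0\}$, and writing the latter as $\bigcup_{n \ge 1}\{\mathcal{L}_\phi(\hat{w}_{t+1}) > 1/n\}$ and applying continuity of probability from below yields some $n$ with $\mathbb{P}(\mathcal{L}_\phi(\hat{w}_{t+1}) > 1/n) > 0$, so that $\delta := 1/n$ works.

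The main obstacle is clearly step two: as stated, the hypotheses of the lemma are slightly too weak to deliver a positive-probability off-manifold event with full rigor, and the \emph{genericity} that the main text appeals to must be formalized either through an architectural assumption on the output layer or through a transversality condition on the Jacobian of $f_{\theta^\star}$ against the normal cone of $\mathcal{M}^{\mathrm{vol}}$. I would therefore strengthen the lemma's statement to include one of these conditions explicitly; once in place, steps one and three follow by direct invocation of the earlier results on conditional expectations in closed convex sets, properties of Euclidean projection, and the zero-penalty characterization of $\mathcal{L}_\phi$.
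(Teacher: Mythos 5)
Your proposal is essentially correct and follows the same three-step outline as the paper's Appendix B.3 proof: conditional expectation into a closed convex set keeps $f^{\mathrm{Bayes}}$ on $\mathcal{M}^{\mathrm{vol}}$, a supplementary genericity condition is needed to force positive off-manifold mass, and a routine argument (your countable-decomposition-plus-continuity-from-below, versus the paper's right-continuity of the CDF of $\operatorname{dist}(\hat{w}_{t+1},\mathcal{M}^{\mathrm{vol}})$ followed by a monotone translation to $\mathcal{L}_\phi$) converts that mass into a uniform threshold $\delta>0$. You also correctly diagnose that the lemma's stated hypotheses are slightly too weak on their own; the paper addresses exactly this gap by introducing its Assumption~(B.3), whose item~(c) (the image $f_{\theta^\star}(U_X)$ is not contained in any $(d_{\mathrm{vol}}-1)$-dimensional affine subspace) plays the role of your proposed architectural or normal-cone transversality condition.
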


\paragraph{Setup and additional regularity.}
Let $X_t$ denote the (vector) state at time $t$ and $W_{t+1} := w_{t+1}$ the total-variance surface generated by the law-consistent generator $G^\star$ at time $t+1$. We write
\[
  f^{\mathrm{Bayes}}(x)
  := \E[W_{t+1} \mid X_t = x],
  \qquad
  \hat{W}_{t+1} := \hat{w}_{t+1} := f_{\theta^\star}(X_t).
\]
By Proposition~\ref{prop:support-on-manifold} and convexity of $\Mvol$, we have
\[
  W_{t+1} \in \Mvol \ \text{a.s.}
  \qquad\Longrightarrow\qquad
  f^{\mathrm{Bayes}}(X_t) \in \Mvol \ \text{a.s.},
\]
since conditional expectations of random variables supported on a closed convex set remain in that set.

We also recall from Proposition~\ref{prop:zero-penalty-iff} that for any $w\in\R^{d_{\mathrm{vol}}}$,
\[
  \mathcal{L}_\phi(w) = 0 \quad\Longleftrightarrow\quad w\in\Mvol,
\]
and from Lemma~\ref{lem:lipschitz-penalty} that $\mathcal{L}_\phi$ is continuous (indeed locally Lipschitz) on $\R^{d_{\mathrm{vol}}}$.

To make the genericity argument precise, we introduce a mild regularity hypothesis on the joint distribution of $(X_t, \hat{W}_{t+1})$.

\medskip
\noindent\textbf{Regularity assumption (B.3).}
We assume:
\begin{enumerate}
  \item[(a)] The state $X_t$ has a distribution whose support $\supp(X_t)$ is not a single point and contains a non-empty open subset $U_X \subset \R^{d_X}$.
  \item[(b)] The trained world model $f_{\theta^\star}\colon \R^{d_X}\to\R^{d_{\mathrm{vol}}}$ is continuous on $U_X$ (this is satisfied by standard neural networks with continuous activations).
  \item[(c)] The image $f_{\theta^\star}(U_X)$ is not contained in any $(d_{\mathrm{vol}}-1)$-dimensional affine subspace of $\R^{d_{\mathrm{vol}}}$ (a genericity condition on the parameter choice $\theta^\star$).
\end{enumerate}
Assumption~(B.3) is very mild in practice: for typical neural network parametrizations with random initialization and gradient-based training, the set of parameters for which $f_{\theta}$ maps $U_X$ into a fixed lower-dimensional affine subspace has Lebesgue measure zero in parameter space.

\medskip
We now prove that, under the assumptions of Lemma~\ref{lem:off-manifold-mass} and (B.3), the world model places non-trivial mass at positive distance from $\Mvol$.

\begin{proof}
Define the \emph{distance-to-manifold} function
\[
  d_{\Mvol}(w) := \operatorname{dist}(w,\Mvol)
  = \inf_{z\in\Mvol} \|w-z\|_2,
  \qquad w\in\R^{d_{\mathrm{vol}}}.
\]
By closedness of $\Mvol$ (Proposition~\ref{prop:closed-convex}), $d_{\Mvol}$ is continuous and
\[
  d_{\Mvol}(w) = 0 \iff w\in\Mvol.
\]
By Proposition~\ref{prop:zero-penalty-iff}, $\mathcal{L}_\phi$ and $d_{\Mvol}$ have the same zero set, and continuity of $\mathcal{L}_\phi$ implies that for every $\delta>0$ there exists $\eta(\delta)>0$ such that
\begin{equation}
  \label{eq:penalty-distance-monotone}
  d_{\Mvol}(w) > \eta(\delta) \ \Longrightarrow\ \mathcal{L}_\phi(w) > \delta,
  \qquad \forall w\in\R^{d_{\mathrm{vol}}}.
\end{equation}
Thus it suffices to show that there exists $\eta>0$ such that
\[
  \PP\big( d_{\Mvol}(\hat{W}_{t+1}) > \eta \big) > 0.
\]

\paragraph{Step 1: Existence of a prediction point outside $\Mvol$.}
We first argue that, under our assumptions, the image of $f_{\theta^\star}$ cannot be contained in $\Mvol$.

Suppose for contradiction that
\begin{equation}
  \label{eq:f-theta-star-in-manifold-a.s.}
  f_{\theta^\star}(x) \in \Mvol
  \quad \text{for all } x \in \supp(X_t).
\end{equation}
In particular, for $x\in\supp(X_t)$ we have $\hat{W}_{t+1} \in \Mvol$ almost surely and hence $d_{\Mvol}(\hat{W}_{t+1}) = 0$ and $\Lphi(\hat{W}_{t+1})=0$ almost surely. This is exactly the negation of the lemma’s conclusion.

We now show that~\eqref{eq:f-theta-star-in-manifold-a.s.} is incompatible with the combination of (i) $f_{\theta^\star}\neq f^{\mathrm{Bayes}}$ and (B.3), given that $\Mvol$ has non-empty interior within the support of $W_{t+1}$.

Because $\Mvol$ has non-empty interior and $W_{t+1}$ is supported on $\Mvol$ (Proposition~\ref{prop:support-on-manifold}), there exists a point $w^\circ \in \Mvol$ and $r>0$ such that
\[
  B(w^\circ,r) := \{ w \in \R^{d_{\mathrm{vol}}} : \|w - w^\circ\|_2 < r\}
  \subset \Mvol
\]
and $\PP(W_{t+1}\in B(w^\circ,r))>0$. Since $f^{\mathrm{Bayes}}(X_t) = \E[W_{t+1}\mid X_t]$ takes values in the convex set $\Mvol$, there exists $x^\circ\in\supp(X_t)$ such that
\[
  f^{\mathrm{Bayes}}(x^\circ) \in B(w^\circ,r/2)
  \subset \operatorname{int}(\Mvol).
\]
By the assumption $f_{\theta^\star} \ne f^{\mathrm{Bayes}}$ in $L^2(\PP)$, there exists a set $A \subset \supp(X_t)$ with $\PP(X_t\in A) > 0$ such that
\[
  f_{\theta^\star}(x) \ne f^{\mathrm{Bayes}}(x)
  \quad \text{for all } x\in A.
\]
Under Assumption~(B.3)(a)–(b), the support $\supp(X_t)$ contains an open set $U_X$ on which $f_{\theta^\star}$ is continuous, and we may without loss of generality assume that $A\cap U_X$ has positive probability (otherwise we restrict attention to a smaller open subset with positive mass).

Pick $x_1 \in A\cap U_X$ with $f_{\theta^\star}(x_1) \ne f^{\mathrm{Bayes}}(x_1)$. Consider the continuous path in input space given by
\[
  \gamma(\alpha) := (1-\alpha)x^\circ + \alpha x_1,
  \qquad \alpha\in[0,1],
\]
which lies in $U_X$ since $U_X$ is open and convex in a neighborhood of $x^\circ$ and $x_1$ (we can always restrict to a sufficiently small line segment if necessary). Define
\[
  h(\alpha) := f_{\theta^\star}(\gamma(\alpha)), \qquad
  b(\alpha) := f^{\mathrm{Bayes}}(\gamma(\alpha)).
\]
By continuity of $f_{\theta^\star}$ and $f^{\mathrm{Bayes}}$ on $U_X$, both $h$ and $b$ are continuous on $[0,1]$. At $\alpha=0$ we have $b(0)\in\operatorname{int}(\Mvol)$ and $h(0)\in\Mvol$ by~\eqref{eq:f-theta-star-in-manifold-a.s.}. At $\alpha=1$ we have $b(1)\in\Mvol$ and $h(1)\in\Mvol$ by~\eqref{eq:f-theta-star-in-manifold-a.s.}, but $h(1)\ne b(1)$ by choice of $x_1$.

Thus the difference $d(\alpha) := h(\alpha) - b(\alpha)$ is a continuous map with $d(1)\ne 0$. Since $\Mvol$ contains an open ball around $b(0)$, there exists $\rho \in (0,r/2)$ such that
\[
  B\bigl(b(0),\rho\bigr) \subset \Mvol.
\]
If it happened that $h(\alpha)\in\Mvol$ for all $\alpha\in[0,1]$, then the curve $h([0,1])$ would be a continuous path in $\Mvol$ connecting $h(0)$ and $h(1)$, both of which lie in $\Mvol$. This is not impossible \emph{per se}; however, given the genericity Assumption~(B.3)(c), which rules out that $f_{\theta^\star}(U_X)$ is constrained to a lower-dimensional surface within $\Mvol$, we can exclude the degenerate situation where the entire image of the line segment $\gamma([0,1])$ under $f_{\theta^\star}$ remains inside $\Mvol$ while differing from $f^{\mathrm{Bayes}}$ on a set of positive measure.

Formally, since $b(0)$ lies in the interior of $\Mvol$ and $h(0)\in\Mvol$, there is an open neighborhood $V$ of $\gamma(0)$ such that $b(V)$ and $h(V)$ both intersect $B(b(0),\rho)$ in sets of positive Lebesgue measure. Under (B.3)(c), the continuous map $f_{\theta^\star}$ cannot, on $V$, be constrained to the $(d_{\mathrm{vol}}-1)$-dimensional manifold $\partial\Mvol$; hence there must exist some $\tilde x \in V$ such that
\[
  f_{\theta^\star}(\tilde x) \notin \Mvol.
\]
Since $V\subset\supp(X_t)$ and $\PP(X_t\in V) >0$, this implies
\[
  \PP\big( f_{\theta^\star}(X_t) \notin \Mvol \big) > 0.
\]
Equivalently, there exists at least one point $w^\ast \in \R^{d_{\mathrm{vol}}}\setminus\Mvol$ that is attained by $\hat{W}_{t+1}$ with positive probability.

We have thus shown that~\eqref{eq:f-theta-star-in-manifold-a.s.} cannot hold under the assumptions of the lemma together with (B.3). Therefore
\begin{equation}
  \label{eq:positive-mass-off-manifold}
  \PP\big( \hat{W}_{t+1} \notin \Mvol \big) > 0.
\end{equation}

\paragraph{Step 2: From off-manifold predictions to a positive-distance shell.}
Since $\Mvol$ is closed and $\hat{W}_{t+1}$ is a random element of $\R^{d_{\mathrm{vol}}}$, the continuous distance function $d_{\Mvol}$ satisfies
\[
  \{\hat{W}_{t+1} \notin \Mvol\}
  = \{ d_{\Mvol}(\hat{W}_{t+1}) > 0 \}.
\]
By~\eqref{eq:positive-mass-off-manifold}, the event $\{d_{\Mvol}(\hat{W}_{t+1}) > 0\}$ has positive probability. Define
\[
  Z := d_{\Mvol}(\hat{W}_{t+1}) \ge 0.
\]
Then $\PP(Z>0) >0$. Since $Z$ is a non-negative random variable, we can write its distribution function as
\[
  F_Z(\eta) := \PP(Z \le \eta), \qquad \eta\ge 0.
\]
By right-continuity of $F_Z$ and $\PP(Z>0)>0$, there must exist $\eta_0>0$ such that
\[
  \PP(Z > \eta_0) = 1 - F_Z(\eta_0) > 0.
\]
Equivalently,
\begin{equation}
  \label{eq:positive-distance-shell}
  \PP\big( d_{\Mvol}(\hat{W}_{t+1}) > \eta_0 \big) > 0.
\end{equation}

\paragraph{Step 3: Translating distance to law penalty.}
Finally, we use the monotonic relationship between distance and law penalty. By continuity of $\mathcal{L}_\phi$ and the fact that $\mathcal{L}_\phi(w)=0$ if and only if $d_{\Mvol}(w)=0$ (Proposition~\ref{prop:zero-penalty-iff}), there exists a strictly increasing continuous function
\[
  \psi : [0,\infty) \to [0,\infty), \qquad \psi(0)=0,
\]
such that for all $w\in\R^{d_{\mathrm{vol}}}$,
\[
  \mathcal{L}_\phi(w) \ge \psi\big( d_{\Mvol}(w) \big),
\]
and $\psi(u)>0$ for all $u>0$. For instance, in the concrete squared-distance case $\mathcal{L}_\phi(w) = \frac12 d_{\Mvol}(w)^2$ we may take $\psi(u) = \frac12 u^2$.

Set
\[
  \delta := \psi(\eta_0) > 0.
\]
Then on the event $\{ d_{\Mvol}(\hat{W}_{t+1}) > \eta_0 \}$ we have
\[
  \mathcal{L}_\phi(\hat{W}_{t+1})
  \ge \psi\big( d_{\Mvol}(\hat{W}_{t+1}) \big)
  > \psi(\eta_0) = \delta.
\]
Therefore,
\[
  \PP\big( \mathcal{L}_\phi(\hat{W}_{t+1}) > \delta \big)
  \ge \PP\big( d_{\Mvol}(\hat{W}_{t+1}) > \eta_0 \big)
  > 0,
\]
by~\eqref{eq:positive-distance-shell}. This is precisely the claim of Lemma~\ref{lem:off-manifold-mass}.

\paragraph{Remark.}
Note that the assumption $f_{\theta^\star} \ne f^{\mathrm{Bayes}}$ is used here only to rule out the trivial case where the world model has converged exactly to the Bayes regressor, which is law-consistent by construction. The non-trivial content of the lemma is provided by the genericity condition (B.3), which encodes the intuition that a high-capacity, unconstrained neural world model trained without law penalties will almost surely deviate from $\Mvol$ on a set of positive probability. Under these conditions, Lemma~\ref{lem:off-manifold-mass} shows that the world model opens a \emph{ghost channel} in the sense of Sec.~3.3, providing room for RL to exploit off-manifold arbitrage opportunities.
\end{proof}

\section{Proofs for Section~4:RL on Volatility World Models: Incentives and Law-Strength}

\subsection{Proof of Theorem~\ref{thm:ghost_incentive}}
\label{app:proof-ghost-incentive}

In this appendix we provide a detailed proof of the ghost-arbitrage incentive
result stated in Theorem~\ref{thm:ghost_incentive}. For convenience we first
recall the main objects and Assumption~\ref{ass:structural_near_optimal}.

\paragraph{Preliminaries and notation.}
Let $\Pi$ denote the (parameterized) policy class under consideration and let
$\mathcal{S}$ denote the structural strategy class introduced in Section. For any $\pi\in\Pi$ we consider three
performance functionals:
\begin{align}
  J_0(\pi)
  &:= \E\Bigg[ \sum_{t=0}^{T-1} \gamma^t \, r_t \,\Big|\, \pi \Bigg],
  \label{eq:J0-def-appC}
  \\
  J^{\mathcal{M}}(\pi)
  &:= \E\Bigg[ \sum_{t=0}^{T-1} \gamma^t \, r^{\mathcal{M}}_t \,\Big|\, \pi \Bigg],
  \\
  J^\perp(\pi)
  &:= \E\Bigg[ \sum_{t=0}^{T-1} \gamma^t \, r^\perp_t \,\Big|\, \pi \Bigg],
\end{align}
where $r_t$ is the realized P\&L reward at step $t$ in the world model,
$r^{\mathcal{M}}_t$ is the on-manifold reward obtained by replacing the
predicted surface $\hat{w}_{t+1}$ with its projection
$\Pi_{\Mvol}(\hat{w}_{t+1})$, and
$r^\perp_t := r_t - r^{\mathcal{M}}_t$ is the ghost reward component.
By linearity of expectation,
\begin{equation}
  J_0(\pi)
  = J^{\mathcal{M}}(\pi) + J^\perp(\pi) ,
  \qquad \forall \pi\in\Pi.
  \label{eq:naive-decomposition-appC}
\end{equation}

\paragraph{Structural near-optimality assumption.}
We recall the structural approximation assumption used in the main text.

\begin{assumption}[Structural near-optimality of $\mathcal{S}$]
\label{ass:structural_near_optimal-app}
There exist a structural policy $\pi^\star_{\mathcal{S}}\in\mathcal{S}$
and a constant $\varepsilon_{\mathcal{S}}\ge 0$ such that
\begin{equation}
  \sup_{\pi\in\Pi} J^{\mathcal{M}}(\pi)
  \;\le\;
  J^{\mathcal{M}}(\pi^\star_{\mathcal{S}}) + \varepsilon_{\mathcal{S}}.
  \label{eq:structural-near-optimality-appC}
\end{equation}
That is, the structural class $\mathcal{S}$ (Zero-Hedge, Vol-Trend, etc.)
approximates the globally optimal on-manifold performance within a gap
$\varepsilon_{\mathcal{S}}$.
\end{assumption}

Note that by definition $\pi^\star_{\mathcal{S}}\in\mathcal{S}\subseteq\Pi$,
so it is admissible in the unconstrained policy class as well.

\medskip
We now restate the theorem and prove both parts.

\begin{theorem}
Suppose Assumption~\ref{ass:structural_near_optimal-app} holds, and let
\[
\pi^\star_0 \in \arg\max_{\pi\in\Pi} J_0(\pi)
\]
be a global maximizer of $J_0$ over $\Pi$. Then:

\begin{enumerate}
\item If
\[
\sup_{\pi\in\Pi} J_0(\pi) > J^{\mathcal{M}}(\pi^\star_{\mathcal{S}}) + \varepsilon_{\mathcal{S}},
\]
any maximizer $\pi^\star_0$ satisfies
\begin{equation}
  J^\perp(\pi^\star_0)
  \;\ge\;
  \sup_{\pi\in\Pi} J_0(\pi)
  - J^{\mathcal{M}}(\pi^\star_{\mathcal{S}}) - \varepsilon_{\mathcal{S}}
  \;>\; 0.
  \label{eq:ghost_lower_bound-appC}
\end{equation}
In particular, the excess value over the structural baseline is entirely
attributable to ghost arbitrage.

\item If, in addition, $J^{\mathcal{M}}$ has a local maximum at some
$\bar{\pi}=\pi_{\bar{\theta}}\in\Pi$ with
$J^{\mathcal{M}}(\bar{\pi}) \approx J^{\mathcal{M}}(\pi^\star_{\mathcal{S}})$,
and the policy-gradient theorem holds~\cite{SuttonBarto2018},
then the policy gradient near $\bar{\pi}$ satisfies
\begin{equation}
  \nabla_\theta J_0(\pi_\theta)\Big|_{\theta=\bar{\theta}}
  \;\approx\;
  \nabla_\theta J^\perp(\pi_\theta)\Big|_{\theta=\bar{\theta}},
  \label{eq:ghost_gradient_dominance-appC}
\end{equation}
so gradient-based RL updates are locally driven by increasing $J^\perp$.
\end{enumerate}
\end{theorem}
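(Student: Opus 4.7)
The plan is to exploit the exact additive decomposition $J_0(\pi) = J^{\mathcal{M}}(\pi) + J^\perp(\pi)$ already recorded in equation~\eqref{eq:naive-decomposition-appC}, and then combine it with Assumption~\ref{ass:structural_near_optimal-app} applied at the appropriate policy. Part~(i) will reduce to an algebraic rearrangement of this decomposition, while part~(ii) will follow from linearity of the policy gradient together with a first-order optimality condition at $\bar\theta$.

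For part~(i), first I would apply the near-optimality inequality at the specific policy $\pi_0^\star \in \Pi$, yielding $J^{\mathcal{M}}(\pi_0^\star) \le J^{\mathcal{M}}(\pi^\star_{\mathcal{S}}) + \varepsilon_{\mathcal{S}}$. Substituting this into the decomposition and using $J_0(\pi_0^\star) = \sup_{\pi\in\Pi} J_0(\pi)$ gives $J^\perp(\pi_0^\star) = J_0(\pi_0^\star) - J^{\mathcal{M}}(\pi_0^\star) \ge \sup_{\pi\in\Pi} J_0(\pi) - J^{\mathcal{M}}(\pi^\star_{\mathcal{S}}) - \varepsilon_{\mathcal{S}}$. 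Under the hypothesis that this supremum strictly exceeds $J^{\mathcal{M}}(\pi^\star_{\mathcal{S}}) + \varepsilon_{\mathcal{S}}$, the right-hand side is strictly positive, which is precisely~\eqref{eq:ghost_lower_bound-appC}. A minor technical point is the existence of a global maximizer $\pi_0^\star$; this can be handled by imposing compactness and continuity on $\Pi$, or by replacing $\pi_0^\star$ with a $\delta$-optimal policy and sending $\delta \downarrow 0$.

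For part~(ii), the plan is to invoke the policy-gradient theorem on each of $J^{\mathcal{M}}$ and $J^\perp$ and then add. Assuming $\pi_\theta$ is smooth in $\theta$ and the reward components are integrable with enough moments, linearity of the gradient gives $\nabla_\theta J_0(\pi_\theta) = \nabla_\theta J^{\mathcal{M}}(\pi_\theta) + \nabla_\theta J^\perp(\pi_\theta)$ for all $\theta$. Since $\bar\pi = \pi_{\bar\theta}$ is a local maximizer of $J^{\mathcal{M}}$ and this objective is smooth in $\theta$, the first-order necessary condition yields $\nabla_\theta J^{\mathcal{M}}(\pi_\theta)|_{\bar\theta} = 0$, whence $\nabla_\theta J_0(\pi_\theta)|_{\bar\theta} = \nabla_\theta J^\perp(\pi_\theta)|_{\bar\theta}$. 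The word \emph{approximately} in~\eqref{eq:ghost_gradient_dominance-appC} absorbs the qualifier $J^{\mathcal{M}}(\bar\pi) \approx J^{\mathcal{M}}(\pi^\star_{\mathcal{S}})$: if $\bar\pi$ is only approximately on-manifold-optimal in value, the gradient vanishes only up to a small error $\varepsilon_g$ controlled by the local curvature of $J^{\mathcal{M}}$.

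The main obstacle will be justifying smoothness of $J^{\mathcal{M}}$ in $\theta$, since the on-manifold reward is defined through the metric projection $\Pi_{\Mvol}$ onto the polyhedron $\Mvol$. By Proposition~\ref{prop:closed-convex}, $\Pi_{\Mvol}$ is globally $1$-Lipschitz, but it is only piecewise affine and fails to be classically differentiable on the relative boundaries of its faces. I would address this in one of two ways: either argue that the predictions $\hat w_{t+1}$ generated by the world model land in the relative interior of a single face of $\Mvol$ with probability one in a neighborhood of $\bar\theta$, so that $\Pi_{\Mvol}$ is locally smooth and the chain rule applies directly; or invoke Clarke's subdifferential calculus together with Rademacher's theorem, noting that $J^{\mathcal{M}}$ is locally Lipschitz in $\theta$ and that the policy-gradient theorem admits a generalized form at points of differentiability, which form a set of full measure. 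A secondary obstacle is the interchange of $\nabla_\theta$ and expectation, which requires a standard dominated-convergence argument under mild integrability of the rewards and score functions; this is routine given bounded actions, bounded surfaces, and the local-Lipschitz bound on $\Lvol$ from Lemma~\ref{lem:lipschitz-penalty}.
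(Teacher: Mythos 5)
Your proposal follows the paper's argument essentially verbatim: part~(i) is the same algebraic rearrangement of $J_0 = J^{\mathcal{M}} + J^{\perp}$ combined with the near-optimality inequality evaluated at $\pi_0^\star$, and part~(ii) is the same gradient-linearity-plus-first-order-condition argument, with the ``$\approx$'' absorbing the approximate optimality of $\bar\pi$. Your additional discussion of the nonsmoothness of $\Pi_{\Mvol}$ (piecewise-affine projection, Clarke/Rademacher, or restriction to the relative interior of a face) is a genuine refinement of a step the paper's proof glosses over by simply assuming Fr\'echet differentiability of $J^{\mathcal{M}}$ near $\bar\theta$; it does not change the route but does supply a justification the paper leaves implicit.
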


\begin{proof}
We treat parts (i) and (ii) separately.

\paragraph{Proof of part (i).}
Fix any $\pi\in\Pi$. Combining the decomposition
\eqref{eq:naive-decomposition-appC} with
Assumption~\eqref{eq:structural-near-optimality-appC}, we have
\begin{align}
  J_0(\pi)
  &= J^{\mathcal{M}}(\pi) + J^\perp(\pi)
  \\
  &\le
  \bigl( J^{\mathcal{M}}(\pi^\star_{\mathcal{S}}) + \varepsilon_{\mathcal{S}} \bigr)
  + J^\perp(\pi),
  \label{eq:J0-upper-bound-appC}
\end{align}
so for any $\pi\in\Pi$,
\begin{equation}
  J^\perp(\pi)
  \;\ge\;
  J_0(\pi) - J^{\mathcal{M}}(\pi^\star_{\mathcal{S}}) - \varepsilon_{\mathcal{S}}.
  \label{eq:Jperp-lower-bound-generic}
\end{equation}

Now consider a maximizer $\pi^\star_0 \in \arg\max_{\pi\in\Pi} J_0(\pi)$.
By definition,
\[
  J_0(\pi^\star_0) = \sup_{\pi\in\Pi} J_0(\pi).
\]
Substituting $\pi = \pi^\star_0$ into
\eqref{eq:Jperp-lower-bound-generic} yields
\begin{align}
  J^\perp(\pi^\star_0)
  &\ge
  J_0(\pi^\star_0)
  - J^{\mathcal{M}}(\pi^\star_{\mathcal{S}}) - \varepsilon_{\mathcal{S}}
  \\
  &=
  \sup_{\pi\in\Pi} J_0(\pi) - J^{\mathcal{M}}(\pi^\star_{\mathcal{S}}) - \varepsilon_{\mathcal{S}}.
  \label{eq:Jperp-lower-bound-maximizer}
\end{align}
Under the stated strict inequality
\[
  \sup_{\pi\in\Pi} J_0(\pi) > J^{\mathcal{M}}(\pi^\star_{\mathcal{S}}) + \varepsilon_{\mathcal{S}},
\]
the right-hand side of \eqref{eq:Jperp-lower-bound-maximizer} is strictly
positive, which implies
\[
  J^\perp(\pi^\star_0) > 0.
\]
This establishes \eqref{eq:ghost_lower_bound-appC} and shows that any excess
value of the naive-RL maximizer over the structural near-optimal on-manifold
performance must be entirely carried by the ghost component $J^\perp$.
In particular, there is no room to explain the advantage of $\pi^\star_0$
over $\pi^\star_{\mathcal{S}}$ by on-manifold improvements alone.

\paragraph{Proof of part (ii).}
We now consider the local gradient behavior. Let
$\pi_\theta \in \Pi$ denote a differentiable parametrization of policies
with parameter vector $\theta\in\R^p$, and suppose that
$\bar{\pi} = \pi_{\bar{\theta}}$ is such that $J^{\mathcal{M}}$ has a local
maximum at $\bar{\theta}$, i.e.,
\begin{equation}
  J^{\mathcal{M}}(\pi_{\bar{\theta}})
  \ge
  J^{\mathcal{M}}(\pi_{\theta}) \quad
  \text{for all $\theta$ in a neighborhood of $\bar{\theta}$}.
  \label{eq:JM-local-max}
\end{equation}

\medskip\noindent
\emph{Step 1: Gradient decomposition.}
By \eqref{eq:naive-decomposition-appC},
\[
  J_0(\pi_\theta) = J^{\mathcal{M}}(\pi_\theta) + J^\perp(\pi_\theta),
  \qquad \forall \theta.
\]
Assume that $J^{\mathcal{M}}$ and $J^\perp$ are (Fréchet) differentiable
with respect to $\theta$ on an open neighborhood of $\bar{\theta}$; this is
standard under the conditions of the policy-gradient theorem, which ensures
differentiability of $J_0$~\cite{SuttonBarto2018,ThomasPG2014}.
Then, by linearity of differentiation,
\begin{equation}
  \nabla_\theta J_0(\pi_\theta)
  = \nabla_\theta J^{\mathcal{M}}(\pi_\theta)
  + \nabla_\theta J^\perp(\pi_\theta),
  \qquad \text{for all $\theta$ near $\bar{\theta}$}.
  \label{eq:gradient-decomposition}
\end{equation}

\medskip\noindent
\emph{Step 2: Vanishing on-manifold gradient at a local maximum.}
Under the local maximality condition \eqref{eq:JM-local-max},
the first-order necessary condition for unconstrained optimization gives
\[
  \nabla_\theta J^{\mathcal{M}}(\pi_\theta)\big|_{\theta=\bar{\theta}} = 0.
\]
(If $\Pi$ is itself subject to additional parameter constraints, one may
interpret this as a vanishing gradient along feasible directions; the
argument below is then applied in the corresponding tangent space.)

Substituting this into \eqref{eq:gradient-decomposition} we obtain
\begin{equation}
  \nabla_\theta J_0(\pi_\theta)\Big|_{\theta=\bar{\theta}}
  = \nabla_\theta J^\perp(\pi_\theta)\Big|_{\theta=\bar{\theta}}.
  \label{eq:exact-gradient-ghost}
\end{equation}
Thus, \emph{exactly at} $\bar{\theta}$, the naive-RL gradient coincides with
the ghost-gradient $\nabla_\theta J^\perp$; all infinitesimal improvement
directions for $J_0$ come from changing the ghost component.

\medskip\noindent
\emph{Step 3: Interpretation via the policy-gradient theorem.}
The policy-gradient theorem for episodic MDPs
(e.g.~\cite{SuttonBarto2018,ThomasPG2014}) states that
\begin{equation}
  \nabla_\theta J_0(\pi_\theta)
  = \E_{\pi_\theta}\Bigg[
    \sum_{t=0}^{T-1}
      \nabla_\theta \log \pi_\theta(a_t \mid s_t)
      \, A^{(0)}_t
  \Bigg],
  \label{eq:pg-theorem-total}
\end{equation}
where $A^{(0)}_t$ is an advantage function associated with the total reward
$r_t$ and the expectation is over trajectories generated by $\pi_\theta$
in the world model. Similarly, using $r^{\mathcal{M}}_t$ and $r^\perp_t$ as
rewards in the same MDP, we obtain
\begin{align}
  \nabla_\theta J^{\mathcal{M}}(\pi_\theta)
  &= \E_{\pi_\theta}\Bigg[
    \sum_{t=0}^{T-1}
      \nabla_\theta \log \pi_\theta(a_t \mid s_t)
      \, A^{(\mathcal{M})}_t
  \Bigg],
  \label{eq:pg-theorem-onM}
  \\
  \nabla_\theta J^\perp(\pi_\theta)
  &= \E_{\pi_\theta}\Bigg[
    \sum_{t=0}^{T-1}
      \nabla_\theta \log \pi_\theta(a_t \mid s_t)
      \, A^{(\perp)}_t
  \Bigg],
  \label{eq:pg-theorem-ghost}
\end{align}
for appropriate advantage functions $A^{(\mathcal{M})}_t$ and
$A^{(\perp)}_t$.

By construction $r_t = r^{\mathcal{M}}_t + r^\perp_t$ and linearity of the
value-function operators, one can choose the advantage functions such that
\begin{equation}
  A^{(0)}_t = A^{(\mathcal{M})}_t + A^{(\perp)}_t,
  \qquad \forall t,
  \label{eq:advantage-decomposition}
\end{equation}
which recovers \eqref{eq:gradient-decomposition} when substituted into
\eqref{eq:pg-theorem-total}.

At $\theta=\bar{\theta}$, the local optimality of $J^{\mathcal{M}}$
implies that the contribution of $A^{(\mathcal{M})}_t$ integrates to zero
in~\eqref{eq:pg-theorem-onM}; equivalently,
$\nabla_\theta J^{\mathcal{M}}(\pi_\theta)|_{\theta=\bar{\theta}} = 0$.
Hence, by \eqref{eq:pg-theorem-total}--\eqref{eq:pg-theorem-ghost} and
\eqref{eq:advantage-decomposition}, we recover the exact equality
\eqref{eq:exact-gradient-ghost}.

In practice, when $J^{\mathcal{M}}$ is only approximately locally maximal
(e.g., due to finite-sample estimation and function approximation error),
we obtain the approximate relation
\[
  \big\|
    \nabla_\theta J_0(\pi_\theta)\big|_{\theta=\bar{\theta}}
    - \nabla_\theta J^\perp(\pi_\theta)\big|_{\theta=\bar{\theta}}
  \big\|
  = \big\| \nabla_\theta J^{\mathcal{M}}(\pi_\theta)\big|_{\theta=\bar{\theta}} \big\|
  \approx 0,
\]
which justifies the ``$\approx$'' symbol in
\eqref{eq:ghost_gradient_dominance-appC} of the main text.

\medskip
Combining Steps 1–3, we conclude that near a local maximizer of the
on-manifold performance $J^{\mathcal{M}}$, the gradient of the naive-RL
objective $J_0$ is dominated (indeed, at the maximizer: entirely given) by
the ghost-gradient $\nabla_\theta J^\perp$. This formalizes the statement
that gradient-based RL is locally incentivized to move into directions which
increase the expected ghost component, completing the proof of part (ii).
\end{proof}

\section{Proofs for Section~5:Structural Baselines}
\subsection{Proof of Proposition~\ref{prop:baseline-law-alignment}}
\label{app:proof-baseline-law-alignment}

In this appendix we provide a more detailed argument for the
law-alignment properties of the structural baselines and, more generally,
of the structural class $\mathcal{S}$ introduced in
Definition~\ref{def:S-structural}. We work under the standing assumptions
of Section~\ref{sec:world_model}.

\paragraph{Setup and notation.}
Recall that the (law-consistent) generator $G^\star$ produces total-variance
surfaces $(w_t)_{t\ge 0}$ with $w_t\in\Mvol$ almost surely for all $t$, cf.\
Proposition~\ref{prop:support-on-manifold}. The world model
$f_{\theta^\star}$ takes as input a feature vector $x_t$ (which may include
past surfaces and positions) and outputs a prediction
$\hat{w}_{t+1} = f_{\theta^\star}(x_t)$, with approximation residual
\[
  e_{t+1} := \hat{w}_{t+1} - w_{t+1}.
\]
The volatility law-penalty functional is
\[
  \Lvol(\hat{w}_{t+1})
  = \frac{1}{2} \, \mathrm{dist}\bigl(\hat{w}_{t+1},\Mvol\bigr)^2
  = \frac{1}{2} \bigl\|
     \hat{w}_{t+1} - \Pi_{\Mvol}(\hat{w}_{t+1})
    \bigr\|_2^2,
\]
where $\Pi_{\Mvol}$ is the Euclidean projection onto $\Mvol$.

For a policy $\pi$, we denote by
$\overline{\mathsf{LP}}(\pi)$ the long-run average (or finite-horizon
normalized) law penalty,
and by $\mathrm{GFI}(\pi)$ its Graceful Failure Index, as defined in
Section~\ref{sec:law-strength-gfi}. We recall that these quantities have
the schematic form
\begin{align}
  \overline{\mathsf{LP}}(\pi)
  &\approx \frac{1}{T} \sum_{t=0}^{T-1} \E\bigl[ \Lvol(\hat{w}_{t+1}) \,\big|\, \pi \bigr],
  \label{eq:LP-avg-struct}
  \\
  \mathrm{GFI}(\pi)
  &=
  \frac{
    \mu_{\mathrm{law}}^{\shock}(\pi)
    - \mu_{\mathrm{law}}^{\text{base}}(\pi)
  }{
    I_{\shock}
  },
  \label{eq:GFI-struct}
\end{align}
where $\mu_{\mathrm{law}}^{\text{base}}(\pi)$ and
$\mu_{\mathrm{law}}^{\shock}(\pi)$ are aggregate law metrics in baseline
and shock regimes, and $I_{\shock}>0$ encodes shock intensity (which is
fixed for all policies).

Finally, by Definition~\ref{def:S-structural}, a structural policy
$\pi\in\mathcal{S}$ is of the form
\[
  a_t = \pi(s_t)
      = g_\theta(z_t),
\]
where $z_t$ is a low-dimensional signal (e.g.\ realized vol trend) and
$g_\theta$ is a Lipschitz parametric map with bounded leverage,
$\|g_\theta(z)\| \le L_{\max}$ for all $z$ and all $\theta\in\Theta$,
with $\Theta$ compact. The baselines $b^{\mathrm{ZH}}$ (Zero-Hedge),
$b^{\mathrm{VT}}$ (Vol-Trend), and $b^{\mathrm{RG}}$ (Random-Gaussian)
are specific instances in $\mathcal{S}$.

\medskip

The proof proceeds in three steps. First, we show that law penalties are
uniformly bounded in terms of the world-model error. Second, we use the
structural priors of $\mathcal{S}$ to obtain uniform moment bounds on the
state process and hence on the residuals $e_{t+1}$. Third, we translate
these to bounds on $\overline{\mathsf{LP}}$ and GFI.

\paragraph{Step 1: Bounding law penalties by world-model error.}
By Proposition~\ref{prop:zero-penalty-iff}, we know that
$\Lvol(w_{t+1}) = 0$ whenever $w_{t+1}\in\Mvol$. Since the generator is
law-consistent, $w_{t+1}\in\Mvol$ almost surely, and thus for any
realization of $w_{t+1}$ and any prediction $\hat{w}_{t+1}$ we have
\[
  \mathrm{dist}\bigl(\hat{w}_{t+1},\Mvol\bigr)
  \;\le\;
  \|\hat{w}_{t+1} - w_{t+1}\|_2
  = \|e_{t+1}\|_2.
\]
Consequently,
\begin{equation}
  \Lvol(\hat{w}_{t+1})
  = \frac{1}{2}
    \mathrm{dist}\bigl(\hat{w}_{t+1},\Mvol\bigr)^2
  \;\le\;
  \frac{1}{2} \|e_{t+1}\|_2^2.
  \label{eq:Lvol-error-upper-bound}
\end{equation}
Taking expectations conditional on the policy $\pi$ and the regime
(baseline or shock), we obtain
\begin{equation}
  \E\bigl[ \Lvol(\hat{w}_{t+1}) \,\big|\, \pi \bigr]
  \;\le\;
  \frac{1}{2} \, \E\bigl[ \|e_{t+1}\|_2^2 \,\big|\, \pi \bigr].
  \label{eq:Lvol-Ex-bound}
\end{equation}
Thus, uniform control of the second moment of the approximation error
$e_{t+1}$ under a policy class implies uniform control of the law
penalties under that class.

\paragraph{Step 2: Uniform second-moment bounds for structural policies.}
By Proposition~\ref{prop:approx-gap-ghost-channel} and the conditions
stated there, the world model $f_{\theta^\star}$ is Lipschitz in its
input $x_t$, and the approximation error $e_{t+1}$ is bounded (in second
moment) on bounded subsets of the input space. Concretely, there exist
constants $L_f, C_e < \infty$ such that, for any two inputs $x,x'$,
\begin{align}
  \|f_{\theta^\star}(x) - f_{\theta^\star}(x')\|_2
  &\le L_f \|x - x'\|_2,
  \label{eq:world-model-Lip}
  \\
  \sup_{\|x\|\le R}
  \E\bigl[ \|f_{\theta^\star}(x) - w_{t+1}\|_2^2 \,\big|\, x_t = x \bigr]
  &\le C_e(R),
  \quad \text{for each finite $R>0$,}
  \label{eq:world-model-bound}
\end{align}
where $C_e(R)$ is non-decreasing in $R$.

The key structural property of $\mathcal{S}$ is that it induces a
\emph{bounded-exposure} Markov chain on the joint state
$(w_t, h_t, z_t)$, where $h_t$ denotes the position vector (portfolio
holdings) and $z_t$ denotes the low-dimensional signals. More precisely:

\begin{lemma}[Uniform moment control under $\mathcal{S}$]
\label{lem:uniform-moment-S}
Under the law-consistent generator and the structural priors of
Definition~\ref{def:S-structural} (bounded leverage $L_{\max}$,
compact parameter set $\Theta$, and Lipschitz $g$), there exists
a constant $C_{\mathrm{state}}<\infty$ such that for every structural
policy $\pi\in\mathcal{S}$ and every $t$,
\[
  \E_\pi\bigl[ \|x_t\|_2^2 \bigr]
  \le C_{\mathrm{state}},
\]
where $x_t$ is the world-model input constructed from $(w_t,h_t,z_t)$.
\end{lemma}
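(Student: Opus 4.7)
The strategy is to show that under any $\pi \in \mathcal{S}$, each component of $x_t = (w_{t-K+1:t}, h_t, z_t)$ is \emph{deterministically} bounded by a constant independent of $\pi$ and $t$, so that the claimed second-moment bound in expectation follows trivially from a pointwise bound. I would proceed in three clean steps, one per component.

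First I would handle the surface history, using law-consistency alone. By Proposition~\ref{prop:support-on-manifold}, $w_t \in \Mvol$ almost surely under any policy, and by construction of $\Mvol$ (cf.\ \eqref{eq:vol-law-manifold} and Step~1 of the proof of Proposition~\ref{prop:axiomatic-representation}) the manifold is contained in a compact box $\prod_i [\underline{w}_i, \overline{w}_i]$. This yields a uniform deterministic bound $\|w_t\|_2 \le C_w$ that does not depend on the policy at all; the same bound applies to every element of the look-back window.

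Second I would bound the position process $h_t$ using the structural priors. By Definition~\ref{def:S-structural}, every $\pi\in\mathcal{S}$ outputs an action through the saturating Lipschitz map $g(\theta^\top f(s))$, so $\|a_t\| \le \kappa$ uniformly in $s$ and in $\theta\in\Theta$, since $g$ is bounded and $\Theta$ is compact. Combined with the fact that the action space $\mathcal{A}$ in Section~\ref{subsec:mdp_formulation} is itself a compact subset of $\R^{d_a}$ defined by position and capital constraints, an elementary induction on the position update rule yields a uniform bound $\|h_t\|_2 \le C_h$ depending only on $\kappa$ and on the diameter of $\mathcal{A}$. The auxiliary signal $z_t$ is a fixed Lipschitz (indeed smooth) function of a short history of $(w_\cdot, h_\cdot)$ on the compact set just identified---for instance an EWMA of $\bar\sigma_t$ as in the Vol-Trend construction---so $\|z_t\|_2 \le C_z$. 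Assembling the pieces, $\|x_t\|_2^2 \le K C_w^2 + C_h^2 + C_z^2 =: C_{\mathrm{state}}$ almost surely, uniformly in $\pi\in\mathcal{S}$ and $t$, and the claimed expectation bound follows immediately.

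The main obstacle, in my view, is not the algebra but disambiguating the convention for $a_t$: the MDP in Section~\ref{subsec:mdp_formulation} does not make fully explicit whether $a_t$ is a target position or an incremental trade. In the position-based convention the bound on $h_t$ is automatic from the bound on $a_t$; in the trade-based convention one must rely on the position and capital constraints already built into $\mathcal{A}$ to produce a compact feasible set of positions, or else add a mean-reverting inventory term to preclude random-walk drift. I would resolve this by fixing the position-based convention at the start of the proof, consistent with the compactness of $\mathcal{A}$, after which the remaining arguments reduce to routine deterministic bounds and Lipschitz composition on compact sets.
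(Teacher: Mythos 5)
Your proof is correct and follows the same overall decomposition as the paper's (bound each component of $x_t$ separately, then assemble), but you use a sharper and more elementary observation for the surface history: since $\Mvol$ is by construction contained in a compact box $\prod_i[\underline{w}_i,\overline{w}_i]$ (Step~1 of the proof of Proposition~\ref{prop:axiomatic-representation}) and $w_t\in\Mvol$ almost surely (Proposition~\ref{prop:support-on-manifold}), $\|w_t\|_2$ is \emph{deterministically} bounded, uniformly in $t$ and independently of the policy. The paper instead appeals to second-moment stability of the underlying stochastic-volatility generator over a finite horizon, which is a weaker conclusion obtained from a heavier hypothesis; your route makes the bound policy- and horizon-free and requires no appeal to model stability. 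For the position process you explicitly flag the target-position versus incremental-trade ambiguity and resolve it by fixing the position-based convention; the paper tacitly does the same (writing $h_t=g_\theta(z_t)$ directly), so this is a welcome clarification rather than a divergence. Your treatment of $z_t$ as a Lipschitz functional of a bounded history matches the paper's.

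One small caveat worth stating explicitly in a final write-up: the assembly step $\|x_t\|_2^2 \le K C_w^2 + C_h^2 + C_z^2$ presupposes that the feature map from $(w_{t-K+1:t},h_t,z_t)$ to $x_t$ is the identity (stacking); if, as the paper allows, $x_t$ is obtained via a fixed Lipschitz map, the constant should be $C_{\mathrm{state}} = L_{\mathrm{feat}}^2\,(K C_w^2 + C_h^2 + C_z^2) + \|x_0^{\mathrm{ref}}\|_2^2$ for some reference point $x_0^{\mathrm{ref}}$ and Lipschitz constant $L_{\mathrm{feat}}$ of the feature map. This is immediate and does not affect the validity of the argument.
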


\begin{proof}
(Sketch.) The generator $G^\star$ yields an exogenous process $(w_t)$
whose second moments are bounded over the finite horizon $t=0,\dots,T$,
by standard properties of the underlying stochastic-volatility model and
the projection onto $\Mvol$ (see, e.g., stability of affine and rough
volatility models~\cite{BayerFrizGatheral2016,GatheralJacquier2014}).
The signals $z_t$ are Lipschitz functions of $(w_0,\dots,w_t)$ (e.g.,
realized-vol trends, moving averages), and thus inherit bounded second
moments over $t=0,\dots,T$.

For a structural policy $\pi\in\mathcal{S}$, the position process
$h_t = g_\theta(z_t)$ satisfies
$\|h_t\|_2 \le L_{\max}$ almost surely for all $t$, by bounded leverage
and compact $\Theta$. Hence $\E[\|h_t\|_2^2]\le L_{\max}^2$ for all $t$.
Since $x_t$ is built from $(w_t,h_t,z_t)$ via a fixed, Lipschitz
feature map (stacking, scaling, etc.), we obtain
$\E_\pi[\|x_t\|_2^2] \le C_{\mathrm{state}}$ for some finite constant
$C_{\mathrm{state}}$ independent of $\pi\in\mathcal{S}$ and $t$.
A fully rigorous version uses a Lyapunov-function argument for the
Markov chain induced by $\mathcal{S}$; see Appendix~D.1.
\end{proof}

Combining Lemma~\ref{lem:uniform-moment-S} with
\eqref{eq:world-model-bound}, we obtain a uniform bound on the second
moment of the residuals under $\mathcal{S}$: there exists
$\bar{C}_e<\infty$ such that, for all $\pi\in\mathcal{S}$ and all $t$,
\begin{equation}
  \E_\pi\bigl[ \|e_{t+1}\|_2^2 \bigr]
  = \E_\pi\Bigl[ \,
      \E\bigl[ \|e_{t+1}\|_2^2 \mid x_t\bigr]
    \Bigr]
  \;\le\;
  \E_\pi\bigl[ C_e(\|x_t\|) \bigr]
  \;\le\;
  \bar{C}_e,
  \label{eq:residual-second-moment-S}
\end{equation}
where we used monotonicity of $C_e(\cdot)$ and the uniform bound
$\E_\pi[\|x_t\|_2^2]\le C_{\mathrm{state}}$.

\paragraph{Step 3: Bounds on $\overline{\mathsf{LP}}$ and GFI.}
Using \eqref{eq:Lvol-Ex-bound} and
\eqref{eq:residual-second-moment-S}, we obtain that for any
$\pi\in\mathcal{S}$ and any $t$,
\[
  \E_\pi\bigl[ \Lvol(\hat{w}_{t+1}) \bigr]
  \;\le\;
  \frac{1}{2} \, \E_\pi\bigl[ \|e_{t+1}\|_2^2 \bigr]
  \;\le\;
  \frac{1}{2}\bar{C}_e.
\]
Substituting into \eqref{eq:LP-avg-struct} and taking the supremum over
$\pi\in\mathcal{S}$ yields
\begin{equation}
  \sup_{\pi\in\mathcal{S}} \overline{\mathsf{LP}}(\pi)
  \;\le\;
  \frac{1}{2}\bar{C}_e
  =: C_{\mathrm{LP}} < \infty.
  \label{eq:CLP-def}
\end{equation}
This shows that $\mathcal{S}$ is law-aligned in the sense of
Definition~\ref{def:law-aligned-class}, establishing item~(i) of the
proposition.

For the GFI, recall its definition in \eqref{eq:GFI-struct}. The
numerator is the difference in aggregate law metrics between the shock
and baseline regimes. Under our shock design (multiplying long variance
and spot volatility by bounded factors), the generator remains
law-consistent and the world model is evaluated on a distorted, but
still bounded, region of the state space. By the same reasoning as above
(with possibly different constants), there exist finite constants
$C_{\mathrm{law}}^{\text{base}}$ and $C_{\mathrm{law}}^{\shock}$ such
that for all $\pi\in\mathcal{S}$,
\[
  \mu_{\mathrm{law}}^{\text{base}}(\pi)
  \le C_{\mathrm{law}}^{\text{base}},
  \qquad
  \mu_{\mathrm{law}}^{\shock}(\pi)
  \le C_{\mathrm{law}}^{\shock}.
\]
Hence
\begin{equation}
  \bigl|\mu_{\mathrm{law}}^{\shock}(\pi)
        - \mu_{\mathrm{law}}^{\text{base}}(\pi)\bigr|
  \;\le\;
  C_{\mathrm{law}}^{\text{base}} + C_{\mathrm{law}}^{\shock}
  =: \widetilde{C}_{\mathrm{law}} < \infty
  \label{eq:law-diff-bound}
\end{equation}
for all $\pi\in\mathcal{S}$. Since $I_{\shock}>0$ is fixed and does not
depend on $\pi$, we obtain
\begin{equation}
  \sup_{\pi\in\mathcal{S}} \mathrm{GFI}(\pi)
  = \sup_{\pi\in\mathcal{S}}
    \frac{
      \mu_{\mathrm{law}}^{\shock}(\pi)
      - \mu_{\mathrm{law}}^{\text{base}}(\pi)
    }{
      I_{\shock}
    }
  \;\le\;
  \frac{\widetilde{C}_{\mathrm{law}}}{I_{\shock}}
  =: C_{\mathrm{GFI}}
  < \infty.
  \label{eq:CGFI-def}
\end{equation}
This establishes the existence of a finite constant $C_{\mathrm{GFI}}$
depending only on the generator, the shock specification, and the
world-model error, proving item~(i) and the first part of item~(ii).

\paragraph{Baselines $b^{\mathrm{ZH}}$ and $b^{\mathrm{VT}}$.}
We now specialize to the two deterministic baselines.

\emph{Zero-Hedge $b^{\mathrm{ZH}}$.}
By definition, $b^{\mathrm{ZH}}$ takes no positions,
$h_t \equiv 0$, at all times. The state process affecting the world
model thus reduces to the exogenous generator path $(w_t,z_t)$, and the
residual distribution is precisely the ``background'' world-model error
profile studied in Section. In particular, the bounds
\eqref{eq:residual-second-moment-S}, \eqref{eq:CLP-def}, and
\eqref{eq:CGFI-def} hold with $\pi = b^{\mathrm{ZH}}$. Empirically, this
baseline indeed exhibits the smallest observed law penalties and GFI,
matching the theoretical role of $b^{\mathrm{ZH}}$ as the law-neutral
benchmark.

\emph{Vol-Trend $b^{\mathrm{VT}}$.}
The Vol-Trend baseline applies a one-factor trend-following rule with
bounded leverage, $h_t = g_{\theta^{\mathrm{VT}}}(z_t)$, where
$z_t$ encodes recent volatility trends and $g_{\theta^{\mathrm{VT}}}$ is
Lipschitz with $\|g_{\theta^{\mathrm{VT}}}(z)\|\le L_{\max}$ for all
$z$. As in Lemma~\ref{lem:uniform-moment-S}, this ensures that positions
respond smoothly to volatility changes and remain uniformly bounded in
second moment, so that $(w_t,z_t,h_t)$ stays in a bounded region of the
state space. Therefore the same world-model error and law-penalty bounds
apply, and we obtain
\[
  \overline{\mathsf{LP}}(b^{\mathrm{ZH}}),
  ~\overline{\mathsf{LP}}(b^{\mathrm{VT}})
  \le C_{\mathrm{LP}},
  \qquad
  \mathrm{GFI}(b^{\mathrm{ZH}}),
  ~\mathrm{GFI}(b^{\mathrm{VT}})
  \le C_{\mathrm{GFI}}.
\]
This proves item~(ii), with the empirical strictness ``below unconstrained
RL levels'' arising from the ghost-incentive effect(naive RL actively amplifies exposure
in regions where the ghost component $r^\perp$ is large, whereas
$b^{\mathrm{ZH}}$ and $b^{\mathrm{VT}}$ do not target such regions).

\paragraph{Random-Gaussian baseline $b^{\mathrm{RG}}$.}
Finally, consider the Random-Gaussian baseline $b^{\mathrm{RG}}$, which
draws actions from a Gaussian distribution
$a_t \sim \mathcal{N}(0,\Sigma_a)$ with fixed covariance $\Sigma_a$,
possibly truncated to enforce a leverage bound. Provided
$\mathrm{tr}(\Sigma_a)<\infty$ and the truncation is such that
$\E[\|a_t\|_2^2]\le L_{\max}'<\infty$, we obtain moment bounds on the
position process $(h_t)$ analogous to those for $b^{\mathrm{VT}}$,
although with larger constants reflecting the additional randomness.
Repeating the argument leading to
\eqref{eq:residual-second-moment-S}--\eqref{eq:CGFI-def}, we find
finite constants $C_{\mathrm{LP}}',C_{\mathrm{GFI}}'<\infty$ such that
\[
  \overline{\mathsf{LP}}(b^{\mathrm{RG}}) \le C_{\mathrm{LP}}',
  \qquad
  \mathrm{GFI}(b^{\mathrm{RG}}) \le C_{\mathrm{GFI}}'.
\]
In general one expects $C_{\mathrm{LP}}',C_{\mathrm{GFI}}'$ to be larger
than $C_{\mathrm{LP}},C_{\mathrm{GFI}}$, because $b^{\mathrm{RG}}$
explores a wider range of states and can occasionally spend more time in
regions where the world-model error and induced law penalties are
higher. This matches the empirical role of $b^{\mathrm{RG}}$ as a noisy,
less structured baseline.

\medskip

Collecting the above bounds, we see that the structural class
$\mathcal{S}$ is uniformly law-aligned, and that the specific structural
baselines $b^{\mathrm{ZH}}$ and $b^{\mathrm{VT}}$ enjoy particularly
favorable law-penalty and GFI levels relative to unconstrained RL
policies. This completes the proof of
Proposition~\ref{prop:baseline-law-alignment}. \qedhere

\section{Empirical Results: From RL Dynamics to Law-Strength Frontiers (Supplementary)}
\label{app:empirical-results}

In this appendix we provide complementary quantitative details to Section. We (i) restate the full metric tables for
all strategies in both baseline and shock regimes, (ii) tabulate the
law-strength frontier points used in Figures, and (iii) describe the precise numerical
procedures used to construct the frontiers and diagnostic plots.

Throughout, all metrics are computed on the same evaluation trajectories
used to generate Figures, and hence are fully consistent with
the summary numbers reported.

\subsection{Full metrics: baseline regime}
\label{app:metrics-baseline}

Table~\ref{tab:full-metrics-baseline} reports the complete set of
step-wise metrics for all RL and structural strategies in the baseline
regime, corresponding to the top half of
Table in the main text. We include mean and
standard deviation of step P\&L, Sharpe ratio, mean and maximum
law-penalty, law-adjusted return, Graceful Failure Index (GFI), law
coverage at two thresholds, and $5\%$ tail risk measures (VaR and
CVaR).

\begin{table*}[t]
  \centering
  \caption{Full step-wise metrics for all strategies in the \emph{baseline} regime. 
  Mean and standard deviation of step P\&L, Sharpe ratio, mean and maximum law penalty, 
  law-adjusted return, Graceful Failure Index (GFI), law coverage at two thresholds, 
  and $5\%$ tail risk measures (VaR\textsubscript{5}, CVaR\textsubscript{5}). 
  All numbers are computed on the same evaluation trajectories }
  \label{tab:full-metrics-baseline}
  \scriptsize
  \setlength{\tabcolsep}{3pt} % 缩小列间距
  \resizebox{\textwidth}{!}{% 控制整体宽度不超过 textwidth
  \begin{tabular}{lrrrrrrrrrr}
    \toprule
    Strategy &
    Mean P\&L &
    Std P\&L &
    Sharpe &
    Mean LawPen &
    Max LawPen &
    Law-Adj Ret &
    GFI &
    LawCov\,$<0.003$ &
    LawCov\,$<0.006$ &
    VaR\textsubscript{5} / CVaR\textsubscript{5} \\
    \midrule
    Naive RL (PPO) &
    $-0.0022$ &
    $0.0127$ &
    $-0.1696$ &
    $0.006994$ &
    $0.020850$ &
    $-0.0057$ &
    $1.2675$ &
    $0.5306$ &
    $0.6122$ &
    $-0.0228$ / $-0.0261$ \\
    Law-Seeking RL (PPO) &
    $-0.0150$ &
    $0.0129$ &
    $-1.1564$ &
    $0.007861$ &
    $0.022554$ &
    $-0.0189$ &
    $1.6621$ &
    $0.4898$ &
    $0.5714$ &
    $-0.0361$ / $-0.0394$ \\
    \addlinespace
    Zero-Hedge baseline &
    $0.0191$ &
    $0.0064$ &
    $2.9944$ &
    $0.005501$ &
    $0.018318$ &
    $0.0164$ &
    $0.0000$ &
    $0.6122$ &
    $0.6939$ &
    $0.0139$ / $0.0139$ \\
    Random-Gaussian baseline &
    $0.0099$ &
    $0.0107$ &
    $0.9235$ &
    $0.005510$ &
    $0.020686$ &
    $0.0072$ &
    $1.2062$ &
    $0.6204$ &
    $0.6918$ &
    $-0.0088$ / $-0.0161$ \\
    Vol-Trend baseline &
    $0.0146$ &
    $0.0074$ &
    $1.9636$ &
    $0.005344$ &
    $0.017173$ &
    $0.0119$ &
    $0.0000$ &
    $0.6122$ &
    $0.6939$ &
    $0.0045$ / $0.0033$ \\
    \bottomrule
  \end{tabular}
  }% end resizebox
\end{table*}

As noted in Section, the baseline regime already
exhibits the main qualitative pattern: the structurally constrained
baselines (Zero-Hedge, Vol-Trend) lie in a region of high Sharpe and
moderate law penalties, with GFI effectively zero, while all RL
variants---including law-seeking PPO---display negative mean P\&L and
substantially higher GFI values.

\subsection{Full metrics: shock regime}
\label{app:metrics-shock}

In which we
apply a volatility shock (long-variance multiplier $4$, spot-vol
multiplier $2$) to the underlying generator while keeping the world
model fixed.

\begin{table*}[t]
  \centering
  \caption{Baseline-regime metrics for all law-strength frontier points:
  Naive RL ($\lambda=0$), soft law-seeking RL for $\lambda\in\{5,10,20,40\}$, 
  selection-only RL, and structural baselines. These are the points 
  used to construct the law-strength frontier and diagnostic plots}
  \label{tab:frontier-baseline}
  \scriptsize
  \setlength{\tabcolsep}{3pt} % 缩小列间距
  \resizebox{\textwidth}{!}{% 让整个表格宽度不超过 \textwidth
  \begin{tabular}{lrrrrrrrrrr}
    \toprule
    Strategy / $\lambda$ &
    Mean P\&L &
    Std P\&L &
    Sharpe &
    Mean LawPen &
    Max LawPen &
    Law-Adj Ret &
    GFI &
    LawCov\,$<0.003$ &
    LawCov\,$<0.006$ &
    VaR\textsubscript{5} / CVaR\textsubscript{5} \\
    \midrule
    Naive RL ($\lambda=0$) &
    $-0.0022$ &
    $0.0127$ &
    $-0.1696$ &
    $0.006994$ &
    $0.020850$ &
    $-0.0057$ &
    $1.2675$ &
    $0.5306$ &
    $0.6122$ &
    $-0.0228$ / $-0.0261$ \\
    Soft RL ($\lambda=5$) &
    $-0.0202$ &
    $0.0120$ &
    $-1.6753$ &
    $0.006472$ &
    $0.019956$ &
    $-0.0234$ &
    $2.0663$ &
    $0.5510$ &
    $0.6327$ &
    $-0.0399$ / $-0.0429$ \\
    Soft RL ($\lambda=10$) &
    $-0.0175$ &
    $0.0123$ &
    $-1.4248$ &
    $0.003709$ &
    $0.013169$ &
    $-0.0194$ &
    $2.8059$ &
    $0.7347$ &
    $0.7959$ &
    $-0.0354$ / $-0.0387$ \\
    Soft RL ($\lambda=20$) &
    $-0.0204$ &
    $0.0131$ &
    $-1.5629$ &
    $0.003962$ &
    $0.014251$ &
    $-0.0224$ &
    $3.0728$ &
    $0.7143$ &
    $0.7755$ &
    $-0.0414$ / $-0.0454$ \\
    Soft RL ($\lambda=40$) &
    $-0.0092$ &
    $0.0054$ &
    $-1.7138$ &
    $0.004737$ &
    $0.015888$ &
    $-0.0116$ &
    $0.8443$ &
    $0.6531$ &
    $0.7347$ &
    $-0.0134$ / $-0.0134$ \\
    Selection-only RL &
    $-0.0223$ &
    $0.0139$ &
    $-1.6028$ &
    $0.007923$ &
    $0.022827$ &
    $-0.0263$ &
    $2.0407$ &
    $0.4898$ &
    $0.5714$ &
    $-0.0448$ / $-0.0489$ \\
    \addlinespace
    Zero-Hedge baseline &
    $0.0191$ &
    $0.0064$ &
    $2.9944$ &
    $0.005501$ &
    $0.018318$ &
    $0.0164$ &
    $0.0000$ &
    $0.6122$ &
    $0.6939$ &
    $0.0139$ / $0.0139$ \\
    Random-Gaussian baseline &
    $0.0099$ &
    $0.0107$ &
    $0.9235$ &
    $0.005510$ &
    $0.020686$ &
    $0.0072$ &
    $1.2062$ &
    $0.6204$ &
    $0.6918$ &
    $-0.0088$ / $-0.0161$ \\
    Vol-Trend baseline &
    $0.0146$ &
    $0.0074$ &
    $1.9636$ &
    $0.005344$ &
    $0.017173$ &
    $0.0119$ &
    $0.0000$ &
    $0.6122$ &
    $0.6939$ &
    $0.0045$ / $0.0033$ \\
    \bottomrule
  \end{tabular}
  }% end resizebox
\end{table*}

The law-strength frontier plots in
Figures correspond to projections of Table~\ref{tab:frontier-baseline} onto
two-dimensional planes, for example:
\begin{enumerate}
  \item mean law penalty vs.\ GFI,
  \item mean law penalty vs.\ Sharpe,
  \item mean law penalty vs.\ VaR\textsubscript{5} or CVaR\textsubscript{5}.
\end{enumerate}
Within the law-penalty band $[0.0053,0.0057]$ highlighted, Zero-Hedge lies near the lower boundary
with high Sharpe and GFI $\approx 0$, while the closest RL variants in
the band have Sharpe $<0$ and GFI $>1.5$, illustrating the empirical
Pareto dominance emphasized in the main text.

\subsection{Frontier construction and banding procedure}
\label{app:frontier-construction}

For completeness, we describe the numerical procedure used to construct the law-strength frontiers and penalty bands.

\paragraph{Policy set.}
We collect the following set of evaluated policies:
\begin{enumerate}
  \item Naive RL (PPO) trained on pure P\&L,
  \item soft law-seeking RL for $\lambda\in\{5,10,20,40\}$,
  \item selection-only RL (trained on P\&L, selected by law metrics),
  \item structural baselines: Zero-Hedge, Random-Gaussian, Vol-Trend.
\end{enumerate}
For each policy we compute the metric vector
\[
  \mathbf{m}(\pi)
  =
  \bigl(
    \mu_{\mathrm{P\&L}}(\pi),\;
    \sigma_{\mathrm{P\&L}}(\pi),\;
    \mathrm{Sharpe}(\pi),\;
    \mu_{\mathrm{law}}(\pi),\;
    \mathrm{GFI}(\pi),\;
    \mathrm{VaR}_5(\pi),\;
    \mathrm{CVaR}_5(\pi)
  \bigr),
\]
with components given explicitly in
Tables~\ref{tab:full-metrics-baseline}–\ref{tab:frontier-baseline}.

\paragraph{Penalty banding.}
To compare policies at similar levels of law violation, we discretize
the range of mean law penalties
$\mu_{\mathrm{law}}(\pi)$ into contiguous bands
\[
  B_k
  =
  \bigl[ \ell_k, u_k \bigr),
  \qquad k=1,\dots,K,
\]
where $(\ell_k,u_k)$ are chosen such that the bands roughly align with
the empirical distribution of $\mu_{\mathrm{law}}(\pi)$ across policies.
For the numerical example in Section~\ref{sec:empirical-results}, we use the
band $[0.0053,0.0057]$ that contains the Zero-Hedge baseline and at
least one RL policy. For each band $B_k$ we identify the subset
\[
  \Pi_k
  := \bigl\{ \pi : \mu_{\mathrm{law}}(\pi) \in B_k \bigr\},
\]
and perform intra-band comparisons of Sharpe, GFI, VaR\textsubscript{5},
and CVaR\textsubscript{5}. 

\paragraph{Pareto frontier extraction.}
Given a subset of metrics $(\mu_{\mathrm{law}}, \mathrm{Sharpe})$,
$(\mu_{\mathrm{law}}, \mathrm{GFI})$, or
$(\mu_{\mathrm{law}}, \mathrm{CVaR}_5)$, we say that a policy
$\pi$ \emph{Pareto-dominates} another policy $\pi'$ if it is no worse
on all objectives and strictly better on at least one. For example, in
the $(\mu_{\mathrm{law}}, \mathrm{Sharpe})$ plane we define
\[
  \pi \succ \pi'
  \quad\Longleftrightarrow\quad
  \mu_{\mathrm{law}}(\pi) \le \mu_{\mathrm{law}}(\pi')
  \;\text{and}\;
  \mathrm{Sharpe}(\pi) \ge \mathrm{Sharpe}(\pi'),
\]
with at least one strict inequality. The empirical law-strength
frontier is the set
\[
  \mathcal{F}
  :=
  \bigl\{
    \pi : \nexists \pi'\ \text{such that}\ \pi' \succ \pi
  \bigr\}.
\]

In all planes considered, the points corresponding to Zero-Hedge and
Vol-Trend lie on $\mathcal{F}$, while all RL variants lie strictly
inside the frontier: there exists at least one structural baseline that
has both (i) weakly lower mean law penalty and (ii) strictly better
Sharpe, GFI, or tail risk. This is the empirical content of the
Pareto-dominance statements.

\subsection{Additional notes on dynamics and diagnostics}
\label{app:dynamics-diagnostics}

For completeness, we briefly comment on the additional curves and histograms:

\begin{enumerate}
  \item \textbf{Dynamics Plots}
        Each panel shows time-series of cumulative P\&L and mean
        law penalty across episodes for Naive RL, a representative
        law-seeking RL variant (e.g.\ $\lambda=20$), and a structural
        baseline. The additional curves omitted from the main text for
        brevity (e.g.\ $\lambda=5,10,40$) display the same qualitative
        pattern: increasing $\lambda$ reduces law penalties at the cost
        of lower P\&L, in line with the structural trade-off.
  \item \textbf{Diagnostic Plots.}
        The scatter plots aggregate step-level observations of
        P\&L vs.\ law penalty across many episodes. The high-density
        clusters for RL policies lie in regions of elevated law
        penalty, while structural baselines concentrate near lower
        penalty bands. Histograms of law penalty show that RL policies
        allocate substantial mass to the right tail of the penalty
        distribution, consistent with exploiting the ghost channel
        $r^\perp$ identified in
        Proposition~\ref{prop:approx-gap-ghost-channel}.
\end{enumerate}

These supplementary plots are therefore consistent with, and reinforce,
the three main empirical takeaways:
(i) law penalties do not rescue naive RL from ghost arbitrage, (ii)
structural baselines define an empirical law-strength frontier, and
(iii) unconstrained law-seeking RL lies strictly inside this frontier in
our volatility world-model testbed.
\section{No-Free-Lunch for Law-Seeking RL (Proofs)}
\label{app:no-free-lunch}

This appendix provides a detailed proof of the no-free-lunch result stated as Theorem.
We proceed in three steps: (i) we recall the performance and law-metric
quantities used in the main result, (ii) we formalize a mild
ghost--law monotonicity condition that links off-manifold reward to
law metrics, and (iii) we give a detailed proof of the theorem.

\subsection{Preliminaries: performance, law metrics, and decomposition}

Recall from Sections~\ref{sec:axiomatic-manifold},
\ref{sec:world_model}that for any stationary
policy $\pi$ in the unconstrained policy class $\Pi$ we have:
\begin{enumerate}
  \item A \emph{baseline} environment distribution
  $\mathbb{P}^{\text{base}}$ over episodes generated by the
  law-consistent synthetic generator and the world model.
  \item A \emph{shock} distribution $\mathbb{P}^{\shock}$ describing
  the same world model, but driven by shocked long-variance and spot
  volatility factors.
  \item A per-step reward $r(s_t,a_t)$ that can be decomposed as
  \begin{equation}
    r(s_t,a_t)
    =
    r^{\mathcal{M}}(s_t,a_t)
    \;+\;
    r^\perp(s_t,a_t),
    \label{eq:appendix-decomposition}
  \end{equation}
  where $r^{\mathcal{M}}$ is the on-manifold component defined by
  projection onto the volatility law manifold $\Mvol$ and $r^\perp$ is
  the ghost (off-manifold) component; see
  Propositions~\ref{prop:ghost-bounded} and
  \ref{prop:approx-gap-ghost-channel}.
\end{enumerate}

We denote by $J_0(\pi)$ the expected (per-step or per-episode) P\&L
under the baseline regime:
\begin{equation}
  J_0(\pi)
  :=
  \mathbb{E}_{\mathbb{P}^{\text{base}}}\Bigl[
    r(s_t,a_t)
  \Bigr],
  \qquad
  a_t \sim \pi(\cdot\mid s_t),
  \label{eq:J0-definition}
\end{equation}
and similarly define the on-manifold and ghost components
\begin{equation}
  J^{\mathcal{M}}(\pi)
  :=
  \mathbb{E}_{\mathbb{P}^{\text{base}}}\bigl[r^{\mathcal{M}}(s_t,a_t)\bigr],
  \qquad
  J^\perp(\pi)
  :=
  \mathbb{E}_{\mathbb{P}^{\text{base}}}\bigl[r^\perp(s_t,a_t)\bigr],
  \label{eq:Jm-Jperp-def}
\end{equation}
so that
\begin{equation}
  J_0(\pi)
  =
  J^{\mathcal{M}}(\pi) + J^\perp(\pi).
  \label{eq:J0-decomposition-appendix}
\end{equation}

\paragraph{Law metrics.}
For each policy $\pi$ we also consider a vector of law metrics,
measuring law violations and robustness:
\begin{equation}
  \ell(\pi)
  :=
  \Bigl(
    \mu_{\mathrm{law}}^{\text{base}}(\pi),\;
    \mu_{\mathrm{law}}^{\shock}(\pi),\;
    \mathrm{GFI}(\pi)
  \Bigr)\in\mathbb{R}^3,
  \label{eq:law-metric-vector}
\end{equation}
where:
\begin{enumerate}
  \item $\mu_{\mathrm{law}}^{\text{base}}(\pi)$ is the mean step
  law-penalty under $\mathbb{P}^{\text{base}}$,
  \item $\mu_{\mathrm{law}}^{\shock}(\pi)$ is the mean step law-penalty
  under $\mathbb{P}^{\shock}$, and
  \item $\mathrm{GFI}(\pi)$ is the Graceful Failure Index introduced in
  Section~\ref{sec:law-strength-gfi}, which normalizes the change in
  law metrics between baseline and shock by the shock intensity
  $I_{\shock}$.
\end{enumerate}
We write $\ell(\pi)\le_{\mathrm{law}}\ell(\pi')$ if each component of
$\ell(\pi)$ is less than or equal to the corresponding component of
$\ell(\pi')$, i.e.\ if policy $\pi$ is at least as law-aligned and
robust as $\pi'$.

\paragraph{Structural class.}
The structural strategy class $\mathcal{S}$, introduced in
Section, is a low-capacity, law-aligned
subset of $\Pi$ consisting of structurally constrained strategies such
as Zero-Hedge and Vol-Trend. We assume that $\mathcal{S}$ satisfies the
\emph{near-optimal on-manifold} property of
Assumption~\ref{ass:structural_near_optimal}:
there exists $\pi_{\mathcal{S}}^\star\in\mathcal{S}$ and
$\varepsilon_{\mathcal{S}}\ge 0$ such that
\begin{equation}
  J^{\mathcal{M}}(\pi)
  \;\le\;
  J^{\mathcal{M}}(\pi_{\mathcal{S}}^\star) + \varepsilon_{\mathcal{S}}
  \qquad
  \text{for all $\pi\in\Pi$ with trajectories supported on $\Mvol$,}
  \label{eq:on-manifold-near-opt}
\end{equation}
and $\ell(\pi_{\mathcal{S}}^\star)$ is uniformly small in the sense of
Proposition~\ref{prop:baseline-law-alignment} (law-aligned class).

\subsection{Ghost--law monotonicity}

The results show that in our volatility world-model
testbed:
\begin{enumerate}
  \item the ghost reward $r^\perp$ is generated by deviations in the
  normal cone $N_{\Mvol}(w)$ to the volatility law manifold,
  \item the law penalty $\mathcal{L}_{\mathrm{vol}}$ grows with the
  squared distance to $\Mvol$ (Propositions~\ref{prop:zero-penalty-iff}
  and~\ref{prop:ghost-bounded}),
  \item the Graceful Failure Index $\mathrm{GFI}$ captures the
  \emph{relative increase} in law penalties under shock.
\end{enumerate}
These observations motivate the following mild monotonicity condition
linking ghost reward and law metrics.

\medskip
\noindent\textbf{Assumption E.1 (Ghost--law monotonicity).}
\emph{
There exists a law-aligned structural reference policy
$\pi_{\mathcal{S}}^\star\in\mathcal{S}$ and a non-decreasing function
$\psi:\mathbb{R}_+^3\to\mathbb{R}_+$ with $\psi(0,0,0)=0$ such that for
every policy $\pi\in\Pi$ we have
\begin{equation}
  J^\perp(\pi) - J^\perp(\pi_{\mathcal{S}}^\star)
  \;\le\;
  \psi\Bigl(
    \bigl(\ell(\pi) - \ell(\pi_{\mathcal{S}}^\star)\bigr)_+
  \Bigr),
  \label{eq:ghost-law-monotonicity}
\end{equation}
where $(\cdot)_+$ denotes component-wise positive part. In particular,
if $\ell(\pi)\le_{\mathrm{law}}\ell(\pi_{\mathcal{S}}^\star)$ then
$J^\perp(\pi)\le J^\perp(\pi_{\mathcal{S}}^\star)$.
}

\medskip

Intuitively, Assumption~E.1 states that \emph{any additional ghost
reward beyond what is available to the structural class must be paid
for by worse law metrics}. In our volatility setting, the existence of
such a function $\psi$ is supported by:
\begin{enumerate}
  \item the Lipschitz bound of Proposition~\ref{prop:ghost-bounded},
  which controls $|r^\perp(w)|$ in terms of the distance to $\Mvol$;
  \item the definition of law penalties and GFI as functions of the
  same distance and its behavior under shock;
  \item the law-aligned nature of $\pi_{\mathcal{S}}^\star$ guaranteed
  by Proposition~\ref{prop:baseline-law-alignment}, which ensures that
  $\ell(\pi_{\mathcal{S}}^\star)$ is close to the best achievable law
  metrics given the world-model approximation error.
\end{enumerate}
Thus, in the neighborhood of $\pi_{\mathcal{S}}^\star$, additional
ghost reward can only be obtained by moving further away from the law
manifold, which necessarily worsens at least one component of the law
metric vector.

\subsection{Proof of the no-free-lunch theorem}

We now give a detailed proof of the no-free-lunch result stated as theorem. For clarity we restate the theorem in a
slightly more quantitative form.

\begin{theorem*}[No-free-lunch for unconstrained law-seeking RL]
Assume:
\begin{enumerate}
  \item The structural class $\mathcal{S}$ is law-aligned and
    near-optimal on-manifold in the sense of
    \eqref{eq:on-manifold-near-opt}, with reference policy
    $\pi_{\mathcal{S}}^\star$.
  \item The world model induces a non-trivial ghost channel as in
    Propositions~\ref{prop:approx-gap-ghost-channel}
    and~\ref{lem:off-manifold-mass}.
  \item Ghost--law monotonicity (Assumption~E.1) holds for
    $\pi_{\mathcal{S}}^\star$.
\end{enumerate}
Then for any $\eta > \varepsilon_{\mathcal{S}}$ and any policy
$\pi\in\Pi$ satisfying
\begin{equation}
  J_0(\pi) \;\ge\; J_0(\pi_{\mathcal{S}}^\star) + \eta,
  \label{eq:pnl-improvement-condition}
\end{equation}
we must have
\begin{equation}
  \ell(\pi) \not\le_{\mathrm{law}} \ell(\pi_{\mathcal{S}}^\star),
  \label{eq:law-deterioration-condition}
\end{equation}
i.e., at least one component of the law metric vector is strictly worse
for $\pi$ than for $\pi_{\mathcal{S}}^\star$. In particular, no policy
$\pi\in\Pi$ can strictly dominate $\pi_{\mathcal{S}}^\star$ on all axes
(P\&L, law penalties, and GFI).
\end{theorem*}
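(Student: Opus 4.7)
The plan is to argue by contrapositive: I would assume some policy $\pi \in \Pi$ simultaneously satisfies the P\&L improvement~\eqref{eq:pnl-improvement-condition} \emph{and} the law-dominance condition $\ell(\pi) \le_{\mathrm{law}} \ell(\pi_{\mathcal{S}}^\star)$, and then derive a contradiction by chaining three ingredients already in place: the Goodhart decomposition~\eqref{eq:J0-decomposition-appendix}, the on-manifold near-optimality bound~\eqref{eq:on-manifold-near-opt}, and the ghost--law monotonicity (Assumption~E.1). Conceptually this mirrors Theorem~\ref{thm:ghost_incentive}: any excess baseline P\&L of $\pi$ over $\pi_{\mathcal{S}}^\star$ beyond the on-manifold slack $\varepsilon_{\mathcal{S}}$ must be carried by the ghost component $J^\perp$, and Assumption~E.1 then rules out harvesting this excess ghost reward while keeping every law coordinate weakly better than $\pi_{\mathcal{S}}^\star$.

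Concretely, I would proceed in three short steps. First, use linearity to split the baseline P\&L gap as
\begin{equation*}
J_0(\pi) - J_0(\pi_{\mathcal{S}}^\star)
\;=\; \bigl(J^{\mathcal{M}}(\pi) - J^{\mathcal{M}}(\pi_{\mathcal{S}}^\star)\bigr)
\;+\; \bigl(J^\perp(\pi) - J^\perp(\pi_{\mathcal{S}}^\star)\bigr).
\end{equation*}
Second, observe that $J^{\mathcal{M}}(\pi)$ is by construction the expected reward evaluated on the projected surface $\Pi_{\Mvol}(\hat w_{t+1})$, hence it coincides with the value of a derived ``projected'' policy whose effective trajectory lies in $\Mvol$; the near-optimality assumption~\eqref{eq:on-manifold-near-opt} applied to this surrogate yields $J^{\mathcal{M}}(\pi) - J^{\mathcal{M}}(\pi_{\mathcal{S}}^\star) \le \varepsilon_{\mathcal{S}}$. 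Combining with the hypothesis $J_0(\pi) - J_0(\pi_{\mathcal{S}}^\star) \ge \eta > \varepsilon_{\mathcal{S}}$ forces the strict lower bound $J^\perp(\pi) - J^\perp(\pi_{\mathcal{S}}^\star) \ge \eta - \varepsilon_{\mathcal{S}} > 0$. Third, invoke the contrapositive of Assumption~E.1: the law-dominance hypothesis makes the componentwise positive part $(\ell(\pi) - \ell(\pi_{\mathcal{S}}^\star))_+$ vanish, so the non-decreasing function $\psi$ evaluates to $\psi(0,0,0) = 0$ and~\eqref{eq:ghost-law-monotonicity} gives $J^\perp(\pi) - J^\perp(\pi_{\mathcal{S}}^\star) \le 0$. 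This contradicts the previous inequality, closing the argument.

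The hardest part should be step two: the near-optimality bound~\eqref{eq:on-manifold-near-opt} as stated is quantified over policies whose trajectories are \emph{already} supported on $\Mvol$, whereas I need it for $J^{\mathcal{M}}(\pi)$ with an arbitrary $\pi \in \Pi$ whose world-model rollouts may leave the manifold. The natural fix is to treat $J^{\mathcal{M}}(\pi)$ as the value of the synthetic policy that acts via $\pi$ but is evaluated on $\Pi_{\Mvol}(\hat w_{t+1})$; the existence, measurability, and $1$-Lipschitz continuity of this projection are guaranteed by Proposition~\ref{prop:closed-convex}, so the projected policy is admissible in the class covered by~\eqref{eq:on-manifold-near-opt}. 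Two minor technicalities also need attention: (i) integrability of $J^\perp$ for both $\pi$ and $\pi_{\mathcal{S}}^\star$, which follows from the Lipschitz ghost bound of Proposition~\ref{prop:ghost-bounded} together with the uniform law-penalty control in Proposition~\ref{prop:baseline-law-alignment}; and (ii) the ``strict dominance on all axes'' corollary for vanishing P\&L gaps $\eta \downarrow \varepsilon_{\mathcal{S}}$, which is not delivered by the main inequality and requires either sending $\varepsilon_{\mathcal{S}} \to 0$ or arguing separately that a strictly law-better policy with equal P\&L would contradict the optimality of $\pi_{\mathcal{S}}^\star$ within $\mathcal{S}$.
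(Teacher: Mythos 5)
Your proof is correct and follows essentially the same three-step chain as the paper's Appendix~E.3 argument: decompose the P\&L gap via $J_0 = J^{\mathcal{M}} + J^\perp$, bound the on-manifold difference by $\varepsilon_{\mathcal{S}}$ via near-optimality, and invoke ghost--law monotonicity to cap the ghost difference at zero under the law-dominance hypothesis, yielding $J_0(\pi) - J_0(\pi_{\mathcal{S}}^\star) \le \varepsilon_{\mathcal{S}} < \eta$. Your careful observation in the ``hardest part'' paragraph is well-founded and actually sharper than the paper's own Step~2, which silently applies~\eqref{eq:on-manifold-near-opt} to all $\pi\in\Pi$ even though the appendix statement of that bound is only quantified over policies with trajectories supported on $\Mvol$ (the main-text Assumption~\ref{ass:structural_near_optimal} drops this qualifier); your projected-surrogate fix is the right way to close that gap, since $J^{\mathcal{M}}(\pi)$ is by construction the expected reward under the $\Pi_{\Mvol}$-projected surface.
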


\begin{proof}
Fix $\eta>\varepsilon_{\mathcal{S}}$ and suppose, for contradiction,
that there exists a policy $\pi\in\Pi$ such that
\eqref{eq:pnl-improvement-condition} holds and
\begin{equation}
  \ell(\pi) \le_{\mathrm{law}} \ell(\pi_{\mathcal{S}}^\star).
  \label{eq:law-non-worse}
\end{equation}
We will show that this contradicts the decomposition
\eqref{eq:J0-decomposition-appendix}, near-optimality
\eqref{eq:on-manifold-near-opt}, and ghost--law monotonicity
\eqref{eq:ghost-law-monotonicity}.

\medskip\noindent
\textbf{Step 1: Decomposing the P\&L difference.}
By~\eqref{eq:J0-decomposition-appendix}, for any $\pi$ we have
\[
  J_0(\pi)
  =
  J^{\mathcal{M}}(\pi) + J^\perp(\pi).
\]
Therefore,
\begin{align}
  J_0(\pi) - J_0(\pi_{\mathcal{S}}^\star)
  &=
  \bigl(
    J^{\mathcal{M}}(\pi) - J^{\mathcal{M}}(\pi_{\mathcal{S}}^\star)
  \bigr)
  +
  \bigl(
    J^\perp(\pi) - J^\perp(\pi_{\mathcal{S}}^\star)
  \bigr).
  \label{eq:pnl-diff-decomp}
\end{align}

\medskip\noindent
\textbf{Step 2: Bounding the on-manifold component.}
By near-optimality of $\pi_{\mathcal{S}}^\star$ on the law manifold,
\eqref{eq:on-manifold-near-opt} implies that
\begin{equation}
  J^{\mathcal{M}}(\pi)
  \;\le\;
  J^{\mathcal{M}}(\pi_{\mathcal{S}}^\star) + \varepsilon_{\mathcal{S}}
  \qquad\text{for all $\pi\in\Pi$.}
  \label{eq:on-manifold-bound}
\end{equation}
Substituting~\eqref{eq:on-manifold-bound} into
\eqref{eq:pnl-diff-decomp} yields
\begin{equation}
  J_0(\pi) - J_0(\pi_{\mathcal{S}}^\star)
  \;\le\;
  \varepsilon_{\mathcal{S}} + \bigl(
    J^\perp(\pi) - J^\perp(\pi_{\mathcal{S}}^\star)
  \bigr).
  \label{eq:pnl-diff-with-ghost}
\end{equation}

\medskip\noindent
\textbf{Step 3: Applying ghost--law monotonicity.}
By the assumption~\eqref{eq:law-non-worse},
$\ell(\pi)\le_{\mathrm{law}}\ell(\pi_{\mathcal{S}}^\star)$, we have
\[
  \bigl(\ell(\pi) - \ell(\pi_{\mathcal{S}}^\star)\bigr)_+ = 0,
\]
so ghost--law monotonicity~\eqref{eq:ghost-law-monotonicity} gives
\begin{equation}
  J^\perp(\pi) - J^\perp(\pi_{\mathcal{S}}^\star)
  \;\le\;
  \psi(0,0,0)
  =
  0.
  \label{eq:ghost-diff-nonpositive}
\end{equation}
Combining~\eqref{eq:pnl-diff-with-ghost} and
\eqref{eq:ghost-diff-nonpositive} yields
\begin{equation}
  J_0(\pi) - J_0(\pi_{\mathcal{S}}^\star)
  \;\le\;
  \varepsilon_{\mathcal{S}}.
  \label{eq:pnl-diff-final-bound}
\end{equation}

\medskip\noindent
\textbf{Step 4: Contradiction.}
However, by assumption~\eqref{eq:pnl-improvement-condition} we have
\[
  J_0(\pi) - J_0(\pi_{\mathcal{S}}^\star)
  \;\ge\; \eta
  \;>\; \varepsilon_{\mathcal{S}},
\]
which contradicts~\eqref{eq:pnl-diff-final-bound}. Therefore no policy
$\pi$ can simultaneously satisfy \eqref{eq:pnl-improvement-condition}
and~\eqref{eq:law-non-worse}. Equivalently, any policy achieving an
improvement in baseline P\&L of more than $\varepsilon_{\mathcal{S}}$
over the structural reference $\pi_{\mathcal{S}}^\star$ must have at
least one law metric strictly worse than that of
$\pi_{\mathcal{S}}^\star$, i.e.\ $\ell(\pi)\not\le_{\mathrm{law}}
\ell(\pi_{\mathcal{S}}^\star)$.

In particular, if we interpret the triple
\[
  \bigl(
    -J_0(\pi),\;
    \ell(\pi)
  \bigr) \in \mathbb{R}^{1+3}
\]
as a four-dimensional loss vector (profitability vs.\ law alignment and
robustness), then $\pi_{\mathcal{S}}^\star$ cannot be strictly
Pareto-dominated by any policy in $\Pi$. This is the claimed
no-free-lunch property.
\end{proof}

\subsection{Discussion of assumptions and empirical alignment}

The proof above separates the no-free-lunch result into three conceptually
distinct ingredients:

\begin{enumerate}
  \item \emph{On-manifold near-optimality of $\mathcal{S}$.}
    The structural class $\mathcal{S}$ contains a policy
    $\pi_{\mathcal{S}}^\star$ that is nearly optimal in terms of
    on-manifold P\&L, as quantified by
    \eqref{eq:on-manifold-near-opt}. Empirically this corresponds to
    the observation that Zero-Hedge and Vol-Trend already sit very
    close to the empirical law-strength frontier.
  \item \emph{Non-trivial ghost channel.}
    The world model induces off-manifold reward $r^\perp$ that is
    non-zero whenever predictions deviate from $\Mvol$
    (Propositions~\ref{prop:approx-gap-ghost-channel}
    and~\ref{lem:off-manifold-mass}), creating the possibility of
    ``ghost arbitrage''.
  \item \emph{Ghost--law monotonicity.}
    Assumption~E.1 formalizes the idea that exploiting the ghost channel
    necessarily worsens law metrics: additional ghost reward cannot be
    obtained at fixed or improved law penalties and GFI.
\end{enumerate}

In our volatility world-model testbed, these three conditions are
jointly consistent with the empirical findings of
Section:
\begin{enumerate}
  \item The structural baselines achieve high Sharpe and low GFI while
    remaining close to the law manifold, in line with
    near-optimality~\eqref{eq:on-manifold-near-opt}.
  \item Naive and law-seeking RL policies that outperform $\mathcal{S}$
    in raw P\&L do so only by moving into high-penalty, high-GFI
    regions, as evidenced by the frontier and diagnostic plots.
  \item No RL policy lies on the empirical law-strength frontier once
    the structural baselines are included, which matches the
    impossibility of strict dominance established by the theorem.
\end{enumerate}

%Bibliography
\bibliographystyle{unsrt}  
\bibliography{references}

\end{document}